\definecolor{gray}{rgb}{0.8,0.8,0.8}
\definecolor{lightgray}{rgb}{0.9,0.9,0.9}
\algrenewcommand\algorithmicrepeat{\textbf{do}}
\newcommand{\set}[1]{\left\{ #1\right\}}
\newcommand{\R}{\mathbb{R}}
\newcommand{\N}{\mathbb{N}}
\newcommand{\Rpositive}{\R_{\geq 0}}
\newcommand{\localskew}{\mathcal{L}}
\newcommand{\FC}{\operatorname{\textbf{FC}}}
\newcommand{\FCone}{\operatorname{\text{\textbf{FC-1}}}}
\newcommand{\FCtwo}{\operatorname{\text{\textbf{FC-2}}}}
\newcommand{\FCthree}{\operatorname{\text{\textbf{FC-3}}}} 
\newcommand{\SC}{\operatorname{\text{\textbf{SC}}}}
\newcommand{\SCone}{\operatorname{\text{\textbf{SC-1}}}}
\newcommand{\SCtwo}{\operatorname{\text{\textbf{SC-2}}}}
\newcommand{\SCthree}{\operatorname{\text{\textbf{SC-3}}}}
\newcommand{\JC}{\operatorname{\text{\textbf{JC}}}}
\newcommand{\JCone}{\operatorname{\text{\textbf{JC-1}}}}
\newcommand{\JCtwo}{\operatorname{\text{\textbf{JC-2}}}}
\newcommand{\JCthree}{\operatorname{\text{\textbf{JC-3}}}}
\newcommand{\Pone}{{\textnormal{(P1)}}\xspace}
\newcommand{\Ptwo}{{\textnormal{(P2)}}\xspace}
\newcommand{\Pthree}{{\textnormal{(P3)}}\xspace}
\newcommand{\Pfour}{{\textnormal{(P4)}}\xspace}
\newcommand{\Pfive}{{\textnormal{(P5)}}\xspace}
\newcommand{\prefix}{\operatorname{prefix}}
\newcommand{\suffix}{\operatorname{suffix}}
\newcommand{\Cor}{\mathcal{C}}
\newcommand{\median}[1]{\operatorname{median}\set{#1}}
\newcommand{\own}{\mathrm{own}}
\newtheorem{theorem}{Theorem}[section]
\newtheorem{lemma}[theorem]{Lemma}
\newtheorem{observation}[theorem]{Observation}
\newtheorem{corollary}[theorem]{Corollary}
\theoremstyle{definition}
\newtheorem{definition}[theorem]{Definition}
\title{Clock Synchronization with Gradient TRIX}
\author{Christoph Lenzen}
\email{lenzen@cispa.de}
\affiliation{%
 \institution{CISPA Helmholtz Center for Information Security}
 %\streetaddress{Stuhlsatzenhaus 5}
 \city{Saarbr{\"u}cken}
 %\state{Saarland}
 \country{Germany}}
\author{Shreyas Srinivas}
\email{shreyas.srinivas@cispa.de}
\affiliation{%
\institution{CISPA Helmholtz Center for Information Security}
%\streetaddress{Stuhlsatzenhaus 5}
\city{Saarbr{\"u}cken}
%\state{Saarland}
\country{Germany}
}
\date{}
\begin{document}
\setcounter{page}{0}
\thispagestyle{empty}

\begin{abstract}
Gradient clock synchronization (GCS) algorithms minimize the worst-case clock offset between the nodes in a distributed network of diameter $D$ and size $n$.
They achieve optimal offsets of $\Theta(\log D)$ locally, i.e., between adjacent nodes~\cite{lenzen2010tight} and $\Theta(D)$ globally~\cite{biaz01closed}.
A key open problem in this area is to achieve fault tolerance at minimal edge replication overhead.

In this work, we achieve this goal under the assumption of an average-case distribution of faults, i.e., nodes fail with independent probability $p\in o(n^{-1/2})$.
In more detail, we present a self-stabilizing GCS algorithm for a grid-like directed graph with in- and out-degrees of $3$.
Note that even for tolerating a single fault, this degree is necessary, and if $p$ was larger, it would not hold with probability $1-o(1)$ that each node has at most one faulty in-neighbor.
Our algorithm achieves asymptotically optimal local skew of $\Theta(\log D)$ with probability $1-o(1)$; this holds under general worst-case assumptions on link delay and clock speed variations, provided they change slowly relative to the speed of the system. 

On the one hand, our results are of practical interest. As we discuss, the fault model is suitable for synchronously clocked hardware. Since our algorithm can simultaneously sustain a constant number of arbitrary changes due to faults in each clock cycle, it achieves sufficient robustness to dramatically increase the size of synchronously clocked systems. 

On the other hand, our results are of a theoretical interest. We show that for a worst-case distribution of $f$ faulty nodes within our fault model's locality constraints, our algorithm achieves local skew $O(5^f\log D)$. With probabilistically distributed faults, this becomes $O(\log D)$. Moreover, our work opens up avenues for further investigation of fault-tolerant synchronization, in particular trade-offs between fault distribution and edge density. 
\end{abstract}

% \begin{CCSXML}
%     <ccs2012>
%        <concept>
%            <concept_id>10010583</concept_id>
%            <concept_desc>Hardware</concept_desc>
%            <concept_significance>500</concept_significance>
%            </concept>
%        <concept>
%            <concept_id>10010583.10010633</concept_id>
%            <concept_desc>Hardware~Very large scale integration design</concept_desc>
%            <concept_significance>500</concept_significance>
%            </concept>
%        <concept>
%            <concept_id>10010147.10010919.10010172</concept_id>
%            <concept_desc>Computing methodologies~Distributed algorithms</concept_desc>
%            <concept_significance>500</concept_significance>
%            </concept>
%      </ccs2012>
% \end{CCSXML}
    
% \ccsdesc[500]{Fault Tolerance}
% \ccsdesc[500]{Clock Synchronisation}
% \ccsdesc[500]{Distributed algorithms}
% 
% \ccsdesc[500]{Hardware}
% \ccsdesc[500]{Hardware~Very large scale integration design}

% \keywords{Clock Synchronisation, Fault Tolerance, VLSI, Self-Stabilisation}

\maketitle
\begin{acks}
    This work was supported in part by the European Research Council (ERC) through the European Union’s Horizon 2020 Research and Innovation Programme under Grant 716562. Further, Shreyas Srinivas is a member of the Saarbr\"ucken Graduate School of Computer Science.
\end{acks}
\section{Introduction}

In their seminal work from 2004~\cite{fan04gcs}, Fan and Lynch introduced the task of Gradient Clock Synchronization (GCS). In a distributed network with imperfect reference clocks at each process and communication channels with uncertain message delays, it requires each process to construct logical clocks with the minimum possible clock \emph{skew}, i.e.\ absolute worst-case difference between clock outputs. The crucial distinction between GCS and ``classic'' clock synchronization frameworks is to not only consider the \emph{global skew}, i.e., the maximum skew between any pair of nodes, but also the \emph{local skew}, the maximum skew between neighbors.

The main insights motivating minimization of \emph{local skew} are:
\begin{itemize}
  \item In many cases, the skew between adjacent nodes is the appropriate measure of quality.
  \item The global skew grows at least linearly with the diameter $D$ of the network~\cite{biaz01closed}.
\end{itemize}

Lenzen, Locher, and Wattehofer.~\cite{lenzen2010tight} achieve optimal local skew with their GCS
algorithm and Bund et al.~\cite{bund2019fault} make the algorithm resilient to
localized Byzantine faults through heavy node and edge replication. However, practical applications demand fault tolerance with minimal network connectivity, which remains an open question.
\begin{tcolorbox}[title =\textsc{Fault-Tolerant Clock Synchronization Problem (Informal)}]
  Compute at each node of a distributed system a logical clock with the following properties.
  \begin{itemize}
    \item \textbf{Minimizing Global Skew}: The skew between any pair of nodes, i.e., the \emph{global skew} is minimized as a function of the network diameter $D$: $\Theta(D)$. 
    \item \textbf{Minimizing Local Skew}: The skew between adjacent pairs of nodes, i.e., the \emph{local skew} is minimized as a function of the network diameter $D$: $\Theta(\log D)$.
    \item \textbf{Fault Tolerance}: An unknown set of permanently faulty nodes, upto $f$ per neighbourhood can be tolerated, i.e., each node has at most $f$ faulty neighbors. We call this $f$-local fault tolerance.
    \item \textbf{Self-Stabilization}: After system-wide transient (i.e., temporary) faults, the processes re-converge to optimal skews.
    \item \textbf{Optimal Edge Density}: Achieve the above in a network topology with minimal node degree.
  \end{itemize}
\end{tcolorbox}

This work provides a positive answer for this question for grid-like graphs and a reasonable distribution of faults, motivated by the application of clocking VLSI systems. It synthesizes two distinct lines of work, namely \emph{gradient clock synchronization}~\cite{lenzen2010tight} and \emph{fault tolerant clock distribution}~\cite{Dolev2016a,lenzen20trix} which respectively achieve optimal clock skews and low-overhead fault tolerance. Doing so requires reconciling seemingly mutually exclusive approaches of the respective works, which constitutes one of the main technical challenges of this paper. Additionally, our work is an exercise in theory building. We make modelling choices that realistically account for the kind of networks and faults that occur in VLSI circuits, accepting some limitations on the generality of the fault model for strong performance guarantees. We carefully discuss these choices, their consequences, and related open questions.

\paragraph*{Replication for Fault Tolerance.}
%###
Bund et al.~\cite{bund2019fault} obtain a fault-tolerant variant of the GCS algorithm from~\cite{lenzen2010tight} by replicating nodes and edges.
They do so by simulating the (non-fault-tolerant) algorithm on the original network, replacing each node by a clique of size $3f+1$ and each edge by a biclique.
The clique then synchronizes internally using the classic fault-tolerant Lynch-Welch algorithm~\cite{welch88}, and, with some acrobatics, the resulting local outputs can be interpreted as a joint cluster clock executing the algorithm from \cite{lenzen2010tight} in lieu of the corresponding node of the original network. Thus, given an arbitrary network as input, one achieves gradient clock synchronization in the corresponding replicated network with up to $f$ faults per clique. 

This approach ticks many of the above boxes. Skews are asymptotically optimal, fault-tolerance is as desired, and using self-stabilization properties of the GCS algorithm and standard techniques, it is highly plausible that self-stabilization could be achieved. However, the edge replication factor of $\Theta(f^2)$ is more than required, as a node connectivity of $2f+1$ is sufficient to maintain (possibly sub-optimal) synchronization. From a practical viewpoint, solving even the case of $f=1$ would be big step forward, but even here, the constants of the solution are poor: for $f=1$, node degrees will be at least $15$.

Practical applications are typically highly sensitive to the network resources required. For example, floorplanning for VLSI systems is concerned with minimising wire length and congestion \cite[Section 1.10.1]{harris2010cmos}\ \cite{vlsiGraphLayout}.
Concretely, this is a crucial concern when distributing a system clock in hardware, where the factor $5$ gap to the minimum node degree of $2f+1=3$ is prohibitive:
this ``moderate constant'' causes a headache to the engineer trying to route all of these edges with few layers and precise timing, substantially increasing communication delay uncertainty.
This, in turn, directly translates into increased skews, placing the break-even point with prior art beyond relevant limits.

%###
\paragraph*{Clock Distribution.}
%###
These considerations motivate the quest for getting as close to the minimum required connectivity as possible.
This line of investigation led to the study of \emph{fault-tolerant clock distribution} in low-degree networks~\cite{Dolev2016a,lenzen20trix}.
Both of these works have the following in common: they assume that the clock signal is generated at a central location and forwarded along by nodes in a grid-like graph.
Their simple pulse forwarding schemes are self-stabilizing by design and resilient to isolated faults.
The basic idea is to propagate the signal from layer to layer, having each node wait for two nodes signaling a clock pulse before locally generating and forwarding their own pulse.
Moreover, it is assumed that in absence of faults, delays are changing only slowly over time.
Thus, matching the input frequency to the expected delay between grid layers results in clock pulses that are well-synchronized between adjacent layers.

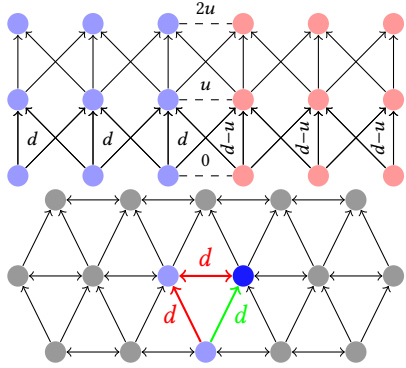
\begin{figure}[t]
  \centering
  \begin{subfigure}{0.4\textwidth}
    \begin{tikzpicture}[main/.style = {draw, circle},scale = 2.0] 
    \tikzstyle{vertex} = [
            circle,
            fill = {
            black! 25
            },
            minimum size=8pt, 
            inner sep=2pt]  

    % source nodes
    \node[vertex, fill = blue!40] (s1) at (-1.0, -0.5) {}; 
    \node[vertex, fill = blue!40] (s2) at (-0.5,-0.5) {};
    \node[vertex, fill = blue!40] (s3) at (0, -0.5) {};
    \node[vertex, fill = red!40] (s4) at (0.5,-0.5) {};
    \node[vertex, fill = red!40] (s5) at (1.0,-0.5) {};
    \node[vertex, fill = red!40] (s6) at (1.5,-0.5) {};

    % layers
    \node[vertex, fill = blue!40] (l11) at (-1.0, 0) {}; 
    \node[vertex, fill = blue!40] (l12) at (-0.5,0) {};
    \node[vertex, fill = blue!40] (l13) at (0, 0) {};
    \node[vertex, fill = red!40] (l14) at (0.5,0) {};
    \node[vertex, fill = red!40] (l15) at (1.0,0) {};
    \node[vertex, fill = red!40] (l16) at (1.5,0) {};

    % layers
    \node[vertex, fill = blue!40] (l21) at (-1.0, 0.5) {}; 
    \node[vertex, fill = blue!40] (l22) at (-0.5,0.5) {};
    \node[vertex, fill = blue!40] (l23) at (0, 0.5) {};
    \node[vertex, fill = red!40] (l24) at (0.5,0.5) {};
    \node[vertex, fill = red!40] (l25) at (1.0,0.5) {};
    \node[vertex, , fill = red!40] (l26) at (1.5,0.5) {};

    %Skews
    \draw[dashed] (s3) -- (s4) node[above, midway] {${\scriptstyle 0}$};
    \draw[dashed] (l13) -- (l14) node[above, midway] {${\scriptstyle u}$};
    \draw[dashed] (l23) -- (l24) node[above, midway] {${\scriptstyle 2u}$};

    %Edges

    \draw[->] (s1) -- (l11) node[midway, right] {${\scriptstyle d}$};
    \draw[->] (s1) -- (l12);
    \draw[->] (s2) -- (l11);
    \draw[->] (s2) -- (l12) node[midway, right] {${\scriptstyle d}$};
    \draw[->] (s2) -- (l13);
    \draw[->] (s3) -- (l12);
    \draw[->] (s3) -- (l13) node[midway, right] {${\scriptstyle d}$};
    \draw[->] (s3) -- (l14);
    \draw[->] (s4) -- (l13);
    \draw[->] (s4) -- (l14) node[midway, above, sloped] {${\scriptstyle d-u}$};
    \draw[->] (s4) -- (l15);
    \draw[->] (s5) -- (l14);
    \draw[->] (s5) -- (l15) node[midway, above, sloped] {${\scriptstyle d-u}$};
    \draw[->] (s5) -- (l16); 
    \draw[->] (s6) -- (l15);
    \draw[->] (s6) -- (l16) node[midway, above, sloped] {${\scriptstyle d-u}$};

    \draw[->] (s1) -- (l11);
    \draw[->] (s1) -- (l12);
    \draw[->] (s2) -- (l11);
    \draw[->] (s2) -- (l12);
    \draw[->] (s2) -- (l13);
    \draw[->] (s3) -- (l12);
    \draw[->] (s3) -- (l13);
    \draw[->] (s3) -- (l14);
    \draw[->] (s4) -- (l13);
    \draw[->] (s4) -- (l14);
    \draw[->] (s4) -- (l15);
    \draw[->] (s5) -- (l14);
    \draw[->] (s5) -- (l15);
    \draw[->] (s5) -- (l16);
    \draw[->] (s6) -- (l15);
    \draw[->] (s6) -- (l16);

    \draw[->] (l11) -- (l21);
    \draw[->] (l11) -- (l22);
    \draw[->] (l12) -- (l21);
    \draw[->] (l12) -- (l22);
    \draw[->] (l12) -- (l23);
    \draw[->] (l13) -- (l22);
    \draw[->] (l13) -- (l23);
    \draw[->] (l13) -- (l24);
    \draw[->] (l14) -- (l23);
    \draw[->] (l14) -- (l24);
    \draw[->] (l14) -- (l25);
    \draw[->] (l15) -- (l24);
    \draw[->] (l15) -- (l25);
    \draw[->] (l15) -- (l26);
    \draw[->] (l16) -- (l25);
    \draw[->] (l16) -- (l26);
\end{tikzpicture}  
  \end{subfigure}
  \begin{subfigure}{0.4\textwidth}
    \begin{tikzpicture}[main/.style = {draw, circle},scale = 2.0] 
    \tikzstyle{vertex} = [
            circle,
            fill = {
            black! 25
            },
            minimum size=8pt, 
            inner sep=2pt]  

    % source nodes
 
    \node[vertex, fill = black!40] (s12) at (-0.5,-0.5) {};
    \node[vertex, fill = black!40] (s13) at (0, -0.5) {};
    \node[vertex, fill = blue!40] (s14) at (0.5,-0.5) {};
    \node[vertex, fill = black!40] (s15) at (1.0,-0.5) {};
    \node[vertex, fill = black!40] (s16) at (1.5,-0.5) {};

    \node[vertex, fill = black!40] (s21) at (-0.75, 0) {}; 
    \node[vertex, fill = black!40] (s22) at (-0.25, 0) {};
    \node[vertex, fill = blue!40] (s23) at (0.25, 0) {};
    \node[vertex, fill = blue!90] (s24) at (0.75, 0) {};
    \node[vertex, fill = black!40] (s25) at (1.25, 0) {};
    \node[vertex, fill = black!40] (s26) at (1.75, 0) {};
 
    \node[vertex, fill = black!40] (s32) at (-0.5,0.5) {};
    \node[vertex, fill = black!40] (s33) at (0, 0.5) {};
    \node[vertex, fill = black!40] (s34) at (0.5,0.5) {};
    \node[vertex, fill = black!40] (s35) at (1.0,0.5) {};
    \node[vertex, fill = black!40] (s36) at (1.5,0.5) {};

    \draw[<->] (s12)--(s13);
    \draw[<->] (s13)--(s14);
    \draw[<->] (s14)--(s15);
    \draw[<->] (s15)--(s16);
    \draw[<->] (s21)--(s22);
    \draw[<->] (s22)--(s23);
    \draw[<->, thick, color=red] (s23)--(s24) node[midway,above] {$d$};
    \draw[<->] (s24)--(s25);
    \draw[<->] (s25)--(s26);
    \draw[<->] (s32)--(s33);
    \draw[<->] (s33)--(s34);
    \draw[<->] (s34)--(s35);
    \draw[<->] (s35)--(s36);

    \draw[->] (s12)--(s21);
    \draw[->] (s12)--(s22);
    \draw[->] (s13)--(s22);
    \draw[->] (s13)--(s23);
    \draw[->, color=red, thick] (s14)--(s23) node[midway, left] {$d$};
    \draw[->, color=green, thick] (s14)--(s24) node[midway, right] {$d$};
    \draw[->] (s15)--(s24);
    \draw[->] (s15)--(s25);
    \draw[->] (s16)--(s25);
    \draw[->] (s16)--(s26);

    \draw[->] (s21)--(s32);
    \draw[->] (s22)--(s32);
    \draw[->] (s22)--(s33);
    \draw[->] (s23)--(s33);
    \draw[->] (s23)--(s34);
    \draw[->] (s24)--(s34);
    \draw[->] (s24)--(s35);
    \draw[->] (s25)--(s35);
    \draw[->] (s25)--(s36);
    \draw[->] (s26)--(s36);
\end{tikzpicture}  
  \end{subfigure}
  \hfill
  \caption{TRIX~\cite{lenzen20trix} (left) and HEX~\cite{Dolev2016a} (right) grids. TRIX uses the naive pulse forwarding scheme of waiting for the second copy of each pulse before forwarding it. We see how the TRIX grid can accumulate a local skew of $\Theta(uD)$ in layer $D$. In the HEX grid, each node waits for two copies of a pulse from in-neighbors. However, $2$ of the $4$ in-neighbors are on the same layer, causing a skew of $d$ if a neighbor on the preceding layer crashes.\label{fig:prev_grid_structures}}
\Description[Previous grid structures: TRIX and HEX]{TRIX (left) and HEX (right) grids. TRIX uses the naive pulse forwarding scheme of waiting for the second copy of each pulse before forwarding it. We see how the TRIX grid can accumulate a skew of $\Theta(uD)$. In the HEX grid, each node waits for two copies of a pulse from in-neighbours. However, $2$ of the $4$ in-neighbors are on the same layer, causing a skew of $d$ if a neighbor on the preceding layer crashes.}
\end{figure}

The above works differ in the used grid structure, cf.~\Cref{fig:prev_grid_structures}, and achieved skew bounds:
\begin{itemize}
  \item Denoting by $d-u$ and $d$ the minimum and maximum end-to-end communication delay, in a grid of width $D$, ~\cite{Dolev2016a} bounds the local skew by $d+O(u^2 D/d)$. Since in practice $d\gg u$, this is a non-trivial bound. Unfortunately, the fact that $d\gg u$ also means that the additive term of $d$ renders this bound far too large for applications. Even worse, for each fault this bound increases by another $d$.
  \item In~\cite{lenzen20trix}, each fault adds at most $u$ to the local skew. Observe that the used grid also has the minimum required connectivity, as each node has only $3$ incoming and outgoing edges each. Alas, these advantages come at the expense of poor scaling of worst-case skews with the number of layers: on layer $D$, adjacent nodes may pulse up to $uD$ time apart.
\end{itemize}

% We remark that in order to tolerate failure of an arbitrary component of the system, the clock source has to be replicated and the replicas have to be synchronized in a fault-tolerant and self-stabilizing manner.
% However, here one can employ techniques for fully connected networks~\cite{khanchandani18self,lenzen19almost};
% using them in a single location for $f=1$ does not constitute a scalability issue.

% Some colours for table cells : \cellcolor[rgb]{0.8,0.8,0.8}

%###
\paragraph*{Our Contribution.}
%###
In a nutshell, we provide a solution to the fault-tolerant clock synchronization problem under the assumption of a uniform distribution of faulty nodes, with a constant number of them changing their behavior concurrently, cf.~\Cref{cor:bound}.
While we acknowledge the limitations in the fault model, nonetheless our results significantly advance the state of the art, cf.~\Cref{table:summary_desiderata}.
Moreover, as we discuss in detail in~\Cref{sec:model}, the fault model is well-justified in the application setting of clocking synchronous hardware.
\begin{table}[t!]
  \begin{center}
    \begin{tabular}{|p{6.2em}|p{4.7em}|p{5.7em}|p{5.2em}|p{3.7em}|p{7.8em}|}
      \hline
      method & global skew & local skew & resilience & self-stab. & graph topology\\
      \hline
      LW~\cite{welch88} & $O(1)$ & $O(1)$ & $<n/3$ & no & complete ($D=1$)\\
      \hline
      KL~\cite{khanchandani18self} & $O(1)$ & $O(1)$ & $<n/3$ & yes & complete ($D=1$)\\
      \hline
      HEX~\cite{Dolev2016a}& $O(dD)$ & $d + O(u^2D/d)$ & $1$-local & yes & grid-like, \mbox{suboptimal~degree}\\
      \hline
      TRIX~\cite{lenzen20trix}& $O(uD^2)$ & $O(uD)$ & $1$-local & yes & grid-like, \mbox{optimal~degree}\\
      \hline
      GCS~\cite{lenzen2010tight} & $O(uD)$ & $O(u\log D)$ & crashes only & yes & arbitrary\\
      \hline
      Fault-tolerant GCS~\cite{bund2019fault} & $O(uD)$ & $O(u\log D)$ & $f$-local & yes & $\Theta(f^2)$-augmented arbitrary graph\tablefootnote{Given a graph topology $G$, the augmented graph contains a $3f+1$-clique of replica vertices for each node $v$ in $G$ and $\Theta(f^2)$ copies of each edge $\set{v,w}\in G$ corresponding to all the possible pairs of the replicas of $v$ and $w$}\\
      \hline
      Gradient~TRIX \textbf{(this work)} & $O(uD)$ & $O(u\log D)$ & independent $p\in o(n^{-1/2})$ & yes & grid-like, \mbox{optimal~degree}\\
      \hline
      Gradient~TRIX \textbf{(this work)} & $O(uD)$ & $O(5^f u\log D)$ & $1$-local,~\phantom{XXX} $f=$ \#faults & yes & grid-like, \mbox{optimal~degree}\\
      \hline
    \end{tabular}
  \end{center}
  \caption{Comparison to related work. With the exception of GCS, ``resilience'' refers to Byzantine fault-tolerance, i.e., worst-case behavior of faulty nodes. However, in our work the fault model is restricted in that only few faulty nodes change their behavior within a short amount of time. In turn, we are the first to simultaneously achieve optimal skew bounds, self-stabilization, and minimal degrees.\label{table:summary_desiderata}
\vspace*{-.5cm}}
\end{table}

Our results are achieved using the same grid as in~\cite{lenzen20trix}, but with a different rule for forwarding pulses.
Our novel algorithm is designed as a discrete and fault-tolerant counterpart to the GCS algorithm from~\cite{lenzen2010tight}.
Making this work requires substantial conceptual innovation and technical novelty.
On the conceptual level, like~\cite{bund2019fault}, our algorithm simulates a discretized variant of the (non-fault-tolerant) GCS algorithm from~\cite{lenzen2010tight}.\footnote{Traditional presentations of GCS describe an algorithmic template, i.e., a set of constraints that must be obeyed by any GCS algorithm at all time instants $t \in \Rpositive$~\cite{lecture}. Usually, a discretization of this template yields an algorithm for each node that acts at discrete time points to estimate the clocks of its neighbours and adjusts its clock speed. Our approach ``splits a GCS node in time'' across an entire grid column, inherently discretizing when (simulated) nodes act. However, we do not formally execute a simulation argument, as this would necessitate to first generalize existing (continuous) GCS schemes. Nonetheless, we refer to the underlying simulation idea for the purpose of exposition.}
The key difference is that we do not physically replicate nodes or edges, avoiding the corresponding overheads.
Instead, we ``split'' the simulation of a node across time, associating only a short time slice with each simulating node, corresponding to the local forwarding of the pulse.

Abstractly, this can be phrased as follows.
Taking an arbitrary \emph{base graph} of minimum degree $2$,\footnote{From a theoretical point of view, the cleanest choice for the base graph might be a cycle. For the considered VLSI circuit application, for physical layout a line with replicated and connected endpoints is better.} we create copies of the graph, referred to as layers. 
Each layer represents a ``time step'' of the GCS algorithm on the base graph.
For each node, there is an edge from its copy on a given layer to the copies of itself and its neighbors on the next.
The pulses forwarded along these edges serve very different functions:
\begin{itemize}
  \item The pulse messages sent to copies of neigbhors correspond to the GCS algorithm's messages for estimating clock offsets to neighbors.
  \item The pulse messages sent between copies of the same node convey its local time from one of its copies to the next.
\end{itemize}
Note that this turns a permanently faulty node in the grid into a simulated node being faulty in a single time step only.
This is of vital importance, because it enables us to rely on the self-stabilization properties of the GCS algorithm from~\cite{lenzen2010tight}.
These are implicitly shown in~\cite{kuhn10dynamic_arxiv};
we prove them explicitly in the different setting of this work.

However, by itself this does not guarantee bounded skew between correct nodes, since we also need to contain the effect of such a ``transient'' fault on the state of the simulated algorithm.
Otherwise, a fault would increase skews arbitrarily, effectively corrupting downstream nodes:
at any given node, the smallest or largest time at which a pulse from neighbors on the preceding layer is received could be determined by a faulty node.
We can overcome this issue if there is at most one faulty in-neighbor.
The key observation to controlling the impact of a faulty node on the pulse time lies in that it can indeed affect only one of three reception times:
the (i) smallest \emph{or} (ii) largest time at which a pulse from copies of neighbors on the previous layer is received, \emph{or} (iii) the time at which the pulse from the copy of the node itself is received.
In particular, the median of these three times lies within the interval spanned by the correct in-neighbors' pulse times.
By imposing a constraint to always tie the time at which a pulse is generated closely to this median, in addition to the standard ``slow'' and ``fast'' conditions of the GCS algorithm, we can limit the local impact of a fault on skews.

In summary, we seek to simultaneously simulate a time-discrete variant of the GCS algorithm from~\cite{lenzen2010tight}, while also guaranteeing that pulse forwarding times are, up to a sufficiently small deviation, identical to median reception times plus a fixed offset.
Unfortunately, no existing GCS algorithm that achieves a small local skew~\cite{kuhn10dynamic,kuhn09reference,lenzen08clock,lenzen2010tight} can be used for this purpose as-is, since their decision rules are in conflict with the above ``stick to the median'' requirement.

As our main technical contribution, we resolve this conflict, simultaneously adapting the resulting algorithm to the discrete setting.
To do so, we determine suitably weakened discrete variants of the slow and fast conditions introduced in~\cite{kuhn09reference}.
In essence, we allow that a simulated node whose pulse time is ahead of all of its neighbors' pulse times to delay its next pulse by the difference to the fastest neighbor;
an analogous rule applies to nodes pulsing later than all of their neighbors.
From the perspective of the GCS algorithm in~\cite{lenzen2010tight}, that we build on, this constitutes a potentially arbitrarily large clock ``jump,'' which we leverage to implement the stick-to-the-median requirement despite the arbitrary changes in timing that faulty nodes may apply to their pulse messages.
To prevent uncontrolled oscillatory behavior arising from adjacent nodes ``jumping'' in opposite directions, we introduce an additional condition, which we refer to as the \emph{jump condition}.
Essentially, it slightly reduces how large jumps are to avoid that uncertainty in message delays and local clock speeds cause nodes to ``overswing,'' potentially resulting in arbitrarily large skews, cf.~\Cref{fig:need_for_JC}.

Turning so many knobs at once meant that it was not clear that such a scheme would work.
Indeed, bounding the skew of this novel algorithm turned out to be highly challenging, as jumps that delay pulses rather than speeding them up invalidate the fundamental assumption that clocks progress at rate at least $1$ present in all prior work~\cite{kuhn10dynamic,kuhn09reference,lenzen08clock,lenzen2010tight}.
As a result, the main technical hurdle and contribution turned out to be proving a bound on the local skew $\localskew_{\ell}$ between neighbors in the same layer $\ell$ for the fault-free case.
\begin{restatable}{theorem}{thmlocal}\label{thm:local}
If there are no faults, then $\localskew_{\ell}\le 4\kappa (2+\log D)$ for all $\ell\in \N$.
\end{restatable}
Here $D$ is the diameter of the base graph, and $\kappa$ is a parameter of the algorithm;
choosing the input clock frequency to be $1/(2d)$ results in $\kappa\in \Theta(u+(\vartheta-1)d)$ for nodes with local reference clocks running at rates between $1$ and $\vartheta>1$.
All of our results require that $d\gg u+(\vartheta-1)d$, or equivalently, that the local skew remains small compared to $d$.
Note that if this condition does not hold, we are outside the parameter range of interest: then skews become large compared to the desired duration of a clock cycle and clock frequency has to be reduced substantially.

To address faults, we bound how they can affect timing.
Due to the aforementioned stick to the median rule, we can bound the local impact of a fault on timing in terms of the local skew.
However, applying this argument repeatedly, skews grow exponentially in the number of faults.\!\!\!\!
\begin{restatable}{theorem}{thmexp}\label{thm:fault_worst_case}
If there are at most $f$ faulty nodes and none in layer~$0$, then $\localskew_{\ell}\in O(5^f\kappa\log D)$.
\end{restatable}
While tolerating a constant number of faults is certainly better than tolerating none, this is unsatisfactory, since the requirement of one faulty in-neighbor holds with probability $1-o(1)$ for a fairly high independent probability of $p\in o(1/\sqrt{n})$.
Given that the topology we are most interested in is roughly a square grid, i.e., there are roughly $\sqrt{n}$ layers, the naive approach outlined above does not result in a non-trivial bound on the local skew for any $p\in \omega(1/n)$.

To address this issue, we provide an improved analysis exploiting that our base graph has slow-growing neighborhoods as a function of distance.
As the $d$-hop neighborhood grows linearly with $d$, the number of nodes in layers $\ell'\in [\ell-n^{1/12},\ell]$ that affect the pulse time of a node in layer $\ell$ is in $\Theta(n^{1/6})$.
Thus, if nodes fail with probability $p\in o(1/\sqrt{n})$, the probability that there are more than $2$ faulty nodes within distance $n^{1/12}$ that affect a given node is $o(1/n)$.
Intuitively, this buys enough time for the self-stabilization properties of the simulated algorithm to reduce its local skew again before it spirals out of control.
\begin{restatable}{theorem}{thmlocalfaults}\label{thm:local_faults}
  With probability $1-o(1)$, $\localskew_{\ell}\in O(\kappa \log D)$ for all $\ell \in \N$.
\end{restatable}
The final step is to extend this bound on the local skew within a layer to one that includes adjacent nodes in different layers.
As we propagate pulses layer by layer, we cannot hope to match pulse times of the $k$-th pulse between different layers.
Instead, we match the input period to the nominal time a pulse spends on each layer.
This works neatly so long as there are no \emph{changes} in message delay, clock speed, and behavior of faulty nodes between consecutive pulses.\!\!\!\!
\begin{restatable}{theorem}{thmbound}\label{thm:bound}
If faulty nodes do not change the timing of their output pulses, then $\localskew\in O(\kappa \log D)$ with probability $1-o(1)$.
\end{restatable}
To a large extent, this strong assumption is justified in our specific context.
Clock speeds of modern systems are in the gigahertz range, and the amount of change in timing that occurs within a single clock cycle is much smaller than over the lifetime of a system~\cite{Tschanz2009},~\cite[Ch.~7]{harris2010cmos}.
Similarly, the by far most common faults are static faults and delay faults with a static timing profile.\footnote{That is, a given timing profile of input transitions results in the same relative timing of output transitions.}
From the point of view of the receiving node, this results in an early or late pulse, respectively, without any change in behavior between pulses.
Of course, timing will still change slowly, the above benign faults will occur at some point, before which the nodes worked correctly, and some faults may be more severe.
Using once more that faulty nodes' impact on timing is bounded by the local skew, the bound from \Cref{thm:bound} extends to a constant number of arbitrary faults in each pulse alongside small changes in delays and hardware clock speeds.
\begin{restatable}{corollary}{corbound}\label{cor:bound}
With probability $1-o(1)$, $\localskew \in O(\kappa \log D)$ even when in each pulse (i) a constant number of faulty nodes change their output behavior and timing, (ii) link delays vary by up to $n^{-1/2}u \log D$, and (iii) hardware clock speeds vary by up to $n^{-1/2}(\vartheta-1)\log D$.
\end{restatable}
Finally, if all else fails, we can fall back on the ability of the pulse progation algorithm to recover from arbitrary transient faults.
In constrast to the \emph{simulated} GCS algorithm, achieving self-stabilization of the pulse propagation scheme itself is straightforward due to the directionality of the propagation.
We stress that in contrast to the comparative ease at which this result is obtained, it is a very powerful and crucial fault tolerance property; this ease is the result of the key design decision to propagate pulses in a directed fashion.

\begin{restatable}{theorem}{thmself}\label{thm:self-stab}
The pulse propagation algorithm can be implemented in a self-stabilizing way.
It stabilizes within $O(\sqrt{n})$ pulses.
\end{restatable}
In light of these results, we view this work as a major step towards simultaneously achieving high performance and strong robustness in the practical setting of clock distribution in hardware.
In alignment with the theoretical question motivating this work, we achieve an asymptotically optimal local skew at the minimum possible node degree under the assumption of node failures with probability $o(n^{-1/2})$.

%###
\paragraph*{The Bigger Picture.}
%###
The above discussion left open how our work advances the state of the art in a broader context.
There are several points we would like to highlight in this context.
\begin{enumerate}
  \item The fault model is more limited than proven necessary by known impossibilities. In the specific application context, this is well-motivated: there is no attacker inducing a worst-case distribution of faults or coordinating the actions of faulty nodes. However, from a theoretical point of view, a minimal set of assumptions is desirable. We view our work as a first and important step in this direction. As discussed above, reducing the skews introduced in the wake of faulty nodes required substantial changes to the algorithm and far-reaching adjustments to the proof strategy. A logical next step would be to explore whether fault-tolerant gradient clock synchronization is feasible with an undirected degree-$3$ network. Achieving this is likely to be necessary to overcome the limitation that only few changes in timing between consecutive pulses can be handled by our solution. We exploit the directional propagation to make self-stabilization simple and re-interpret permanent faults as transient ones from the perspective of simulating the non-fault-tolerant GCS algorithm, but this renders the above downside inherent to our scheme.
  \item We consider an average-case distribution of faults. For low-degree networks, this is a must: if an adversary can choose the position of faults, very few faults can overcome the connectivity of the network. By justifying this restriction for an important practical application, we demonstrate the utility of this modeling choice. We hope that this will inspire follow-up work on fault-tolerant clock synchronization in average-case fault models.
  \item We establish that the case of $f=1$ can be handled with minimal node in-degrees of $3$. This may open up the way towards a general scheme achieving resilience to $f$ local faults with in-degree $2f+1$.
  \item Our main technical contribution might prove useful in different contexts. Prior work~\cite{kuhn10dynamic,kuhn09reference,lenzen08clock,lenzen2010tight} did not allow for adjusting clocks ``backwards,'' which is pivotal in proving the self-stabilization properties of the simulated GCS algorithm. We anticipate that our novel techniques will prove useful in different settings. For instance, one could seek to modify the algorithm from~\cite{lenzen2010tight} to recover from transient faults of bounded duration without increasing the local skew between non-faulty nodes by more than a constant factor.
\end{enumerate}

%###
\paragraph*{Organization of this Article.}
%###
In \Cref{sec:model}, we discuss the system model, introduce the graph on which we run our synchronization algorithm, and motivate our modeling choices, including its non-standard aspects.
We then present a simplified version of the algorithm that better highlights the conceptual approach in \Cref{sec:algorithm}.
We follow with the formal derivation of the skew bounds in \Cref{sec:analysis}.
\Cref{app:layer0} discusses how to generate synchronized pulses on layer~$0$.
The full algorithm and its equivalence to the simplified one if there are no faulty predecessors are shown in \Cref{app:algo}.
Making the algorithm self-stabilizing is discussed in \Cref{app:self-stab}.

\section{Modeling}\label{sec:model}

We use a non-standard model, which is motivated by the specific setting outlined in the introduction.
Accordingly, we will emphasize and discuss model choices where this seems prudent.

%###
\paragraph*{Setting.}
%###
Recall that our motivating application is to provide a synchronized clock signal to a large System-on-Chip.
Physically, this means that we need to provide the clock signal to a rectangular area; for simplicity, we will assume the most common case of it being square.
We want to supply a uniform grid of nodes in the square area with this signal, which then will serve as roots of relatively small local clock trees supplying the low-level components with the clock signal.
If these trees contribute a maximum clock skew of $\Delta$ and the skew between adjacent grid points is at most $\localskew$, the triangle inequality guarantees a worst-case skew of $\localskew + 2\Delta$ between adjacent components of the System-on-Chip.
The local clock trees can be designed using standard methodology.
Therefore, in the following we will focus exclusively on the grid of their roots.

A key assumption we make is that communication delay between correct adjacent nodes changes only slowly with time.
This enables us to generate synchronized pulses at all grid nodes by matching the input frequency with the (inverse) propagation time between consecutive layers.
This is justified for two reasons:
\begin{itemize}
  \item The dominant sources of uncertainty in propagation delay are inaccuracies in component fabrication, aging, and temperature and frequency variations that are slow relative to the time it takes to propagate an input clock pulse across even a large System-on-Chip~\cite{Tschanz2009}. For example temperature-induced delay variations occur on the order of milliseconds \cite[Section 7.2.2]{harris2010cmos}
  \item Changing delays of \emph{all} links between a pair of adjacent layers by up to $\delta$ increases skew bounds by at most $\delta$, cf.~\Cref{lem:drift_fault}. 
\end{itemize}

In order to generate sufficiently synchronized pulses at the nodes of layer~$0$, a straightforward solution is to use a simple path fed by the clock source, see \Cref{app:layer0}. This scheme suffices for our fault model,\footnote{Tolerating one local fault is also straightforward by using a redundant path; as this does not increase the resilience of the system asymptotically, we stick to the simplest scheme.} since the probability that there is any (permanent) fault in layer~$0$ is $o(1)$ due to the small total number of nodes in a single layer. As we show in \Cref{cor:layer0}, this approach is self-stabilizing and results in well-synchronized input pulses.

In a perfect grid, all layers would consist of a path.
Unfortunately, this results in the issue that the endpoints of the path, lacking one neighbor, would have only two adjacent nodes in the preceding and subsequent layer.
A naive solution is to insert additional edges between the boundary nodes, turning the layer into a cycle and the entire graph into a cylinder (with some special treatment of layer~$0$).
However, realizing such a solution on the square would result in far too long edges between boundary nodes or require to, essentially, replicate each layer, effectively doubling the number of nodes and edges in the graph.

Instead, we choose to replicate the boundary nodes only, which then provides the ``missing'' input to the next layer.
Note that this increases the degree of the nodes next to the boundary nodes by one.
We cope with this by a general analysis allowing for the layers to be copies of an arbitrary base graph of minimum degree~$2$.
%In \Cref{fig:base_graph} and \Cref{fig:base_grid}, we show the base graph and the resulting synchronization network in our assumed setting.
In \Cref{fig:base_graph,fig:layers_of_our_graph}, we show the base graph and the connectivity of nodes between adjacent layers of our synchronization network in our assumed setting, respectively.

%###
\paragraph*{Network Graph.}
%###
\begin{figure}[t]
  \centering
  \begin{tikzpicture}[main/.style = {draw, circle},scale = 2.3] 
    \tikzstyle{vertex} = [
            circle,
            fill = {
            black! 25
            },
            minimum size=8pt, 
            inner sep=2pt]  

    \def\rowpos{0.0, 1.0, 2.0, 3.0}
    \def\colpos{-1.0, -0.5, 0.0, 0.5, 1.0, 1.5}

    \newcommand{\rowdraw}[2]{
        \foreach \ci[count=\i from 0] in \colpos {
            \node[vertex] (s\i#2) at (\ci, #1) {};
        }
        \node[vertex](beg_s_0#2) at (-1.5, #1+0.125) {};
        \node[vertex](beg_s_1#2) at (-1.5, #1-0.125) {};
        \node[vertex](end_s_0#2) at (2.0, #1+0.125) {};
        \node[vertex](end_s_1#2) at (2.0, #1-0.125) {};
    }
    \rowdraw{0.0}{0}
    \draw (s00)--(s10);
    \draw (s10)--(s20);
    \draw (s20)--(s30);
    \draw[dotted, thick] (s30)--(s40);
    \draw (s40)--(s50);
    \draw (s00)--(beg_s_00);
    \draw (s00)--(beg_s_10);
    \draw (beg_s_00)--(beg_s_10);
    \draw (s50)--(end_s_00);
    \draw (s50)--(end_s_10);
    \draw (end_s_00)--(end_s_10);
    
\end{tikzpicture}
  \caption{Base graph $H$ used in this work. Rather than using a cycle, which would result in a TRIX grid, we replicate the end nodes of a line to ensure a minimum degree of $2$. Alternatively, one could use a line and exploit that the probability that one of the $O(\sqrt{n})$ boundary nodes fails is $o(1)$.\label{fig:base_graph}}
  %\Description[Base Graph $H$ of our grid]{}
\end{figure}
We are given a simple connected base graph $H=(V,E)$ of minimum degree $2$ and diameter $D\in \N_{>0}$.
For $v,w\in V$, denote by $d(v,w)\le D$ the distance from $v$ to $w$ in $H$.
To derive the graph $G=(V_G,E_G)$ we use for synchronization, for each $\ell\in \N$ we create a copy $V_{\ell}$ of $V$.
Denoting by $(v,\ell)$ the copy of $v\in V$ in $V_{\ell}$, we define $E_{\ell}:=\{((v,\ell),(w,\ell+1))\,|\,\{v,w\}\in E \vee v=w\}$.
We now obtain $G$ by setting $V_G:=\bigcup_{\ell \in \N}V_{\ell}$ and $E_G:=\bigcup_{\ell \in \N} E_{\ell}$.
That is, for each \emph{layer} $\ell\in \N$ we have a copy of $v\in V$, which has outgoing edges to the copies of itself and all its neighbors on layer $\ell+1$. Here, $\ell$ is bounded from above by some value in $\Theta(\sqrt{n})$.
We slightly abuse notation by neglecting this bound on $\ell$ in lemma statements and proofs.
Since $G$ is a DAG, we refer to out-neighbors as \emph{successors} and in-neighbors as \emph{predecessors}.

\begin{figure}[t]
  \centering
  \begin{tikzpicture}[main/.style = {draw, circle},scale = 2.3] 
    \tikzstyle{vertex} = [
            circle,
            fill = {
            black! 25
            },
            minimum size=8pt, 
            inner sep=2pt]  

    \def\rowpos{0.0, 1.0, 2.0, 3.0}
    \def\colpos{-1.0, -0.5, 0.0, 0.5, 1.0, 1.5}

    \newcommand{\rowdraw}[2]{
        \foreach \ci[count=\i from 0] in \colpos {
            \node[vertex] (s\i#2) at (\ci, #1) {};
        }
        \node[vertex](beg_s_0#2) at (-1.5, #1+0.125) {};
        \node[vertex](beg_s_1#2) at (-1.5, #1-0.125) {};
        \node[vertex](end_s_0#2) at (2.0, #1+0.125) {};
        \node[vertex](end_s_1#2) at (2.0, #1-0.125) {};
    }
    \rowdraw{0.25}{0}
    \rowdraw{0.75}{1}

    \draw[->] (beg_s_00) to[bend left] (beg_s_01);
    \draw[->] (beg_s_00)--(beg_s_11);
    \draw[->] (beg_s_00)--(s01);

    \draw[->] (beg_s_10) to[bend left] (beg_s_01);
    \draw[->] (beg_s_10) to[bend left] (beg_s_11);
    \draw[->] (beg_s_10) to (s01);
    \draw[->] (s00)--(beg_s_01);
    \draw[->] (s00)--(beg_s_11);
    \draw[->] (s00)--(s01);
    \draw[->] (s00)--(s11);

    \draw[->] (s10)--(s01);
    \draw[->] (s10)--(s11);
    \draw[->] (s10)--(s21);

    \draw[->] (s20)--(s11);
    \draw[->] (s20)--(s21);
    \draw[->] (s20)--(s31);

    \draw[->,dotted, thick] (s30)--(s41);
    \draw[->] (s30)--(s31);
    \draw[->] (s30)--(s21);

    \draw[->, dotted, thick] (s40)--(s31);
    \draw[->] (s40)--(s41);
    \draw[->] (s40)--(s51);
    
    \draw[->] (s50)--(s41);
    \draw[->] (s50)--(s51);
    \draw[->] (s50)--(end_s_11);
    \draw[->] (s50)--(end_s_01);

    \draw[->] (end_s_00) to[bend right] (end_s_01);
    \draw[->] (end_s_00)--(end_s_11);
    \draw[->] (end_s_00)--(s51);

    \draw[->] (end_s_10) to[bend right] (end_s_01);
    \draw[->] (end_s_10) to[bend right] (end_s_11);
    \draw[->] (end_s_10) to (s51);
    
\end{tikzpicture}
  \caption{\label{fig:layers_of_our_graph} Layer structure of $G$ resulting from our choice of $H$. Most nodes have in- and out-degree $3$, some $4$.}
  %\Description[Connectivity between adjacent layers of our grid]{}
\end{figure}

%###
\paragraph*{Fault Model.}
%###
An unknown subset $F\subset V_G$ is \emph{faulty}, meaning that these nodes do not adhere to the protocol.
Edge faults are mapped to node faults, i.e., if edge $((v,\ell),(w,\ell+1))$ is faulty, we instead consider $(v,\ell)$ (or $(w,\ell+1)$, if preferred) faulty.
We assume that each node fails independently with probability $p\in o(1/\sqrt{n})$.\footnote{We stress that this requirement is not stronger than that of~\cite{Dolev2016a,lenzen20trix} and~\cite{bund2019fault} for $f=1$ in any practical sense. If faults correlate in a way that they cluster together, it is likely that neighbors fail. Assuming independence (or, more generally, negative correlation) captures ``faults do not cluster'' in the most straightforward way that allows us to exploit this property beyond immediate neighbors.}
In particular, this entails that with probability $1-o(1)$, no node has two faulty predecessors, i.e., faults are $1$-local.
We assume this to be the case throughout our analysis.
Thus, for all $\ell\in \N$ and $v\in V$, $|(\{(v,\ell)\}\cup \bigcup_{\{v,w\}\in E}\{(w,\ell)\})\cap F|\le 1$.

Faulty nodes behave arbitrarily, subject to the constraint that at most a constant number of faulty nodes change their timing behavior between consecutive pulses.

%###
\paragraph*{Communication.}
%###
Each node has the ability to broadcast pulse messages on its outgoing edges.
If node $v_{\ell}\in V_{\ell}$ broadcasts at time $t_{v,\ell}$, its successors receive its message at (potentially different) times from $[t_{v,\ell}+d-u,t_{v,\ell}+d]$.
The maximum end-to-end \emph{delay} $d$ includes any delay caused by computation.
Typically, the delay \emph{uncertainty} $u$ is much smaller than $d$.
As discussed above, we assume delays to be static (or changing at a negligible rate cf.~\Cref{cor:bound}), i.e., each edge $e=((v,\ell),(w,\ell+1))$ has an unknown, but fixed associated delay $\delta_e\in [d-u,d]$ applied to each pulse sent from $(v,\ell)$ to $(w,\ell+1)$.

Note that faulty nodes can send pulses at arbitrary times, without being required to broadcast;
even if physical node implementations disallow point-to-point communication, edge faults could still result in this behavior.

%###
\paragraph*{Local Clocks and Computations.}
%###
Each node is able to approximately measure the progress of time by means of a local time reference.
We model this by node $(v,\ell)$ having query access to a \emph{hardware clock} $H_{v,\ell}\colon \Rpositive\to \Rpositive$ satisfying
\begin{align*}
\forall t < t'\in \Rpositive,\ t'-t\leq H_{v,\ell}(t') - H_{v,\ell}(t) \leq \vartheta(t' - t).
\end{align*}
for some $\vartheta>1$.
No known phase relation is assumed between the hardware clocks.
The algorithm will use them exclusively to measure how much time passes between local events.
As for delays, we assume that hardware clock speeds are static (or changing slowly).
This is justified in the same way as for delays.

Computations are deterministic.
However, in addition to receiving a message, the hardware clock reaching a time value previously determined by the algorithm can also trigger computations and possibly the broadcast of a pulse.

%###
\paragraph*{Output and Skew.}
%###
The goal of the algorithm is to synchronize the \emph{pulses} generated by correct nodes.
Our measure of quality is the worst-case \emph{local skew} the algorithm guarantees.
We define the local skew as the largest offset between the $k$-th pulses of adjacent nodes on the same layer or pulses $k$ and $k+1$ of adjacent nodes on layers $\ell$ and $\ell+1$, whichever is larger.
Formally, for $\ell \in \N$ we define
\begin{equation*}
\localskew_{\ell}:=\sup_{k\in \N}\max_{\substack{\{v,w\}\in E\\ (v,\ell),(w,\ell)\notin F}}\{|t_{v,\ell}^k-t_{w,\ell}^k|\},\quad
\localskew_{\ell,\ell+1}:=\sup_{k\in \N}\max_{\substack{((v,\ell),(w,\ell+1))\in E_{\ell}\\ (v,\ell),(w,\ell+1)\notin F}}\{|t_{v,\ell}^{k+1}-t_{w,\ell+1}^k|\},
\end{equation*}
and $\localskew :=\sup_{\ell \in \N}\max\{\localskew_{\ell},\localskew_{\ell,\ell+1}\}$.
This deviates from the standard definition of the local skew:
\begin{itemize}
\item The definition is adjusted to pulse synchronization, which can be viewed as an essentially equivalent time-discrete variant of clock synchronization~\cite{lecture}.
\item Between consecutive layers, we synchronize consecutive pulses. After initialization, which is complete once the first pulse propagated through the grid, this is equivalent to a layer-dependent index shift of pulse numbers.
\end{itemize}

We assume that correct nodes on layer $0$ generate well-synchronized pulses at times $t_{v,0}^k$ for $k\in \N_{>0}$ at a frequency of our choice.
For our purposes, it suffices that $\localskew_0\le \kappa$.
We discuss how to ensure this in \Cref{app:layer0}.
Other correct nodes generate pulses $t_{v,\ell}^k$, $k\in \N_{>0}$, based on the pulse messages received from their predecessors.

Any frequency error of layer~$0$ translates to an increase in $\localskew_{\ell,\ell+1}$ for all $\ell$.
For the sake of notational simplicity, we assume the frequency provided by layer~$0$ to match the one we choose perfectly, subsuming its error instead in the drift of the hardware clocks, i.e., $\vartheta$.
In other words, whatever drives the frequency of layer~$0$ is defining the ``true'' time $t$.
In practice, this has little effect, since the time reference used will be based on the best clock available in the system.

\section{Algorithm}\label{sec:algorithm}

In this section, we discuss the pulse forwarding algorithm.
We provide a simplified version of the algorithm that behaves identical so long as the predecessors of the executing node are correct.
The full algorithm needs to handle the possibility that faulty nodes send multiple messages or none at all.
This complicates bookkeeping and loop control, distracting from the principles underlying the algorithm's operation.
Accordingly, we defer the full algorithm to \Cref{app:algo}, where we show the equivalence to the simplified variant when there are no faulty predecessors.

\subsection{Simplified Pulse Forwarding Algorithm}

The algorithm proceeds in iterations corresponding to pulses.
In each iteration, node $(v,\ell)$
\begin{enumerate}
  \item timestamps the arrival times of the pulses of its predecessors using its hardware clock,
  \item determines a correction value $\Cor_{v,\ell}$ based on these timestamps, and
  \item forwards the pulse $\Lambda-d-\Cor_{v,\ell}$ time after receiving the pulse from $v_{\ell-1}$, measured by its hardware clock.  
\end{enumerate}
If all reception times are close to each other, then $\Cor_{v,\ell}$ will be small.
Recalling that messages are in transit for roughly $d$ time, this translates to $\Lambda$ being the nominal time for a pulse to propagate from layer $\ell-1$ to layer $\ell$.
We need to choose $\Lambda$ large enough such that the above sequence can be always realized.
That is, we need to consider how far apart the reception times of messages from the previous layer can be, and ensure that $\Lambda-d$ exceeds this value plus the resulting correction $\Cor_{v,\ell}$.
\begin{algorithm}[t]
  \caption{Simplified pseudocode for discrete GCS at node $(v,\ell)$, $\ell>0$. As shown in \Cref{lem:equivalence}, this code is equivalent to~\Cref{alg:Discretised_Gradient_TRIX} in the absence of faults. The parameters $\Lambda$ and $\kappa$ will be determined later, based on the analysis.}
  \label{alg:Simplified_Gradient_TRIX}
  \begin{algorithmic}
% \SetKwRepeat{Repeat}{do}{until}
    \Loop
      \State $H_{\own},H_{\min},H_{\max} := \infty$
      \Repeat
        \If{received pulse from $(v,\ell-1)$}
          \State $H_{\own} := H_{v,\ell}(t)$
        \EndIf
        \If{received pulse from first $(w,\ell-1)$, $\{v,w\}\in E$}
          \State $H_{\min}:=H_{v,\ell}(t)$
        \EndIf
        \If{received pulse from last $(w,\ell-1)$, $\{v,w\}\in E$}
          \State $H_{\max}:=H_{v,\ell}(t)$
        \EndIf
      \Until{$H_{\own}, H_{\min}, H_{\max}<\infty$}
      \State $\Cor_{v,\ell}:=\min_{s\in \N}\{\max\{H_{\own}-H_{\max}+4s\kappa,H_{\own}-H_{\min}-4 s\kappa \}\}-\kappa/2$
      \If{$\Cor_{v,\ell}<0$}
        \State $\Cor_{v,\ell}:=\min\{H_{\own}-H_{\min} -\kappa/2+2\kappa,0\}$
      \ElsIf{$\Cor_{v,\ell}>\vartheta\kappa$}
        \State $\Cor_{v,\ell}:=\max\{H_{\own}-H_{\max} -\kappa/2-\kappa,\vartheta\kappa\}$
      \EndIf
      \State wait until $H_{v,\ell}(t) = H_{\own} + \Lambda - d - \Cor_{v,\ell}$
      \State \textbf{broadcast pulse}
    \EndLoop
  \end{algorithmic}
\end{algorithm}
Assuming that this precondition holds, \Cref{alg:Simplified_Gradient_TRIX} implements the above approach.
In each loop iteration, it initializes three reception times to $\infty$:
\begin{itemize}
  \item $H_{\own}$, which stores the arrival time of the pulse from $(v,\ell-1)$. From the perspective of the simulated GCS algorithm, this reflects the state of the node $v\in V$ simulated by $(v,\ell)$, $\ell\in \N$.
  \item $H_{\min}$, which stores the minimum arrival time of a pulse from a neighbor $w_{\ell-1}$, $w\neq v$. This corresponds to the first pulse received from a neighbor $w$ of $v$ in $G$ in this iteration.
  \item $H_{\max}$, which stores the maximum arrival time of a pulse from a neighbor $w_{\ell-1}$, $w\neq v$. This corresponds to the last pulse received from a neighbor $w$ of $v$ in $G$ in this iteration.
\end{itemize}
The do-until loop fills these variables with the correct values.
At the heart of the algorithm lies the computation of $C_{v,\ell}$.
If there were no faults, one could always compute
\begin{equation*}
\Delta:=\min_{s\in \N}\left\{\max\{H_{\own}-H_{\max}+4s\kappa,H_{\own}-H_{\min}-4 s\kappa \}\right\}-\frac{\kappa}{2}
\end{equation*}
and then choose the closest value from the range $[0,\vartheta \kappa]$, i.e., set (i) $\Cor_{v,\ell}:=0$ if $\Delta<0$, (ii) $\Cor_{v,\ell}:=\vartheta \kappa$ if $\Delta > \vartheta \kappa$, and (iii) $\Cor_{v,\ell}:=\Delta$ else.
% \begin{equation*}
% \Cor_{v,\ell}:=\begin{cases}
% \Delta & \mbox{if $\Delta \in [0,\vartheta \kappa]$,}\\
% 0 & \mbox{if $\Delta <0$, and}\\
% \vartheta \kappa & \mbox{if $\Delta > \vartheta \kappa$.}
% \end{cases}
% \end{equation*}

To get intuition on this choice, observe that $\min_{x\in \R}\{\max\{H_{\own}-H_{\max}+x,H_{\own}-H_{\min}-x\}\}$ is attained when $H_{\own}-H_{\max}+x=H_{\own}-H_{\min}-x$.
This is equivalent to $x=(H_{\max}-H_{\min})/2$, i.e., if $\kappa$ was infinitesimally small, we had that $H_{\own}-\Delta=(H_{\max}+H_{\min})/2$.
Moreover, if each node could accurately determine the time each pulse received by it was sent, the reception times of the pulse messages could serve as exact proxies for the actual pulse forwarding times of the nodes on layer $\ell-1$.
In iteration $k$, this would mean to generate the pulse at $(v,\ell)$ faster if $(v,\ell-1)$ generated its pulse later than the average of $\min_{\{v,w\}\in E}\{t_{w,\ell-1}^k\}$ and $\max_{\{v,w\}\in E}\{t_{w,\ell-1}^k\}$.
Thus, any $(v,\ell)$ for which $t_{v,\ell-1}^k-\min_{\{v,w\}\in E}\{t_{w,\ell-1}^k\}>\max_{\{v,w\}\in E}\{t_{w,\ell-1}^k\}-t_{v,\ell-1}^k$ would choose $C_{v,\ell}>0$, attempting to reduce $\max_{\{v,w\}\in E}\{|t_{v,\ell}^k-t_{w,\ell}^k|\}$ compared to $\max_{\{v,w\}\in E}\{|t_{v,\ell-1}^k-t_{w,\ell-1}^k|\}$.
This can be viewed as trying to reduce the local skew by a greedy strategy.

Unfortunately, this naive strategy fails to account for inaccuracies due to message delay uncertainty and drifting hardware clocks.
Nonetheless, we follow this strategy up to deviations of $O(\kappa)$.
The additional terms serve the following purposes:
\begin{itemize}
  \item Considering only discrete choices for $x\in 4\kappa \N$ rather than arbitrary $x\in \R$ is the key ingredient that makes the algorithmic approach succeed, cf.~\cite{kuhn09reference}. Essentially, this is necessary because there is no way to determine $t_{v_{\ell-1},k}-t_{w_{\ell-1},k}$ precisely. Discretizing observed skews in units of $\kappa\in \Theta(u+(\vartheta-1)(\Lambda-d))$ enables a delicate strategy that alternates between overestimating skews to locally generate the next pulse earlier for the sake of ``catching up'' with others and underestimating skews to ``wait'' for others catch up.
  \item Substracting $\kappa/2$ accounts for errors in measuring skews, which are caused by uncertainty in message delay and hardware clock speed.
  \item To limit the damage done by a faulty predecessor of $(v,\ell)$, we ensure that $(v,\ell)$ generates its pulse without too large of a deviation from the \emph{median} of $t_{v,\ell-1}$, $\min_{\{v,w\}\in E}\{t_{w,\ell-1}^k\}$, and $\max_{\{v,w\}\in E}\{t_{w,\ell-1}^k\}$ (plus the nominal offset of $\Lambda$). This is achieved by permitting corrections $C_{v,\ell}<0$ if $(v,\ell-1)$ clearly generated its pulse earlier than $\min_{\{v,w\}\in E}\{t_{w,\ell-1}^k\}$ and $C_{v,\ell}>\vartheta \kappa$ if it clearly generated its pulse later than $\max_{\{v,w\}\in E}\{t_{w,\ell-1}^k\}$, respectively.
\end{itemize}

To further motivate the last point, recall that there can be at most one fault among the predecessors of $(v,\ell)$.
A single faulty predecessor can affect only one of the three values $H_{\own}$, $H_{\min}$, and $H_{\max}$: control $H_{\own}$ arbitrarily, $H_{\min}$ to be smaller than the minimum reception time from a correct node $(w,\ell-1)$, $\{v,w\}\in E$, or $H_{\max}$ to exceed the maximum reception time from correct nodes $(w,\ell-1)$, $\{v,w\}\in E$.
Hence, ensuring that pulses are generated with only a small offset relative to $\median{H_{\own}, H_{\min}, H_{\max}}+\Lambda-d$ indeed limits the damage that a fault can do.

Achieving all of the desired properties is non-trivial, leading to the fairly involved choice of $\Cor_{v,\ell}$.
It can be viewed as simultaneously implementing relaxed fast and slow conditions (as introduced in~\cite{kuhn09reference}), an additional jump condition required to make the GCS algorithm work under these relaxed fast and slow conditions, and the requirement to stick close to the median of predecessors' pulse times.
In \Cref{sec:conditions}, we specify the (relaxed) slow and fast condition, as well as the jump condition, and show that the algorithm implements them.
\Cref{lem:fault_self,lem:fault_neighbor} show that the algorithm also enforces that pulses deviate little from the time interval spanned by correct predecessors (offset by $\Lambda$).

There is some freedom in the choice of parameters.
For simplicity, we fix a good choice of $\kappa$ and note that $d$ must satisfy a lower bound $B\in O(\sup_{\ell\in \N}\{\localskew_{\ell}\}+\kappa)$.
Observe that this constraint simply means that the skew bounds are useful, as a skew that is of similar size as the maximum end-to-end delay requires to slow the system down substantially.
Finally, $\Lambda$ must be at least $d+O(\sup_{\ell\in \N}\{\localskew_{\ell}\})$, which due to the previous constraint holds e.g.\ for the choice $\Lambda=2d$.
Formally, for a sufficiently large constant $C$,%\footnote{We do not attempt to optimize constants in this work.} 
\begin{align}
\kappa & := 2\left(u+\left(1-\frac{1}{\vartheta}\right)(\Lambda-d)\right),\label{eq:kappa}\\
\Lambda &\ge C\vartheta(\sup_{\ell\in \N}\{\localskew_{\ell}\} + u) + d,\mbox{ and}\label{eq:Lambda}\\
d &\ge C(\vartheta(\sup_{\ell\in \N}\{\localskew_{\ell}\} + u)+\kappa).\label{eq:d}
\end{align}

%###
\subsubsection*{Complete Algorithm}
%###
The complete algorithm cannot wait for messages from all predecessors to determine when to send its pulse, as a faulty node not sending its pulse then would deadlock all its descendants.
As discussed above, the hardware clock time of the next pulse time does not deviate \emph{much} from $\median{H_{\own}, H_{\min}, H_{\max}}+\Lambda-d$, but does depend on $\max\{H_{\min},H_{\own},H_{\max}\}$ in some cases.
However, we will prove that $\localskew_{\ell-1}$ is small enough such that all pulse messages from correct nodes will be received in time.
Hence, it is sufficient to wait until $\median{H_{\own}, H_{\min}, H_{\max}}+\vartheta \localskew_{\ell-1}$ (or later) according to $H_{v,\ell}$.
Provided that $\Lambda-d$ is large enough, this implies that any message for computing $\Cor_{v,\ell}$ missing is due to a fault;
in fact, at the point in time when this becomes clear, $\Cor_{v,\ell}$ is already determined, regardless of how late the message would arrive.

The complete algorithm differs from \Cref{alg:Simplified_Gradient_TRIX} by covering the case that a signal does not arrive in time.
Intuitively, one can treat the respective message arrival time ($H_{\own}$ or $H_{\max}$, $H_{\min}$ is not possible) as $\infty$, while allowing such an $\infty$ to cancel out in substraction:
\begin{itemize}
  \item If $H_{\own}=\infty$, then $C_{v,\ell}\in H_{\own}-H_{\max}-O(\kappa)$, and $(v,\ell)$ will generate its pulse at local time $H_{\own}+\Lambda-d-C_{v,\ell}\in H_{\max}+\Lambda-d+O(\kappa)$.
  \item If $H_{\max}=\infty$ and $H_{\own}\ge H_{\min}$, then $C_{v,\ell}\in H_{\own}-H_{\min}\pm \Theta(\kappa)$ and $(v,\ell)$ will generate its pulse at local time $H_{\own}+\Lambda-d-C_{v,\ell}\in H_{\min}+\Lambda-d\pm O(\kappa)$.
  \item If $H_{\max}=\infty$ and $H_{\own}< H_{\min}$, then $C_{v,\ell}\in [0,2\kappa]$ and $(v,\ell)$ will generate its pulse at local time $H_{\own}+\Lambda-d-C_{v,\ell}\in H_{\own}+\Lambda-d-O(\kappa)$.
\end{itemize}
Note that in all cases, the pulse is generated with an offset of $\Lambda-d-\Theta(\kappa)$ from the median reception time.
The complete algorithm follows the above intuition, leveraging the fact that there is no need to wait indefinitely to determine that the missing signal is late, and is given in \Cref{app:algo}.

Last, but not least, it is of interest to make the pulse forwarding algorithm self-stabilizing~\cite{dijkstra74self}.
Due to the design choice of propagating the clock signal from a single source along a DAG, this will immediately translate to the overall scheme being self-stabilizing, so long as the clock generation is self-stabilizing, too.
This is straightforward, because one can assume that the signals from the previous layer are already well-synchronized.
Thus, all that nodes need to do is to detect when all but possibly one (faulty) pulse signal arrive in close temporal proximity to determine when to clear their memory and start a new iteration of the main loop.
In~\Cref{app:self-stab}, we discuss how this can be achieved using standard techniques.

\section{Analysis}\label{sec:analysis}

We now analyze the pulse progagation scheme under the assumption that layer~$0$ generates well-synchronized pulses.
We discuss a suitable method for achieving this in \Cref{app:layer0}.
Our analysis proceeds along the following lines:
\begin{enumerate}
  \item We show that, if the local skew is small enough compared to $\Lambda$, i.e., \Cref{eq:Lambda} holds, all correct nodes execute their iterations as intended. That is, each correct node on layer $\ell>0$ receives the $k$-th pulses of its correct predecessors in its $k$-th loop iteration. This is deferred to \Cref{app:algo}. We then proceed under the assumption that this holds true, which will be justified retroactively once we establish that the local skew is bounded.
  \item Since delays and hardware clock speeds are (approximated as being) static, any (substantial) change in relative timing of consecutive pulses is due to faulty nodes. Thus, the task of bounding the local skew reduces to bounding the intra-layer skew $\localskew_{\ell}$ for a single pulse, since such a bound must take into account the full variability introduced by faulty nodes. This reasoning is deferred to \Cref{app:final}.
  \item Based on potentials, we analyze $\localskew_{\ell}$ in the absence of faults. The results entail not only bounded skew, but also that the potentials recover if they become unexpectedly large.
  \item We show that faulty nodes have limited impact on the potentials. From this and the above recovery property, we infer that skews behave favorably also when there are faults.
\end{enumerate}
As stated above, the first two steps of our line of reasoning are deferred to the appendix, alongside some basic helper lemmas given in \Cref{app:basic}.
The main challenge is to bound $\localskew_{\ell}$ for a single pulse.
Due to the first step, we know that the $k$-th pulse at correct nodes depends only on the $k$-th pulses of their predecessors (\Cref{lem:correct_pulsing}).
Therefore, in the following fix $k$ and denote the $k$-th pulse time of correct $(v,\ell)\in V_G$ by $t_{v,\ell}$.

Recall that for $v,w\in V$, we denote by $d(v,w)$ their distance in the base graph $H$.
Our analysis is built around the following potential functions.
\begin{definition}[Potential Functions]
  Let $v,w\in V$ and $s,\ell\in \N$.
  We define
  \begin{align*}
    \psi^s_{v,w}(\ell) &:= t_{v,\ell} - t_{w,\ell} - 4s\kappa d(v,w),
    &\Psi^s(\ell) := \max_{v,w\in V}\{\psi^s_{v,w}(\ell)\},\\
    \xi^s_{v,w}(\ell) &:= t_{v,\ell} - t_{w,\ell} - (4s-2)\kappa d(v,w),~\mbox{and} 
    &\Xi^s(\ell) := \max_{v,w\in V}\{\Xi^s_{v,w}(\ell)\}.
  \end{align*}
\end{definition}
Bounding $\Psi^s(\ell)$ readily translates to bounding $\localskew_{\ell}$.
\begin{observation}\label{obs:skew}
If for $s,\ell\in \N$ and some $\Psi^s\in \Rpositive$ it holds that $\Psi^s(\ell)\le \Psi^s$, then $\localskew_{\ell}\le \Psi^s+4s\kappa$.
\end{observation}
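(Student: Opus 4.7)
The plan is to unpack the definitions and apply the hypothesis to adjacent pairs in both orders. Fix an arbitrary pulse index $k$ (which is implicit in the notation $t_{v,\ell}$, per the convention set just above the definition) and an edge $\{v,w\}\in E$ with both $(v,\ell),(w,\ell)\notin F$. Since $H$ is simple and $\{v,w\}\in E$, we have $d(v,w)=1$, and likewise $d(w,v)=1$.

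Applying the hypothesis $\Psi^s(\ell)\le \Psi^s$ to the pair $(v,w)$ yields
\begin{equation*}
t_{v,\ell}-t_{w,\ell}-4s\kappa \;=\; \psi^s_{v,w}(\ell) \;\le\; \Psi^s(\ell)\;\le\; \Psi^s,
\end{equation*}
so $t_{v,\ell}-t_{w,\ell}\le \Psi^s+4s\kappa$. Applying it instead to the pair $(w,v)$ gives $t_{w,\ell}-t_{v,\ell}\le \Psi^s+4s\kappa$ by the same calculation. Combining the two inequalities yields $|t_{v,\ell}-t_{w,\ell}|\le \Psi^s+4s\kappa$.

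Since the edge $\{v,w\}$ and the pulse index $k$ were arbitrary, taking the supremum over $k$ and the maximum over correct adjacent pairs on layer $\ell$ in the definition of $\localskew_{\ell}$ gives the claimed bound $\localskew_{\ell}\le \Psi^s+4s\kappa$. There is no serious obstacle here: the observation is essentially a bookkeeping consequence of the definitions, relying only on the fact that adjacency in $H$ means $d(v,w)=1$ and that $\psi^s$ is antisymmetric up to the $4s\kappa d(v,w)$ offset, so both orientations of any edge are dominated by $\Psi^s(\ell)$.
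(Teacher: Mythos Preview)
Your proof is correct and essentially identical to the paper's own argument. The only stylistic difference is that the paper picks a maximizing edge and assumes $t_{v,\ell}\ge t_{w,\ell}$ without loss of generality, whereas you bound both orientations explicitly; the underlying reasoning (that $d(v,w)=1$ for adjacent nodes, so $\psi^s_{v,w}(\ell)\le \Psi^s$ rearranges to the desired bound) is the same.
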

\begin{proof}
Fix $k\in \N$ and suppose that $\{v,w\}\in E$ maximizes $|t_{v,\ell} - t_{w,\ell}|$.
W.l.o.g., assume that $t_{v,\ell}\ge t_{w,\ell}$.
Since $\{v,w\}\in E$, we have that $d(v,w)=1$.
Hence,
% \begin{equation*}
$|t_{v,\ell} - t_{w,\ell}|=t_{v,\ell} - t_{w,\ell}=\psi^s_{v,w}(\ell)+4s\kappa\le \Psi^s(\ell)+4s\kappa\le \Psi^s+4s\kappa$.
% \end{equation*}
Since $k\in \N$ is arbitrary, it follows that $\localskew_{\ell}\le \Psi^s+4s\kappa$.
\end{proof}
In summary, the goal of our analysis will be to bound $\Psi^s(\ell)$ by a small value for some $s$ satisfying $4s\kappa\in O(u\log D)$.

We first study the behavior of the algorithm if there are no faults.
Accordingly, this will be tacitly assumed in all statements of this section, with the expection of \Cref{sec:faults}.
Note that by \Cref{lem:equivalence}, this means that we may also tacitly assume that \Cref{alg:Simplified_Gradient_TRIX} is run by all nodes in layers $\ell\in \N_{>0}$.
In \Cref{sec:faults}, we will then bound the impact of faulty layers on the potential.

\subsection{The Slow, Fast, and Jump Conditions}\label{sec:conditions}

The key to bounding the local skew without faults is to find the right balance between two conflicting goals:
choosing $\Cor_{v,\ell}$ large enough to ``catch up'' to predecessors $w_{\ell-1}\neq v_{\ell-1}$ that generated their pulse earlier than $v_{\ell-1}$, but small enough to ``wait'' for predecessors $w_{\ell-1}\neq v_{\ell-1}$ that generated their pulse later than $v_{\ell-1}$.
The following condition, illustrated in \Cref{fig:SC_and_FC_demo}, captures what we need regarding the latter.
\begin{definition}[Slow Condition]\label{def:sc}
For all $s\in \N$, correct layers $\ell-1\in \N$, and $v_{\ell}\in V_{\ell}\setminus F$, we require the \emph{slow condition} $\operatorname{\SC}(s):=\SCone(s) \lor \SCtwo(s) \lor \SCthree$ to hold, where
\begin{align*}
\SCone(s)\colon & \frac{\Cor_{v,\ell}}{\vartheta}\le t_{v,\ell-1} - \max_{\{v,w\}\in E}\{t_{w,\ell-1}\}+4s\kappa\\
\SCtwo(s)\colon & \frac{\Cor_{v,\ell}}{\vartheta}\le t_{v,\ell-1} - \min_{\{v,w\}\in E}\{t_{w,\ell-1}\}-4s\kappa\\
\SCthree\colon &  \Cor_{v,\ell} \le 0.
\end{align*}
\end{definition}
This can be viewed as a variant of the slow condition from~\cite{kuhn09reference}, adjusted to our setting by quantifying by how much $v_{\ell}$ may safely shift the timing of its pulse.
The main conceptual difference to~\cite{kuhn09reference} is that we relax the slow condition by adding $\SCthree$.
In what follows, we drop $s$ from the notation when it is clear from context.

The fast condition, also illustrated in \Cref{fig:SC_and_FC_demo}, is the counterpart to \Cref{def:sc} addressing the need to ``catch up'' to neighbors that are ahead.
\begin{definition}[Fast Condition]\label{def:fc}
For all $s\in \N_{>0}$, correct layers $\ell-1\in \N_{>0}$, and $v_{\ell}\in V_{\ell}\setminus F$, we require the \emph{fast condition} $\FC(s):=\FCone(s) \lor \FCtwo(s) \lor \FCthree$ to hold, where
\begin{align*}
\FCone(s)\colon & \Cor_{v,\ell}\ge t_{v,\ell-1} - \max_{\{v,w\}\in E}\{t_{w,\ell-1}\}+(4s-2)\kappa+\kappa\\
\FCtwo(s)\colon & \Cor_{v,\ell}\ge t_{v,\ell-1} - \min_{\{v,w\}\in E}\{t_{w,\ell-1}\}-(4s-2)\kappa+\kappa\\
\FCthree\colon &  \Cor_{v,\ell} \ge \kappa.
\end{align*}
\end{definition}
This can be viewed as a variant of the fast condition from~\cite{kuhn09reference}, adjusted to our setting by quantifying by how much $v_{\ell}$ may safely shift the timing of its pulse.
The main conceptual difference to~\cite{kuhn09reference} is that we relax the fast condition by adding $\FCthree$.

In addition, note that there is an additive term of $\kappa$ that does not change sign.
Its purpose is to account for the fact that our simulation of the GCS algorithm from~\cite{lenzen2010tight} operates in discrete time steps corresponding to the layers.
The continuous versions of the GCS algorithm in~\cite{kuhn09reference,kuhn10dynamic,lenzen2010tight} can choose this term arbitrarily small.
In contrast, we need it to exceed the maximum error in time measurement accumulated in a step.
We remark that, in principle, one could choose this term different from $\kappa$.
However, since both need to meet the same lower bound of $u+(1-1/\vartheta)(\Lambda-d)$, there is no asymptotic gain in introducing a separate parameter.

Our relaxation of the slow and fast conditions adds a substantial complication.
From the perspective of the time-continuous variant of the algorithm in~\cite{kuhn09reference}, we now allow for arbitrarily large clock ``jumps,'' rather than bounded clock rates.
In our discrete version, the rate bound from~\cite{kuhn09reference} corresponds to $\Cor_{v,\ell}\in [0,\vartheta \kappa]$.
Without this additional constraint, the slow and fast conditions are insufficient to bound skews.

This is illustrated in \Cref{fig:need_for_JC}, showing an execution that satisfies $\SC$ and $\FC$, but suffers from skews that grow without bound.
The key issue is that adjacent nodes could ``jump'' in opposite directions, resulting in an oscillatory behavior in which measurement errors accumulate indefinitely.
To avoid this kind of behavior, we add an additional condition that ``dampens'' such oscillations, yet limits by how much a faulty predecessor can cause an increase in skew.
\begin{definition}[Jump Condition]\label{def:jump}
For all correct layers $\ell-1\in \N_{>0}$ and $v_{\ell}\in V_{\ell}\setminus F$, we require the \emph{jump condition} $\JC:=\JCone \lor \JCtwo\lor \JCthree$ to hold, where
\begin{align*}
\JCone\colon & \kappa<\frac{\Cor_{v,\ell}}{\vartheta}\le t_{v,\ell-1} - \max_{\{v,w\}\in E}\{t_{w,\ell-1}\}-\kappa\\
\JCtwo\colon & 0>\Cor_{v,\ell}\ge t_{v,\ell-1} - \min_{\{v,w\}\in E}\{t_{w,\ell-1}\}+\kappa\\
\JCthree\colon & 0\le \frac{\Cor_{v,\ell}}{\vartheta}\le \kappa.
\end{align*}
\end{definition}
We prove that the slow, fast, and jump condition are correctly implemented in \Cref{lem:slow_holds,lem:fast_holds,lem:jump_holds} in \Cref{app:basic}.

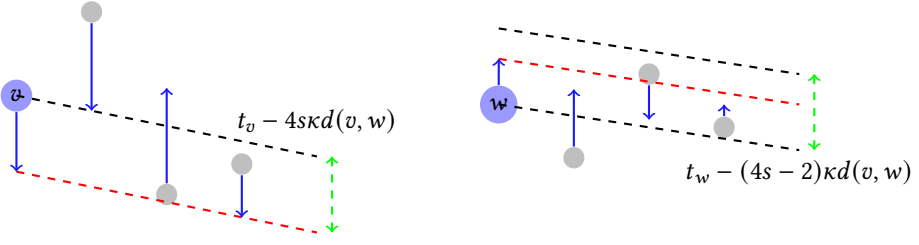
\begin{figure}[t!]
  \centering
  \begin{subfigure}{0.4\textwidth}
    \centering
    \begin{tikzpicture}[main/.style = {draw, circle},scale = 2.0] 
    \tikzstyle{vertex} = [
            circle,
            fill = {
            black! 25
            },
            minimum size=8pt, 
            inner sep=2pt] 
    \tikzstyle{invisiblevertex} = [
        circle,
        draw = none,
        fill = {
        white!24
        },
        minimum size=0pt, 
        inner sep=2pt] 
    %example
    \node[vertex,fill=blue!40] (main) at (-1.0,0) {$v$};

    %\node[vertex] (left_neighbour) at (-0.5,-0.35) {};
    %\node[invisiblevertex] (left_neighbour_virt) at (-0.5,0) {};

    %\node[vertex] (left_left_neighbour) at (-1.0,-0.25) {};
    %\node[invisiblevertex] (left_left_neighbour_virt) at (-1.0,0) {};
    
    \node[vertex] (r1_neighbour) at (-0.5,0.55) {};
    \node[invisiblevertex] (r1_neighbour_virt) at (-0.5,0) {};

    \node[vertex] (r2_neighbour) at (0.0,-0.65) {};
    \node[invisiblevertex] (r2_neighbour_virt) at (0.0,0) {};

    \node[vertex] (r3_neighbour) at (0.5,-0.45) {};
    \node[invisiblevertex] (r3_neighbour_virt) at (0.5,0) {};
    
    %\draw[thick, dotted] (r3_neighbour_virt)--(main); %--(left_left_neighbour_virt);

    %\draw[<->, dotted, thick] (left_neighbour)--(left_neighbour_virt) node[midway, left] {};

    \draw[dashed, thick] (-1.0, 0)--(1.0,-0.40) node [right,above=0.5em] {$t_v - 4s\kappa d(v,w)$};
    %\draw[dashed, thick] (-1.0, 0.5)--(1.0,0.90) node [right,above] {};
    \draw[dashed, thick, red] (-1.0, -0.5)--(1.0,-0.90) node [right,below, black] {};
    \draw[<->,color=green, thick, dashed] (1.1,-0.4)--(1.1,-0.9) node[right,midway] {};

    %\draw[->,dashed,thick] (main)--(-1.0,0.50) node [midway,left] {$\Cor_{v,\ell} < 0$};
    %\draw[-,dotted,thick] (-1.0,0.5)--(-1.0,0.7) node [midway,left] {};

    \draw[->,thick, color=blue!90] (main)--(-1.0,-0.50) node [midway,left] {}; %{$0 < \Cor_{v,\ell}$};
    \draw[->,thick,color=blue!90] (r1_neighbour)--(-0.5, -0.10) node [midway,left] {};
    \draw[->,thick,color=blue!90] (r2_neighbour)--(0.0,0.05) node [midway,left] {};
    \draw[->,thick,color=blue!90] (r3_neighbour)--(0.5,-0.8) node [midway,left] {};
\end{tikzpicture}
  \end{subfigure}
  \hspace{2em}
  \begin{subfigure}{0.4\textwidth}
    \centering
     \begin{tikzpicture}[main/.style = {draw, circle},scale = 2.0] 
    \tikzstyle{vertex} = [
            circle,
            fill = {
            black! 25
            },
            minimum size=8pt, 
            inner sep=2pt] 
    \tikzstyle{invisiblevertex} = [
        circle,
        draw = none,
        fill = {
        white!24
        },
        minimum size=0pt, 
        inner sep=2pt] 
    %example
    \node[vertex,fill=blue!40] (main) at (-1.0,0) {$w$};

    %\node[vertex] (left_neighbour) at (-0.5,-0.35) {};
    %\node[invisiblevertex] (left_neighbour_virt) at (-0.5,0) {};

    %\node[vertex] (left_left_neighbour) at (-1.0,-0.25) {};
    %\node[invisiblevertex] (left_left_neighbour_virt) at (-1.0,0) {};
    
    \node[vertex] (r1_neighbour) at (-0.5,-0.35) {};
    \node[invisiblevertex] (r1_neighbour_virt) at (-0.5,0) {};

    \node[vertex] (r2_neighbour) at (0.0,0.20) {};
    \node[invisiblevertex] (r2_neighbour_virt) at (0.0,0) {};

    \node[vertex] (r3_neighbour) at (0.5,-0.15) {};
    \node[invisiblevertex] (r3_neighbour_virt) at (0.5,0) {};
    
    %\draw[thick, dotted] (r3_neighbour_virt)--(main); %--(left_left_neighbour_virt);

    %\draw[<->, dotted, thick] (left_neighbour)--(left_neighbour_virt) node[midway, left] {};

    ;
    \draw[dashed, thick] (-1.0, 0)--(1.0,-0.30) node [right,below] {$t_w-(4s-2)\kappa d(v,w)$};
    \draw[dashed, thick] (-1.0, 0.5)--(1.0,0.2) node [right,above, black] {};
    \draw[dashed, thick, red] (-1.0, 0.3)--(1.0,0.0) node [right,below] {};
    \draw[<->,color=green, thick, dashed] (1.1,-0.3)--(1.1,0.2) node[right,midway] {};

    \draw[->,thick,color=blue!90] (main)--(-1.0,0.30) node [midway,left] {}; %{$\Cor_{w,\ell} < \kappa$};
    %\draw[->,dashed,thick] (main)--(-1.0,-0.40) node [midway,left] {$\Cor_{w,\ell} > \kappa$};
    %\draw[-,dotted,thick] (-1.0,-0.40)--(-1.0,-0.60) node [midway,left] {};
    \draw[->,thick,color=blue!90] (r1_neighbour)--(-0.5,0.10) node [midway,left] {};
    \draw[->,thick,color=blue!90] (r2_neighbour)--(0.0,-0.1) node [midway,left] {};
    \draw[->,thick,color=blue!90] (r3_neighbour)--(0.5,0.0) node [midway,left] {};
\end{tikzpicture}
  \end{subfigure}
  \caption{Slow condition (left) and fast condition (right). $\SC(s)$ is tailored to ensuring that $\max_{w\in V}\{\psi_{v,w}^s(\ell)\}$ (the length of the green arrow) cannot grow quickly. Nodes $w$ with $\Cor_{w,\ell}\le 0$ ($\SCthree$ holds) cannot apply a correction pushing them below the red line. If $\Cor_{w,\ell}>0$, then both $\SCone$ and $\SCtwo$ will ensure that there is a neighbor $x$ of $w$ such that the offset of $t_{w,\ell-1}-\Cor_{w,\ell}/\vartheta$ to the black line does not exceed the one of $t_{x,\ell-1}$. In other words, $\SC$ ensures that the blue arrows indicating $\Cor_{w,\ell}/\vartheta$ do not reach below the red line. This means that any increase of $\max_{w\in V}\{\psi_{v,w}^s(\ell)\}$ is caused by delay and clock speed variation, which in turn is bounded by $\kappa/2$ per layer. Similarly, $\FC(s)$ is tailored to ensuring that $\max_{v\in V}\{\xi_{v,w}^s(\ell)\}$ (the length of the green arrow), if positive, decreases by at least $\kappa/2$. To ensure this, $\Cor_{w,\ell}$ (indicated by blue arrows) must be large enough to reach below the red line. This is achieved by $\FC(s)$ having an additional ``slack'' term of $\kappa$, which overcomes the ``loss'' of $\kappa/2$ due to uncertainty.\label{fig:SC_and_FC_demo}}
  %\Description[Illustrating the Slow and Fast Conditions]{Slow condition (left) and fast condition (right). $\SC(s)$ is tailored to ensuring that $\max_{w\in V}\{\psi_{v,w}^s(\ell)\}$ (the length of the green arrow) cannot grow quickly. Nodes $w$ with $\Cor_{w,\ell}\le 0$ ($\SCthree$ holds) cannot apply a correction pushing them below the red line. If $\Cor_{w,\ell}>0$, then both $\SCone$ and $\SCtwo$ will ensure that there is a neighbor $x$ of $w$ such that the offset of $t_{w,\ell-1}-\Cor_{w,\ell}/\vartheta$ to the black line does not exceed the one of $t_{x,\ell-1}$. In other words, $\SC$ ensures that the blue arrows indicating $\Cor_{w,\ell}/\vartheta$ do not reach below the red line. This means that any increase of $\max_{w\in V}\{\psi_{v,w}^s(\ell)\}$ is caused by delay and clock speed variation, which in turn is bounded by $\kappa/2$ per layer. Similarly, $\FC(s)$ is tailored to ensuring that $\max_{v\in V}\{\xi_{v,w}^s(\ell)\}$ (the length of the green arrow), if positive, decreases by at least $\kappa/2$. To ensure this, $\Cor_{w,\ell}$ (indicated by blue arrows) must be large enough to reach below the red line. This is achieved by $\FC(s)$ having an additional ``slack'' term of $\kappa$, which overcomes the ``loss'' of $\kappa/2$ due to uncertainty.}
  \Description[Illustrating the Slow and Fast Conditions]{}
\end{figure}
\begin{figure}[t]
  \centering
  \begin{subfigure}{0.2\textwidth}
    \centering
    \begin{tikzpicture}
      \node[text width=1.2cm] at (1,-0.5) {};
      \node[draw,text width=1.4cm] at (1,0.5) {Layer $\ell$};
      \node[draw,text width=2cm] at (1,3.5) {Layer $\ell+1$};
      \node[draw,text width=2cm] at (1,7.0) {Layer $\ell+2$};
    \end{tikzpicture}
  \end{subfigure}
  \begin{subfigure}{0.35\textwidth}
    \centering
    \begin{tikzpicture}[main/.style = {draw, circle},scale = 2.0] 
    \tikzstyle{vertex} = [
            circle,
            fill = {
            black! 25
            },
            minimum size=8pt, 
            inner sep=2pt] 
    
    %example
    \node[vertex,fill=blue!40] (main) at (0,0.55) {};
    \node[vertex,fill=white, draw=none] (main_virt) at (0,-0.85) {};

    \node[vertex] (left_neighbour) at (-0.5,-0.55) {};
    \node[vertex, draw=none, fill=white, dotted] (left_neighbour_virt) at (-0.5,0.55) {};

    \node[vertex] (left_left_neighbour) at (-1.0,0.55) {};
    \node[vertex, draw=none, fill=white, dotted] (left_left_neighbour_virt) at (-1.0,-0.85) {};
    
    \node[vertex] (right_neighbour) at (0.5,-0.55) {};
    \node[vertex, draw=none, fill=white, dotted] (right_neighbour_virt) at (0.5,0.55) {};

    \node[vertex] (right_right_neighbour) at (1.0,0.55) {};
    \node[vertex, draw=none, fill=white, dotted] (right_right_neighbour_virt) at (1.0,-0.85) {};

    \draw[->, color=blue!90, thick] (right_neighbour)--(right_neighbour_virt) node[midway, right] {};
    \draw[->, color=blue!90, thick] (right_right_neighbour)--(right_right_neighbour_virt) node[midway, right] {};
    \draw[->, color=blue!90, thick] (left_neighbour)--(left_neighbour_virt) node[midway, left] {};
    \draw[->, color=blue!90, thick] (left_left_neighbour)--(left_left_neighbour_virt) node[midway, left] {};
    \draw[->, color=blue!90, thick] (main)--(main_virt);
    
    \draw[dashed] (right_right_neighbour)--(right_neighbour)--(main)--(left_neighbour)--(left_left_neighbour);
\end{tikzpicture}

\vspace{2em}
\begin{tikzpicture}[main/.style = {draw, circle},scale = 2.0] 
    \tikzstyle{vertex} = [
            circle,
            fill = {
            black! 25
            },
            minimum size=8pt, 
            inner sep=2pt] 
    
    %example
    \node[vertex,fill=blue!40] (main) at (0,-0.35) {};
    \node[vertex,fill=white, draw=none] (main_virt) at (0,0.35) {};

    \node[vertex] (left_neighbour) at (-0.5,0.35) {};
    \node[vertex, draw=none, fill=white, dotted] (left_neighbour_virt) at (-0.5,-0.55) {};
   
    \node[vertex] (left_left_neighbour) at (-1.0,-0.35) {};
    \node[vertex, draw=none, fill=white, dotted] (left_left_neighbour_virt) at (-1.0,0.35) {};

    \node[vertex] (right_neighbour) at (0.5,0.35) {};
    \node[vertex, draw=none, fill=white, dotted] (right_neighbour_virt) at (0.5,-0.55) {};
    
    \node[vertex] (right_right_neighbour) at (1.0,-0.35) {};
    \node[vertex, draw=none, fill=white, dotted] (right_right_neighbour_virt) at (1.0,0.35) {};

    \draw[->, color=blue!90, thick] (right_neighbour)--(right_neighbour_virt) node[midway, right] {};
    \draw[->, color=blue!90, thick] (right_right_neighbour)--(right_right_neighbour_virt) node[midway, right] {};
    \draw[->, color=blue!90, thick] (left_neighbour)--(left_neighbour_virt) node[midway, left] {};
    \draw[->, color=blue!90, thick] (left_left_neighbour)--(left_left_neighbour_virt) node[midway, left] {};
    \draw[->, color=blue!90, thick] (main)--(main_virt);

    \draw[dashed] (right_right_neighbour)--(right_neighbour)--(main)--(left_neighbour)--(left_left_neighbour);
\end{tikzpicture}

\vspace{2em}
\begin{tikzpicture}[main/.style = {draw, circle},scale = 2.0] 
    \tikzstyle{vertex} = [
            circle,
            fill = {
            black! 25
            },
            minimum size=8pt, 
            inner sep=2pt] 
    
    %example
    \node[vertex,fill=blue!40] (main) at (0,0.25) {};
    \node[vertex,draw=none, fill=white] (main_virt) at (0,-0.55) {};

    \node[vertex] (left_neighbour) at (-0.5,-0.25) {};
    \node[vertex, draw=none, fill=white, dotted] (left_neighbour_virt) at (-0.5,0.25) {};

    \node[vertex] (left_left_neighbour) at (-1.0,0.25) {};
    \node[vertex, draw=none, fill=white, dotted] (left_left_neighbour_virt) at (-1.0,-0.55) {};
    
    \node[vertex] (right_neighbour) at (0.5,-0.25) {};
    \node[vertex, draw=none, fill=white, dotted] (right_neighbour_virt) at (0.5,0.25) {};

    \node[vertex] (right_right_neighbour) at (1.0,0.25) {};
    \node[vertex, draw=none, fill=white, dotted] (right_right_neighbour_virt) at (1.0,-0.55) {};
    
    \draw[->, color=blue!90, thick] (right_neighbour)--(right_neighbour_virt) node[midway, right] {};
    \draw[->, color=blue!90, thick] (right_right_neighbour)--(right_right_neighbour_virt) node[midway, right] {};
    \draw[->, color=blue!90, thick] (left_neighbour)--(left_neighbour_virt) node[midway, left] {};
    \draw[->, color=blue!90, thick] (left_left_neighbour)--(left_left_neighbour_virt) node[midway, left] {};
    \draw[->, color=blue!90, thick] (main)--(main_virt);
    \draw[dashed] (right_right_neighbour)--(right_neighbour)--(main)--(left_neighbour)--(left_left_neighbour);
\end{tikzpicture}
  \end{subfigure}
  \hspace{2em}
  \begin{subfigure}{0.35\textwidth}
    \centering
     \begin{tikzpicture}[main/.style = {draw, circle},scale = 2.0] 
    \tikzstyle{vertex} = [
            circle,
            fill = {
            black! 25
            },
            minimum size=8pt, 
            inner sep=2pt] 
    
    %example
    \node[vertex,fill=blue!40] (main) at (0,0.25) {};
    \node[vertex,draw=none, fill=white] (main_virt) at (0,-0.15) {};

    \node[vertex] (left_neighbour) at (-0.5,-0.25) {};
    \node[vertex, draw=none, fill=white, dotted] (left_neighbour_virt) at (-0.5,0.15) {};

    \node[vertex] (left_left_neighbour) at (-1.0,0.25) {};
    \node[vertex, draw=none, fill=white, dotted] (left_left_neighbour_virt) at (-1.0,-0.15) {};
    
    \node[vertex] (right_neighbour) at (0.5,-0.25) {};
    \node[vertex, draw=none, fill=white, dotted] (right_neighbour_virt) at (0.5,0.15) {};

    \node[vertex] (right_right_neighbour) at (1.0,0.25) {};
    \node[vertex, draw=none, fill=white, dotted] (right_right_neighbour_virt) at (1.0,-0.15) {};
    
    \draw[->, color=blue!90, thick] (right_neighbour)--(right_neighbour_virt) node[midway, right] {};
    \draw[->, color=blue!90, thick] (right_right_neighbour)--(right_right_neighbour_virt) node[midway, right] {};
    \draw[->, color=blue!90, thick] (left_neighbour)--(left_neighbour_virt) node[midway, left] {};
    \draw[->, color=blue!90, thick] (left_left_neighbour)--(left_left_neighbour_virt) node[midway, left] {};
    \draw[->, color=blue!90, thick] (main)--(main_virt);
    \draw[dashed] (right_right_neighbour)--(right_neighbour)--(main)--(left_neighbour)--(left_left_neighbour);
\end{tikzpicture}

\vspace{2em}
\begin{tikzpicture}[main/.style = {draw, circle},scale = 2.0] 
    \tikzstyle{vertex} = [
            circle,
            fill = {
            black! 25
            },
            minimum size=8pt, 
            inner sep=2pt] 
    
    %example
    \node[vertex,fill=blue!40] (main) at (0,-0.35) {};
    \node[vertex,fill=white, draw=none] (main_virt) at (0,0.25) {};

    \node[vertex] (left_neighbour) at (-0.5,0.35) {};
    \node[vertex, draw=none, fill=white, dotted] (left_neighbour_virt) at (-0.5,-0.25) {};
   
    \node[vertex] (left_left_neighbour) at (-1.0,-0.35) {};
    \node[vertex, draw=none, fill=white, dotted] (left_left_neighbour_virt) at (-1.0,0.35) {};

    \node[vertex] (right_neighbour) at (0.5,0.35) {};
    \node[vertex, draw=none, fill=white, dotted] (right_neighbour_virt) at (0.5,-0.25) {};
    
    \node[vertex] (right_right_neighbour) at (1.0,-0.35) {};
    \node[vertex, draw=none, fill=white, dotted] (right_right_neighbour_virt) at (1.0,0.25) {};

    \draw[->, color=blue!90, thick] (right_neighbour)--(right_neighbour_virt) node[midway, right] {};
    \draw[->, color=blue!90, thick] (right_right_neighbour)--(right_right_neighbour_virt) node[midway, right] {};
    \draw[->, color=blue!90, thick] (left_neighbour)--(left_neighbour_virt) node[midway, left] {};
    \draw[->, color=blue!90, thick] (left_left_neighbour)--(left_left_neighbour_virt) node[midway, left] {};
    \draw[->, color=blue!90, thick] (main)--(main_virt);

    \draw[dashed] (right_right_neighbour)--(right_neighbour)--(main)--(left_neighbour)--(left_left_neighbour);
\end{tikzpicture}

\vspace{2em}
\begin{tikzpicture}[main/.style = {draw, circle},scale = 2.0] 
    \tikzstyle{vertex} = [
            circle,
            fill = {
            black! 25
            },
            minimum size=8pt, 
            inner sep=2pt] 
    
    %example
    \node[vertex,fill=blue!40] (main) at (0,0.55) {};
    \node[vertex,fill=white, draw=none] (main_virt) at (0,-0.35) {};

    \node[vertex] (left_neighbour) at (-0.5,-0.55) {};
    \node[vertex, draw=none, fill=white, dotted] (left_neighbour_virt) at (-0.5,0.45) {};

    \node[vertex] (left_left_neighbour) at (-1.0,0.55) {};
    \node[vertex, draw=none, fill=white, dotted] (left_left_neighbour_virt) at (-1.0,-0.35) {};
    
    \node[vertex] (right_neighbour) at (0.5,-0.55) {};
    \node[vertex, draw=none, fill=white, dotted] (right_neighbour_virt) at (0.5,0.45) {};

    \node[vertex] (right_right_neighbour) at (1.0,0.55) {};
    \node[vertex, draw=none, fill=white, dotted] (right_right_neighbour_virt) at (1.0,-0.35) {};

    \draw[->, color=blue!90, thick] (right_neighbour)--(right_neighbour_virt) node[midway, right] {};
    \draw[->, color=blue!90, thick] (right_right_neighbour)--(right_right_neighbour_virt) node[midway, right] {};
    \draw[->, color=blue!90, thick] (left_neighbour)--(left_neighbour_virt) node[midway, left] {};
    \draw[->, color=blue!90, thick] (left_left_neighbour)--(left_left_neighbour_virt) node[midway, left] {};
    \draw[->, color=blue!90, thick] (main)--(main_virt);
    
    \draw[dashed] (right_right_neighbour)--(right_neighbour)--(main)--(left_neighbour)--(left_left_neighbour);
\end{tikzpicture}
  \end{subfigure}
  \caption{On the left, it is shown how skews increase without $\JC$. While $\SC(0)$ disallows that $(v,\ell)$ speeds up its pulse by more than the equivalent of $(v,\ell-1)$ matching the earliest pulse of any $(w,\ell-1)$, $\{v,w\}\in E$, $\FC$ permits that a node $(v,\ell)$ with slow $(v,\ell-1)$ to ``overshoot,'' i.e., $\Cor_{v,\ell}$ (shown as blue arrow) gets large. This results in an amplifying oscillatory behavior. On the right, the same scenario is shown with $\JC$ in effect. $\JC$ forces the corrections to stop $\kappa$ before the earliest or latest neighbor, respectively, resulting in a dampened oscillation.\label{fig:need_for_JC}}
  \Description[Illustrating the need for a Jump Condition]{On the left, it is shown how skews increase without $\JC$. While $\SC(0)$ disallows that $(v,\ell)$ speeds up its pulse by more than the equivalent of $(v,\ell-1)$ matching the earliest pulse of any $(w,\ell-1)$, $\{v,w\}\in E$, $\FC$ permits that a node $(v,\ell)$ with slow $(v,\ell-1)$ to ``overshoot,'' i.e., $\Cor_{v,\ell}$ (shown as blue arrow) gets large. This results in an amplifying oscillatory behavior. On the right, the same scenario is shown with $\JC$ in effect. $\JC$ forces the corrections to stop $\kappa$ before the earliest or latest neighbor, respectively, resulting in a dampened oscillation.}
\end{figure}
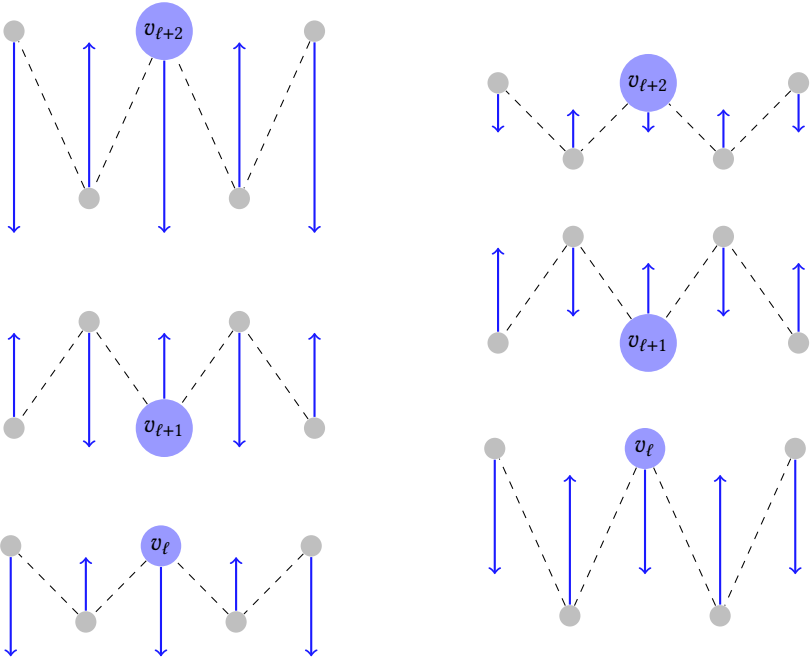

\subsection{Bounding \texorpdfstring{$\Psi^s$}{Psi} in the Absence of Faults}\label{sec:faultfree}

With the conditions established, we are ready to study how $\Psi^s(\ell)$ evolves in the fault-free setting.
The main technical challenge in bounding $\Psi^s$ lies in performing the induction step from $s-1\in \N$ to $s$.
We will argue that for $\Psi^s(\bar{\ell}\,)$ to be large for some $\bar{\ell}$, $\Xi^s(\underline{\ell}\,)$ must have been large for some $\underline{\ell} < \bar{\ell}$, with an additive term growing with $\bar{\ell}-\underline{\ell}$.
\begin{theorem}\label{thm:psi_bound}
For $s\in \N_{>0}$ and layers $\underline{\ell}\le \bar{\ell}$, it holds that
\begin{equation*}
\Psi^s(\bar{\ell}\,)\le \max\set{0,\Xi^s(\underline{\ell}\,)-(\bar{\ell}-\underline{\ell}+1)\kappa}+(\bar{\ell}-\underline{\ell}\,)\cdot\frac{\kappa}{2}.
\end{equation*}
\end{theorem}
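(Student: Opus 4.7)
The plan is to combine two layer-to-layer monotonicity bounds on the potentials with a simple combinatorial inequality linking $\Psi^s$ and $\Xi^s$, and to close with a case analysis on whether $\Psi^s(\bar{\ell})$ vanishes. No fresh induction on $s$ is needed; the decrease in $\Xi^s$ will accumulate directly over the $\bar{\ell}-\underline{\ell}$ layers. I would first establish the ``slow growth'' bound $\Psi^s(\ell)\le\Psi^s(\ell-1)+\kappa/2$ as follows: fix a pair $(v,w)$ maximizing $\psi^s_{v,w}(\ell)$, use \Cref{lem:drift} to express the change $(t_{v,\ell}-t_{w,\ell})-(t_{v,\ell-1}-t_{w,\ell-1})$ as at most $\kappa/2-\Cor_{v,\ell}+\Cor_{w,\ell}/\vartheta$, and apply $\SC$ at $w$ together with \Cref{lem:estimates} to bound $\Cor_{w,\ell}/\vartheta$ in terms of $t_{w,\ell-1}-t_{x,\ell-1}$ for an appropriate graph-neighbor $x$ of $w$; this reduces the estimate to $\psi^s_{v,x}(\ell-1)+\kappa/2\le\Psi^s(\ell-1)+\kappa/2$. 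The analogous growth bound for $\Xi^s$ holds by the same argument.

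Next, I would prove the ``fast decrease'' bound: whenever $\Xi^s(\ell-1)>0$, $\Xi^s(\ell)\le\Xi^s(\ell-1)-\kappa/2$. The argument is dual, applied to a pair maximizing $\xi^s_{v,w}(\ell)$ with $\FC$ at $v$ supplying a matching lower bound on $\Cor_{v,\ell}$; the extra additive $\kappa$ in $\FC$ relative to $\SC$ converts worst-case growth $\kappa/2$ into net decrease $\kappa/2$. The third ingredient is the combinatorial observation that $\Psi^s(\ell)>0$ forces the maximizing pair $(v,w)$ to have $d(v,w)\ge 1$, so $\Xi^s(\ell)\ge\psi^s_{v,w}(\ell)+2\kappa d(v,w)\ge\Psi^s(\ell)+2\kappa$. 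With these in hand the theorem follows by case analysis. If $\Psi^s(\bar{\ell})=0$ the right-hand side is already non-negative. Otherwise the combinatorial bound gives $\Xi^s(\bar{\ell})>2\kappa$, and the combined slow-growth and fast-decrease bounds applied to $\Xi^s$ pin $\Xi^s(\ell)\le\kappa/2$ for every layer past any layer at which it touches $0$; hence $\Xi^s>0$ throughout $[\underline{\ell},\bar{\ell}]$, iterating fast decrease yields $\Xi^s(\bar{\ell})\le\Xi^s(\underline{\ell})-(\bar{\ell}-\underline{\ell})\kappa/2$, and one more application of the combinatorial bound plus a short algebraic check against both branches of the outer $\max\{0,\cdot\}$ closes the argument.

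The main technical obstacle lies in the clean derivation of the slow-growth and fast-decrease bounds: the relaxed disjuncts $\SCthree$ and $\FCthree$ permit arbitrarily large corrections in either direction, and without further control these could cause amplifying oscillation between adjacent nodes, as illustrated in \Cref{fig:need_for_JC}. The jump condition $\JC$ (\Cref{lem:jump_holds}) is what supplies the missing quantitative handle in these boundary cases: whenever $\SC$ collapses to $\SCthree$, $\JC$ still ties $\Cor_{v,\ell}$ to the extremal predecessor times, so that the substitution into the \Cref{lem:drift} identity terminates at a graph-neighbor pair. Threading $\JC$ through the case split inside the SC/FC-based proofs of slow growth and fast decrease is where the bulk of the bookkeeping resides.
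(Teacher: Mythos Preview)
Your approach has a genuine gap: the layer-by-layer ``fast decrease'' bound $\Xi^s(\ell)\le\Xi^s(\ell-1)-\kappa/2$ does not follow from $\FC$, $\SC$, and $\JC$ by a local case split. The obstruction is the mismatch between the coefficient $4s\kappa$ in $\SC$ and the coefficient $(4s-2)\kappa$ defining $\xi^s$. Concretely, take $(v,w)$ maximizing $\xi^s_{\cdot,\cdot}(\ell)$ and suppose $\FCthree$ holds at $v$ while $\SCone$ holds at $w$ with witness $x=\arg\max_{\{w,x\}\in E}t_{x,\ell-1}$. Combining \Cref{lem:drift} with these conditions yields
\[
\xi^s_{v,w}(\ell)\;\le\; t_{v,\ell-1}-t_{x,\ell-1}+4s\kappa-(4s-2)\kappa\, d(v,w)-\tfrac{\kappa}{2}.
\]
If $d(v,x)=d(v,w)-1$, which nothing in your argument excludes, this is $\xi^s_{v,x}(\ell-1)+\tfrac{3\kappa}{2}$, an \emph{increase}. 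The paper calls this out explicitly as the ``key obstacle'' in its proof strategy and resolves it not by more careful casework but by changing the invariant: it tracks path lengths $|P^{\ell}|$ rather than graph distances, and strengthens the analogue of your $\Xi^s$ bound by an additive $2\kappa|P^{\ell}|$ term (property \Pfour). Each step in which $w$ moves toward $v$ then pays the $2\kappa$ deficit by shortening $P^{\ell}$; the path cannot shrink below length~1 (property \Pone), which is what ultimately forces the contradiction at $\underline{\ell}$.

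A second, related gap is in your slow-growth bound for $\Psi^s$. When the late node $v$ has $\Cor_{v,\ell}<0$, $\JCtwo$ lets you replace $v$ by a neighbor $z$, but $d(z,w)$ may exceed $d(v,w)$, which injects a stray $4s\kappa$. The paper handles this by introducing the third ``prover'' node $p^{\ell}$ (distinct from $v^{\ell}$ in general), which is the node actually subjected to the $\JC$-based replacement; maintaining the simultaneous invariants \Pthree\ on $(p^{\ell},w^{\ell})$ and \Pfour\ on $(v^{\ell},w^{\ell})$, together with the case analysis in \Cref{lem:vnotw,lem:swap_jump,lem:swap}, is what makes the argument close. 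This is structurally more than the ``bookkeeping'' you anticipate: the invariants themselves must be redesigned, not merely the case split inside a fixed pair of lemmas.
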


\begin{proof}[Proof strategy]
Intuitively, we intend to argue that if $\Psi^s(\bar{\ell}\,)$ is large, so must be $\Xi^s(\underline{\ell}\,)$.
Tracing back the cause for this, we show that in every step, we have that $\Xi^s(\ell-1)$ is larger than $\Xi^s(\ell)$ by at least $\kappa/2$.
Since $\Xi^s(\bar{\ell}\,)\ge \Psi^s(\bar{\ell}\,)$, as $\psi^s_{v,w}(\ell)\ge \xi^s_{v,w}(\ell)$ for all $v$, $w$, $s$, and $\ell$, this yields the claim.
To formalize that $\Xi^s(\ell)$ must have been decreasing steadily, we seek to show that the minimal layer $\ell$ for which there are nodes $v^{\ell},w^{\ell}\in V$ satisfying that $\xi^s_{v^{\ell},w^{\ell}}(\ell)$ is large enough is $\underline{\ell}$.
To this end, we identify nodes $w$ and $v$ -- either $w^{\ell}$ and $v^{\ell}$ themselves or neighbors of them -- which cause the large skew on layer $\ell$ by exhibiting a large skew on layer $\ell-1$.
This is done based on $\SC(s)$ and $\FC(s)$, with $\JC$ kicking in for the special case that $w=v^{\ell}$ and $v=w^{\ell}$.

A key obstacle is that if $w$ is a neighbor of $w^{\ell}$, this results in a larger difference in skew than if $v$ is a neighbor of $v^{\ell}$, namely $4s\kappa$ versus $(4s-2)\kappa$.
Thus, when $w$ is closer to $v^{\ell}$ than $w^{\ell}$, we ``lose'' $2\kappa$ relative to the skew bound on layer $\ell$.
For $d(v^{\bar{\ell}},w^{\bar{\ell}})$ many steps, we can compensate for this based on the initial skew between $v^{\bar{\ell}}$ and $w^{\bar{\ell}}$, but not more.
To address this, essentially we need to show that for any additional steps ``towards'' $v^{\ell}$ there will be a corresponding step ``away'' from $v^{\ell}$, on which we ``gain'' additional $2\kappa$ relative to the skew bound on the layer $\ell$.

If corrections were always positive, this would be straightforward:
Steps towards $v^{\ell}$ would also be steps towards $v^{\bar{\ell}}$, and upon $w^{\ell}=v^{\bar{\ell}}$ we would reach a contradiction to the skew bounds shown.
Unfortunately, negative corrections foreclose this simple argument.
To address this, we introduce a third ``prover'' node $p^{\ell}$, where $p^{\bar{\ell}}=v^{\bar{\ell}}$, which never increases its distance to $w^{\ell}$;
if $p^{\ell}$ performs a negative correction, then $p$ is a neighbor of $p^{\ell}$ that is closer to $w^{\ell}$.
We then can infer that $p\neq w$ from the skew bounds.

A major complication this approach faces is the special case $p=w^{\ell}$ and $w=p^{\ell}$.
Again, $\JC$ kicks in to show that we have sufficiently large skew between $p$ and $w$.
However, now $p$ lies ``behind'' $w$ from the perspective of $v$.
A later reversal of this situation by repeating the case that $p=w^{\ell}$ and $w=p^{\ell}$ results in $w$ being farther away from $v^{\ell}$, yet $d(p,w)=d(p^{\ell},w^{\ell})$.
The proof covers this case by adding an additional $(4s-2)\kappa$ to the skew bound if the above situation occured an odd number of times.

Finally, we seek to avoid the case that $v=p^{\ell}$ and $p=v^{\ell}$ for analogous reasons.
Fortunately, here we can exploit that the skew bound between $v^{\ell}$ and $w^{\ell}$ is stronger than the one between $p^{\ell}$ and $w^{\ell}$, meaning that we can simply choose $p=v$ instead in this situation.
In the proof, we do so whenever $v$ lies on the path connecting $p^{\ell}$ and $w^{\ell}$ that we maintain to keep track of hop counts in the construction.
\end{proof}
\begin{proof}[Proof of \Cref{thm:psi_bound}]
Assume towards a contradiction that the statement of \Cref{thm:psi_bound} is false for minimal $\bar{\ell}$, i.e., there are $v^{\bar{\ell}}$ and $w^{\bar{\ell}}$ such that
\begin{align}
\psi^s_{v^{\bar{\ell}},w^{\bar{\ell}}} &> (\bar{\ell}-\underline{\ell}\ )\cdot\frac{\kappa}{2}\label{eq:contra1}\\
\mbox{and}\quad \psi^s_{v^{\bar{\ell}},w^{\bar{\ell}}}&>\Xi^s(\underline{\ell}\,)-(\bar{\ell}-\underline{\ell}\,)\cdot\frac{\kappa}{2}-\kappa\label{eq:contra2}
\end{align}
and there is no smaller $\bar{\ell}'$ for which this applies for some pair of nodes.

Let $\ell \in [\underline{\ell},\bar{\ell}]$ be minimal such that are $v^{\ell},p^{\ell},w^{\ell}\in V$, a path $Q^{\ell}$ in $H$ from $p^{\ell}$ to $v^{\ell}$, and a path $P^{\ell}$ in $H$ from $p^{\ell}$ to $w^{\ell}$ with the following properties:
\begin{enumerate}
  \item[\Pone] $w^{\ell}\neq p^{\ell}$.
  \item[\Ptwo] $w^{\ell}\neq v^{\ell}$.
  \item[\Pthree]
  $t_{p^{\ell},\ell}-t_{w^{\ell},\ell}-4s\kappa |P^{\ell}|\ge \psi_{v^{\bar{\ell}},w^{\bar{\ell}}}^s(\bar{\ell}\,)-(\bar{\ell}-\ell)\cdot\frac{\kappa}{2}>0$.
  \item[\Pfour] Denote by $|P^{\ell}|$ and $|Q^{\ell}|$ the length of $P^{\ell}$ and $Q^{\ell}$, respectively.
  With the shorthand
  \begin{align*}
  \Delta^{\ell}:=\begin{cases}
  |P^{\ell}|+|Q^{\ell}| - 1 & \mbox{if } P^{\ell} \mbox{ and } Q^{\ell} \mbox{ have the same first edge}\\
  |P^{\ell}|+|Q^{\ell}| & \mbox{else,}
  \end{cases}
  \end{align*}
  it holds that
  \begin{align*}
  t_{v^{\ell},\ell}-t_{w^{\ell},\ell}-(4s-2)\kappa \Delta^{\ell}\ge \psi_{v^{\bar{\ell}},w^{\bar{\ell}}}^s(\bar{\ell}\,)+(\bar{\ell}-\ell)\cdot\frac{\kappa}{2}+2\kappa |P^{\ell}|.
  \end{align*}
  \item[\Pfive] If $v^{\ell}\in P^{\ell}$, then $p^{\ell}=v^{\ell}$.
\end{enumerate}
To see that such an index must indeed exist, let
\begin{itemize}
  \item $p^{\bar{\ell}}:=v^{\bar{\ell}}$,
  \item $P^{\bar{\ell}}$ be a shortest path in $H$ from $p^{\bar{\ell}}$ to $w^{\bar{\ell}}$, and
  \item $Q^{\bar{\ell}}:=(p^{\bar{\ell}})=(v^{\bar{\ell}})$, i.e., the $0$-length path from $p^{\bar{\ell}}$ to $v^{\bar{\ell}}$.
\end{itemize}
This choice satisfies 
\begin{itemize}
  \item \Pone and \Ptwo, because $\Psi^s_{v^{\bar{\ell}},w^{\bar{\ell}}}(\bar{\ell}\,)\neq 0$ implies that $v^{\bar{\ell}}\neq w^{\bar{\ell}}$;
  \item \Pfour, because
  \begin{equation*}
  t_{v^{\bar{\ell}},\bar{\ell}}-t_{w^{\bar{\ell}},\bar{\ell}}-(4s-2)\kappa \Delta^{\ell}=t_{v^{\bar{\ell}},\bar{\ell}}-t_{w^{\bar{\ell}},\bar{\ell}}-(4s-2)\kappa |P^{\bar{\ell}}|
  = \psi_{v^{\bar{\ell}},w^{\bar{\ell}}}+2\kappa |P^{\bar{\ell}}|;~\mbox{and}
  \end{equation*}
  \item \Pthree and \Pfive, because $p^{\bar{\ell}}=v^{\bar{\ell}}$ (i.e., $t_{p^{\bar{\ell}},\bar{\ell}}=t_{v^{\bar{\ell}},\bar{\ell}}$ and $\Delta^{\bar{\ell}}=|P^{\bar{\ell}}|$) and \Pfour holds.
\end{itemize}

\Cref{cor:invariants} proves that in fact $\ell=\underline{\ell}$.
Note that 
\begin{align*}
d(v^{\underline{\ell}},w^{\underline{\ell}})&\le \begin{cases}
|P^{\underline{\ell}}|+|Q^{\underline{\ell}}|-2 & \mbox{if $P^{\underline{\ell}}$ and $Q^{\underline{\ell}}$ share the first edge}\\
|P^{\underline{\ell}}|+|Q^{\underline{\ell}}| & \mbox{else}
\end{cases}\\
&\le \Delta^{\underline{\ell}}
\end{align*}
and that $|P^{\underline{\ell}}|\ge 1$ due to \Pone.
Therefore, \Pfour yields that
\begin{align*}
\Xi^s(\underline{\ell}\,)
&\ge t_{v^{\underline{\ell}},\underline{\ell}}-t_{w^{\underline{\ell}},\underline{\ell}}-(4s-2)\kappa d(v^{\underline{\ell}},w^{\underline{\ell}})\\
&\ge t_{v^{\underline{\ell}},\underline{\ell}}-t_{w^{\underline{\ell}},\underline{\ell}}-(4s-2)\kappa\Delta^{\underline{\ell}}\\
&\ge \psi_{v^{\bar{\ell}},w^{\bar{\ell}}}^s(\bar{\ell}\,)+(\bar{\ell}-\underline{\ell})\cdot\frac{\kappa}{2}+2\kappa |P^{\underline{\ell}}|\\
&\ge \psi_{v^{\bar{\ell}},w^{\bar{\ell}}}^s(\bar{\ell}\,)+(\bar{\ell}-\underline{\ell})\cdot\frac{\kappa}{2}+2\kappa,
\end{align*}
contradicting \Cref{eq:contra2} and completing the proof.
\end{proof}

The remainder of \Cref{sec:faultfree} is dedicated to proving \Cref{cor:invariants}, which is the missing step in the proof of \Cref{thm:psi_bound}. 
To this end, until the end of \Cref{sec:faultfree} we consider the setting of the proof of \Cref{thm:psi_bound} and assume for contradiction that $\ell>\underline{\ell}$.
We take note of some straightforward implications.
\begin{observation}\label{obs:implications}
  For any fixed index $\ell$, we have the following implications:
  \begin{itemize}
    \item \Pthree $\Rightarrow$ \Pone
    \item \Pfour $\Rightarrow$ \Ptwo
    \item ($v^{\ell}=p^{\ell} \land$ \Pfour) $\Rightarrow$ \Pthree. 
  \end{itemize}
  Moreover,
\begin{equation*}
\psi^s_{v^\ell,w^{\ell}}(\ell)-(\bar{\ell}-\ell)\cdot\frac{\kappa}{2}>0.
\end{equation*}
\end{observation}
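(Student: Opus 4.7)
For the three implications I would proceed by direct substitution into the defining properties, repeatedly leveraging \Cref{eq:contra1}, which guarantees $\psi^s_{v^{\bar{\ell}}, w^{\bar{\ell}}}(\bar{\ell}\,) > 0$ and hence $(\bar{\ell} - \ell)\kappa/2 < \psi^s_{v^{\bar{\ell}}, w^{\bar{\ell}}}(\bar{\ell}\,)$, using that $\ell \ge \underline{\ell}$. For \Pthree $\Rightarrow$ \Pone: if $p^\ell = w^\ell$, then $P^\ell$ has length $0$ and $t_{p^\ell,\ell} = t_{w^\ell,\ell}$, so \Pthree would demand $0 > 0$. For \Pfour $\Rightarrow$ \Ptwo: if $v^\ell = w^\ell$, the LHS of \Pfour equals $-(4s-2)\kappa \Delta^\ell \le 0$ (since $s \ge 1$ and $\Delta^\ell \ge 0$), while its RHS is strictly positive. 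For the third implication, $v^\ell = p^\ell$ forces $|Q^\ell| = 0$ and $\Delta^\ell = |P^\ell|$, so \Pfour rearranges to $t_{p^\ell,\ell} - t_{w^\ell,\ell} - 4s\kappa|P^\ell| \ge \psi^s_{v^{\bar{\ell}}, w^{\bar{\ell}}}(\bar{\ell}\,) + (\bar{\ell}-\ell)\kappa/2 > \psi^s_{v^{\bar{\ell}}, w^{\bar{\ell}}}(\bar{\ell}\,) - (\bar{\ell}-\ell)\kappa/2 > 0$, which is exactly \Pthree.

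For the moreover claim I would combine \Pfour with a triangle-inequality bound on $d(v^\ell, w^\ell)$. The walk from $v^\ell$ to $w^\ell$ obtained by reversing $Q^\ell$ and then following $P^\ell$ has length $|P^\ell| + |Q^\ell|$, and when the two paths share a first edge this walk admits a shortcut of length $2$; together with the definition of $\Delta^\ell$, this gives $d(v^\ell, w^\ell) \le \Delta^\ell$ in general and $d(v^\ell, w^\ell) \le \Delta^\ell - 1$ in the shared-first-edge case. Substituting into $\psi^s_{v^\ell, w^\ell}(\ell) = t_{v^\ell,\ell} - t_{w^\ell,\ell} - 4s\kappa d(v^\ell, w^\ell)$ and using the lower bound of \Pfour yields
\begin{equation*}
\psi^s_{v^\ell, w^\ell}(\ell) - \frac{(\bar{\ell} - \ell)\kappa}{2} \;\ge\; \psi^s_{v^{\bar{\ell}}, w^{\bar{\ell}}}(\bar{\ell}\,) + 2\kappa\bigl(|P^\ell| - \Delta^\ell\bigr) + 4s\kappa \cdot \mathbf{1}_{\text{shared}}.
\end{equation*}
In the clean subcase $v^\ell = p^\ell$ one has $|Q^\ell| = 0$, so $|P^\ell| = \Delta^\ell$ and the right-hand side collapses to $\psi^s_{v^{\bar{\ell}}, w^{\bar{\ell}}}(\bar{\ell}\,) > 0$, proving the claim.

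The hard part will be the subcase $v^\ell \ne p^\ell$, where the raw estimate above leaves a deficit of $2\kappa|Q^\ell|$ (reduced by $2\kappa$ in the shared case) that must be absorbed. I expect to close this gap by invoking \Pfive, which forbids $v^\ell \in P^\ell$ whenever $v^\ell \ne p^\ell$, and using this to upgrade the triangle-inequality estimate for $d(v^\ell, w^\ell)$: since $v^\ell$ bypasses $P^\ell$, the $Q^\ell$-detour contributes to a genuinely longer walk than the graph distance, giving room to beat $\Delta^\ell$ by an additive term comparable to $|Q^\ell|$. Failing a purely structural improvement, one can instead lean on \Cref{eq:contra1}, which forces $\psi^s_{v^{\bar{\ell}}, w^{\bar{\ell}}}(\bar{\ell}\,) > (\bar{\ell} - \underline{\ell})\kappa/2$, to show that $|Q^\ell|$ is too small to exhaust the surplus. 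This is the step where the somewhat intricate definition of $\Delta^\ell$ and the auxiliary property \Pfive do the real work, and it is where I would concentrate the technical effort.
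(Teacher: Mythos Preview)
Your treatment of the three bulleted implications is correct and matches the paper's proof essentially line for line.

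The issue is the ``Moreover'' clause. The paper's own proof of \Cref{obs:implications} \emph{does not address it at all}: it proves only the three bullets and ends with \verb|\qedhere| inside the last item. Furthermore, the ``Moreover'' inequality is never invoked anywhere else in the analysis (the observation is cited only for the implications, in \Cref{cor:P4}, \Cref{lem:P3}, and \Cref{cor:invariants}). Note also that the paper's proof already contains the typo $\psi^s_{v^{\ell},w^{\ell}}(\bar{\ell}\,)$ where $\psi^s_{v^{\bar\ell},w^{\bar\ell}}(\bar{\ell}\,)$ is clearly meant. The ``Moreover'' is almost certainly the same slip: with $v^{\bar\ell},w^{\bar\ell},\bar\ell$ in place of $v^\ell,w^\ell,\ell$ it reads $\psi^s_{v^{\bar\ell},w^{\bar\ell}}(\bar{\ell}\,)-(\bar\ell-\ell)\kappa/2>0$, which is exactly the positivity assertion embedded in \Pthree and follows immediately from \Cref{eq:contra1} together with $\ell\ge\underline{\ell}$.

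Your plan for the ``hard subcase'' $v^\ell\neq p^\ell$ is therefore aimed at a statement the paper neither proves nor needs. Worse, the literal claim is not obviously derivable from \Pone--\Pfive and \Cref{eq:contra1}: your own computation shows a residual deficit of order $2\kappa|Q^\ell|$, and neither \Pfive nor \Cref{eq:contra1} gives a matching upper bound on $|Q^\ell|$ (the construction only yields $|Q^\ell|\le \bar\ell-\ell$, while \Cref{eq:contra1} supplies merely $\psi^s_{v^{\bar\ell},w^{\bar\ell}}(\bar\ell)>(\bar\ell-\ell)\kappa/2$, a factor of four short). Do not invest effort here; treat the ``Moreover'' as the evident typo and move on.
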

\begin{proof}
  We prove each implication separately.
  \begin{itemize}
    \item From \Pthree, $t_{p^\ell,\ell}- t_{w^{\ell},\ell} > 4s\kappa |P^\ell| \geq 0$. This implies $t_{p^\ell,\ell}> t_{w^{\ell},\ell}$ and hence $w^\ell \neq p^\ell$, i.e., \Pone.
    \item Note that $\Delta^{\ell}\ge 0$, $|P^{\ell}|\ge 0$, and $4s-2>0$. Hence, (P4) and \Cref{eq:contra1} imply that
    \begin{equation*}
       t_{v^\ell,\ell}-t_{w^{\ell},\ell}\geq \psi^s_{v^{\ell},w^{\ell}}(\bar{\ell}\,) >0.
    \end{equation*}
	It follows that $w^{\ell}\neq v^{\ell}$, i.e., (P2).
    \item If $v^\ell = p^\ell$, then $t_{v^{\ell},\ell}=t_{p^{\ell},\ell}$, $|Q^\ell| = 0$, and $\Delta^{\ell}=|P^{\ell}|$. Thus, \Pfour implies that
    \begin{align*}
      t_{p^{\ell},\ell}-t_{w^{\ell},\ell} -(4s-2)\kappa|P^{\ell}| &\geq \psi^s_{v^{\ell},w^{\ell}}(\bar{l}) + (\bar{\ell} - \ell)\cdot\frac{\kappa}{2} + 2\kappa|P^{\ell}|\\
      &\ge \psi^s_{v^{\ell},w^{\ell}}(\bar{l}) - (\bar{\ell} - \ell)\cdot\frac{\kappa}{2} + 2\kappa|P^{\ell}|,
    \end{align*}
    which can be rearranged to yield \Pthree.\qedhere
  \end{itemize}
\end{proof}

\subsubsection*{A Step in the Construction}
We now identify nodes that are suitable for taking the role of $v^{\ell}$, $p^{\ell}$, and $w^{\ell}$ on layer $\ell-1$.
These are either the nodes themselves or neighbors of them in $H$, where $\FC(s)$, $\SC(s)$, and $\JC$ serve to relate respective pulse times.

\begin{lemma}\label{lem:v_step}
There is a node $v\in V$ such that
\begin{equation*}
t_{v^{\ell},\ell-1}-\Cor_{v^{\ell},\ell}\le t_{v,\ell-1}-(4s-2)\kappa\Delta_v-\kappa,
\end{equation*}
where
\begin{equation*}
\Delta_v = \begin{cases}
0 & \mbox{and }v=v^{\ell},\\
-1 & \mbox{and $\{v,v^{\ell}\}$ is the last edge of $Q^{\ell}$ or the first edge of $P^{\ell}$, or}\\
1 & \mbox{and $\{v^{\ell},v\}\in E$.}
\end{cases}
\end{equation*}
\end{lemma}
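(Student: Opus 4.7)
The plan is to invoke $\FC(s)$ at $v^\ell$, which holds by \Cref{lem:fast_holds}, and to split on its three disjuncts, exhibiting an appropriate witness $v$ in each branch.

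If $\FCthree$ is active, take $v = v^\ell$, so $\Delta_v = 0$ is admissible; the claim then reduces exactly to $\Cor_{v^\ell,\ell}\ge \kappa$, which is $\FCthree$. If $\FCone(s)$ is active, take $v$ to be a neighbor of $v^\ell$ attaining $\max_{\{v^\ell,w\}\in E} t_{w,\ell-1}$; since $\{v^\ell,v\}\in E$, the option $\Delta_v = 1$ is permitted, and $\FCone(s)$ rearranged is exactly the desired inequality. Both of these cases are entirely routine.

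The one case that needs some thought is $\FCtwo(s)$, where we want $\Delta_v = -1$; then $v$ must be the neighbor of $v^\ell$ incident either to the last edge of $Q^\ell$ or to the first edge of $P^\ell$. Such a neighbor always exists: if $|Q^\ell|\ge 1$, then by construction the last edge of $Q^\ell$ ends at $v^\ell$, and the other endpoint is the sought $v$; otherwise $|Q^\ell|=0$, which means $p^\ell = v^\ell$, and since $w^\ell \neq p^\ell$ by \Pone we have $|P^\ell|\ge 1$, so the first edge of $P^\ell$ is incident to $p^\ell = v^\ell$ and its other endpoint serves as $v$. Once such a $v$ has been fixed, I would exploit that $t_{v,\ell-1} \ge \min_{\{v^\ell,w\}\in E} t_{w,\ell-1}$: because this minimum appears with a minus sign in $\FCtwo(s)$, replacing it by $t_{v,\ell-1}$ only strengthens the lower bound on $\Cor_{v^\ell,\ell}$, after which a one-line rearrangement yields the claim.

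The principal (albeit minor) obstacle is thus confined to the $\FCtwo$ case: one must coordinate the choice of $v$ with the path structure $P^\ell, Q^\ell$ and simultaneously keep track of the sign when substituting a specific neighbor's pulse time for the minimum. No quantitative estimate beyond what is already encoded in $\FC(s)$ is required, and the other two cases are immediate from the definitions.
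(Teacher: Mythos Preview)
Your proposal is correct and follows essentially the same approach as the paper: invoke $\FC(s)$ at $v^{\ell}$ via \Cref{lem:fast_holds}, then case-split on $\FCone$, $\FCtwo$, $\FCthree$, choosing the witness $v$ and $\Delta_v$ exactly as the paper does, including the use of \Pone to guarantee an edge of $P^{\ell}$ when $|Q^{\ell}|=0$. One wording nit: in the $\FCtwo$ case, replacing the minimum by $t_{v,\ell-1}$ \emph{weakens} (not strengthens) the lower bound on $\Cor_{v^{\ell},\ell}$, which is precisely why the desired inequality follows from $\FCtwo(s)$.
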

\begin{proof}
By \Cref{lem:fast_holds}, $v^\ell$ obeys the fast condition. Thus one of three things is true for $v^\ell$.
\begin{itemize}
  \item $\FCone(s)$ holds. In this case, let $v=\arg\max_{\set{x,v^\ell}\in E}\{t_{x,\ell-1}\}$ and bound
    \begin{equation*}
      t_{v^{\ell},\ell-1} - \Cor_{v^{\ell},\ell} \leq  \max_{\{x,v^\ell\}\in E}\set{t_{x,\ell-1}} - (4s-2)\kappa - \kappa= t_{v,\ell-1} - (4s-2)\kappa - \kappa,
    \end{equation*}
    i.e., the claim of the lemma holds with $\Delta_v=1$.
  \item $\FCtwo(s)$ holds. In this case, let $\{v,v^{\ell}\}$ be the last edge of $Q^{\ell}$ if $|Q^{\ell}|\neq 0$ or the first edge of $P^{\ell}$ otherwise; the latter is feasible, because then $v^{\ell}=p^{\ell}$, and $|P^{\ell}|\neq 0$ due to (P1). We get that
    \begin{equation*}
      t_{v^{\ell},\ell-1} -  \Cor_{v^{\ell},\ell} \leq \min_{\{x,v^\ell\}\in E}\set{t_{x,\ell-1}} + (4s-2)\kappa - \kappa \leq t_{v,\ell-1} + (4s-2)\kappa - \kappa.
    \end{equation*}
    Thus, the claim of the lemma holds with $\Delta_v = -1$.
  \item $\FCthree$ holds. In this case,
    \begin{equation*}
      t_{v^{\ell},\ell-1} - \Cor_{v^{\ell},\ell} \leq t_{v^{\ell},\ell-1} - \kappa,
    \end{equation*}
    i.e., the claim of the lemma holds with $\Delta_v = 0$.\qedhere
\end{itemize}
\end{proof}

\begin{lemma}\label{lem:w_step}
There is a node $w\in V$ such that
\begin{equation*}
t_{w^{\ell},\ell-1}-\frac{\Cor_{w^{\ell},\ell}}{\vartheta}\ge t_{w,\ell-1}+4s\kappa\Delta_w,
\end{equation*}
where
\begin{equation*}
\Delta_w= \begin{cases}
0 & \mbox{and }w=w^{\ell},\\
-1 & \mbox{and $\{w,w^{\ell}\}$ is the last edge of $P^{\ell}$, or}\\
1 & \mbox{and $\{w^{\ell},w\}\in E$.}
\end{cases}
\end{equation*}
\end{lemma}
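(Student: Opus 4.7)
The plan is to mirror the proof of the preceding $v$-step lemma, but applying the slow condition at $w^\ell$ in place of the fast condition. By \Cref{lem:slow_holds}, at least one of $\SCone(s)$, $\SCtwo(s)$, $\SCthree$ holds at $w^\ell$, and I case-split accordingly, in each case exhibiting a valid witness $w$ together with the corresponding value of $\Delta_w$.

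If $\SCthree$ holds, I take $w:=w^\ell$ and $\Delta_w:=0$: since $\vartheta>0$, the inequality $\Cor_{w^\ell,\ell}\le 0$ gives $-\Cor_{w^\ell,\ell}/\vartheta\ge 0$, so $t_{w^\ell,\ell-1}-\Cor_{w^\ell,\ell}/\vartheta\ge t_{w^\ell,\ell-1}$ is immediate. If $\SCtwo(s)$ holds, I choose $w$ as any neighbor of $w^\ell$ attaining $\min_{\{w^\ell,x\}\in E}\{t_{x,\ell-1}\}$ and set $\Delta_w:=1$; rearranging the slow-condition bound $\Cor_{w^\ell,\ell}/\vartheta\le t_{w^\ell,\ell-1}-\min_{\{w^\ell,x\}\in E}\{t_{x,\ell-1}\}-4s\kappa$ then yields $t_{w^\ell,\ell-1}-\Cor_{w^\ell,\ell}/\vartheta\ge t_{w,\ell-1}+4s\kappa$, as required.

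The only case where the construction's auxiliary data matters is $\SCone(s)$. Here I use that $\Pone$ guarantees $w^\ell\neq p^\ell$, so $|P^\ell|\ge 1$ and a last edge of $P^\ell$ exists. Letting $w$ denote its other endpoint, so that $\{w,w^\ell\}$ is the last edge of $P^\ell$, I set $\Delta_w:=-1$. Since $t_{w,\ell-1}\le \max_{\{w^\ell,x\}\in E}\{t_{x,\ell-1}\}$, the hypothesis $\Cor_{w^\ell,\ell}/\vartheta\le t_{w^\ell,\ell-1}-\max_{\{w^\ell,x\}\in E}\{t_{x,\ell-1}\}+4s\kappa$ gives $t_{w^\ell,\ell-1}-\Cor_{w^\ell,\ell}/\vartheta\ge t_{w,\ell-1}-4s\kappa$, which is the claim. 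Note that no additive $-\kappa$ term appears (in contrast to the $v$-step lemma) because $\SC$ lacks the $+\kappa$ slack carried by $\FC$. Beyond this bookkeeping there is no real obstacle; the proof is the direct slow-condition counterpart of the preceding one, with $\Pone$ being the only place where the auxiliary structure of $P^\ell$ is actually needed.
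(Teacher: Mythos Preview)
Your proposal is correct and follows essentially the same approach as the paper's proof: a case split on which disjunct of $\SC(s)$ holds at $w^{\ell}$ (via \Cref{lem:slow_holds}), with the same witness choices $w=w^{\ell}$ for $\SCthree$, $w=\arg\min_{\{w^{\ell},x\}\in E}\{t_{x,\ell-1}\}$ for $\SCtwo(s)$, and the last edge of $P^{\ell}$ (existence by \Pone) for $\SCone(s)$. Only the order of cases and your explanatory remark about the absent $\kappa$-slack differ.
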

\begin{proof}
By \Cref{lem:slow_holds}, $w^{\ell}$ satisfies $\SC$.
We make a case distinction based on which one of $\SCone$, $\SCtwo$, and $\SCthree$ applies.
 \begin{itemize}
   \item $\SCone(s)$ holds. Let $\{w,w^{\ell}\}$ be the last edge of $P^\ell$; by (P1), $|P^{\ell}|\neq 0$, i.e., this edge exists. Then
   \begin{equation*}
    t_{w^\ell,\ell-1} - \frac{\Cor_{w^{\ell},l}}{\vartheta}\geq  \max_{\{x,w^{\ell}\}\in E}\{t_{x,\ell-1}\} - 4s\kappa
    \geq t_{w,\ell-1} - 4s\kappa,
   \end{equation*}
   i.e., the claim of the lemma holds with $\Delta_w=-1$.
   \item $\SCtwo(s)$ holds. In this case, let $w=\arg\min_{{\set{x,v^\ell}\in E}}\{t_{x,\ell-1}\}$ and bound
   \begin{align*}
    t_{w^\ell,\ell-1} - \frac{\Cor{w^{\ell},\ell}}{\vartheta}\geq  \min_{\set{x,w^{\ell}}\in E}\{t_{x,\ell-1}\} + 4s\kappa = t_{w,\ell-1} + 4s\kappa.
   \end{align*}
   Thus, the lemma holds with $\Delta_w = 1$.
   \item $\SCthree$ holds. Then
   \begin{equation*}
     t_{w^\ell,\ell-1} - \Cor_{w,\ell} \geq t_{w^\ell,\ell-1},
   \end{equation*}
   i.e., the claim of the lemma holds with $\Delta_w=0$.\qedhere
  \end{itemize}
\end{proof}

\begin{lemma}\label{lem:p_step}
There is a node $p\in V$ such that
\begin{align*}
t_{p^{\ell},\ell-1}-\Cor_{p^{\ell},\ell}\le \begin{cases}
t_{p,\ell-1}& \mbox{and $p=p^{\ell}$, or}\\
t_{p,\ell-1}-\kappa & \mbox{and $\{p^{\ell},p\}$ is the first edge of $P^{\ell}$.}
\end{cases}
\end{align*}
\end{lemma}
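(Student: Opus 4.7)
The plan is to mirror the structure of \Cref{lem:v_step} and \Cref{lem:w_step}, but replacing the role of $\FC$ and $\SC$ with the jump condition $\JC$. By \Cref{lem:jump_holds}, $\JC$ holds at $p^{\ell}$, so I would distinguish the three subcases $\JCone$, $\JCtwo$, and $\JCthree$ of \Cref{def:jump}.

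The two easy subcases are $\JCone$ and $\JCthree$. In both, the respective inequality forces $\Cor_{p^{\ell},\ell}\ge 0$: for $\JCone$ we have $\Cor_{p^{\ell},\ell}/\vartheta>\kappa$, so $\Cor_{p^{\ell},\ell}>\vartheta\kappa>0$, and for $\JCthree$ we have $\Cor_{p^{\ell},\ell}/\vartheta\ge 0$ directly. Thus $t_{p^{\ell},\ell-1}-\Cor_{p^{\ell},\ell}\le t_{p^{\ell},\ell-1}$, and the first case of the statement is satisfied by choosing $p:=p^{\ell}$.

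The interesting subcase is $\JCtwo$, where $\Cor_{p^{\ell},\ell}<0$, so the trivial choice $p=p^{\ell}$ is not strong enough. Here $\JCtwo$ provides
\begin{equation*}
\Cor_{p^{\ell},\ell}\ge t_{p^{\ell},\ell-1}-\min_{\{p^{\ell},x\}\in E}\{t_{x,\ell-1}\}+\kappa,
\end{equation*}
which I rearrange into $t_{p^{\ell},\ell-1}-\Cor_{p^{\ell},\ell}\le \min_{\{p^{\ell},x\}\in E}\{t_{x,\ell-1}\}-\kappa$. Property \Pone ensures $w^{\ell}\neq p^{\ell}$, hence $|P^{\ell}|\ge 1$ and the first edge of $P^{\ell}$ exists; denote by $p$ the endpoint of this edge other than $p^{\ell}$. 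Then $p$ is a neighbor of $p^{\ell}$ in $H$, so $\min_{\{p^{\ell},x\}\in E}\{t_{x,\ell-1}\}\le t_{p,\ell-1}$, yielding the second case of the bound.

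I do not expect a real obstacle here: the lemma is essentially a rephrasing of $\JC$ tailored to the construction in \Cref{thm:psi_bound}, and (P1) supplies exactly the structural fact needed to pick a neighbor along $P^{\ell}$ when the correction is negative. The only point to be careful about is that the edge returned in the negative-correction case is specifically the \emph{first} edge of $P^{\ell}$ (as opposed to an arbitrary neighbor), because later uses of this lemma rely on this choice to maintain the invariants \Pone--\Pfive for the updated path towards $w^{\ell}$.
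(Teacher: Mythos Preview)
Your proposal is correct and essentially identical to the paper's proof. The paper organizes the case split by the sign of $\Cor_{p^{\ell},\ell}$ (nonnegative versus negative) rather than by which clause of $\JC$ applies, but since $\JCone$ and $\JCthree$ force $\Cor_{p^{\ell},\ell}\ge 0$ while $\JCtwo$ forces $\Cor_{p^{\ell},\ell}<0$, the two decompositions coincide and the remaining steps are the same.
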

\begin{proof}
If $\Cor_{p^{\ell},\ell}\ge 0$, the claim holds with $p=p^{\ell}$.
Hence, suppose that $\Cor_{p^{\ell},\ell}< 0$.
Let $\{p^{\ell},p\}$ be the first edge of $P^\ell$; such an edge exists, as by (P1) we have that $p^{\ell}\neq w^{\ell}$ and hence $|P^{\ell}|\neq 0$.
By \Cref{lem:jump_holds}, $p^\ell$ satisfies $\JC$. 
As $\Cor_{p^{\ell},\ell}< 0$, $\JCtwo$ must apply.
We conclude that
\begin{equation*}
\Cor_{p^\ell,\ell} \geq t_{p^\ell,\ell-1} - \min_{\set{x,p^\ell}\in E}\set{t_{x,\ell-1}} + \kappa
\ge t_{p^\ell,\ell-1} - t_{p,\ell-1} + \kappa.
\end{equation*}
Rearranging terms, the desired inequality follows.
\end{proof}

In the following, let $(v,p,w)$ be the triple of nodes guaranteed by \Cref{lem:w_step,lem:p_step,lem:v_step}.
Denote by $\circ$ concatenation of paths, by $\prefix(R,x)$ the prefix of path $R$ ending at node $x\in R$, and by $\suffix(R,x)$ the suffix of path $R$ starting at node $x\in R$.
Let
\begin{align*}
p'&=\begin{cases}
v & \mbox{if $v$ lies on $\suffix(P^{\ell},p)$},\\
p & \mbox{else,}
\end{cases}\\
P&:=\begin{cases}
\prefix(P^{\ell},w) & \mbox{if $w$ lies on $P^{\ell}$},\\
P^{\ell}\circ (w^{\ell},w) & \mbox{else,}
\end{cases}\\
P'&:=\begin{cases}
\suffix(P,p') &\mbox{if $p'$ lies on $P$},\\
(p',w) & \mbox{else,}
\end{cases}\\
Q&:=\begin{cases}
\prefix(Q^{\ell},v) & \mbox{if $v$ lies on $Q^{\ell}$},\\
Q^{\ell}\circ \{v^{\ell},v\} & \mbox{else,}
\end{cases}\\
Q'&:=\begin{cases}
\suffix(Q,p') & \mbox{if $p'$ lies on $Q$},\\
(p',p^{\ell})\circ Q & \mbox{else.}
\end{cases}
\end{align*}
For notational convenience, in analogy to $\Delta^{\ell}$ we also define
\begin{equation*}
\Delta:=\begin{cases}
|P'|+|Q'| - 1 & \mbox{if } P' \mbox{ and } Q' \mbox{ have the same first edge}\\
|P'|+|Q'| & \mbox{else.}
\end{cases}
\end{equation*}
We will show that this construction satisfies properties (P1) to (P5) for layer $\ell-1$ with $v^{\ell-1}=v$, $p^{\ell-1}=p'$, $w^{\ell-1}=w$, $P^{\ell-1}=P'$, and $Q^{\ell-1}=Q'$;
this will constitute the desired contradiction.

However, we first point out that indeed $P'$ and $Q'$ are paths in $H$ from $p'$ to $w$ and $v$, respectively.
To this end, we first cover the special case that $p'$ does not lie on $P$.
\begin{observation}\label{obs:non_suffix}
If $p'$ does not lie on $P$, then $p'=w^{\ell}$ and either $w=p^{\ell}$ or $p'=v$.
\end{observation}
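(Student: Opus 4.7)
The first move is to note that $p'$ always lies on $P^{\ell}$: if $p' = v$ then the defining condition $v \in \suffix(P^{\ell}, p)$ puts $v$ on $P^{\ell}$, while if $p' = p$ then \Cref{lem:p_step} gives $p \in \{p^{\ell}, \text{second node of } P^{\ell}\} \subseteq P^{\ell}$. Since $p' \in P^{\ell}$, if $P$ were defined by the branch $P = P^{\ell} \circ (w^{\ell}, w)$, we would have $p' \in P^{\ell} \subseteq P$, contradicting the hypothesis. Hence $w \in P^{\ell}$ and $P = \prefix(P^{\ell}, w)$, so $p'$ must lie strictly after $w$ along $P^{\ell}$.

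I then perform a case distinction on the definition of $p'$. If $p' = p$, then since $p$ occupies position $0$ (if $p = p^{\ell}$) or position $1$ (if $p$ is the second node of $P^{\ell}$), the requirement ``strictly after $w \in P^{\ell}$'' forces $p$ to be the second node and $w = p^{\ell}$. This immediately yields the disjunct $w = p^{\ell}$; to also obtain $p' = w^{\ell}$ it suffices to show $|P^{\ell}| = 1$. I would get this by ruling out the branches of \Cref{lem:w_step} incompatible with $w = p^{\ell}$: SCthree gives $w = w^{\ell} \neq p^{\ell}$ (excluded by \Pone), SCone forces $w$ to be the second-to-last node of $P^{\ell}$ and is consistent with $w = p^{\ell}$ only when $|P^{\ell}| = 1$, and the SCtwo branch must be shown to reduce to the same outcome. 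If instead $p' = v$, then the disjunct $p' = v$ is free, and I need $v = w^{\ell}$. Here I enumerate \Cref{lem:v_step}'s three options: $v = v^{\ell}$ is ruled out because $v \in P^{\ell}$ together with \Pfive places $v = v^{\ell} = p^{\ell}$ at position $0$, contradicting $v$ strictly after $w \in P^{\ell}$; the FCtwo branch (where $v$ is the endpoint of the last edge of $Q^{\ell}$ or the first edge of $P^{\ell}$) and the FCone branch (arbitrary neighbour of $v^{\ell}$) each force $v$ into a position that, combined with $v \in P^{\ell}$ and ``strictly after $w$'', can only be satisfied by $v = w^{\ell}$.

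The main obstacle I anticipate is the SCtwo subcase inside the $p' = p$ branch, where $p^{\ell}$ happens to be a neighbour of $w^{\ell}$ in $H$ via a possibly off-path edge so that $|P^{\ell}| = 1$ is not immediate. Resolving it will likely require combining the jump condition at $p^{\ell}$ (\Cref{lem:jump_holds}, specifically $\JCtwo$, which applies because the case $p \neq p^{\ell}$ of \Cref{lem:p_step} entails $\Cor_{p^{\ell},\ell} < 0$) with the slow condition $\SCtwo$ at $w^{\ell}$ to derive quantitative bounds on $t_{p^{\ell},\ell-1}$ and $t_{w^{\ell},\ell-1}$. These bounds, together with \Pthree and \Pfour on layer $\ell$, should either contradict the assumption $|P^{\ell}| \geq 2$ directly or permit the alternative choice of $w$ as the second-to-last node of $P^{\ell}$, collapsing the subcase into the SCone branch already handled.
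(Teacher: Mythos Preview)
Your outline starts like the paper's—establish $p'\in P^{\ell}$, deduce that $P$ is a proper prefix of $P^{\ell}$ with $p'$ lying strictly after $w$—but the case analyses you propose do not close.

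The main gap is in the case $p'=v$. Your plan is to enumerate the branches of \Cref{lem:v_step} and argue that each forces $v=w^{\ell}$. This cannot work: those branches describe $v$'s relation to $v^{\ell}$ and $Q^{\ell}$, not to the tail of $P^{\ell}$. In the $\FCone$ branch, for example, $v$ is an arbitrary neighbour of $v^{\ell}$; knowing in addition that $v\in P^{\ell}$ and that $v$ lies strictly after $w$ tells you nothing about whether $v=w^{\ell}$ unless you already know $w$'s position on $P^{\ell}$, which is precisely what is in question. The paper avoids this entirely. It argues directly from \Cref{lem:w_step} that $P$ contains all edges of $P^{\ell}$ except possibly the last one; hence every node of $P^{\ell}$ other than $w^{\ell}$ lies on $P$, and $p'=v\in P^{\ell}\setminus P$ forces $v=w^{\ell}$ in one line. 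The same claim dispatches your $p'=p$ case just as quickly: if the first edge of $P^{\ell}$ is missing from $P$, it must coincide with the last edge, so $|P^{\ell}|=1$ and $p'=w^{\ell}$.

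Your SCtwo concern—that the $\Delta_w=1$ branch of \Cref{lem:w_step} might place $w$ at an interior node of $P^{\ell}$—is a reasonable reading of the lemma's statement, but the paper's proof treats the three $\Delta_w$-cases as showing that at most the last edge of $P^{\ell}$ is removed when forming $P$. Your proposed JC-based workaround is speculative and addresses the wrong object (it bounds $\Cor_{p^{\ell},\ell}$, not $w$'s location on $P^{\ell}$); you should adopt the paper's ``at most the last edge is missing'' argument instead.
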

\begin{proof}
By \Cref{lem:p_step}, $p$ lies on the first edge of $P^{\ell}$.
Hence, if $p'=p$, $p'$ lies on $P$ unless $\prefix(P^{\ell},w)$ does not contain this edge.
By \Cref{lem:w_step}, this can only happen if the first edge of $P^{\ell}$ is also the last edge, i.e., $P^{\ell}=(p^{\ell},w^{\ell})=(w,p')$.

It remains to consider the case that $p'\neq p$, i.e., $p'=v$.
Again, we use that all edges but the last of $P^{\ell}$ are also contained in $P$ by \Cref{lem:w_step}.
Thus, $p'=v=w^{\ell}$.
\end{proof}
\begin{observation}\label{obs:paths}
$P'$ is a path in $H$ from $p'$ to $w$ and $Q'$ is a path in $H$ from $p'$ to $v$.
\end{observation}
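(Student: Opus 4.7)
The plan is to handle $P'$ and $Q'$ separately, first establishing that $P$ and $Q$ are themselves paths in $H$ (from $p^{\ell}$ to $w$ and from $p^{\ell}$ to $v$ respectively), and then performing case analyses on whether $p'$ lies on these paths. The path property of $P$ and $Q$ is routine: either we take a prefix of $P^{\ell}$ or $Q^{\ell}$ (which is automatically a path), or we extend by one edge, and the appended edge $\{w^{\ell},w\}$ (respectively $\{v^{\ell},v\}$) is guaranteed to be in $E$ by the corresponding case of \Cref{lem:w_step} (respectively \Cref{lem:v_step}). No repetition arises in the extension case since the appended endpoint is, by assumption, not already on the prefix path.

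For $P'$, if $p'$ lies on $P$, then by definition $P' = \suffix(P,p')$ is a subpath of $P$ and hence a path from $p'$ to $w$. Otherwise, \Cref{obs:non_suffix} applies and yields $p' = w^{\ell}$ together with either $w = p^{\ell}$ or $p' = v$. In the first subcase, the proof of \Cref{obs:non_suffix} forces $P^{\ell} = (p^{\ell},w^{\ell})$, so $\{w,p'\} = \{p^{\ell},w^{\ell}\} \in E$ and $P' = (p',w)$ is a length-$1$ path. In the second subcase, $v = p' = w^{\ell}$, and \Cref{lem:w_step} yields either $w = w^{\ell} = p'$ (in which case $P' = (p')$ is a degenerate length-$0$ path) or $\{w,w^{\ell}\} = \{w,p'\} \in E$. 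In every case, $P'$ is a valid path from $p'$ to $w$.

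For $Q'$, the key observation simplifying the analysis is that $p' = v$ always lies on $Q$ as its endpoint; hence the alternative branch of the definition of $Q'$ is triggered only when $p' = p \neq v$. If $p$ is on $Q$, then $Q' = \suffix(Q,p')$ is immediately a path from $p'$ to $v$. If $p$ is not on $Q$, then $p \neq p^{\ell}$, which by \Cref{lem:p_step} forces $\Cor_{p^{\ell},\ell} < 0$ and hence $\{p,p^{\ell}\} \in E$ (it is the first edge of $P^{\ell}$). Prepending this edge to $Q$ produces $Q' = (p',p^{\ell}) \circ Q$, a path from $p'$ to $v$ (simplicity follows because $p \notin Q$ by assumption). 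The main obstacle is purely bookkeeping: carefully tracking which of $v$, $p$, $w$ coincide with the anchors $v^{\ell}$, $p^{\ell}$, $w^{\ell}$ in the degenerate cases, and verifying that the prefix/suffix operations indeed land inside a path containing the relevant node. Once \Cref{obs:non_suffix} is invoked for the exceptional cases, the verification is essentially mechanical.
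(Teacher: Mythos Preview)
Your proposal is correct and follows essentially the same approach as the paper: establish that $P$ and $Q$ are paths via \Cref{lem:w_step} and \Cref{lem:v_step}, handle $P'$ by invoking \Cref{obs:non_suffix} for the exceptional branch, and handle $Q'$ by splitting on whether $p'=v$ (forcing $Q'=(v)$) or $p'=p$ (using \Cref{lem:p_step}). Your treatment is somewhat more detailed than the paper's---for instance, you explicitly worry about the possibility $w=w^{\ell}=p'$ in the second subcase for $P'$, whereas the paper simply asserts $\{w^{\ell},w\}\in E$; that degenerate case is in fact impossible (if $w=w^{\ell}$ then $P=P^{\ell}$ or $P=\prefix(P^{\ell},w^{\ell})=P^{\ell}$, and $p'=w^{\ell}$ would lie on $P$), so both versions are fine.
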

\begin{proof}
To show that $P'$ is a path from $p'$ to $w$, note that by \Cref{lem:w_step}, $P$ is a path in $H$, which by definition ends at $w$.
Thus, if $P'=\suffix(P,p')$, $P'$ is a path from $p'$ to $w$ in $H$.
Otherwise, by \Cref{obs:non_suffix}, $p'=w^{\ell}$, and $\{p',w\}=\{w^{\ell},w\}\in E$ by \Cref{lem:w_step}.

To show that $Q'$ is a path from $p'$ to $v$, note that by \Cref{lem:v_step}, $Q$ is a path in $H$, which by definition ends at $v$.
If $p'=p$, by \Cref{lem:p_step} $Q'$ is also a path in $H$, which by definition begins at $p'$ and has the same endpoint as $Q$, which is $v$.
On the other hand, if $p'=v$, $\suffix(Q,p')=\suffix(Q,v)=(v)$, which is the $0$-length path from $p'=v$ to itself.
\end{proof}

\subsubsection*{Proving the Properties}
To prove \Cref{cor:invariants}, we establish that the tuple $(v,p',w,P',Q')$ satisfies properties \Pone to \Pfive for layer $\ell-1$, contradicting the minimality of $\ell$.
By \Cref{obs:paths}, indeed $P'$ and $Q'$ are paths from $v$ to $w$ and $p'$, respectively.
In the following, we will repeatedly use this fact and the property that $\{x^{\ell},x\}\in E$ for $x\in \{v,w,p\}$ whenever $x\neq x^{\ell}$, without explicitly invoking \Cref{obs:paths} and \Cref{lem:v_step,lem:w_step,lem:p_step}.

We first rule out the special case that $v=w^{\ell}$ and $w=v^{\ell}$.
\begin{lemma}\label{lem:vnotw}
The case that $v=w^{\ell}$ and $w=v^{\ell}$ is not possible.
\end{lemma}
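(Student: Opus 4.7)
The plan is to rule out the case $v = w^\ell$, $w = v^\ell$ by constructing under this assumption a new tuple satisfying \Pone--\Pfive at layer $\ell - 1$, contradicting the minimality of $\ell$. Note that $v = w^\ell$ lands us in the $\Delta_v = 1$ branch of \Cref{lem:v_step} and $w = v^\ell$ in the $\Delta_w = 1$ branch of \Cref{lem:w_step}, so in particular $\{v^\ell, w^\ell\} \in E$.

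Set $X := t_{v^\ell, \ell-1} - t_{w^\ell, \ell-1}$. The $\Delta_v = 1$ conclusion of \Cref{lem:v_step} gives $\Cor_{v^\ell, \ell} \geq X + (4s-1)\kappa$, and the $\Delta_w = 1$ conclusion of \Cref{lem:w_step} gives $\Cor_{w^\ell, \ell}/\vartheta \leq -X - 4s\kappa$. Plugging these into the upper bound on $t_{v^\ell, \ell}$ and the lower bound on $t_{w^\ell, \ell}$ from \Cref{lem:drift}, and using the identity $(1 - 1/\vartheta)(\Lambda - d) + u = \kappa/2$ from \eqref{eq:kappa}, I obtain $t_{v^\ell, \ell} - t_{w^\ell, \ell} \leq \kappa/2 - (8s-1)\kappa - X$.

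On the other hand, \Pfour at layer $\ell$, combined with $|P^\ell| \geq 1$ (from \Pone) and the easy fact $\Delta^\ell \geq |P^\ell|$, supplies the lower bound $t_{v^\ell, \ell} - t_{w^\ell, \ell} \geq 4s\kappa + \psi^s_{v^{\bar\ell}, w^{\bar\ell}}(\bar\ell) + (\bar\ell - \ell)\kappa/2$. Combining the two yields the key estimate $-X \geq (12s - 3/2)\kappa + \psi^s_{v^{\bar\ell}, w^{\bar\ell}}(\bar\ell) + (\bar\ell - \ell)\kappa/2$.

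The main step is then to exhibit the witness tuple: take $v^{\ell-1} := p^{\ell-1} := w^\ell$, $w^{\ell-1} := v^\ell$, $P^{\ell-1} := (w^\ell, v^\ell)$ (using $\{v^\ell, w^\ell\} \in E$), and the trivial path $Q^{\ell-1} := (w^\ell)$. Then $|P^{\ell-1}| = 1$, $|Q^{\ell-1}| = 0$, and $\Delta^{\ell-1} = 1$. Properties \Pone and \Ptwo follow from $v^\ell \neq w^\ell$ (the original \Ptwo), and \Pfive holds because $v^{\ell-1} = p^{\ell-1}$. Substituting the $-X$ lower bound into the required inequalities for \Pthree and \Pfour reduces them to $(8s-1)\kappa + (\bar\ell - \ell)\kappa \geq 0$ and $(8s-2)\kappa > 0$, respectively, both of which hold for $s \geq 1$ and $\bar\ell \geq \ell$. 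The resulting valid tuple at layer $\ell - 1$ contradicts the minimality of $\ell$. I expect the main obstacle to be keeping the bookkeeping of constants straight so that the $-X$ lower bound is simultaneously large enough to meet both \Pthree and \Pfour at the new layer; conceptually, the argument captures that the swap $(v, w) = (w^\ell, v^\ell)$ already certifies a better tuple one layer earlier.
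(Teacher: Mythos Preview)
Your argument has a genuine gap at the very first step. You assert that ``$v = w^\ell$ lands us in the $\Delta_v = 1$ branch of \Cref{lem:v_step}'', but this does not follow. \Cref{lem:v_step} outputs a specific pair $(v,\Delta_v)$ determined by which clause of the fast condition fires; if $\FCtwo(s)$ fires, the lemma sets $\Delta_v = -1$ and chooses $v$ to be the neighbor of $v^\ell$ on the last edge of $Q^\ell$ (or the first edge of $P^\ell$), and nothing prevents that neighbor from being $w^\ell$. The same issue arises for $\Delta_w$: if $\SCone(s)$ fires, \Cref{lem:w_step} sets $\Delta_w = -1$ and picks the neighbor on the last edge of $P^\ell$, which (via \Pfive) can be $v^\ell$ when $P^\ell = (v^\ell, w^\ell)$. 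This is not a cosmetic slip: with $\Delta_v = -1$ you only get $\Cor_{v^\ell,\ell} \ge X - (4s-3)\kappa$, and with $\Delta_w = -1$ only $\Cor_{w^\ell,\ell}/\vartheta \le -X + 4s\kappa$. Rerunning your computation with even one of these weaker bounds leaves the lower bound on $-X$ short of what \Pfour at layer $\ell-1$ requires (by $2\kappa$ in the best such case, by $(8s-2)\kappa$ in the worst), so your witness tuple need not satisfy \Pfour.

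The paper avoids this entirely by not relying on $\Delta_v,\Delta_w$. It first shows $\Cor_{v^\ell,\ell} < 0$ using only the weakest form of \Cref{lem:w_step} (valid for any $\Delta_w \ge -1$) together with \Pfour and \Cref{lem:drift}, and then invokes the Jump Condition $\JCtwo$ to bound $t_{v^\ell,\ell-1}$ against $t_{w^\ell,\ell-1}$. A case split on whether $\Cor_{w^\ell,\ell} \le \vartheta\kappa$ then gives either an immediate numerical contradiction ($4s\kappa < \kappa$) or, via $\JCone$, a bound on $t_{v,\ell-1} - t_{w,\ell-1}$ strong enough to violate the minimality of $\bar\ell$. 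The Jump Condition is doing precisely the work your argument is missing: it is what controls the scenario of two adjacent nodes applying large corrections in opposite directions, which the slow and fast conditions alone do not rule out.
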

\begin{proof}
Assume towards a contradiction that $v=w^{\ell}$ and $w=v^{\ell}$.
We use \Pfour, \Cref{lem:drift}, and \Cref{lem:w_step} to bound
\begin{align*}
-\Cor_{w,\ell} &\ge t_{w,\ell}-t_{w,\ell-1}-\Lambda\\
&= t_{v^{\ell},\ell}-(t_{w,\ell-1}-4s\kappa)-\Lambda-4s\kappa\\
& \ge t_{v^{\ell},\ell}-\left(t_{w^{\ell},\ell-1}-\frac{\Cor_{w^{\ell},\ell}}{\vartheta}\right)-\Lambda-4s\kappa\\
& = t_{v^{\ell},\ell}-\left(t_{w^{\ell},\ell-1}+d-u+\frac{\Lambda-d-\Cor_{w^{\ell},\ell}}{\vartheta}\right)-\frac{\kappa}{2}-4s\kappa\\
& \ge t_{v^{\ell},\ell}-t_{w^{\ell},\ell}-4s\kappa-\frac{\kappa}{2}\\
& \ge \psi_{v^{\bar{\ell}},w^{\bar{\ell}}}(\bar{\ell}\,)-(\bar{\ell}-\ell+1)\frac{\kappa}{2}\\
& > 0.
\end{align*}
Thus, by $\JC$, it holds that
\begin{equation*}
t_{w,\ell-1} \le t_{w^{\ell},\ell-1}+\Cor_{w,\ell}-\kappa.
\end{equation*}
Note that by \Pone, $|P^{\ell}|\neq 0$ and hence $|P^{\ell}|,\Delta^{\ell}\ge 1$.
Thus, by \Pfour and \Cref{eq:contra1}
\begin{equation*}
t_{v^{\ell},\ell}-t_{w^{\ell},\ell}-4s\kappa\ge \psi_{v^{\bar{\ell}},w^{\bar{\ell}}}^s(\bar{\ell}\,)+(\bar{\ell}-\ell)\frac{\kappa}{2}\\
\ge \psi_{v^{\bar{\ell}},w^{\bar{\ell}}}^s(\bar{\ell}\,)>0.
\end{equation*}

We distinguish two cases.
\begin{itemize}
  \item $\Cor_{w^{\ell},\ell}\le \vartheta\kappa$. Then by \Cref{lem:drift}
  \begin{align*}
  4s\kappa&<t_{v^{\ell},\ell}-t_{w^{\ell},\ell}\\
  &= t_{w,\ell}-t_{w^{\ell},\ell}\\
  &\le t_{w,\ell-1}-\Cor_{w,\ell}-\left(t_{w^{\ell},\ell-1}-\frac{\Cor_{w^{\ell},\ell}}{\vartheta}\right)+u+\left(1-\frac{1}{\vartheta}\right)(\Lambda-d)\\
  &\le u+\left(1-\frac{1}{\vartheta}\right)(\Lambda-d)\\
  &<\kappa,
  \end{align*}
  which is a contradiction, because $s\ge 1$.
  \item $\Cor_{w^{\ell},\ell}>\vartheta \kappa$. By $\JC$, it follows that
  \begin{align*}
  t_{w^{\ell},\ell-1} \ge t_{w,\ell-1}+\frac{\Cor_{w^{\ell},\ell}}{\vartheta}+\kappa,
  \end{align*}
  yielding by \Cref{lem:drift} that
  \begin{align*}
  t_{v,\ell-1}-t_{w,\ell-1}&= t_{w^{\ell},\ell-1}-t_{w,\ell-1}\\
  &\ge t_{w,\ell-1} + \frac{\Cor_{w^{\ell},\ell}}{\vartheta}+\kappa - (t_{w^{\ell},\ell-1}+\Cor_{w,\ell}-\kappa)\\
  &= t_{v^{\ell},\ell-1} - \Cor_{v^{\ell},\ell} - \left(t_{w^{\ell},\ell-1}-\frac{\Cor_{w^{\ell},\ell}}{\vartheta}\right)+2\kappa\\
  &\ge t_{v^{\ell},\ell}-t_{w^{\ell},\ell}+2\kappa-\frac{\kappa}{2}\\
  &>t_{v^{\ell},\ell}-t_{w^{\ell},\ell}+\frac{\kappa}{2}.
  \end{align*}
  Recall that by \Pone, $|P^{\ell}|\neq 0$ and hence $|P^{\ell}|,\Delta^{\ell}\ge 1$.
  Moreover, $d(v,w)=d(w^{\ell},w)\le 1$, since by \Cref{lem:w_step} $w$ is either $w^{\ell}$ or a neighbor of $w^{\ell}$.
  Therefore, \Pfour implies that
  \begin{align*}
  \psi_{v,w}^s(\ell-1)
  &= t_{v,\ell-1}-t_{w,\ell-1}- 4s\kappa d(v,w)\\
  &> t_{v^{\ell},\ell}-t_{w^{\ell},\ell} -4s\kappa|P^{\ell}|+\frac{\kappa}{2}\\
  &\ge \psi_{v^{\bar{\ell}},w^{\bar{\ell}}}^s(\bar{\ell}\,)+(\bar{\ell}-(\ell-1))\frac{\kappa}{2}.
  \end{align*}
  Thus, $v$ and $w$ satisfy \Cref{eq:contra1} and \Cref{eq:contra2} with index $\bar{\ell}$ replaced by index $\ell-1<\bar{\ell}$, contradicting the minimality of $\bar{\ell}$.\qedhere
\end{itemize}
\end{proof}
Next, we prove a helper lemma relating $t_{w^{\ell},\ell}$ and $t_{w,\ell-1}$ by a stronger bound than \Cref{lem:w_step} for the special case that $p'=w^{\ell}$ and $w=p^{\ell}$.
This follows similar reasoning as the previous lemma.
However, it does not yield an immediate contradiction, as we need to rely on the weaker bound provided by \Pthree.
\begin{lemma}\label{lem:swap_jump}
If $p'=w^{\ell}$ and $w=p^{\ell}$, then 
\begin{equation*}
t_{w^{\ell},\ell}-t_{w,\ell-1}> d-u+\frac{\Lambda-d}{\vartheta}.
\end{equation*}
\end{lemma}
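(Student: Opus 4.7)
The plan is to case-split on which of the three subcases of \Cref{lem:w_step} the node $w$ comes from. The subcase $\Delta_w = 0$ gives $w = w^{\ell}$, which combined with $w = p^{\ell}$ contradicts \Pone{}. The subcase $\Delta_w = +1$ (so $\{w^\ell,w\}\in E$ and $\SCtwo$ holds for $w^\ell$) is immediate: $\SCtwo$ gives $t_{w^\ell,\ell-1} - t_{w,\ell-1} \ge \Cor_{w^\ell,\ell}/\vartheta + 4s\kappa$, and combining this with \Cref{lem:drift} applied to $w^\ell$ yields $t_{w^\ell,\ell}-t_{w,\ell-1} \ge d-u+(\Lambda-d)/\vartheta + 4s\kappa$, which strictly beats the target since $s\ge 1$.

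The nontrivial subcase is $\Delta_w = -1$. Here $\{p^\ell,w^\ell\}$ is the last edge of $P^\ell$; since $P^\ell$ starts at $p^\ell$, this forces $|P^\ell|=1$ and $P^\ell=(p^\ell,w^\ell)$. Now use the definition of $p'$ together with \Cref{lem:p_step} to pin down $\Cor_{p^\ell,\ell}$: if $\Cor_{p^\ell,\ell}\ge 0$ then $p=p^\ell$, so $\suffix(P^\ell,p)=P^\ell$, and $p'=w^\ell$ would force $v=w^\ell$ — but together with $w=p^\ell$ this is exactly the configuration ruled out by \Cref{lem:vnotw}. So $\Cor_{p^\ell,\ell}<0$, and \Cref{lem:jump_holds} forces $\JCtwo$ at $p^\ell$. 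Since $w^\ell$ is a neighbor of $p^\ell$, $\JCtwo$ yields the strict gap
\[
t_{w,\ell-1} = t_{p^\ell,\ell-1} < t_{w^\ell,\ell-1} - \kappa.
\]

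It remains to leverage this extra $\kappa$ to upgrade \Cref{lem:drift}'s output to the desired bound. A second case split on $\Cor_{w^\ell,\ell}$ closes the argument. If $\Cor_{w^\ell,\ell}\le \vartheta\kappa$, adding the above strict inequality to \Cref{lem:drift} applied at $w^\ell$ gives
\[
t_{w^\ell,\ell}-t_{w,\ell-1} > d-u+\frac{\Lambda-d-\Cor_{w^\ell,\ell}}{\vartheta}+\kappa \ge d-u+\frac{\Lambda-d}{\vartheta}.
\]
Otherwise $\Cor_{w^\ell,\ell}>\vartheta\kappa$, which rules out $\JCthree$ and $\JCtwo$, so \Cref{lem:jump_holds} forces $\JCone$ at $w^\ell$; since $w=p^\ell$ is a neighbor of $w^\ell$, $\JCone$ improves the gap to $t_{w^\ell,\ell-1}-t_{w,\ell-1}\ge \Cor_{w^\ell,\ell}/\vartheta + \kappa$, and the $\Cor_{w^\ell,\ell}$ terms cancel after adding \Cref{lem:drift}.

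The main obstacle is the $\Delta_w=-1$ case, where a direct combination of \Cref{lem:w_step} with \Cref{lem:drift} falls short of the target by $4s\kappa$. The key insight is that the very hypothesis ``$p'=w^\ell$ and $w=p^\ell$'' is geometrically restrictive: once \Cref{lem:vnotw} eliminates the branch $\Cor_{p^\ell,\ell}\ge 0$, we are forced into a regime where $\JCtwo$ at $p^\ell$ supplies the missing $\kappa$, and the final sub-case split on $\Cor_{w^\ell,\ell}$ (handled by $\JCone$) prevents the correction of $w^\ell$ from eating up this gain.
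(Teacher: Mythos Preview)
Your argument has a genuine gap in the $\Delta_w = -1$ subcase. You correctly deduce that if $\Cor_{p^\ell,\ell}\ge 0$ then $p=p^\ell$ and hence $p'=w^\ell$ forces $v=w^\ell$. But you then claim that ``together with $w=p^\ell$ this is exactly the configuration ruled out by \Cref{lem:vnotw}.'' It is not: \Cref{lem:vnotw} excludes the pair $v=w^\ell$ \emph{and} $w=v^\ell$, whereas you only have $w=p^\ell$. In general $v^\ell\neq p^\ell$ (the prover node $p^\ell$ is not tied to $v^\ell$ beyond being the start of $Q^\ell$), so this step does not go through, and you have not established $\Cor_{p^\ell,\ell}<0$.

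The paper establishes $\Cor_{p^\ell,\ell}<0$ by a completely different route that avoids your case split on $\Delta_w$ altogether. Writing $w=p^\ell$, it chains \Cref{lem:drift} at $(w,\ell)$, the inequality of \Cref{lem:w_step} with the worst case $\Delta_w\ge -1$, the identity $\Lambda=d-u+(\Lambda-d)/\vartheta+\kappa/2$, \Cref{lem:drift} at $(w^\ell,\ell)$, and finally \Pthree{} (using only $|P^\ell|\ge 1$ from \Pone) together with \Cref{eq:contra1} to obtain
\[
-\Cor_{w,\ell}\ \ge\ t_{p^\ell,\ell}-t_{w^\ell,\ell}-4s\kappa-\tfrac{\kappa}{2}\ \ge\ \psi^s_{v^{\bar\ell},w^{\bar\ell}}(\bar\ell)-(\bar\ell-\ell+1)\tfrac{\kappa}{2}\ >\ 0.
\]
This yields $\Cor_{p^\ell,\ell}<0$ directly, and from that point your use of $\JCtwo$ at $p^\ell$ and the final case split on $\Cor_{w^\ell,\ell}$ (handled via $\JCone$) coincide with the paper's argument and are correct. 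Your $\Delta_w=0$ and $\Delta_w=+1$ cases are fine (the latter is a clean shortcut the paper does not take), but they do not cover the situation, so the gap in the $\Delta_w=-1$ branch is fatal as written.
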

\begin{proof}
We use \Pthree and \Cref{lem:drift} to bound
\begin{align*}
-\Cor_{w,\ell} &\ge t_{w,\ell}-t_{w,\ell-1}-\Lambda\\
&= t_{p^{\ell},\ell}-(t_{w,\ell-1}-4s\kappa)-\Lambda-4s\kappa\\
& \ge t_{p^{\ell},\ell}-\left(t_{w^{\ell},\ell-1}-\frac{\Cor_{w^{\ell},\ell}}{\vartheta}\right)-\Lambda-4s\kappa\\
& = t_{p^{\ell},\ell}-\left(t_{w^{\ell},\ell-1}+d-u+\frac{\Lambda-d-\Cor_{w^{\ell},\ell}}{\vartheta}\right)-\frac{\kappa}{2}-4s\kappa\\
& \ge t_{p^{\ell},\ell}-t_{w^{\ell},\ell}-4s\kappa-\frac{\kappa}{2}\\
& \ge \psi_{v^{\bar{\ell}},w^{\bar{\ell}}}(\bar{\ell}\,)-(\bar{\ell}-\ell+1)\frac{\kappa}{2}\\
& > 0.
\end{align*}
Thus, by $\JC$, it holds that
\begin{equation*}
t_{w,\ell-1} \le t_{w^{\ell},\ell-1}-\kappa.
\end{equation*}
We distinguish two cases.
\begin{itemize}
  \item $\Cor_{w^{\ell},\ell}\le \vartheta\kappa$. Then
  \begin{align*}
  t_{w^{\ell},\ell}-t_{w,\ell-1}&\ge t_{w^{\ell},\ell}-t_{w^{\ell},\ell-1}+\kappa\\
  &\ge d-u+\frac{\Lambda-d-\Cor_{w^{\ell},\ell}}{\vartheta}+\kappa\\
  &\ge d-u+\frac{\Lambda-d}{\vartheta}.
  \end{align*}
  \item $\Cor_{w^{\ell},\ell}>\vartheta \kappa$. By $\JC$, it follows that
  \begin{align*}
  t_{w^{\ell},\ell-1} \ge t_{w,\ell-1}+\frac{\Cor_{w^{\ell},\ell}}{\vartheta}+\kappa,
  \end{align*}
  yielding that
  \begin{align*}
  t_{w^{\ell},\ell}-t_{w,\ell-1}&\ge t_{w^{\ell},\ell}-t_{w^{\ell},\ell-1}+\frac{\Cor_{w^{\ell},\ell}}{\vartheta}+\kappa\\
  &> d-u+\frac{\Lambda-d}{\vartheta}.\qedhere
  \end{align*}
\end{itemize}
\end{proof}
Using \Cref{lem:swap_jump}, we establish \Pfour for the special case of $p'=w^{\ell}$ and $w=p^{\ell}$.
Note that this entails that $w$ is closer to $p^{\ell}$, yet $P$ is not shorter than $P^{\ell}$.
This is accounted for by the case distinction in the definition of $\Delta^{\ell}$, which covers the difference.
\begin{lemma}\label{lem:swap}
If $p'=w^{\ell}$ and $w=p^{\ell}$, then \Pfour holds for $v$, $p'$, $w$, $|P'|$, $|Q'|$, and layer $\ell-1$.
\end{lemma}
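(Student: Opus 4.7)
The plan is to combine three ingredients: (i) the strong lower bound from Lemma~\ref{lem:swap_jump}, which in the swap case gives $t_{w^\ell,\ell} - t_{w,\ell-1} > d - u + (\Lambda-d)/\vartheta$; (ii) the standard propagation/drift bounds of Lemmas~\ref{lem:v_step} and~\ref{lem:drift} applied to $v^\ell$; and (iii) the inductive hypothesis that \Pfour holds at layer $\ell$.

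First, I unpack the path combinatorics. Since by Lemma~\ref{lem:w_step} the node $w$ is either $w^\ell$ or a neighbor of $w^\ell$, and \Pone rules out $w = w^\ell$, the edge $\{p^\ell,w^\ell\}$ must lie in $H$ and be the last edge of $P^\ell$. Since $P^\ell$ also starts at $p^\ell$, in fact $P^\ell = (p^\ell,w^\ell)$ and $|P^\ell| = 1$. Tracking the definitions, this forces $P = (w)$, so $p' = w^\ell$ is not on $P$, $P' = (p',w)$, and $|P'| = 1$ as well.

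Next, I chain the bounds. Lemma~\ref{lem:v_step} gives $t_{v,\ell-1} \ge t_{v^\ell,\ell-1} - \Cor_{v^\ell,\ell} + (4s-2)\kappa \Delta_v + \kappa$; Lemma~\ref{lem:drift} applied to $v^\ell$ yields $t_{v^\ell,\ell-1} \ge t_{v^\ell,\ell} - \Lambda + \Cor_{v^\ell,\ell}$; and Lemma~\ref{lem:swap_jump} bounds $t_{w,\ell-1}$ from above in terms of $t_{w^\ell,\ell}$. The leftover constants reduce to $-\kappa/2$ via \eqref{eq:kappa}, giving
\begin{equation*}
t_{v,\ell-1} - t_{w,\ell-1} > (t_{v^\ell,\ell} - t_{w^\ell,\ell}) + (4s-2)\kappa \Delta_v + \kappa/2.
\end{equation*}
Substituting \Pfour at layer $\ell$ (with $|P^\ell|=1$) into the first parenthesis turns the right-hand side into $(4s-2)\kappa(\Delta^\ell + \Delta_v) + \psi^s_{v^{\bar\ell},w^{\bar\ell}}(\bar\ell) + (\bar\ell-(\ell-1))\kappa/2 + 2\kappa|P'|$.

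The desired conclusion, \Pfour at layer $\ell-1$, is exactly the same bound with $\Delta$ in place of $\Delta^\ell + \Delta_v$. Reducing the lemma therefore amounts to the combinatorial inequality $\Delta \le \Delta^\ell + \Delta_v$, which I expect to be the main obstacle. Verifying it requires a case distinction on the reason $p' = w^\ell$ (either $p = w^\ell$, or $v = w^\ell \in \suffix(P^\ell,p)$ so that $p' = v$), on which clause of Lemma~\ref{lem:v_step} selects $\Delta_v \in \{-1,0,1\}$, and on whether $p'$ lies on $Q$; throughout, \Pfive together with $|P^\ell| = |P'| = 1$ and the observation that $v^\ell\neq w^\ell$ (else \Pfive would contradict \Pone) keeps the case analysis under control, and in each sub-case substituting the path definitions into the formulas for $\Delta^\ell$ and $\Delta$ yields the inequality.
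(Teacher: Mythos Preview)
Your proposal is correct and follows essentially the same route as the paper: chain \Cref{lem:swap_jump}, \Cref{lem:v_step}, and \Cref{lem:drift} to obtain $t_{v,\ell-1}-t_{w,\ell-1} > (t_{v^\ell,\ell}-t_{w^\ell,\ell}) + (4s-2)\kappa\Delta_v + \kappa/2$, substitute \Pfour at layer $\ell$, and reduce to the combinatorial inequality $\Delta \le \Delta^\ell + \Delta_v$. The paper organizes the final case analysis by whether $P^\ell$ and $Q^\ell$ share their first edge (rather than by the reason $p'=w^\ell$), but the reduction and the use of $|P'|=|P^\ell|=1$ are identical.
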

\begin{proof}
Denote by $\Delta_v\in \{-1,0,1\}$ the value such that
\begin{align*}
t_{v^{\ell},\ell-1}-\Cor_{v^{\ell},\ell}&\le t_{v,\ell-1}-(4s-2)\kappa\Delta_v-\kappa
\end{align*}
according to \Cref{lem:v_step}.
By \Cref{lem:drift,lem:swap_jump},
\begin{align*}
&\,t_{v,\ell-1}-t_{w,\ell-1}\\
>&\, t_{v,\ell-1}-t_{w^{\ell},\ell}+d-u+\frac{\Lambda-d}{\vartheta}\\
\ge &\,t_{v^{\ell},\ell-1}-\Cor_{v^{\ell},\ell}+(4s-2)\kappa \Delta_v+\kappa-t_{w^{\ell},\ell}+d-u+\frac{\Lambda-d}{\vartheta}\\
\ge &\,t_{v^{\ell},\ell}-\Lambda+(4s-2)\kappa \Delta_v+\kappa-t_{w^{\ell},\ell}+d-u+\frac{\Lambda-d}{\vartheta}\\
= &\,t_{v^{\ell},\ell}-t_{w^{\ell},\ell}+(4s-2)\kappa\Delta_v+\frac{\kappa}{2}\\
\ge &\,(4s-2)\kappa(\Delta^{\ell}+\Delta_v)+\psi_{v^{\bar{\ell}},w^{\bar{\ell}}}^s(\bar{\ell}\,)+(\bar{\ell}-(\ell-1))\frac{\kappa}{2}+\kappa_s|P^{\ell}|.
\end{align*}

We claim that $\Delta\le \Delta^{\ell}+\Delta_v$.
Note that plugging this into the above inequality yields
\begin{equation*}
t_{v,\ell-1}-t_{w,\ell-1}\ge (4s-2)\kappa \Delta+\psi_{v^{\bar{\ell}},w^{\bar{\ell}}}^s(\bar{\ell}\,)+(\bar{\ell}-(\ell-1))\frac{\kappa}{2}+\kappa_s|P'|,
\end{equation*}
i.e., \Pfour for $v$, $p'$, $w$, $|P'|$, $|Q'|$, and layer $\ell-1$, as desired.
Therefore, proving the above claim will complete the proof.

To show the claim, we first note that $P'=(p',w)=(w^{\ell},p^{\ell})$.
Since $w=p^{\ell}$, by \Cref{lem:w_step} we also have that $P^{\ell}=(p^{\ell},w^{\ell})=(w,p')$.
In particular, $|P^{\ell}|=|P'|$.
We distinguish two cases.
\begin{itemize}
  \item $P^{\ell}$ and $Q^{\ell}$ share the first edge. It follows that $|Q^{\ell}|\ge 2$, as otherwise $v^{\ell}=w^{\ell}$, contradicting \Ptwo. If $v=p'$, then
  \begin{equation*}
  |Q'|=|Q|=|(v)|=0\le |Q^{\ell}|-2\le |Q^{\ell}|+\Delta_v-1.
  \end{equation*}
  Otherwise, the first edge of $Q$ is the first edge of $Q^{\ell}$ and thus $P^{\ell}$. This edge is $\{p^{\ell},w^{\ell}\}=\{p^{\ell},p'\}$. Hence, $|Q'|=|\suffix(Q,p')|\le |Q|-1=|Q^{\ell}|+\Delta_v-1$. Either way, we get that
  \begin{equation*}
  \Delta\le |P'|+|Q'| \le |P^{\ell}|+|Q^{\ell}|+\Delta_v-1 = \Delta^{\ell}+\Delta_v.
  \end{equation*}
  \item $P^{\ell}$ and $Q^{\ell}$ do not share the first edge, but $P'$ and $Q'$ do. Then
  \begin{equation*}
  \Delta= |P'|+|Q'|-1 \le |P^{\ell}|+|Q^{\ell}|+\Delta_v-1 = \Delta^{\ell}+\Delta_v.
  \end{equation*}
  \item $P^{\ell}$ and $Q^{\ell}$ do not share the first edge and neither do $P'$ and $Q'$. As the first (and only) edge of $P'$ is $\{p',w\}=\{p',p^{\ell}\}$, this entails that $Q'=\suffix(Q,p')$. We distinguish two subcases.
  \begin{itemize}
    \item $|\suffix(Q,p')|\le |Q|-1$. Then
    \begin{equation*}
\Delta= |P'|+|Q'| \le |P^{\ell}|+|Q|-1\le |P^{\ell}|+|Q^{\ell}|+\Delta_v = \Delta^{\ell}+\Delta_v.
    \end{equation*}
    \item $|\suffix(Q,p')|=|Q|$ and $v^{\ell}\neq w$. Then $p'$ is the last node on $Q$, i.e., $v=p'$. As by \Cref{obs:paths} $Q'$ is a path from $p'$ to $v$, it follows that $|Q'|=0<|Q^{\ell}|$. We conclude that
    \begin{equation*}
    \Delta= |P'|+|Q'| \le |P^{\ell}|+|Q^{\ell}|+\Delta_v = \Delta^{\ell}+\Delta_v.
    \end{equation*}
    \item $|\suffix(Q,p')|=|Q|$ and $v^{\ell}= w$. As $w=p^{\ell}$ and $p'=v=w^{\ell}$ as in the previous subcase, this contradicts \Cref{lem:vnotw}.\qedhere
  \end{itemize}
\end{itemize}
\end{proof}

Before proceeding to the case that $v\neq w^{\ell}$ or $w\neq v^{\ell}$, we prove another helper statement ruling out the specific case that $v\neq p'=w$.
\begin{lemma}\label{lem:pnotw}
It is not possible that $v\neq p'=w$.
\end{lemma}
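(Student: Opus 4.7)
The hypothesis $v\neq p'=w$ immediately eliminates the possibility $p'=v$ from the two defining cases of $p'$, since $p'=v$ combined with $p'=w$ would force $v=w=p'$, contradicting $v\neq p'$. Hence the only case left to rule out is the other defining case $p'=p$ combined with $p=w$; the goal becomes to derive a contradiction from $p=w$ alone.

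The plan is to chain the three pointwise bounds from \Cref{lem:v_step,lem:w_step,lem:p_step} with \Cref{lem:drift}, mirroring the algebraic style already used in \Cref{lem:vnotw} and \Cref{lem:swap_jump}. Specifically, \Cref{lem:drift} applied to $p^{\ell}$ and $w^{\ell}$ together with \Cref{eq:kappa} yields
\[
    t_{p^{\ell},\ell}-t_{w^{\ell},\ell} \leq (t_{p^{\ell},\ell-1}-\Cor_{p^{\ell},\ell}) - (t_{w^{\ell},\ell-1}-\Cor_{w^{\ell},\ell}/\vartheta) + \kappa/2.
\]
Substituting the upper bound from \Cref{lem:p_step} and the lower bound from \Cref{lem:w_step}, and using that $p=w$ makes $t_{p,\ell-1}=t_{w,\ell-1}$ cancel, the right-hand side reduces to $-X - 4s\kappa\Delta_w + \kappa/2$, where $X\in\set{0,\kappa}$ is the slack supplied by \Cref{lem:p_step} and $\Delta_w\in\set{-1,0,1}$ is the offset supplied by \Cref{lem:w_step}. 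Combining this with the lower bound from \Pthree and invoking \Cref{eq:contra1} together with $\ell-1\ge\underline{\ell}$ to absorb $\psi^s_{v^{\bar{\ell}},w^{\bar{\ell}}}(\bar{\ell})$ into the $(\bar{\ell}-\ell)\kappa/2$ term, the whole chain collapses to the single inequality
\[
    4s\kappa(|P^{\ell}|+\Delta_w) + X < 0.
\]
By \Pone, $|P^{\ell}|\ge 1$, so $|P^{\ell}|+\Delta_w\ge 0$, and $X\ge 0$ by definition; the left-hand side is therefore nonnegative, a contradiction.

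The main obstacle I anticipate is the tight sub-case $|P^{\ell}|=1$, $\Delta_w=-1$, and $p=p^{\ell}$ (so $X=0$), in which both sides of the final displayed inequality evaluate to $0$. What saves the argument is the strictness of the inequality, which is inherited from the strict bound in \Cref{eq:contra1} via $\psi^s_{v^{\bar{\ell}},w^{\bar{\ell}}}(\bar{\ell})>(\bar{\ell}-\underline{\ell})\kappa/2\ge(\bar{\ell}-\ell+1)\kappa/2$; preserving this strict inequality through the chain of estimates (rather than weakening it to $\ge$ anywhere) is the only real care required.
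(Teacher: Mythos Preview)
Your proof is correct and follows essentially the same route as the paper's: both deduce $p'=p$ from $v\neq p'$, then combine \Cref{lem:p_step}, \Cref{lem:w_step}, \Cref{lem:drift}, \Pthree, and \Cref{eq:contra1} (using $\ell-1\ge\underline\ell$ and $|P^{\ell}|\ge 1$ from \Pone) to obtain the strict contradiction $0>0$. The only cosmetic difference is that you keep the parameters $X\in\{0,\kappa\}$ and $\Delta_w\in\{-1,0,1\}$ explicit, whereas the paper immediately substitutes the worst-case values $X=0$ and $\Delta_w=-1$; your mention of \Cref{lem:v_step} in the plan is superfluous, as you never actually use it.
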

\begin{proof}
Assume towards a contradiction that $v\neq p'=w$.
Thus, $p'=p$. \Cref{lem:w_step,lem:p_step} yield that
\begin{align*}
t_{w^{\ell},\ell-1}-\frac{\Cor_{w^{\ell},\ell}}{\vartheta}&\ge t_{w,\ell-1}-4s\kappa\mbox{ and}\\
t_{p^{\ell},\ell-1}-\Cor_{p^{\ell},\ell}&\le t_{p',\ell-1}.
\end{align*}
Using \Pthree and \Cref{lem:drift}, it follows that
\begin{align*}
0&=t_{p',\ell-1}-t_{w,\ell-1}\\
&\ge t_{p^{\ell},\ell-1}-\Cor_{p^{\ell},\ell}-\left(t_{w^{\ell},\ell-1}-\frac{\Cor_{w^{\ell},\ell}}{\vartheta}\right)-4s\kappa\\
&\ge t_{p^{\ell},\ell}-t_{w^{\ell},\ell}-4s\kappa-\frac{\kappa}{2}\\
&\ge \psi^s_{v^{\bar{\ell},w^{\bar{\ell}}}}(\bar{\ell}\,)-(\bar{\ell}-(\ell-1))\frac{\kappa}{2}\\
&>0,
\end{align*}
arriving at the desired contradiction.
\end{proof}
We now establish \Pfour for the case that $v\neq w^{\ell}$ or $w\neq v^{\ell}$.
\begin{lemma}\label{lem:P4}
If $p'\neq w^{\ell}$ or $w\neq p^{\ell}$, then \Pfour holds for $v$, $p'$, $w$, $|P'|$, $|Q'|$, and layer $\ell-1$.
\end{lemma}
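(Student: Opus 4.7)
The plan is to relate the pulse times at layer $\ell-1$ to those at layer $\ell$ via the three step lemmas, then feed in \Pfour for layer $\ell$, and finally perform the combinatorial bookkeeping to compare $\Delta$ with $\Delta^\ell$.

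First, I would invoke \Cref{lem:v_step,lem:w_step,lem:p_step} to fix $v, p, p', w$ together with offsets $\Delta_v\in\{-1,0,1\}$ and $\Delta_w\in\{-1,0,1\}$ such that
\begin{align*}
t_{v,\ell-1} &\ge t_{v^{\ell},\ell-1}-\Cor_{v^{\ell},\ell}+(4s-2)\kappa\Delta_v+\kappa,\\
t_{w,\ell-1} &\le t_{w^{\ell},\ell-1}-\frac{\Cor_{w^{\ell},\ell}}{\vartheta}-4s\kappa\Delta_w.
\end{align*}
Combining these with \Cref{lem:drift} (which converts layer $\ell-1$ quantities plus the respective corrections into layer $\ell$ quantities, incurring an error of at most $u+(1-1/\vartheta)(\Lambda-d)\le \kappa/2$ by \Cref{eq:kappa}), I would derive
\begin{equation*}
t_{v,\ell-1}-t_{w,\ell-1}\ge t_{v^{\ell},\ell}-t_{w^{\ell},\ell}+(4s-2)\kappa\Delta_v+4s\kappa\Delta_w+\tfrac{\kappa}{2}.
\end{equation*}
Substituting the \Pfour bound at $\ell$ then yields
\begin{equation*}
t_{v,\ell-1}-t_{w,\ell-1}\ge (4s-2)\kappa(\Delta^{\ell}+\Delta_v+\Delta_w)+2\kappa\Delta_w+\psi^s_{v^{\bar{\ell}},w^{\bar{\ell}}}(\bar{\ell}\,)+(\bar{\ell}-(\ell-1))\tfrac{\kappa}{2}+2\kappa|P^{\ell}|.
\end{equation*}

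To obtain the \Pfour-bound for the new triple it then suffices to show
\begin{equation*}
\Delta\le \Delta^{\ell}+\Delta_v+\Delta_w\qquad\text{and}\qquad |P'|\le |P^{\ell}|+\Delta_w.
\end{equation*}
The second inequality follows directly from the definition of $P$: if $w=w^{\ell}$ then $P=P^{\ell}$; if $\{w,w^{\ell}\}$ is the last edge of $P^{\ell}$ then $P=\prefix(P^{\ell},w)$ is shorter by one; and if $w$ is a new neighbor then $P=P^{\ell}\circ(w^{\ell},w)$ grows by one. Since $|P'|\le |P|$ always (with equality when $p'=p^\ell$, and strict inequality otherwise because $p'\in\{p,v\}$ then both lie on $P$), the bound holds.

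The combinatorial inequality $\Delta\le \Delta^\ell+\Delta_v+\Delta_w$ is the main obstacle and I would handle it by case analysis on (i) whether $P^\ell$ and $Q^\ell$ share a first edge, (ii) whether $P'$ and $Q'$ share a first edge, and (iii) whether $p'=p$ or $p'=v$. In the easy case $p'=p^\ell$, we have $P'=P$, $Q'=Q$, $|Q|\le |Q^{\ell}|+\Delta_v$ (analogous to the bound for $|P|$), and the accounting matches. When $p'\ne p^\ell$, Lemma~\ref{lem:pnotw} rules out $p'=w$, and the exclusion hypothesis of this lemma together with Lemma~\ref{lem:vnotw} removes the remaining pathological configurations. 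In all surviving cases either passing to $\suffix$ cancels the growth in one of the two paths, or $P'$ and $Q'$ inherit a shared first edge that absorbs an extra $-1$, accounting precisely for the shift in $\Delta^{\ell}$ induced by $\Delta_v$ and $\Delta_w$. I expect this case analysis to mirror the one in the proof of \Cref{lem:swap}, but to require tracking more subcases because now $\Delta_v$ and $\Delta_w$ each take three possible values rather than being fixed; property \Pfive is precisely what prevents $v$ from lying on $P^{\ell}$ away from $p^{\ell}$, which is what keeps the combinatorics under control.
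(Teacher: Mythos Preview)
Your approach is essentially the same as the paper's: derive the displayed lower bound on $t_{v,\ell-1}-t_{w,\ell-1}$ from \Cref{lem:v_step,lem:w_step,lem:drift} and \Pfour at layer $\ell$, then reduce to the two combinatorial claims $\Delta\le\Delta^{\ell}+\Delta_v+\Delta_w$ and $|P'|\le|P^{\ell}|+\Delta_w$, which are discharged by a case analysis on how $P',Q'$ arise from $P^{\ell},Q^{\ell}$.

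There is one genuine gap. You assert ``$|P'|\le |P|$ always \ldots\ because $p'\in\{p,v\}$ then both lie on $P$.'' This fails precisely when $p'$ does \emph{not} lie on $P$, in which case $P'=(p',w)$ is defined directly rather than as $\suffix(P,p')$. By \Cref{obs:non_suffix} together with the lemma's hypothesis, this residual case is exactly $p'=v=w^{\ell}$ with $w\neq p^{\ell}$; here $v$ need not lie on $P$ (it can be the last node $w^{\ell}$ of $P^{\ell}$, which is dropped from $P$ when $\Delta_w=-1$). The paper treats this situation separately: one has $|Q'|=0$ and $|P'|=1$, so $\Delta=1$, and the inequality $\Delta\le\Delta^{\ell}+\Delta_v+\Delta_w$ is verified by checking whether $\Delta_v=1$ or $\Delta_v=-1$ (the latter forces $\{v,v^{\ell}\}$ to be the last edge of $Q^{\ell}$, and \Cref{lem:vnotw} together with $w\neq p^{\ell}$ then gives $|P^{\ell}|\ge 2$, supplying the needed slack). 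Your sketch correctly anticipates that \Cref{lem:vnotw} and the exclusion hypothesis are used to prune configurations, but you should not fold this case into the generic $|P'|\le|P|$ argument; it needs its own accounting.
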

\begin{proof}
Denote by $\Delta_w,\Delta_v\in \{-1,0,1\}$ the values such that
\begin{align*}
t_{w^{\ell},\ell-1}-\Cor_{w^{\ell},\ell}&\ge t_{w,\ell-1}+ 4s\kappa \Delta_w\\
t_{v^{\ell},\ell-1}-\Cor_{v^{\ell},\ell}&\le t_{v,\ell-1}-(4s-2)\kappa \Delta_v -\kappa
\end{align*}
according to \Cref{lem:v_step,lem:w_step}.
Using \Pfour and \Cref{lem:drift}, we bound
\begin{align*}
&\,t_{v,\ell-1}-t_{w,\ell-1}\\
\ge &\,t_{v^{\ell},\ell-1}-\frac{\Cor_{v,\ell}}{\vartheta}+(4s-2)\kappa \Delta_v+\kappa-(t_{w^{\ell},\ell-1}-\Cor_{w,\ell}- 4s\kappa \Delta_w)\\
\ge &\,t_{v^{\ell},\ell}+(4s-2)\kappa\Delta_v+\kappa-t_{w^{\ell},\ell}+ 4s\kappa\Delta_w-\frac{\kappa}{2}\\
\ge &\,(4s-2)\kappa(\Delta^{\ell}+\Delta_v+\Delta_w)+(\bar{\ell}-(\ell-1))\frac{\kappa}{2}+\kappa_s(|P^{\ell}|+\Delta_w)\\
\ge &\,(4s-2)\kappa(\Delta^{\ell}+\Delta_v+\Delta_w)+(\bar{\ell}-(\ell-1))\frac{\kappa}{2}+\kappa_s |P'|,
\end{align*}
where the last step exploits that $|P'|=|\suffix(P,p')|\le |P|\le|P^{\ell}|+\Delta_w$.
We claim that $\Delta\le \Delta^{\ell}+\Delta_v+\Delta_w$.
Proving this claim will complete the proof, as by the above inequality then
\begin{equation*}
t_{v,\ell-1}-t_{w,\ell-1}\ge (4s-2)\kappa\Delta+(\bar{\ell}-(\ell-1))\frac{\kappa}{2}+\kappa_s |P'|,
\end{equation*}
i.e., \Pfour for $v$, $p'$, $w$, $|P'|$, $|Q'|$, and layer $\ell-1$.

By \Cref{obs:non_suffix} and the prerequisites of the lemma, $P'=\suffix(P,p')$ or $p'=v=w^{\ell}$. To cover the possibility that $P'=\suffix(P,p')$, we distinguish several cases:
\begin{itemize}
  \item $p'=p^{\ell}$. Then $P'=P$ and $Q'=Q$, as $p'$ is the first node of both $P$ and $Q$. Hence,
  \begin{equation*}
  |P'|+|Q'|=|P|+|Q|\le|P^{\ell}|+|Q^{\ell}|+\Delta_w+\Delta_v.
  \end{equation*}
  We distinguish three subcases.
  \begin{itemize}
    \item $P^{\ell}$ and $Q^{\ell}$ do not share their first edge. Then
    \begin{equation*}
    \Delta\le |P'|+|Q'|=|P^{\ell}|+|Q^{\ell}|+\Delta_w+\Delta_v=\Delta^{\ell}+\Delta_w+\Delta_v.
    \end{equation*}
    \item $P^{\ell}$, $Q^{\ell}$, and $Q'$ share the same first edge. By \Cref{lem:pnotw}, $w\neq p'$. Therefore, $P'=P\neq (p')$, which means that $P^{\ell}$ and $P'$ have the same first edge, too. Thus, $Q'$ and $P'$ have the same first edge as well, and
    \begin{equation*}
    \Delta= |P'|+|Q'|-1=|P^{\ell}|+|Q^{\ell}|-1+\Delta_w+\Delta_v=\Delta^{\ell}+\Delta_w+\Delta_v.
    \end{equation*}
    \item $P^{\ell}$ and $Q^{\ell}$ have the same first edge, but $Q'$ does not. Since $Q^{\ell}\neq (p^{\ell})$, we have that $v^{\ell}\neq p^{\ell}$. By \Pfive, this implies that $v^{\ell}\notin P^{\ell}$. In particular, $v^{\ell}$ cannot be part of the first edge of $Q^{\ell}$ and $|Q^{\ell}|\ge 2$. As $p'=p^{\ell}$, $Q$ and $Q'$ both start with $p'$. Therefore, $Q'$ is a prefix of $Q^{\ell}$. However, $Q^{\ell}$ has the same first edge as $P^{\ell}$, while $Q'$ does not. Thus, $|Q'|=0\le |Q^{\ell}|+\Delta_v-1$. We conclude that
    \begin{equation*}
    \Delta= |P'|+|Q'|\le |P^{\ell}|+|Q^{\ell}|-1+\Delta_w+\Delta_v=\Delta^{\ell}+\Delta_w+\Delta_v.
    \end{equation*}
  \end{itemize}
  \item $v=p'\neq p^{\ell}$. Then $Q'=(p')$. Moreover, by the prerequisites of the lemma, $P'=\suffix(P,p')$. Since $p'\neq p^{\ell}$, we have that $|\suffix(P,p')|\le |P|-1$. By construction, $|Q'|\le |Q|+1$. Overall,
  \begin{equation*}
  \Delta\le |P'|+|Q'|\le |P|-1+|Q|+1=|P^{\ell}|+\Delta_w+|Q^{\ell}|+\Delta_v=\Delta^{\ell}+\Delta_w+\Delta_v.
  \end{equation*}
  \item $v\neq p'\neq p^{\ell}$. Thus, $p'=p$ and by \Cref{lem:p_step} $\{p^{\ell},p'\}$ is the first edge of $P^{\ell}$. Hence, $|P'|=|\suffix(P,p')|\le |P|-1 \le |P^{\ell}|+\Delta_w-1$. We distinguish two subcases.
  \begin{itemize}
    \item $P^{\ell}$ and $Q^{\ell}$ do not share their first edge. Then
    \begin{equation*}
    \Delta\le |P'|+|Q'|\le |P^{\ell}|+\Delta_w+|Q^{\ell}|+\Delta_v=\Delta^{\ell}+\Delta_w+\Delta_v.
    \end{equation*}
    \item $P^{\ell}$ and $Q^{\ell}$ share their first edge. As $v\neq p'$, $Q$ has the same first edge as $Q^{\ell}$, i.e., $\{p^{\ell},p'\}$. Hence, $|Q'|=|\suffix(Q,p'|=|Q|-1\le |Q^{\ell}|+\Delta_v-1$. We conclude that
    \begin{equation*}
    \Delta\le |P'|+|Q'|\le |P^{\ell}|+\Delta_w+|Q^{\ell}|+\Delta_v-2<\Delta^{\ell}+\Delta_w+\Delta_v.
    \end{equation*}
  \end{itemize}
\end{itemize}
It remains to consider the case that $P'\neq \suffix(P,p')$ and $p'=v=w^{\ell}$.
Then $|Q'|=(v)$ and $|P'|=|(p',w)|=1$, implying that $\Delta=1$.
By (P2), $v^{\ell}\neq w^{\ell}=v$.
If $\Delta_v=1$, then
\begin{equation*}
\Delta = 1 \le \Delta^{\ell} \le \Delta^{\ell}+\Delta_w+\Delta_v.
\end{equation*}
By \Cref{lem:v_step}, the remaining case is that $\Delta_v=-1$ and $\{v,v^{\ell}\}$ is the last edge of $Q^{\ell}$ or the first edge of $P^{\ell}$.
By \Cref{lem:vnotw}, it is impossible that $v=w^{\ell}$, so this edge must be the last one of $Q^{\ell}$ and distinct from the first one of $P^{\ell}$.
Moreover, by the prerequisites of the lemma, $p^{\ell}\neq w$, so it must hold that $|P^{\ell}|\ge 2$.
Overall, either
\begin{itemize}
  \item $|Q^{\ell}|\ge 2$ and
  \begin{equation*}
  \Delta = 1 \le |P^{\ell}|+|Q^{\ell}|-3 \le \Delta^{\ell}-2=\Delta^{\ell}+\Delta_w+\Delta_v,\mbox{ or}
  \end{equation*}
  \item $|Q^{\ell}|=1$ and $Q^{\ell}$ and $P^{\ell}$ do not share the first edge, yielding
  \begin{equation*}
  \Delta = 1 \le |P^{\ell}|+|Q^{\ell}|-2 = \Delta^{\ell}-2=\Delta^{\ell}+\Delta_w+\Delta_v.\qedhere
  \end{equation*}
\end{itemize}
\end{proof}

\begin{corollary}\label{cor:P4}
\Pfour and \Ptwo hold for $v$, $p'$, $w$, $|P'|$, $|Q'|$, and layer $\ell-1$.
\end{corollary}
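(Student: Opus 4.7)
The plan is to combine Lemmas~\ref{lem:swap} and \ref{lem:P4}, since together they perform an exhaustive case split that settles (P4), after which (P2) is an immediate corollary.

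More concretely, I would first observe that the conditions ``$p'=w^{\ell}$ and $w=p^{\ell}$'' on the one hand, and ``$p'\neq w^{\ell}$ or $w\neq p^{\ell}$'' on the other, form a complete dichotomy. In the former case, Lemma~\ref{lem:swap} directly yields (P4) for $v$, $p'$, $w$, $|P'|$, $|Q'|$, and layer $\ell-1$. In the latter, Lemma~\ref{lem:P4} does the same. Thus (P4) holds in all cases. To get (P2), I would invoke the second bullet of Observation~\ref{obs:implications}, which records the implication (P4) $\Rightarrow$ (P2): since $\Delta\geq 0$, $|P'|\geq 0$, $4s-2>0$, and $\psi^s_{v^{\bar\ell},w^{\bar\ell}}(\bar\ell\,)>0$ by \Cref{eq:contra1}, the inequality asserted by (P4) for layer $\ell-1$ forces $t_{v,\ell-1}-t_{w,\ell-1}>0$, so in particular $w\neq v$.

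Since the two lemmas are already proved, the main ``obstacle'' here is just bookkeeping: making sure the case split is truly exhaustive and that the conclusions of the two lemmas are stated in the exact form demanded by (P4) on layer $\ell-1$. No further computation is required, so this is the shortest possible proof among the steps of the overall induction.
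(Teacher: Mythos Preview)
Your proposal is correct and matches the paper's own proof, which simply cites \Cref{lem:swap}, \Cref{lem:P4}, and \Cref{obs:implications}. Your elaboration of why the case split is exhaustive and how the second bullet of \Cref{obs:implications} transfers to layer $\ell-1$ is exactly the intended reasoning.
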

\begin{proof}
Follows from \Cref{lem:swap}, \Cref{lem:P4}, and \Cref{obs:implications}.
\end{proof}

It remains to prove \Pthree.
\begin{lemma}\label{lem:P3}
\Pthree holds for $v$, $p'$, $|P'|$, and layer $\ell-1$.
\end{lemma}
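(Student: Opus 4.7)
The plan is to split the argument based on whether $p' = v$, mirroring the case split built into the definition of $p'$. In the easy case $p' = v$, the new triple satisfies $v^{\ell-1} = p^{\ell-1}$, so the third implication of \Cref{obs:implications} reduces \Pthree to \Pfour at layer $\ell-1$; \Pfour for $(v,p',w,P',Q')$ has already been established in \Cref{cor:P4}, so nothing more is needed.

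For the main case $p' \neq v$, the definition of $p'$ forces $p' = p$ and \Cref{lem:p_step} applies. I would chain four one-layer inequalities to derive the desired bound directly. First, the upper bound of \Cref{lem:drift} at $p^\ell$ yields $t_{p^\ell,\ell-1} - \Cor_{p^\ell,\ell} \ge t_{p^\ell,\ell} - \Lambda$, which together with \Cref{lem:p_step} gives $t_{p',\ell-1} \ge t_{p^\ell,\ell} - \Lambda + \kappa\Delta_p$, where $\Delta_p \in \{0,1\}$ and $\Delta_p = 1$ exactly when $\{p^\ell, p'\}$ is the first edge of $P^\ell$. Symmetrically, the lower bound of \Cref{lem:drift} at $w^\ell$ combined with \Cref{lem:w_step} gives $t_{w,\ell-1} \le t_{w^\ell,\ell} - (d-u) - (\Lambda-d)/\vartheta - 4s\kappa\Delta_w$. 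Subtracting, absorbing the drift slack $(1 - 1/\vartheta)(\Lambda - d) + u$ into $\kappa/2$ via \Cref{eq:kappa}, and invoking \Pthree at layer $\ell$ for $(v^\ell, p^\ell, w^\ell, |P^\ell|)$ should transform the bound into
\begin{equation*}
t_{p',\ell-1} - t_{w,\ell-1} \ge 4s\kappa(|P^\ell| + \Delta_w) + \psi^s_{v^{\bar\ell},w^{\bar\ell}}(\bar\ell) - (\bar\ell - (\ell-1))\tfrac{\kappa}{2} + \kappa\Delta_p.
\end{equation*}
From here it would remain to verify $|P'| \le |P^\ell| + \Delta_w$ in each configuration arising from the constructions of $P$ and $P'$; the bookkeeping is the same kind used in the proof of \Cref{lem:P4} (case-splitting on whether $w = w^\ell$, $w$ lies interior to $P^\ell$, or $w$ is a fresh neighbor of $w^\ell$, and then on whether $p'$ lies on $P$).

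I expect the main obstacle to be the exceptional configuration $p' = w^\ell$ with $w = p^\ell$: \Cref{obs:non_suffix} then forces $P' = (p', w)$ of length $1$, while $|P^\ell| + \Delta_w$ may equal $0$, so the plain path-length accounting above fails. In this corner the strategy must instead be an analogue of the \Cref{lem:swap}/\Cref{lem:swap_jump} trick used for \Pfour: since in this case $p^\ell$ must have applied a strongly negative correction and $w^\ell$ a strongly positive one, the jump condition $\JC$ (via \Cref{lem:jump_holds} and \Cref{lem:swap_jump}) forces a large direct gap $t_{w^\ell,\ell-1} - t_{p^\ell,\ell-1}$, which together with \Pthree at layer $\ell$ suffices to recover the bound with $|P'| = 1$. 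Handling this swap configuration cleanly, and cross-checking that the $\kappa\Delta_p$ slack is compatible with the integer-valued path-length inequality, is where the real work lies; the rest is routine chaining.
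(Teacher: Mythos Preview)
Your proposal is correct and follows essentially the same route as the paper: the case split on $p'=v$ versus $p'\neq v$, the chaining of \Cref{lem:drift}, \Cref{lem:p_step}, \Cref{lem:w_step} with \Pthree at layer $\ell$, and the appeal to \Cref{lem:swap_jump} in the swap configuration $p'=w^{\ell}$, $w=p^{\ell}$ all match. One point where you overestimate the work: in the non-swap subcase the inequality $|P'|\le |P^{\ell}|+\Delta_w$ is a one-liner rather than \Cref{lem:P4}-style bookkeeping, since $P'=\suffix(P,p')$ gives $|P'|\le|P|$ and the construction of $P$ immediately yields $|P|\le|P^{\ell}|+\Delta_w$; the extra $\kappa\Delta_p$ slack you track is never needed. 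In the swap case the paper simply notes (via the proof of \Cref{obs:non_suffix}) that $|P^{\ell}|=1=|P'|$, so the chain through \Cref{lem:swap_jump} lands directly on the coefficient $|P'|$.
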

\begin{proof}
If $v=p'$, the statement readily follows from \Cref{cor:P4} and \Cref{obs:implications}.
Therefore, assume that $v\neq p'$ and hence $p'=p$ in the following.
Denote by $\Delta_w\in \{-1,0,1\}$ the value such that
\begin{align*}
t_{w^{\ell},\ell-1}-\Cor_{w^{\ell},\ell}&\ge t_{w,\ell-1}+4s\kappa \Delta_w\\
t_{p^{\ell},\ell-1}-\Cor_{p^{\ell},\ell}&\le t_{p',\ell-1}
\end{align*}
according to \Cref{lem:w_step,lem:p_step}.

Using \Pthree and \Cref{lem:drift}, it follows that
\begin{align*}
t_{p',\ell-1}-t_{w,\ell-1}
&\ge t_{p^{\ell},\ell-1}-\Cor_{p^{\ell},\ell}-\left(t_{w^{\ell},\ell-1}-\frac{\Cor_{w^{\ell},\ell}}{\vartheta}\right)+\Delta_w 4s\kappa\\
&\ge t_{p^{\ell},\ell}-t_{w^{\ell},\ell}+4s\kappa\Delta_w-\frac{\kappa}{2}\\
&\ge 4s\kappa (|P^{\ell}|+\Delta_w)+\psi^s_{v^{\bar{\ell}},w^{\bar{\ell}}}(\bar{\ell}\,)-(\bar{\ell}-(\ell-1))\frac{\kappa}{2}.
\end{align*}
If $P'=\suffix(P,p')$, then $|P'|\le |P|\le |P^{\ell}|+\Delta_w$ and \Pthree for $v$, $p'$, $|P'|$, and layer $\ell-1$ readily follows from the above inequality.

Otherwise, by the assumption that $v\neq p'$ and \Cref{obs:non_suffix}, it holds that $p'=w^{\ell}$ and $w=p^{\ell}$, and $|P'|=|P^{\ell}|$.
Using \Cref{lem:drift,lem:swap_jump} together with \Pthree, we arrive at
\begin{align*}
t_{p',\ell-1}-t_{w,\ell-1}
&\ge t_{p',\ell-1}-t_{w^{\ell},\ell}+\left(d-u+\frac{\Lambda-d}{\vartheta}\right)\\
&\ge t_{p^{\ell},\ell-1}-\Cor_{p^{\ell},\ell}-t_{w^{\ell},\ell}+\left(d-u+\frac{\Lambda-d}{\vartheta}\right)\\
&\ge t_{p^{\ell},\ell}-t_{w^{\ell},\ell}-\frac{\kappa}{2}\\
&\ge 4s\kappa|P^{\ell}|+\psi^s_{v^{\bar{\ell}},w^{\bar{\ell}}}(\bar{\ell}\,)-(\bar{\ell}-(\ell-1))\frac{\kappa}{2}\\
&\ge 4s\kappa|P'|+\psi^s_{v^{\bar{\ell}},w^{\bar{\ell}}}(\bar{\ell}\,)-(\bar{\ell}-(\ell-1))\frac{\kappa}{2},
\end{align*}
i.e., \Pthree for $v$, $p'$, $|P'|$, and layer $\ell-1$.
\end{proof}

Finally, using these results it is not hard to show that \Pfive is satisfied as well.
\begin{lemma}\label{lem:P5}
\Pfive holds for $v$, $p'$, $|P'|$, and layer $\ell-1$.
\end{lemma}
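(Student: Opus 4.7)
The plan is to rule out the possibility $v\in P'\setminus\{p'\}$, since if $v=p'$ the conclusion holds trivially. So assume $v\neq p'$. By the definition of $p'$, this means $v$ does not lie on $\suffix(P^{\ell},p)$, and hence $p'=p$. Recall moreover that by \Cref{lem:p_step}, if $p\neq p^{\ell}$ then $\{p^{\ell},p\}$ is the first edge of $P^{\ell}$, so $p$ is the second vertex of $P^{\ell}$ and $\suffix(P^{\ell},p)$ is well-defined as a sub-path of $P^{\ell}$.

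First I would establish that $v\neq w$. For this I invoke \Cref{cor:P4} together with \Cref{obs:implications} to conclude that \Ptwo holds for $v$, $p'$, $w$ on layer $\ell-1$, i.e., $t_{v,\ell-1}>t_{w,\ell-1}$, which in particular yields $v\neq w$.

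Next, I would do a short case analysis on the definition of $P'$.
\begin{itemize}
  \item If $p'=p$ lies on $P$, then $P'=\suffix(P,p)$. By the definition of $P$, the vertex set of $P$ is either contained in the vertex set of $P^{\ell}$ (when $w\in P^{\ell}$) or is the vertex set of $P^{\ell}$ together with $w$ (when $w\notin P^{\ell}$). In either case, $\suffix(P,p)\subseteq \suffix(P^{\ell},p)\cup\{w\}$. Since $v\notin\suffix(P^{\ell},p)$ and $v\neq w$, we get $v\notin P'$.
  \item If $p'$ does not lie on $P$, then $P'=(p',w)$. By \Cref{obs:non_suffix}, $p'=w^{\ell}$ and either $w=p^{\ell}$ or $p'=v$; the latter is excluded by our assumption $v\neq p'$. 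Hence $V(P')=\{p',w\}$, and $v\in P'$ would force $v=p'$ or $v=w$, both of which were ruled out above.
\end{itemize}
This contradicts the assumption $v\in P'\setminus\{p'\}$ and completes the proof.

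The only real obstacle I anticipate is the bookkeeping inside the two subcases for $P'$; in particular, one has to be careful that the construction of $P$ can add at most the single extra vertex $w$ to $\suffix(P^{\ell},p)$, and that the ``exceptional'' branch $P'=(p',w)$ is precisely the one covered by \Cref{obs:non_suffix}. Everything else is just threading together \Cref{lem:p_step}, \Cref{obs:non_suffix}, and \Cref{cor:P4}.
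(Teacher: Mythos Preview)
Your proof is correct and follows essentially the same approach as the paper. Both arguments hinge on the observation that $P'\subseteq \suffix(P^{\ell},p)\cup\{w\}$ (when $p'$ lies on $P$) together with $v\neq w$ from \Cref{cor:P4}; you phrase it as a contrapositive (assume $v\neq p'$, conclude $v\notin P'$), while the paper goes direct (assume $v\in P'$, conclude $p'=v$), but the substance is identical.
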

\begin{proof}
Suppose that $v$ lies on $P'$.
By \Cref{cor:P4}, $v\neq w$.
Thus, if $P'=(p',w)$, $v=p'$, i.e., \Pfive holds for $v$, $p'$, $|P'|$, and layer $\ell-1$.

Otherwise, $P'=\suffix(P,p')$, implying that $v$ lies on $\suffix(P,p')$.
As $v\neq w$, this implies that $v$ lies on $P^{\ell}$.
Assuming for contradiction that $v\neq p'=p$, by \Cref{lem:p_step} we have that $\prefix(P,p')=\prefix(P^{\ell},p')$, which equals either $(p^{\ell})=(p')$ or $(p^{\ell},p')$.
Thus, the above entails that $v$ actually lies on $\suffix(P^{\ell},p')=\suffix(P^{\ell},p)$.
As then $p'=v$, this is a contradiction and we must indeed have that $p'=v$.
\end{proof}
\begin{corollary}\label{cor:invariants}
In the proof of \Cref{thm:psi_bound}, it must hold that $\ell=\underline{\ell}$.
\end{corollary}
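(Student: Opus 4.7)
The plan is a bookkeeping argument: essentially all the heavy lifting has been done by the lemmas preceding the corollary, and what remains is to assemble them and derive the contradiction. Concretely, assume for contradiction that $\ell > \underline{\ell}$, as in the running setup of \Cref{sec:faultfree}. I will construct a tuple $(v^{\ell-1}, p^{\ell-1}, w^{\ell-1}, P^{\ell-1}, Q^{\ell-1})$ at layer $\ell-1$ that satisfies \Pone--\Pfive, contradicting the minimality of $\ell$. The construction is exactly the one already defined: take $v^{\ell-1} := v$, $p^{\ell-1} := p'$, $w^{\ell-1} := w$, $P^{\ell-1} := P'$, $Q^{\ell-1} := Q'$, where $v, w, p$ come from \Cref{lem:v_step,lem:w_step,lem:p_step}.

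First, \Cref{obs:paths} guarantees that $P'$ and $Q'$ are paths in $H$ from $p'$ to $w$ and from $p'$ to $v$, respectively, so the tuple is well-formed. Next, \Cref{cor:P4} directly yields \Pfour for the new tuple and layer $\ell-1$, and simultaneously \Ptwo via \Cref{obs:implications}. Then \Cref{lem:P3} gives \Pthree for the new tuple at layer $\ell-1$, and again \Cref{obs:implications} shows that \Pthree implies \Pone. Finally, \Cref{lem:P5} gives \Pfive.

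Thus $(v, p', w, P', Q')$ satisfies all five properties at layer $\ell-1 < \ell$, contradicting our choice of $\ell$ as the minimal such index; hence $\ell = \underline{\ell}$, as claimed. The only thing to verify carefully is that the lemmas are indeed applicable in all configurations — in particular that the case split in the definitions of $p'$, $P$, $P'$, $Q$, $Q'$ covers every branch used implicitly by \Cref{lem:swap,lem:P4,lem:pnotw,lem:vnotw}, and that the definition of $\Delta$ matches the convention of $\Delta^\ell$ so that \Pfour transports correctly. This is the only real risk, but since each auxiliary lemma was stated for the very construction above, the cases align and no additional work is needed beyond stitching the conclusions together.
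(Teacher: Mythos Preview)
Your proposal is correct and matches the paper's own proof essentially line for line: assume $\ell>\underline{\ell}$, take the tuple $(v,p',w,P',Q')$ produced by \Cref{lem:v_step,lem:w_step,lem:p_step} and the subsequent construction, invoke \Cref{cor:P4}, \Cref{lem:P3}, \Cref{lem:P5}, and \Cref{obs:implications} to obtain \Pone--\Pfive at layer $\ell-1$, and contradict minimality. Your write-up is slightly more explicit (e.g., you spell out which implications of \Cref{obs:implications} are used and cite \Cref{obs:paths} for well-formedness), but the argument is the same.
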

\begin{proof}
Assuming for contradiction that $\ell>\underline{\ell}$, \Cref{cor:P4}, \Cref{lem:P3,lem:P5}, and \Cref{obs:implications} show that layer $\ell-1$ also satisfies the properties \Pone to \Pfive for some $v^{\ell-1},p^{\ell-1},w^{\ell-1}$, and paths $P^{\ell-1}$, $Q^{\ell-1}$, contradicting the minimality of $\ell$.
\end{proof}

\subsubsection*{Bounding Skews}
With our machinery for bounding $\Psi^s$ in place, it remains to perform the induction on $s\in \N_{>0}$ to wrap things up.
To anchor the induction at $s=1$, we exploit that $\Psi^1(\ell)\le \Xi^1(\ell)+2\kappa D$.
\begin{lemma}\label{lem:psi_1}
\begin{equation*}
\Psi^1(\ell)\le \begin{cases}
\Xi^1(0) &\mbox{if $\ell <4\Xi^1(0)/\kappa$}\\
4\kappa D & \mbox{else.}
\end{cases}
\end{equation*}
\end{lemma}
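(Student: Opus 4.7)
The plan is to apply \Cref{thm:psi_bound} with $s=1$ and $\underline{\ell}=0$, yielding
\begin{equation*}
\Psi^1(\ell) \le \max\{0,\Xi^1(0) - (\ell+1)\kappa\} + \ell\kappa/2 = \max\{\ell\kappa/2,\,\Xi^1(0)-\kappa-\ell\kappa/2\}.
\end{equation*}
For small $\ell$ the second term in the $\max$ dominates and already stays below $\Xi^1(0)$, while for larger $\ell$ the first term takes over and grows linearly; the two curves cross near $\ell \approx \Xi^1(0)/\kappa$. To cap this linear growth and approach the steady-state bound $4\kappa D$, I would then iterate \Cref{thm:psi_bound} with a moving base point $\underline{\ell}$, exploiting the elementary inequality $\Xi^1(\underline{\ell}) \le \Psi^1(\underline{\ell}) + 2\kappa D$ that follows from $\xi^1_{v,w}(\underline{\ell}) = \psi^1_{v,w}(\underline{\ell}) + 2\kappa d(v,w)$ and $d(v,w)\le D$.

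For the first case ($\ell < 4\Xi^1(0)/\kappa$), the display above directly handles $\ell \le \Xi^1(0)/\kappa - 1$, giving $\Psi^1(\ell) \le \Xi^1(0) - \kappa - \ell\kappa/2 \le \Xi^1(0)$. For the remaining part of this range, I would pick an intermediate $\underline{\ell}' \approx \Xi^1(0)/\kappa$ at which the first subcase already controls $\Psi^1(\underline{\ell}')$, and re-apply \Cref{thm:psi_bound} starting from $\underline{\ell}'$, keeping $\Psi^1(\ell)\le \Xi^1(0)$ throughout. For the second case ($\ell \ge 4\Xi^1(0)/\kappa$), I would choose $\underline{\ell}$ within $O(D)$ of $\ell$, combine $\Xi^1(\underline{\ell}) \le \Psi^1(\underline{\ell}) + 2\kappa D$ with the bound on $\Psi^1(\underline{\ell})$ supplied by the first case (or by the preceding iteration), and thereby arrange for the $\max$-term in the theorem to vanish; the residual contribution $(\ell-\underline{\ell})\kappa/2$ then yields $\Psi^1(\ell) \le 4\kappa D$.

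The main obstacle is the chicken-and-egg character of the argument: \Cref{thm:psi_bound} only guarantees a decrease once the gap $\bar{\ell}-\underline{\ell}$ exceeds roughly $\Xi^1(\underline{\ell})/\kappa$, yet $\Xi^1(\underline{\ell})$ itself depends on $\Psi^1(\underline{\ell})$, which is exactly what we are trying to bound. The resolution is to apply the theorem iteratively, each step reducing the potential by roughly half plus a residual of order $\kappa D$, so that after $O(\Xi^1(0)/\kappa)$ layers the steady-state bound is reached; aligning the constants with the stated thresholds $4\Xi^1(0)/\kappa$ and $4\kappa D$ requires careful bookkeeping across these iterations.
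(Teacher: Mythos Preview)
Your proposal identifies all the right ingredients: apply \Cref{thm:psi_bound} repeatedly with a sliding base point $\underline{\ell}$, and feed back the elementary estimate $\Xi^1(\underline{\ell})\le \Psi^1(\underline{\ell})+2\kappa D$. This is exactly the paper's approach. The difference lies in how the iteration is set up.

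The paper avoids the variable-step ``halving'' iteration you outline and instead takes a \emph{fixed} step of $8D$. With $\bar{\ell}-\underline{\ell}=8D$, \Cref{thm:psi_bound} together with $\Xi^1(\underline{\ell})\le \Psi^1(\underline{\ell})+2\kappa D$ gives directly
\[
\Psi^1(\ell)\le \max\bigl\{4\kappa D,\;\Psi^1(\ell-8D)-2\kappa D\bigr\},
\]
since the zero branch contributes $8D\cdot\kappa/2=4\kappa D$ and the nonzero branch drops by $8D\kappa-2\kappa D-4\kappa D=2\kappa D$. A one-line induction on $k$ then yields $\Psi^1(\ell)\le \max\{4\kappa D,\Xi^1(0)-2k\kappa D\}$ for $\ell\in[8kD,8(k+1)D)$, from which both cases of the lemma follow at once: the right-hand side is always at most $\max\{4\kappa D,\Xi^1(0)\}$, and it reaches $4\kappa D$ once $k\ge \Xi^1(0)/(2\kappa D)$, i.e.\ once $\ell\ge 4\Xi^1(0)/\kappa$.

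Your sketch of the first case has a small gap: a single re-application from $\underline{\ell}'\approx \Xi^1(0)/\kappa$ does \emph{not} keep $\Psi^1(\ell)\le \Xi^1(0)$ over the whole remaining range $[\Xi^1(0)/\kappa,\,4\Xi^1(0)/\kappa)$, because the zero branch then grows to $(\ell-\underline{\ell}')\kappa/2\approx 3\Xi^1(0)/2$. The iterative scheme you describe in your last paragraph would repair this, but the fixed $8D$ step makes it unnecessary to split the argument into two regimes at all.
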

\begin{proof}
Recall that $\kappa=2(u+(1-1/\vartheta)(\Lambda-d))$.
Note that $\Xi^1(\ell)\le \Psi^1(\ell)+2\kappa D$ for all $\ell\in \N$.
By \Cref{thm:psi_bound}, we thus have for any $\underline{\ell}\le \bar{\ell}$ that
\begin{align*}
\Psi^1(\bar{\ell}\,)&\le \max\left\{0,\Xi^1(\underline{\ell}\,)-(\bar{\ell}-\underline{\ell}+1)\kappa\right\}+(\bar{\ell}-\underline{\ell}\,)\frac{\kappa}{2}\\
&\le \max\left\{0,\Psi^1(\underline{\ell}\,)+2\kappa D-(\bar{\ell}-\underline{\ell}+1)\kappa\right\}+(\bar{\ell}-\underline{\ell}\,)\frac{\kappa}{2}.
\end{align*}
In particular, we have that
\begin{equation*}
\Psi^1(\ell)\le \begin{cases}
\max\left\{4\kappa D,\Xi^1(0)\right\}&\mbox{if $\ell<8D$}\\
\max\left\{4\kappa D,\Psi^1(\ell-8D)-2\kappa D\right\}&\mbox{else.}
\end{cases}
\end{equation*}
By induction on $k\in \N$, we thus have that
\begin{equation*}
\Psi^1(\ell)\le \max\{4\kappa D, \Xi^1(0)-2k\kappa D\}
\end{equation*}
for all $\ell \in [8kD,8(k+1)D)$.
The claim of the lemma follows by noting that $\ell\ge 4\Xi^1(0)/\kappa$ results in $k\ge \Xi^1(0)/(2\kappa D)$.
\end{proof}
Note that this lemma shows that $\Psi^1$ self-stabilizes~\cite{dijkstra74self} within $O(\Xi^1(0)/\kappa)$ layers.

We remark that a more careful analysis reveals a bound on $\Psi^1(\ell)$ that converges to $2\kappa D$.
We confine ourselves to stating this result for the small input skew that we guarantee.
\begin{corollary}\label{cor:psi_1}
If $\localskew_0\le 4\kappa$, then $\Psi^1(\ell)\le 2\kappa D$ for all $\ell\in \N$.
\end{corollary}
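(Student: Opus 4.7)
The plan is to prove the bound by strong induction on $\ell$, invoking \Cref{thm:psi_bound} as the engine. The key refinement over \Cref{lem:psi_1} is a strictly sharper bound on $\Xi^1(0)$ obtained from the input-skew hypothesis $\localskew_0 \le 4\kappa$, which replaces the generic $\Xi^1 \le \Psi^1 + 2\kappa D$ used in the lemma by a factor-$2$ tighter estimate at layer $0$.

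First I would establish the base. By the triangle inequality along a shortest path in $H$, the hypothesis $\localskew_0 \le 4\kappa$ lifts to $t_{v,0} - t_{w,0} \le 4\kappa\,d(v,w)$ for all $v, w \in V$. This immediately yields $\Psi^1(0) \le 0$ and, more importantly, $\xi^1_{v,w}(0) = \psi^1_{v,w}(0) + 2\kappa\,d(v,w) \le 2\kappa\,d(v,w) \le 2\kappa D$, so $\Xi^1(0) \le 2\kappa D$.

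For the inductive step, I would assume $\Psi^1(\ell) \le 2\kappa D$ for all $\ell < \bar{\ell}$ and treat two regimes. For $\bar{\ell} \le 4D$, apply \Cref{thm:psi_bound} with $\underline{\ell} = 0$ and the sharpened bound $\Xi^1(0) \le 2\kappa D$, obtaining
\begin{equation*}
\Psi^1(\bar{\ell}) \le \max\{0,\,2\kappa D - (\bar{\ell}+1)\kappa\} + \bar{\ell}\cdot\frac{\kappa}{2}.
\end{equation*}
A routine split on whether $\bar{\ell}+1 \ge 2D$ (the RHS equals $2\kappa D - \kappa - \bar{\ell}\kappa/2$ in the first subcase and $\bar{\ell}\kappa/2$ in the second) shows the right-hand side is at most $2\kappa D$ whenever $\bar{\ell} \le 4D$. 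For $\bar{\ell} > 4D$, apply \Cref{thm:psi_bound} with $\underline{\ell} := \bar{\ell} - (4D-1) > 0$. The induction hypothesis gives $\Psi^1(\underline{\ell}) \le 2\kappa D$ and hence $\Xi^1(\underline{\ell}) \le \Psi^1(\underline{\ell}) + 2\kappa D \le 4\kappa D$. With $\bar{\ell}-\underline{\ell}=4D-1$, the first summand in \Cref{thm:psi_bound} vanishes and we obtain
\begin{equation*}
\Psi^1(\bar{\ell}) \le (4D-1)\cdot\frac{\kappa}{2} < 2\kappa D,
\end{equation*}
closing the induction.

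The main obstacle is conceptual rather than computational: recognizing that the hypothesis $\localskew_0 \le 4\kappa$ halves the naive bound on $\Xi^1(0)$, and that choosing the horizon $\bar{\ell} - \underline{\ell} = 4D - 1$ makes the $(\bar{\ell}-\underline{\ell}+1)\kappa$ savings in \Cref{thm:psi_bound} precisely absorb the $4\kappa D$ slack produced by the generic bound used in the inductive step. After this observation the arithmetic is immediate, but one must be careful to use the tighter base-case estimate in the small-$\bar{\ell}$ regime where the inductive argument does not yet yield enough headroom.
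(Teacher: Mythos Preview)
Your proposal is correct and takes essentially the same approach as the paper: both first extract the sharpened bound $\Xi^1(0)\le 2\kappa D$ from the hypothesis $\localskew_0\le 4\kappa$, then invoke \Cref{thm:psi_bound} inductively with a window of roughly $4D$ layers (the paper phrases this as ``replacing $8D$ with $4D$'' in the proof of \Cref{lem:psi_1}, while you spell out the strong induction directly with window $4D-1$). The arithmetic and case splits match, so there is no meaningful difference.
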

\begin{proof}
Note that
\begin{equation*}
\Xi^1(0)=\max_{v,w\in V}\{t_{v,0}-t_{w,0}-2\kappa d(v,w)\}\le \max_{v,w\in V}\{(\localskew_0-2\kappa)d(v,w)\}\le (\localskew_0-2\kappa)D\le 2\kappa D.
\end{equation*}
By replacing $8D$ with $4D$ in the induction from the proof of \Cref{lem:psi_1}, we get that
\begin{equation*}
\Psi^1(\ell)\le \begin{cases}
\max\left\{2\kappa D,\Xi^1(0)\right\}&\mbox{if $\ell<4D$}\\
\max\left\{2\kappa D,\Psi^1(\ell-4D)\right\}&\mbox{else,}
\end{cases}
\end{equation*}
implying a uniform bound of $\Psi^1(\ell)\le 2\kappa D$ for all $\ell\in \N$.
\end{proof}
For the sake of completeness, we also infer that $\sup_{\ell\in \N}\{\Psi^0(\ell)\}$, also referred to as the \emph{global skew} in the literature, is in $O(u+(1-1/\vartheta)(\Lambda-d))$.
Provided that $\Lambda\in O(d+u/(\vartheta-1))$, this bound is asymptotically optimal~\cite{biaz01closed}.
\begin{corollary}\label{cor:psi_0}
If $\localskew_0\le 4\kappa$, then $\Psi^0(\ell)\le 6\kappa D\in O(u+(1-1/\vartheta)(\Lambda-d))$ for all $\ell\in \N$.
\end{corollary}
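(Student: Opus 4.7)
The plan is to reduce the bound on $\Psi^0$ to the already established bound on $\Psi^1$ from \Cref{cor:psi_1}, using the simple observation that the two potentials differ only by a linear term in the base-graph distance. Specifically, unwinding the definitions,
\begin{equation*}
\psi^0_{v,w}(\ell) = t_{v,\ell} - t_{w,\ell} = \psi^1_{v,w}(\ell) + 4\kappa\,d(v,w),
\end{equation*}
so $\psi^0_{v,w}(\ell) \le \psi^1_{v,w}(\ell) + 4\kappa D$ since $d(v,w) \le D$. Taking the maximum over $v,w \in V$ on both sides then yields $\Psi^0(\ell) \le \Psi^1(\ell) + 4\kappa D$.

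Next, I would invoke \Cref{cor:psi_1}, whose hypothesis $\localskew_0 \le 4\kappa$ is exactly the one given here. This immediately gives $\Psi^1(\ell) \le 2\kappa D$ for all $\ell \in \N$. Combining the two inequalities yields $\Psi^0(\ell) \le 2\kappa D + 4\kappa D = 6\kappa D$ uniformly in $\ell$.

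Finally, the asymptotic statement follows by recalling the definition of $\kappa$ in \Cref{eq:kappa}, namely $\kappa = 2(u + (1-1/\vartheta)(\Lambda - d))$, so that $\kappa \in O(u + (1-1/\vartheta)(\Lambda-d))$ and hence the global skew bound has the claimed form. There is no serious obstacle here; the entire content of the corollary is essentially the observation that the $s=0$ potential is dominated by the $s=1$ potential up to an additive $4\kappa D$ term, and all of the actual work has already been done in proving \Cref{cor:psi_1}.
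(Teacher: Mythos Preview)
Your proposal is correct and takes essentially the same approach as the paper: the paper's proof simply states that the result follows from \Cref{cor:psi_1}, the inequality $\Psi^0(\ell)\le \Psi^1(\ell)+4\kappa D$, and the choice of $\kappa$, and you have spelled out exactly these three ingredients with the straightforward justification of the middle inequality via the definitions.
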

\begin{proof}
Follows from \Cref{cor:psi_1}, the fact that $\Psi^0(\ell)\le \Psi^1(\ell)+4\kappa D$, and the choice of $\kappa$.
\end{proof}

In order to bound the local skew, we now turn to attention to $\Psi^s(\ell)$ for $s>1$.
\begin{lemma}\label{lem:psi_s}
For some $s\in \N$, $s>0$, suppose that $\Psi^{s-1}(\ell)\le \Psi^{s-1}$ for all $\ell \in \N$.
Then
\begin{equation*}
\Psi^s(\ell)\le \begin{cases}
\Xi^s(0)+\frac{\Psi^{s-1}}{2} &\mbox{if $\ell<\Psi^{s-1}/\kappa$}\\
\frac{\Psi^{s-1}}{2} &\mbox{else.}
\end{cases}
\end{equation*}
\end{lemma}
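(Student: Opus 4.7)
The approach is to combine two ingredients: a uniform bound on $\Xi^s(\ell)$ in terms of $\Psi^{s-1}$, and a suitable application of \Cref{thm:psi_bound} with a well-chosen base layer $\underline{\ell}$.

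First I would observe the algebraic identity
\begin{equation*}
\xi^s_{v,w}(\ell) = t_{v,\ell} - t_{w,\ell} - (4s-2)\kappa d(v,w) = \psi^{s-1}_{v,w}(\ell) - 2\kappa d(v,w).
\end{equation*}
Since $d(v,w) \ge 0$, this immediately gives $\xi^s_{v,w}(\ell) \le \psi^{s-1}_{v,w}(\ell)$, and maximizing over $v,w \in V$ yields $\Xi^s(\ell) \le \Psi^{s-1}(\ell) \le \Psi^{s-1}$ for every $\ell \in \N$. This is the crucial bridge from the induction hypothesis on $\Psi^{s-1}$ to the quantity $\Xi^s$ appearing in \Cref{thm:psi_bound}.

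For the first case $\ell < \Psi^{s-1}/\kappa$, I would apply \Cref{thm:psi_bound} with $\underline{\ell} := 0$ and $\bar{\ell} := \ell$, simply dropping the negative term inside the outer max:
\begin{equation*}
\Psi^s(\ell) \le \Xi^s(0) + \ell \cdot \frac{\kappa}{2} < \Xi^s(0) + \frac{\Psi^{s-1}}{2},
\end{equation*}
where the strict inequality uses $\ell \kappa < \Psi^{s-1}$.

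For the second case $\ell \ge \Psi^{s-1}/\kappa$, the delicate point is selecting $\underline{\ell}$ so that the first max-term in \Cref{thm:psi_bound} vanishes while the linear drift $(\ell - \underline{\ell})\kappa/2$ stays at most $\Psi^{s-1}/2$. Setting $k := \lfloor \Psi^{s-1}/\kappa \rfloor$ and $\underline{\ell} := \ell - k$, the assumption $\ell \ge \Psi^{s-1}/\kappa \ge k$ ensures $\underline{\ell} \ge 0$. Then $(k+1)\kappa > \Psi^{s-1} \ge \Xi^s(\underline{\ell})$, so $\Xi^s(\underline{\ell}) - (\bar{\ell} - \underline{\ell} + 1)\kappa < 0$ and the max collapses to $0$. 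Meanwhile $k\kappa/2 \le \Psi^{s-1}/2$, and \Cref{thm:psi_bound} yields $\Psi^s(\ell) \le \Psi^{s-1}/2$, as required.

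The proof is essentially immediate once \Cref{thm:psi_bound} and the inequality $\Xi^s \le \Psi^{s-1}$ are available; I do not anticipate any real obstacle beyond handling the integrality of $\underline{\ell}$ via a floor in the second case.
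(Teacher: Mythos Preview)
Your proposal is correct and follows essentially the same route as the paper: the key observation $\Xi^s(\ell)\le \Psi^{s-1}(\ell)\le \Psi^{s-1}$, then \Cref{thm:psi_bound} with $\underline{\ell}=0$ for the first case and $\underline{\ell}=\ell-\lfloor \Psi^{s-1}/\kappa\rfloor$ for the second, with the same floor to handle integrality. The only cosmetic difference is that you spell out the identity $\xi^s_{v,w}=\psi^{s-1}_{v,w}-2\kappa d(v,w)$ explicitly, which the paper leaves implicit.
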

\begin{proof}
Recall that $\kappa=2(u+(1-1/\vartheta)(\Lambda-d))$.
For $\ell<\Psi^{s-1}/\kappa$, by \Cref{thm:psi_bound} with $\bar{\ell}=\ell$ and $\underline{\ell}=0$ we have that
\begin{equation*}
\Psi^s(\ell)\le \Xi^s(0)+\frac{\kappa\ell}{2}\le \Xi^s(0)+\frac{\Psi^{s-1}}{2}.
\end{equation*}

Note that $\Xi^s(\ell)\le \Psi^{s-1}(\ell)\le \Psi^{s-1}$ for all $\ell\in \N$.
Thus, for $\ell\ge \Psi^{s-1}/\kappa$ by \Cref{thm:psi_bound} with $\bar{\ell}=\ell$ and $\underline{\ell}=\ell - \lfloor \Psi^{s-1}/\kappa\rfloor$ we have that
\begin{equation*}
\Psi^s(\ell)\le \max\left\{0,\Xi^s\left(\ell-\left\lfloor \frac{\Psi^{s-1}}{\kappa}\right\rfloor\right)-\left(\left\lfloor \frac{\Psi^{s-1}}{\kappa}\right\rfloor+1\right)\kappa\right\}+\left\lfloor \frac{\Psi^{s-1}}{\kappa}\right\rfloor\frac{\kappa}{2}
\le \frac{\Psi^{s-1}}{2}.\qedhere
\end{equation*}
\end{proof}
Using this lemma, we can bound the local skew by $O(\kappa (1+\log D))=O((u+(1-1/\vartheta)(\Lambda-d))(1+\log D))$.

\thmlocal*
\begin{proof}
By \Cref{lem:layer0}, $\localskew_0\le 4\kappa$.
By \Cref{cor:psi_1}, $\Psi^1(\ell)\le 2\kappa D$ for all $\ell\in \N$.
By the assumption that $\localskew_0\le 4\kappa$, for all $s>1$ we have that
\begin{equation*}
\Xi^s(0)=\max_{v,w\in V}\{t_{v,0}-t_{w,0}-(4s-2)\kappa d(v,w)\}\le\max_{v,w\in V}\{(\localskew_0-6\kappa)d(v,w)\}=0.
\end{equation*}
Hence, inductive use of \Cref{lem:psi_s} yields that $\Psi^s(\ell)\le 2^{2-s}\kappa D$.
In particular, $\Psi^{\lfloor \log D\rfloor}\le 8\kappa$.
The claim now follows by \Cref{obs:skew}.
\end{proof}
Moreover, in addition we obtain the following self-stabilization property.
\begin{theorem}\label{thm:local_stab}
If for $s,s'\in \N$, $s\le s'$, we have that $\Psi^s(\ell)\le \Psi^s$ for all $\ell\ge \underline{\ell} \in \N$, then for $\ell\ge \underline{\ell}$
\begin{equation*}
\localskew_{\ell}\le \begin{cases}
4s\kappa+\Psi^s&\mbox{if $\underline{\ell}\le \ell<\underline{\ell}+2\Psi^s/\kappa$ and}\\
4s'\kappa+\frac{\Psi^s}{2^{s'-s}}&\mbox{if $\ell\ge \underline{\ell}+2\Psi^s/\kappa$.}
\end{cases}
\end{equation*}
\end{theorem}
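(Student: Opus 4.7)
The plan is to split into the two cases from the statement.

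\textbf{Case 1 ($\underline{\ell} \le \ell < \underline{\ell} + 2\Psi^s/\kappa$):} By hypothesis $\Psi^s(\ell) \le \Psi^s$, so \Cref{obs:skew} directly yields $\localskew_{\ell} \le \Psi^s + 4s\kappa$.

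\textbf{Case 2 ($\ell \ge \underline{\ell} + 2\Psi^s/\kappa$):} The plan is to iteratively improve the level $t$ of the bounded potential $\Psi^t$ by re-running the argument behind \Cref{lem:psi_s}. Define $\ell_j := \underline{\ell} + \sum_{i=0}^{j-1}\lfloor \Psi^s/(2^i\kappa)\rfloor$ for $j \in \{0,1,\ldots,s'-s\}$, so $\ell_0 = \underline{\ell}$. I claim by induction on $j$ that $\Psi^{s+j}(\ell') \le \Psi^s/2^j$ for all $\ell' \ge \ell_j$. The base case $j=0$ is the hypothesis. For the inductive step, fix $\ell' \ge \ell_j$ and apply \Cref{thm:psi_bound} with $\bar{\ell} = \ell'$ and $\underline{\ell}'' := \ell' - \lfloor \Psi^s/(2^{j-1}\kappa) \rfloor \ge \ell_{j-1}$. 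Using $\Xi^{s+j}(\underline{\ell}'') \le \Psi^{s+j-1}(\underline{\ell}'') \le \Psi^s/2^{j-1}$ (exactly as in the proof of \Cref{lem:psi_s}), the inequality $(\lfloor x \rfloor + 1)\kappa > x\kappa$ applied at $x = \Psi^s/(2^{j-1}\kappa)$ makes the $\max$-term in \Cref{thm:psi_bound} vanish, leaving $\Psi^{s+j}(\ell') \le \lfloor \Psi^s/(2^{j-1}\kappa)\rfloor \cdot \kappa/2 \le \Psi^s/2^j$.

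To conclude, summing the geometric series gives $\ell_{s'-s} - \underline{\ell} \le \sum_{i=0}^{s'-s-1} \Psi^s/(2^i\kappa) \le 2\Psi^s/\kappa$, so every $\ell \ge \underline{\ell} + 2\Psi^s/\kappa$ satisfies $\ell \ge \ell_{s'-s}$ and hence $\Psi^{s'}(\ell) \le \Psi^s/2^{s'-s}$. Applying \Cref{obs:skew} with $s'$ in place of $s$ yields $\localskew_{\ell} \le \Psi^s/2^{s'-s} + 4s'\kappa$, as claimed. The main hurdle is really just bookkeeping: \Cref{lem:psi_s} is stated for a uniform bound on $\Psi^{s-1}$, but its proof only invokes \Cref{thm:psi_bound} over a short window of layers ending at $\bar{\ell}$, so tracking the thresholds $\ell_j$ keeps the argument well-defined, while the floors in the definition of $\ell_j$ ensure that the $\max$-term cancels at every step of the induction.
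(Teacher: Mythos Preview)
Your proof is correct and follows essentially the same approach as the paper: inductively apply the argument underlying \Cref{lem:psi_s} (equivalently, \Cref{thm:psi_bound} directly) to halve the bound on $\Psi^{s+j}$ at each step, sum the resulting geometric series of layer thresholds to obtain the cutoff $2\Psi^s/\kappa$, and conclude via \Cref{obs:skew}. Your treatment is in fact a bit more careful than the paper's, which simply writes ``inductive use of \Cref{lem:psi_s}'' and relies on a footnote about shifting the base layer; you make the thresholds $\ell_j$ and the role of the floors explicit, which is welcome but not a different route.
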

\begin{proof}
Inductive use\footnote{As is, the lemma applies only if $\underline{\ell}=0$. However, the algorithm and hence all statements are invariant under shifting indices by $\underline{\ell}$.} of \Cref{lem:psi_s} yields for $s'\ge s$ and $\ell\ge \underline{\ell}+\sum_{\sigma=s+1}^{s'}\Psi^{s}/(2^{\sigma-s}\kappa)$ that
\begin{equation*}
\Psi^{s'}\le \frac{\Psi^{s}}{{2^{s'-s}}}.
\end{equation*}
Since the sum forms a geometric series, this in particular applies to all $\ell\ge \underline{\ell}+2\Psi^s/\kappa$.
The claim now follows by applying \Cref{obs:skew}.
\end{proof}

\subsection{Bounding Skews in the Presence of Faults}\label{sec:faults}
To analyze how skews evolve with faults, we relate the setting with faults to the bounds we have for a fault-free system.
The key property the algorithm guarantees is that, up to an additive $2\kappa$, the pulse time is within the interval spanned by the correct predecessors' pulse times plus $\Lambda$.
We first show this for the case that for some node $(v,\ell)$, $(v,\ell-1)$ is faulty.
\begin{lemma}\label{lem:fault_self}
Suppose that the only faulty predecessor of $(v,\ell)\in V_{\ell}$, $\ell>0$, is $(v,\ell-1)$.
Denote
\begin{align*}
t_{\min}&:=\min_{\{v,w\}\in E}\{t_{w,\ell-1}\}\mbox{ and}\\
t_{\max}&:=\max_{\{v,w\}\in E}\{t_{w,\ell-1}\}.
\end{align*}
Then
\begin{equation*}
t_{\min}+\Lambda-2\kappa \le t_{v,\ell} \le t_{\max}+\Lambda+2\kappa.
\end{equation*}
\end{lemma}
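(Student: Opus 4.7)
The goal is to bound $t_{v,\ell}$ in terms of the correct predecessors' pulse times $t_{\min}$ and $t_{\max}$, despite the faulty $(v,\ell-1)$ being able to set $H_{\own}$ to any value. My plan is a two-step argument reflecting the algorithm's ``stick to the median'' design. First, I will show that the algorithm's computation of $\Cor_{v,\ell}$ forces $H_{\own}-\Cor_{v,\ell}$ into the interval $[H_{\min}-3\kappa/2, H_{\max}+3\kappa/2]$, regardless of $H_{\own}$. Second, I will translate this hardware-clock bound to real time using the message delay and hardware clock rate bounds, together with the fact that the wait-loop in \Cref{alg:Simplified_Gradient_TRIX} ensures $t_{v,\ell}\ge t'_{\max}$, where $t'_{\max}$ is the real reception time corresponding to $H_{\max}$.

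For the first step, let $a:=H_{\own}-H_{\max}$, $b:=H_{\own}-H_{\min}$, and
$\Delta:=\min_{s\in\N}\max\{a+4s\kappa,\,b-4s\kappa\}-\kappa/2$, and split into the three branches of the algorithm. When $\Delta\in[0,\vartheta\kappa]$ and $\Cor_{v,\ell}=\Delta$, bounding $\Delta$ below by $(a+b)/2-\kappa/2$ (using $\max\ge$ average) and above by $b-\kappa/2$ (choice $s=0$) gives $H_{\own}-\Cor_{v,\ell}\in[H_{\min}+\kappa/2,\,(H_{\min}+H_{\max})/2+\kappa/2]$. When $\Delta<0$ and $\Cor_{v,\ell}=\min\{H_{\own}-H_{\min}+3\kappa/2,\,0\}$, one subcase gives $\Cor_{v,\ell}=H_{\own}-H_{\min}+3\kappa/2$, so $H_{\own}-\Cor_{v,\ell}=H_{\min}-3\kappa/2$ exactly; in the other, $\Cor_{v,\ell}=0$ forces $H_{\own}\ge H_{\min}-3\kappa/2$, while $\Delta<0$ (evaluated at any $s$) forces $H_{\own}<(H_{\min}+H_{\max})/2+\kappa/2\le H_{\max}+\kappa/2$. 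The case $\Delta>\vartheta\kappa$ is symmetric: either $\Cor_{v,\ell}=H_{\own}-H_{\max}-3\kappa/2$ (pinning $H_{\own}-\Cor_{v,\ell}=H_{\max}+3\kappa/2$) or $\Cor_{v,\ell}=\vartheta\kappa$, with the subcase condition and $\Delta>\vartheta\kappa$ confining $H_{\own}-\vartheta\kappa$ to $(H_{\min}+\kappa/2,\,H_{\max}+3\kappa/2]$. Combining yields the claimed interval.

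For the second step, let $t'_{\min}$ and $t'_{\max}$ be the real reception times of the correct pulses, so $t'_x\in[t_x+d-u,\,t_x+d]$ for $x\in\{\min,\max\}$, and $H_{v,\ell}(t_{v,\ell})=(H_{\own}-\Cor_{v,\ell})+\Lambda-d$. Applying the hardware clock rate lower bound (rate $\ge 1$) to $H_{v,\ell}(t_{v,\ell})\le H_{v,\ell}(t'_{\max})+\Lambda-d+3\kappa/2$ together with $t_{v,\ell}\ge t'_{\max}$ yields $t_{v,\ell}\le t_{\max}+\Lambda+3\kappa/2\le t_{\max}+\Lambda+2\kappa$. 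Applying the upper rate bound (rate $\le\vartheta$) to the lower bound gives $t_{v,\ell}\ge t_{\min}+d-u+(\Lambda-d-3\kappa/2)/\vartheta$; using \Cref{eq:kappa} in the equivalent form $\vartheta\kappa/2=\vartheta u+(\vartheta-1)(\Lambda-d)$, the target inequality $t_{v,\ell}\ge t_{\min}+\Lambda-2\kappa$ reduces to $(3\vartheta-3)\kappa/2\ge 0$, which holds since $\vartheta\ge 1$.

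The main obstacle is the first step: the faulty $H_{\own}$ can activate any of the three branches of the algorithm (and each has two subcases), so the uniform bound $H_{\own}-\Cor_{v,\ell}\in[H_{\min}-3\kappa/2,H_{\max}+3\kappa/2]$ requires a careful case analysis of the conditional inside the algorithm. The second step is standard bookkeeping with the delay and clock-rate constants once the hardware-time bound is in hand.
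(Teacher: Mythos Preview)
Your proof is correct and follows essentially the same two-step approach as the paper: first bound $H_{\own}-\Cor_{v,\ell}\in[H_{\min}-3\kappa/2,H_{\max}+3\kappa/2]$, then translate to real time via the delay and rate bounds. The only difference is stylistic: the paper obtains the same hardware-clock bound more compactly by first observing $H_{\own}-H_{\max}\le\Delta+\kappa/2\le H_{\own}-H_{\min}$ in one line and then absorbing the branch adjustments, whereas you do an explicit six-subcase analysis; and for the lower real-time bound the paper splits on the sign of $H_{v,\ell}(t_{v,\ell})-H_{\min}$ rather than invoking $t_{v,\ell}\ge t'_{\max}$.
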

\begin{proof}
By the assumption of the lemma, for all $\{v,w\}\in E$, $(w,\ell-1)\notin F$.
We have that
\begin{align*}
H_{\own}-H_{\max}&=\min_{s\in \N}\{H_{\own}-H_{\max}+4s\kappa\}\\
&\le \min_{s\in \N}\{\max\{H_{\own}-H_{\max}+4s\kappa,H_{\own}-H_{\min}-4 s\kappa \}\}\\
&\le \max\{H_{\own}-H_{\max},H_{\own}-H_{\min}\}\\
&= H_{\own}-H_{\min}.
\end{align*}
Hence, abbreviating
\begin{equation*}
\Delta=\min_{s\in \N}\{\max\{H_{\own}-H_{\max}+4s\kappa,H_{\own}-H_{\min}-4 s\kappa \}\}-\frac{\kappa}{2},
\end{equation*}
it holds that
\begin{equation*}
H_{\own}-H_{\max}-\frac{\kappa}{2}\le \Delta\le H_{\own}-H_{\min}-\frac{\kappa}{2}.
\end{equation*}
Taking into account the adjustments in case $\Delta\notin [0,\vartheta \kappa]$ and using that $H_{\min}\le H_{\max}$ we get that
\begin{equation*}
H_{\own}-H_{\max}-\frac{3\kappa}{2} \le \Cor_{v,\ell}
\le H_{\own}-H_{\min}+\frac{3\kappa}{2}.
\end{equation*}
Therefore, the local time $H_{v,\ell}(t_{v,\ell})=H_{\own}+\Lambda-d-\Cor_{v,\ell}$ at which $(v,\ell)$ generates its pulse satisfies
\begin{equation*}
H_{\min}+\Lambda-d-\frac{3\kappa}{2}\le H_{v,\ell}(t_{v,\ell}) \le H_{\max}+\Lambda-d+\frac{3\kappa}{2}.
\end{equation*}
If $H_{\min}>H_{v,\ell}(t_{v,\ell})$, we have that
\begin{equation*}
t_{\min}-t_{v,\ell}\le H_{\min}-H_{v,\ell}(t_{v,\ell}).
\end{equation*}
Applying the lower bound of $d-u$ on message delay and \Cref{eq:kappa}, we get that
\begin{equation*}
t_{v,\ell}\ge t_{\min}+d-u+\Lambda-d-\frac{3\kappa}{2}>t_{\min}+\Lambda-2\kappa.
\end{equation*}
If $H_{\min}\le H_{v,\ell}(t_{v,\ell})$, the bounds on message delays and hardware clock drift together with \Cref{eq:kappa} yield that
\begin{align*}
t_{v,\ell}&\ge t_{\min}+d-u+\frac{\Lambda-d-3\kappa/2}{\vartheta}\\
&>t_{\min}+\Lambda-\frac{3\kappa}{2}-u-\left(1-\frac{1}{\vartheta}\right)(\Lambda-d)\\
&=t_{\min}+\Lambda-2\kappa.
\end{align*}

Concerning the upper bound on $t_{v,\ell}$, note that because $t_{v,\ell}$ is increasing in $H_{v,\ell}(t_{v,\ell})$, to bound $t_{v,\ell}$ from above we may assume that
\begin{equation*}
H_{v,\ell}(t_{v,\ell}) = H_{\max}+\Lambda-d+\frac{3\kappa}{2}>H_{\max},
\end{equation*}
where the last step uses \Cref{eq:Lambda}.
In this case,
\begin{equation*}
t_{v,\ell}-t_{\max}\le H_{v,\ell}(t_{v,\ell}) - H_{\max}+d\le \Lambda+\frac{3\kappa}{2} < \Lambda+2\kappa.\qedhere
\end{equation*}
\end{proof}
Similar reasoning covers the case that for some $(v,\ell)\in V_{\ell}$ and $\{v,w\}\in E$, $(w,\ell-1)$ is faulty.
\begin{lemma}\label{lem:fault_neighbor}
Suppose that for $(v,\ell)\in V_{\ell}$, $\ell>0$, $(v,\ell-1)$ is not faulty, and at most one predecessor is faulty.
Denoting
\begin{align*}
t_{\min}&:=\min_{\substack{((w,\ell-1),(v,\ell))\in E_{\ell-1}\\ (w,\ell-1)\notin F}}\{t_{w,\ell-1}\}\mbox{ and}\\
t_{\max}&:=\max_{\substack{((w,\ell-1),(v,\ell))\in E_{\ell-1}\\ (w,\ell-1)\notin F}}\{t_{w,\ell-1}\},
\end{align*}
then
\begin{equation*}
t_{\min}+\Lambda-2\kappa \le t_{v,\ell} \le t_{\max}+\Lambda.
\end{equation*}
\end{lemma}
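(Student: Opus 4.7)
My plan is to mirror the structure of the proof of \Cref{lem:fault_self}, but exploit that $(v,\ell-1)$ is itself correct to invoke \Cref{lem:drift} directly; this is what enables the tighter upper bound $t_{v,\ell}\le t_{\max}+\Lambda$ (without the extra $2\kappa$ slack). The first step is to re-derive the two-sided bound
\begin{equation*}
H_{\own}-H_{\max}-\frac{3\kappa}{2} \,\le\, \Cor_{v,\ell} \,\le\, H_{\own}-H_{\min}+\frac{3\kappa}{2}
\end{equation*}
that was established inside the proof of \Cref{lem:fault_self}; this bound is a purely algorithmic consequence of the three branches used to assign $\Cor_{v,\ell}$ and does not depend on the origin of $H_{\min}$ or $H_{\max}$.

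For the upper bound $t_{v,\ell}\le t_{\max}+\Lambda$, I would combine \Cref{lem:drift} with $t_{v,\ell-1}\le t_{\max}$ to reduce the claim to showing $\Cor_{v,\ell}\ge t_{v,\ell-1}-t_{\max}$. Since the right-hand side is non-positive, the case $\Cor_{v,\ell}\ge 0$ is trivial, so one may assume $\Cor_{v,\ell}<0$, where the clipping rule gives $\Cor_{v,\ell}=H_{\own}-H_{\min}+3\kappa/2$. It therefore suffices to show $H_{\min}-H_{\own}\le t_{\max}-t_{v,\ell-1}+3\kappa/2$. If $H_{\min}$ comes from a correct neighbor, the same analysis as in \Cref{lem:estimates} applied to that correct pair yields $H_{\min}-H_{\own}\le t_{\max}-t_{v,\ell-1}+\kappa/2$. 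If $H_{\min}$ instead comes from the faulty neighbor, then because the minimum degree is $2$ and at most one predecessor is faulty, there is a correct neighbor of $v$ whose reception $H^*$ satisfies $H_{\min}\le H^*\le H_{\max}$; applying the bound to this correct pair and using $H_{\min}-H_{\own}\le H^*-H_{\own}$ yields the identical conclusion.

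For the lower bound $t_{v,\ell}\ge t_{\min}+\Lambda-2\kappa$, I would use the second inequality of \Cref{lem:drift} together with the identity $\kappa=2(u+(1-1/\vartheta)(\Lambda-d))$ to reduce the claim to $\Cor_{v,\ell}/\vartheta\le t_{v,\ell-1}-t_{\min}+3\kappa/2$. The case $\Cor_{v,\ell}\le \vartheta\kappa$ is immediate from $t_{v,\ell-1}\ge t_{\min}$, so one need only handle $\Cor_{v,\ell}>\vartheta\kappa$, in which the clipping rule forces $\Cor_{v,\ell}=\max\{H_{\own}-H_{\max}-3\kappa/2,\vartheta\kappa\}$; the binding subcase is $\Cor_{v,\ell}=H_{\own}-H_{\max}-3\kappa/2$. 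A symmetric sandwich argument then shows that whether $H_{\max}$ is produced by a correct neighbor or the faulty one (in the latter case, there is a correct neighbor with reception $H^*\le H_{\max}$, so $H_{\own}-H_{\max}\le H_{\own}-H^*$), the delay and clock-rate bounds yield $H_{\own}-H_{\max}\le \vartheta(t_{v,\ell-1}-t_{\min})+\vartheta u$, and the target inequality follows after using $\vartheta u\le \vartheta\kappa/2$.

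The main obstacle is the delicate case analysis around which extreme the faulty predecessor controls: in each clipping branch one must verify that the algorithm's thresholds neutralize the faulty neighbor's influence, reducing the analysis to the other extreme (or to $H_{\own}$), which is always anchored to a correct reception. The asymmetry between the two target bounds (no slack on the upper, $2\kappa$ slack on the lower) mirrors the asymmetry in \Cref{lem:drift} between the additive upper and rate-scaled lower bound on $t_{v,\ell}-t_{v,\ell-1}$.
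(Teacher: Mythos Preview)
Your proposal is correct and follows essentially the same case structure as the paper's proof: split on whether $\Cor_{v,\ell}\in[0,\vartheta\kappa]$, $<0$, or $>\vartheta\kappa$, and invoke \Cref{lem:drift} (which applies since $(v,\ell-1)$ is correct) for the in-range case. The only notable difference is in the out-of-range branches. Where you continue to route through \Cref{lem:drift} and then case-split on whether the extremal reception $H_{\min}$ (resp.\ $H_{\max}$) was produced by a correct or a faulty neighbor, the paper bypasses this split entirely: it rewrites $H_{v,\ell}(t_{v,\ell})$ directly in terms of $H_{\min}$ (resp.\ $H_{\max}$) and then uses the one-line monotonicity observation that the first (resp.\ last) reception among neighbors is no later (resp.\ no earlier) than the reception from \emph{any} correct neighbor. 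This immediately anchors the bound to a correct predecessor without tracking who produced the extremum. Your sandwich argument via $H^*$ achieves the same effect, just with an extra case distinction; the initial two-sided bound on $\Cor_{v,\ell}$ you re-derive is also not actually used once you do the three-way split, but it is harmless.
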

\begin{proof}
By \Cref{lem:drift}, $\Cor_{v,\ell}\ge 0$ implies that 
\begin{equation*}
t_{v,\ell}-t_{\min}\le t_{v,\ell}-t_{v,\ell-1}\le \Lambda,
\end{equation*}
while $\Cor_{v,\ell}\le \vartheta \kappa$ yields that 
\begin{equation*}
t_{v,\ell}-t_{\max}\ge t_{v,\ell}-t_{v,\ell-1}\ge d-u+\frac{\Lambda-d}{\vartheta}-\kappa \ge \Lambda - 2\kappa.
\end{equation*}
It remains to show the upper bound on $t_{v,\ell}$ if $\Cor_{v,\ell}<0$ and the lower bound if $\Cor_{v,\ell}>\vartheta \kappa$.

Consider first the case that $\Cor_{v,\ell}<0$.
Accordingly,
\begin{equation*}
\Cor_{v,\ell}=H_{\own}-H_{\min}-\frac{\kappa}{2}+2\kappa>H_{\own}-H_{\min}.
\end{equation*}
It follows that
\begin{align*}
H_{v,\ell}(t_{v,\ell})=H_{\own}+\Lambda-d-\Cor_{v,\ell}\le H_{\min}+\Lambda-d.
\end{align*}
Noting that the reception time of the first message from a predecessor is bounded from above by the reception time of the message from a correct predecessor, we conclude that
\begin{equation*}
t_{v,\ell}\le t_{\min}+\Lambda.
\end{equation*}

Now consider the case that $\Cor_{v,\ell}>\vartheta \kappa$.
Consequently,
\begin{equation*}
\Cor_{v,\ell}=H_{\own}-H_{\max}-\frac{\kappa}{2}-\kappa>H_{\own}-H_{\max}-\vartheta u.
\end{equation*}
It follows that the local time $H$ at which $(v,\ell)$ generates its pulse satisfies that
\begin{align*}
H=H_{\own}+\Lambda-d-\Cor_{v,\ell}\ge H_{\max}+\Lambda-d+\vartheta u.
\end{align*}
Noting that the reception time of the latest message from a predecessor is bounded from below by the reception time of the latest message from a correct predecessor, by \Cref{eq:kappa} we conclude that
\begin{equation*}
t_{v,\ell}\ge t_{\max}+d+\frac{\Lambda-d}{\vartheta}>t_{v,\ell-1}+\Lambda - \kappa.\qedhere
\end{equation*}
\end{proof}
\begin{corollary}\label{cor:fault}
Denote
\begin{align*}
t_{\min}&:=\min_{\substack{((w,\ell-1),(v,\ell))\in E_{\ell-1}\\ (w,\ell-1)\notin F}}\{t_{w,\ell-1}\}\mbox{ and}\\
t_{\max}&:=\max_{\substack{((w,\ell-1),(v,\ell))\in E_{\ell-1}\\ (w,\ell-1)\notin F}}\{t_{w,\ell-1}\}.
\end{align*}
Then
\begin{equation*}
t_{\min}+\Lambda-2\kappa \le t_{v,\ell} \le t_{\max}+\Lambda+2\kappa.
\end{equation*}
\end{corollary}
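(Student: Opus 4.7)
The plan is to dispatch the corollary by a short case analysis on the location (if any) of the unique faulty predecessor of $(v,\ell)$, invoking \Cref{lem:fault_self} or \Cref{lem:fault_neighbor} as appropriate. By the fault model, the set $\{(v,\ell-1)\}\cup\{(w,\ell-1)\mid \{v,w\}\in E\}$ contains at most one element of $F$. Accordingly, exactly one of the following holds:
\emph{(a)} $(v,\ell-1)\in F$ and all other predecessors are correct;
\emph{(b)} $(v,\ell-1)\notin F$ and at most one predecessor is faulty (possibly none).

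In case~\emph{(a)}, the set over which the corollary takes its min/max is precisely $\{t_{w,\ell-1}\mid \{v,w\}\in E\}$, since $(v,\ell-1)$ itself is faulty and thus excluded, while every $(w,\ell-1)$ with $\{v,w\}\in E$ is correct. This is exactly the indexing used in \Cref{lem:fault_self}, which therefore directly yields
\begin{equation*}
t_{\min}+\Lambda-2\kappa \le t_{v,\ell} \le t_{\max}+\Lambda+2\kappa.
\end{equation*}

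In case~\emph{(b)}, the hypothesis of \Cref{lem:fault_neighbor} is satisfied (trivially so if there is no faulty predecessor at all), and its conclusion reads
\begin{equation*}
t_{\min}+\Lambda-2\kappa \le t_{v,\ell} \le t_{\max}+\Lambda,
\end{equation*}
where $t_{\min}$ and $t_{\max}$ are taken over exactly the same set of correct predecessors as in the corollary. Since $\Lambda \le \Lambda+2\kappa$, both claimed inequalities follow.

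I expect no real obstacle here: the work has been done in \Cref{lem:fault_self,lem:fault_neighbor}, and the only thing to verify is that the indexing sets for $t_{\min},t_{\max}$ match between the lemmas and the corollary in each case. This requires nothing beyond observing that in case~\emph{(a)} the correct predecessors are precisely the $H$-neighbors of $v$, so the restricted min/max of \Cref{lem:fault_self} coincides with the one in the corollary.
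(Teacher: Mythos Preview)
Your proposal is correct and follows exactly the paper's approach: the paper's proof is the one-liner ``Immediate from \Cref{lem:fault_self,lem:fault_neighbor} and the assumption that no node has more than one faulty predecessor,'' and your case analysis simply unpacks that sentence, including the observation that the indexing sets match in each case.
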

\begin{proof}
Immediate from \Cref{lem:fault_self,lem:fault_neighbor} and the assumption that no node has more than one faulty predecessor.
\end{proof}
Using this result, we can bound the impact of a fault in layer $\ell-1$ on successors via the skew bounds of close-by nodes on layer $\ell-1$;
we exploit that all bounds we show would in fact also apply to the faulty node if it was correct.
\begin{lemma}\label{lem:faulty_layer}
Suppose for a node $(v,\ell)\in V_{\ell}$, $\ell>0$, that one of its predecessors is faulty.
Moreover, assume that in an execution that differs only in that the faulty predecessor of $(v,\ell)$ is correct, it holds that $\max_{\{v,w\}\in E}\{|t_{v,\ell-1}-t_{w,\ell-1}|\}\le B$.
Then in the execution with the predecessor being faulty, the pulse time of $(v,\ell)$ differs by at most $2B+4\kappa$.
\end{lemma}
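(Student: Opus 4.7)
The plan is to compare the two executions directly, using Corollary~\ref{cor:fault} as the single main ingredient. Write $t^{\mathrm{c}}$ and $t^{\mathrm{f}}$ for the pulse times in the fault-free (hypothetical) and faulty executions, respectively. The first step is to observe that since $G$ is a DAG and the faulty node lies on layer $\ell-1$, every predecessor of $(v,\ell)$ other than the faulty one pulses at the same time in both executions. Consequently, for every correct predecessor $(w,\ell-1)$ of $(v,\ell)$ we have $t^{\mathrm{c}}_{w,\ell-1}=t^{\mathrm{f}}_{w,\ell-1}$.

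The second step is to apply Corollary~\ref{cor:fault} to each execution. Letting $t^{\mathrm{f}}_{\min},t^{\mathrm{f}}_{\max}$ be the minimum and maximum pulse times of correct predecessors of $(v,\ell)$ in the faulty execution, and analogously $t^{\mathrm{c}}_{\min},t^{\mathrm{c}}_{\max}$ in the fault-free one (where all predecessors are correct), the corollary yields
\begin{equation*}
t^{\mathrm{f}}_{\min}+\Lambda-2\kappa\le t^{\mathrm{f}}_{v,\ell}\le t^{\mathrm{f}}_{\max}+\Lambda+2\kappa
\quad\text{and}\quad
t^{\mathrm{c}}_{\min}+\Lambda-2\kappa\le t^{\mathrm{c}}_{v,\ell}\le t^{\mathrm{c}}_{\max}+\Lambda+2\kappa.
\end{equation*}

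The third step uses the skew hypothesis to show both intervals lie in a single window of length $2B+4\kappa$. Set $\tau:=t^{\mathrm{c}}_{v,\ell-1}$. The assumption $\max_{\{v,w\}\in E}\{|t^{\mathrm{c}}_{v,\ell-1}-t^{\mathrm{c}}_{w,\ell-1}|\}\le B$ together with the trivial $|\tau-\tau|=0$ show that every predecessor of $(v,\ell)$ in the fault-free execution pulses in $[\tau-B,\tau+B]$, so $[t^{\mathrm{c}}_{\min},t^{\mathrm{c}}_{\max}]\subseteq[\tau-B,\tau+B]$. By the first step, the correct predecessors in the faulty execution form a subset of those in the fault-free execution with identical pulse times, so also $[t^{\mathrm{f}}_{\min},t^{\mathrm{f}}_{\max}]\subseteq[\tau-B,\tau+B]$. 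Substituting into the bounds from Corollary~\ref{cor:fault}, both $t^{\mathrm{c}}_{v,\ell}$ and $t^{\mathrm{f}}_{v,\ell}$ lie in $[\tau-B+\Lambda-2\kappa,\tau+B+\Lambda+2\kappa]$, an interval of length $2B+4\kappa$. Therefore $|t^{\mathrm{c}}_{v,\ell}-t^{\mathrm{f}}_{v,\ell}|\le 2B+4\kappa$, as claimed.

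I do not expect a real obstacle; the argument is essentially a containment of intervals. The only detail needing care is the routine case split on whether the faulty predecessor is $(v,\ell-1)$ itself or some neighbor $w$ with $\{v,w\}\in E$. In the former case, $t^{\mathrm{f}}_{v,\ell-1}$ is arbitrary but never enters Corollary~\ref{cor:fault}, which only references correct predecessors, and $\tau$ is taken from the fault-free execution throughout; in the latter case $t^{\mathrm{f}}_{v,\ell-1}=\tau$ directly. Both cases collapse to the same interval inclusion.
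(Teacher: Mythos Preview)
Your proposal is correct and follows essentially the same approach as the paper: both apply Corollary~\ref{cor:fault} to each execution, use the skew hypothesis to place all (correct) predecessors' pulse times in $[t_{v,\ell-1}-B,t_{v,\ell-1}+B]$ (with $t_{v,\ell-1}$ taken from the fault-free execution), and conclude that both $t_{v,\ell}$ values land in a common interval of length $2B+4\kappa$. Your explicit observation that the non-faulty predecessors pulse identically in both executions and your remark on the case split are details the paper handles more implicitly, but the argument is the same.
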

\begin{proof}
Denote by standard variables values in the execution without the predecessor being faulty and by primed variables values in the one where it is.
In particular, for node $(v,\ell)\in V_{\ell}\setminus F$
\begin{align*}
t_{\min}&:=\min_{((w,\ell-1),(v,\ell))\in E_{\ell-1}}\{t_{w,\ell-1}\},\\
t_{\max}&:=\max_{((w,\ell-1),(v,\ell))\in E_{\ell-1}}\{t_{w,\ell-1}\},\\
t_{\min}'&:=\min_{\substack{((w,\ell-1),(v,\ell))\in E_{\ell-1}\\ (w,\ell-1)\notin F}}\{t_{w,\ell-1}\}\mbox{, and}\\
t_{\max}'&:=\max_{\substack{((w,\ell-1),(v,\ell))\in E_{\ell-1}\\ (w,\ell-1)\notin F}}\{t_{w,\ell-1}\}.
\end{align*}
denote the earliest and latest pulsing times of (correct) predecessors without and with faults on layer $\ell-1$, respectively.

Observe that
\begin{equation*}
t_{v,\ell-1}-B\le t_{\min}\le t_{\min'}\le t_{\max'}\le t_{\max}\le t_{v,\ell-1}+B.
\end{equation*}
Hence, \Cref{cor:fault} (applied to both executions) shows that 
\begin{align*}
t_{v,\ell-1}-B-2\kappa &\le t_{v,\ell}\le t_{v,\ell-1}+B+2\kappa\mbox{ and}\\
t_{v,\ell-1}-B-2\kappa &\le t_{v,\ell}'\le t_{v,\ell-1}+B+2\kappa.\qedhere
\end{align*}
\end{proof}
Finally, we observe that such a ``time shift'' propagates without further increase, so long as there are no faults.
However, a subtlety here is that this is only true for our \emph{bounds} on timing:
a change in timing might leave more time for drift of the local clock to accumulate;
since our worst-case bounds include the maximum time error that can possibly be accumulated from drift (so long as local skews do not become exceedingly large), this is already accounted for in the bound provided by \Cref{lem:drift}.
Hence, we obtain the following generalized variant of \Cref{lem:drift}.
\begin{lemma}\label{lem:drift_fault}
Suppose that for $v\in V$ and $\ell\in \N_{>0}$ the predecessors of $(v,\ell)$ are correct.
If we shift the pulse times of these predecessors by at most $\delta \in \R$, where \Cref{eq:Lambda} still holds for the shifted times, then
\begin{equation*}
d-u+\frac{\Lambda-d-\Cor_{v,\ell}}{\vartheta}-\delta \le t_{v,\ell}'-t_{v,\ell-1} \le \Lambda-\Cor_{v,\ell}+\delta,
\end{equation*}
where $t_{v,\ell}'$ denotes the pulse time of $(v,\ell)$ in the execution with the shifts applied.
\end{lemma}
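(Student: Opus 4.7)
The plan is to mirror the proof of \Cref{lem:drift}, absorbing the permitted pulse-time shift into the effective delay bounds and then reusing the same hardware-clock computation. Concretely, let $\tilde t_{v,\ell-1}$ denote the shifted pulse time of $(v,\ell-1)$, so $|\tilde t_{v,\ell-1}-t_{v,\ell-1}|\le \delta$. The corresponding message is received at $(v,\ell)$ at some time $r_{v,\ell-1}\in [\tilde t_{v,\ell-1}+d-u,\tilde t_{v,\ell-1}+d]$, which by the shift bound lies inside $[t_{v,\ell-1}+d-u-\delta,t_{v,\ell-1}+d+\delta]$. Thus, relative to the unshifted reference $t_{v,\ell-1}$, the effective arrival-time uncertainty widens from $[d-u,d]$ to $[d-u-\delta,d+\delta]$, while nothing else changes from the viewpoint of $(v,\ell)$.

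Next, inspecting \Cref{alg:Simplified_Gradient_TRIX} (executed on the shifted inputs) yields $H_{v,\ell}(t'_{v,\ell}) = H_{v,\ell}(r_{v,\ell-1}) + \Lambda - d - \Cor_{v,\ell}$, with $\Cor_{v,\ell}$ interpreted as the correction the algorithm actually computes in the shifted execution. Since the lemma's hypothesis preserves \Cref{eq:Lambda}, \Cref{lem:corrections} continues to apply — its proof uses only the delay and hardware-clock bounds and \Cref{eq:Lambda}, all of which remain valid — so $\Cor_{v,\ell}\le \Lambda-d$ and hence $t'_{v,\ell}\ge r_{v,\ell-1}$. From here I would split $t'_{v,\ell}-t_{v,\ell-1} = (t'_{v,\ell}-r_{v,\ell-1}) + (r_{v,\ell-1}-t_{v,\ell-1})$ and combine the hardware-clock bounds $(\Lambda-d-\Cor_{v,\ell})/\vartheta \le t'_{v,\ell}-r_{v,\ell-1} \le \Lambda-d-\Cor_{v,\ell}$ on the first summand with the widened delay bounds $d-u-\delta \le r_{v,\ell-1}-t_{v,\ell-1} \le d+\delta$ on the second. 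Adding these summand-by-summand yields the claimed two-sided inequality exactly as in the proof of \Cref{lem:drift}.

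The conceptual content of the lemma is that node $(v,\ell)$ cannot distinguish a genuine link delay of $d'\in [d-u,d]$ from a shifted send time whose effective delay is $d'\pm\delta$; the shifts are therefore simply absorbed into the existing delay-uncertainty estimates. I do not see a real technical obstacle: the only point that deserves explicit verification is that \Cref{lem:corrections} (together with the \Cref{lem:estimates} it relies on) continues to hold in the shifted execution, which is immediate from the preservation of \Cref{eq:Lambda} by hypothesis.
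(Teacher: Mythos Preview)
Your argument is correct but structured differently from the paper's. The paper's proof does not redo the computation of \Cref{lem:drift}; instead it observes that $t_{v,\ell}'$ is monotone non-decreasing in each predecessor's pulse time, so the extremal cases are obtained by shifting \emph{all} predecessors uniformly by $+\delta$ or $-\delta$. A uniform shift leaves all local timestamp differences (hence $\Cor_{v,\ell}$) unchanged and simply translates $t_{v,\ell}$ by $\pm\delta$, after which \Cref{lem:drift} applied to the original execution gives the claim directly. In particular, in the paper's reading $\Cor_{v,\ell}$ is the correction from the \emph{unshifted} execution.

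Your route instead fixes the shifted execution, takes $\Cor_{v,\ell}$ to be the correction computed there, and absorbs only the shift of $(v,\ell-1)$ into a widened delay window $[d-u-\delta,d+\delta]$; the shifts of the remaining predecessors are hidden inside the new $\Cor_{v,\ell}$. This is more elementary in that it never appeals to monotonicity of the (piecewise-defined) map from predecessor times to $t_{v,\ell}$, at the price of having to re-invoke \Cref{lem:corrections} in the shifted execution---which you correctly justify via the hypothesis that \Cref{eq:Lambda} persists. The two proofs thus establish the inequality with two different instantiations of $\Cor_{v,\ell}$; since the downstream uses of the lemma only exploit the shape of the bound (a $\pm\delta$ widening of \Cref{lem:drift}), either version suffices.
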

\begin{proof}
Pulse times are increasing as functions of pulse times of predecessors.
Therefore, in order to maximize or minimize $t_{v,\ell}'$, we need to maximize or minimize the predecessors' pulse times, respectively.
Shifting all predecessors' pulse times uniformly by $\delta$ also shifts $t_{v,\ell}'$ by $\delta$ relative to $t_{v,\ell}$.
The statement now follows analogously to the proof of \Cref{lem:drift}, carrying the uniform shift through all inequalities.
\end{proof}
With these tools in place, we can conclude that skews do not grow arbitrarily in the face of faults.
\thmexp*
\begin{proof}
We prove by induction on the number $i\le f$ of layers $\ell>0$ with faults that the skew is bounded by $B_i:=4\kappa(2+\log D)5^i\sum_{j=0}^i 5^{-j}\in O(5^f\kappa\log D)$.
By \Cref{cor:layer0}, $\localskew_0\le \kappa/2<4\kappa$.
Thus, if there are no faults in layers $\ell>0$, by \Cref{thm:local} we have that $\localskew_{\ell}\le B_0:= 4\kappa(2+\log D)$ for all $\ell\in \N$.

Assume that we completed step $i\in \N$ and that $\ell_{i+1}$ is the next layer where faults need to be added.
Then we have that for all $\ell\le \ell_{i+1}$ that $\localskew_{\ell'}\le B_i=4\kappa(2+\log D)5^f\sum_{j=0}^i 5^{-j}$ both before and after adding the faults on layer $i+1$.
By \Cref{lem:faulty_layer}, it follows that pulsing times on layer $\ell_{i+1}+1$ do not change by more than $2B_i+4\kappa$ due to the addition of faults.
By \Cref{lem:drift_fault}, this extends to all bounds\footnote{Due to drifting hardware clocks, this does not apply to the pulse times themselves. However, we rely on \Cref{lem:drift} to prove our bounds in the absence of faults, and this is covered by \Cref{lem:drift_fault}.} we compute on pulse times in layers $\ell>\ell_{i+1}$.
Since $D\ge 1$ and thus $\log D\ge 0$, we get that the local skew in step $i+1$ is bounded by
\begin{equation*}
5B_i+4\kappa = 4\kappa(2+\log D)5^{i+1}\sum_{j=0}^i 5^{-j}+4\kappa \le 4\kappa(2+\log D)5^{i+1}\sum_{j=0}^{i+1} 5^{-j}=B_{i+1}.\qedhere
\end{equation*}
\end{proof}

\subsection*{Bounding Skews with Uniform Fault Distribution}\label{sec:uniform}
The bound in \Cref{thm:fault_worst_case}, which is exponential in $f$, seems to suggest that the system can only support a very small number of faults or the local skew explodes.
However, we have not yet taken into account that the starting point of our entire approach is the assumption that faults are sufficiently sparse, meaning that it is highly unlikely that many of them cluster together in a way that causes an exponential pile-up of local skew.
This enables the self-stabilization properties of the algorithm to prevent such a build-up altogether.

In the following, assume that each node fails uniformly and independently with probability $o(n^{-1/2})$.
This is the largest probability of error we can support while guaranteeing that no node has more than one faulty predecessor with probability $1-o(1)$.
A key observation is that this entails that within a fairly large distance of $n^{1/12}$, no node has more than a constant number of faulty nodes that can influence it.
We now formalize and show this claim.
\begin{definition}[Distance-$\delta$ Ancestors]
For node $(v,\ell)\in V_{\ell}$ and $\delta\in \N$, its \emph{distance-$\delta$ ancestors} are all nodes $(w,\ell')\in V_G\setminus \{(v,\ell)\}$ such that there is a (directed) path of length at most $\delta$ from $(w,\ell')$ to $(v,\ell)$ in $G$.
\end{definition}

\begin{definition}[Distance-$\delta$ $k$-faulty]
Node $(v,\ell)\in V_{\ell}$, $\ell\in \N_{>0}$ is \emph{distance-$\delta$ $k$-faulty} if $k\in \N$ is minimal such that there are at most $k$ faulty nodes among the distance-$((k+1)\delta)$ ancestors of $(v,\ell)$.
\end{definition}

\begin{observation}\label{obs:k_faulty}
Suppose that $\delta\le n^{1/12}$.
If nodes fail independently with probability $p\in o(1/\sqrt{n})$, then with probability $1-o(1)$ all nodes are distance-$\delta$ $k$-faulty for $k\le 2$.
\end{observation}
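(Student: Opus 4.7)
The plan is to reduce the claim to a simple union bound once the ancestor structure of $G$ is understood. Unpacking the definition, a node $(v,\ell)$ fails to be distance-$\delta$ $k$-faulty for any $k\le 2$ precisely when, for every $j\in\{0,1,2\}$, at least $j+1$ of its distance-$((j+1)\delta)$ ancestors are faulty. In particular, the distance-$3\delta$ ancestor set must then contain at least three faulty nodes, so it suffices to bound the probability of this stronger event across all nodes and apply a union bound.

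The main structural step is counting the number $N$ of distance-$3\delta$ ancestors of a fixed $(v,\ell)$. By the construction of $G$, an edge of $G$ either keeps the $H$-vertex fixed or moves to an $H$-neighbor, so an ancestor at layer $\ell-j$ is a node $(w,\ell-j)$ with $d_H(v,w)\le j$. Since $H$ is, up to a constant-size modification at each endpoint, a simple path (cf.\ \Cref{fig:base_graph}), the $H$-ball of radius $j$ around $v$ contains at most $2j+O(1)$ vertices. Summing $j\in\{1,\ldots,3\delta\}$ gives $N\in O(\delta^2)$, and the hypothesis $\delta\le n^{1/12}$ yields $N\in O(n^{1/6})$.

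Given this bound, the number of unordered triples among distance-$3\delta$ ancestors is $\binom{N}{3}\in O(n^{1/2})$, and by independence each triple is entirely faulty with probability $p^3\in o(n^{-3/2})$. A union bound over triples bounds the probability that $(v,\ell)$ has three or more faulty distance-$3\delta$ ancestors by $o(n^{-1})$; a second union bound over all $n$ nodes in $V_G$ completes the argument.

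No individual step is technically hard; the main thing to be careful about is keeping the ancestor count genuinely $O(\delta^2)$ rather than a naive $O(\delta^3)$ one would get from thinking of the grid as two-dimensional. The $O(\delta^2)$ bound is essential, since it is exactly what pairs with $p^3\in o(n^{-3/2})$ to make the union bound close; it is here that the near-path structure of the base graph $H$ is used in a nontrivial way.
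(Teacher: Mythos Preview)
Your argument is correct and follows essentially the same route as the paper: both reduce to showing that no node has three faulty distance-$3\delta$ ancestors, bound the ancestor count by $O(\delta^2)\subseteq O(n^{1/6})$ using the near-path structure of $H$, and finish with a $\binom{N}{3}p^3\in o(1/n)$ estimate plus a union bound over all $n$ nodes. Your proposal merely spells out the ancestor-counting step in more detail than the paper does.
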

\begin{proof}
In order to be distance-$\delta$ $k$-faulty for $k> 2$, a node must have at least $3$ faults among its distance-$(3\delta)$ ancestors.
The number of these ancestors is bounded by $(3\delta)^2\in O(n^{1/6})$.
Since $p\in o(1/\sqrt{n})$, the probability for this to happen is bounded by $O(p^3 \binom{n^{1/6}}{3})=O(p^3\sqrt{n}) \subset o(1/n)$.
The claim follows by applying a union bound over all $n$ nodes.
\end{proof}

We can exploit this to control how much skews grow as the result of faults much better.
\begin{lemma}\label{lem:fault_accumulation}
Suppose that $\Psi^s(\ell)\le B_{s,\ell}$ and $\localskew_{\ell}\le B$ for all layers $\ell\ge \underline{\ell}$ and $s\in \N$, where $\ell,\underline{\ell}\in \N$, if there are no faults in these layers.
If no node in a layer $\ell\ge \underline{\ell}$ has more than $2$ faulty nodes among its distance-$(\ell-\underline{\ell}\,)$ ancestors, then $\Psi^s(\ell)\le B_{s,\ell}+12B+24\kappa$ for all $\ell \ge \underline{\ell}$.
\end{lemma}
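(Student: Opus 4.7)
The plan is to compare the faulty execution $E$ with the idealised fault-free execution $E_0$ obtained from $E$ by making every node in layers $\ell\ge \underline{\ell}$ correct; by hypothesis $\Psi^s(\ell)^{E_0}\le B_{s,\ell}$ and $\localskew_{\ell}^{E_0}\le B$ in $E_0$. The approach is a two-stage hybrid argument that reinstates the at most two relevant faults one at a time, combining \Cref{lem:faulty_layer} (which bounds the pulse-time shift caused by one additional faulty predecessor in terms of the local skew of the surrounding ``corrected'' execution) with \Cref{lem:drift_fault} (which guarantees that a shift of predecessors' pulse times propagates to successors without amplification).

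First, I would fix a layer $\ell\ge \underline{\ell}$ and a correct node $(v,\ell)$. A fault at a node in a layer $\ge \underline{\ell}$ can affect $t_{v,\ell}$ only if it lies in the distance-$(\ell-\underline{\ell})$ ancestor set of $(v,\ell)$, since otherwise there is no directed path through which its effect could propagate. By hypothesis, at most two faults $f_1,f_2$ are relevant to $(v,\ell)$. I would then interpolate via $E_0\to E_1\to E_2$, where $E_i$ reinstates $f_1,\ldots,f_i$; reinstating any other fault leaves $t_{v,\ell}$ unchanged, so $t_{v,\ell}^{E_2}=t_{v,\ell}^E$.

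Next, I would bound the shift stage by stage. Going $E_0\to E_1$, \Cref{lem:faulty_layer} applied at any immediate successor of $f_1$ gives a shift of at most $2B+4\kappa$ (using $\localskew^{E_0}\le B$), and by \Cref{lem:drift_fault} this shift reaches $(v,\ell)$ with no further growth. The local skew in $E_1$ is therefore at most $B+2(2B+4\kappa)=5B+8\kappa$. In the step $E_1\to E_2$, \Cref{lem:faulty_layer} now yields a shift of at most $2(5B+8\kappa)+4\kappa=10B+20\kappa$, again propagated by \Cref{lem:drift_fault}. The triangle inequality gives $|t_{v,\ell}^E-t_{v,\ell}^{E_0}|\le 12B+24\kappa$ uniformly across correct nodes, and I would then convert this uniform pulse-time shift into the stated bound on $\Psi^s(\ell)=\max_{v,w}\{t_{v,\ell}-t_{w,\ell}-4s\kappa d(v,w)\}$.

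The hardest part is the bookkeeping of how the local skew grows as faults are reintroduced: at each stage the bound in \Cref{lem:faulty_layer} depends on the local skew of the preceding hybrid, so without a sparsity hypothesis on the faults the recursion $B'\mapsto 5B'+8\kappa$ would compound exponentially. The ``at most two faulty ancestors per node'' assumption is exactly what caps this induction at two stages, keeping the cumulative shift linear in $B$ and matching the claimed bound.
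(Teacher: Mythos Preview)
Your approach is essentially the paper's: both compare the actual execution to the fault-free one and bound the per-node pulse-time shift by reintroducing faults one at a time, invoking \Cref{lem:faulty_layer} at each fault and \Cref{lem:drift_fault} to propagate the shift through correct layers. The paper organises this as an induction on layers indexed by the per-node fault count $f_{v,\ell}\in\{0,1,2\}$ (with the recursion $B_0=0$, $B_{f+1}=2(B+B_f)+4\kappa$, so $B_2=6B+12\kappa$), whereas you run a two-stage node-specific hybrid $E_0\to E_1\to E_2$.

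There is one slip in your final accounting. You obtain a per-node bound $|t_{v,\ell}^E-t_{v,\ell}^{E_0}|\le 12B+24\kappa$ and then ``convert'' it to the stated bound on $\Psi^s(\ell)$. But $\Psi^s(\ell)=\max_{v,w}\{t_{v,\ell}-t_{w,\ell}-4s\kappa\, d(v,w)\}$ is a \emph{difference} of two pulse times, so a uniform per-node shift of $S$ can change $\Psi^s$ by up to $2S$; your calculation therefore yields only $\Psi^s(\ell)\le B_{s,\ell}+24B+48\kappa$. The paper hits the stated constant precisely because it doubles its tighter per-node bound $B_2=6B+12\kappa$ in the last line, ``$\Psi^s(\ell)\le B_{s,\ell}+2B_2$''. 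The looser per-node constant in your version comes from feeding the \emph{global} local-skew estimate $5B+8\kappa$ into \Cref{lem:faulty_layer} at the second stage, rather than exploiting (as the paper's per-node indexing does) that every correct predecessor of a node with a faulty predecessor has strictly fewer faulty ancestors. Since only the order $O(B+\kappa)$ matters downstream, your argument suffices for everything that follows, but as written it does not establish the lemma's exact constant.
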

\begin{proof}
We examine by how much adding faults on layers $\ell \ge \bar{\ell}$ might affect pulsing times.
For $\ell\ge \bar{\ell}$ and $(v,\ell)\in V_{\ell}$, denote by $f_{v,\ell}\in \{0,1,2\}$ the number of faulty distance-$(\ell-\bar{\ell})$ ancestors of $(v,\ell)$.
For $f_{v,\ell}=0$, there is no change in $t_{v,\ell}$.
For $f_{v,\ell}>0$, consider two cases.
If $(v,\ell)$ has no faulty predecessor, then by \Cref{lem:drift_fault}, $t_{v,\ell}$ is changed at most by the maximum shift that any of its predecessors undergoes.
On the other hand, if $(v,\ell)$ does have a faulty predecessor, then $f_{v,\ell}>f_{w,\ell-1}$ for all correct predecessors of $(v,\ell)$.
Thus, by \Cref{lem:faulty_layer} we can bound shifts by $B_{f_{v,\ell}}$, where $B_0:=0$ and $B_{f+1}:=2(B+B_f)+4\kappa$.

By assumption, $f_{v,\ell}\le 2$ and hence the maximum shift is bounded by $B_2=6B+12\kappa$.
We conclude that $\Psi^s(\ell)\le B_{s,\ell}+2B_2=B_{s,\ell}+12B+24\kappa$, as claimed.
\end{proof}
Together with \Cref{lem:fault_accumulation}, \Cref{obs:k_faulty} shows that skews do not increase by more than a constant factor within $n^{1/12}$ layers.
However, we need to handle a total of $\Theta(\sqrt{n})$ layers.
To this end, we slice up the task into chunks of $n^{1/12}$ layers and leverage the self-stabilization properties of the algorithm.
For simplicity, in the following we assume that $n^{1/12}$ is an integer.
As we prove asymptotic bounds, this does not affect the results.
\begin{definition}[Slices]
\emph{Slice} $i\in \N_{>0}$ consists of layers $\ell \in [(i-1)n^{1/12},i n^{1/12}-1]$.
\end{definition}
Note that there are no more than $n^{5/12}$ slices, because the nodes are arranged in square grid.
Due to the duplication of nodes on layer $0$ and the boundary nodes on layers $\ell>0$, the number of slices is actually $n^{5/12}-\Theta(1)$.

As our next step towards a probabilistic skew bound, we prove that \emph{if} the local skew remains bounded, then for levels $s$ that are not too large, $\Psi^s$ remains almost as small as without faults.
First, we show a loose bound that naively accumulates shifts slice by slice.
\begin{lemma}\label{lem:fault_mitigation_base}
Suppose that
\begin{itemize}
  \item $\localskew_0\le 4\kappa$,
  \item each node is distance-$n^{1/12}$ $k$-faulty for $k\le 2$, and
  \item $\localskew_{\ell}\le B$ for all $\ell\in \N$.
\end{itemize}
Then for each $s\in \N$ and layer $\ell$ in slice $i\in \N_{>0}$, we have that
\begin{equation*}
\Psi^s(\ell)\le 2^{2-s}\kappa D+i(12B+24\kappa).
\end{equation*}
for all $\ell\in \N$.
\end{lemma}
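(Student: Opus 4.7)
The plan is to induct on the slice index $i$, adding one slice of faults at a time and telescoping the resulting increments of $\Psi^s$. To make this formal, I introduce a family of hybrid executions $e_0,e_1,\dots,e_i$, where $e_j$ keeps exactly the faults in slices $1,\dots,j$ of the actual execution and removes all faults in slices $>j$. The base case $e_0$ is fully fault-free, so by the proof of Theorem~\ref{thm:local} (applied with the hypothesis $\localskew_0\le 4\kappa$) we have $\Psi^s(e_0,\ell)\le 2^{2-s}\kappa D$ for every $\ell$. Moreover, since $G$ is a DAG with no edge from a later slice back to an earlier layer, pulse times in slice $i$ coincide in $e_i$ and in the actual execution, so it suffices to bound $\Psi^s(e_i,\ell)$ for $\ell$ in slice $i$.

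The heart of the argument is to show $\Psi^s(e_j,\ell)\le \Psi^s(e_{j-1},\ell)+12B+24\kappa$ at every layer $\ell$ for each $j$. Since $\Psi^s(\ell)$ is a maximum, over pairs of correct nodes in layer $\ell$, of $t_{v,\ell}-t_{w,\ell}-4s\kappa d(v,w)$, it is enough to bound the pointwise pulse-time shift $|t_{v,\ell}^{(j)}-t_{v,\ell}^{(j-1)}|$ by $B_2:=6B+12\kappa$ uniformly over correct $(v,\ell)$; two such nodes shifting in opposite directions then contribute at most $2B_2=12B+24\kappa$ to the change in $\Psi^s$. To get the shift bound I reuse the recursion from the proof of Lemma~\ref{lem:fault_accumulation} with $\underline{\ell}$ set to the first layer of slice $j$, and induct on the layer index. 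Inside slice $j$ we have $\ell-\underline{\ell}<n^{1/12}$, and the distance-$n^{1/12}$ $k$-faulty hypothesis with $k\le 2$ implies at most two newly faulty distance-$(\ell-\underline{\ell})$ ancestors at every node; the recursion $B_{f+1}=2(B+B_f)+4\kappa$ (which combines Lemma~\ref{lem:faulty_layer} for a direct faulty predecessor with Lemma~\ref{lem:drift_fault} for propagation) then caps the shift at $B_2$. At the first layer $\ell=jn^{1/12}$ of slice $j+1$ the same estimate still applies because $\ell-\underline{\ell}=n^{1/12}$ remains inside the reach of the $k$-faulty guarantee; beyond that layer no node has a newly faulty direct predecessor, so Lemma~\ref{lem:drift_fault} alone propagates the shift bound of $B_2$ forward indefinitely.

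Telescoping the per-transition increment across the $i$ hybrids then yields $\Psi^s(e_i,\ell)\le 2^{2-s}\kappa D+i(12B+24\kappa)$, which is the claim. The principal obstacle I expect is the boundary-layer case $\ell=jn^{1/12}$: nodes there can have newly faulty direct predecessors in the last layer of slice $j$, so the cheaper propagation via Lemma~\ref{lem:drift_fault} must be replaced by Lemma~\ref{lem:faulty_layer}, and one has to verify that the distance exactly equal to $n^{1/12}$ is still covered by the $k$-faulty bound with $k\le 2$ (so that the recursion terminates at $B_2$ and not some larger $B_{f}$). A secondary subtlety is that the $B$ in the recursion should be read as an upper bound valid in every hybrid $e_j$; since the lemma's hypothesis supplies $\localskew_\ell\le B$ uniformly and the shifts between consecutive hybrids are themselves bounded by $B_2$, one checks inductively that no intermediate local skew can exceed $B$ by more than a constant multiple of $B_2$, which is absorbed by the same $B$ after adjusting constants.
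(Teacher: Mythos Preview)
Your approach is essentially the same as the paper's: start from the fault-free bound $\Psi^s(\ell)\le 2^{2-s}\kappa D$ obtained as in the proof of \Cref{thm:local}, then add faults slice by slice and invoke \Cref{lem:fault_accumulation} once per slice to increment the bound by $12B+24\kappa$, telescoping over the $i$ slices. Your explicit hybrid-execution framing and your discussion of the boundary layer $\ell=jn^{1/12}$ and of whether $B$ remains valid in each hybrid are refinements of points the paper leaves implicit, but they do not change the structure of the argument.
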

\begin{proof}
Assume first that there are no faults.
In this case, analogously to the proof of \Cref{thm:local}, we get that $\Psi^s(\ell)\le 2^{2-s}\kappa D$ for all $\ell \in \N$.
Now we ``add'' faults inductively slice by slice, by \Cref{lem:fault_accumulation} each time increasing the bound on $\Psi^s(\ell)$ by $12B+24\kappa$ for all slices $j\ge i$.
\end{proof}
For larger values of $s$, $2^{2-s}\kappa D\ll n^{1/12}$, meaning that this naive bound is insufficient to show that $\Psi^s(\ell)$ does not increase much compared to the fault-free setting.
However, we can take things much further by leveraging \Cref{thm:local_stab}.
\begin{lemma}\label{lem:fault_mitigation}
Suppose that
\begin{itemize}
  \item $\localskew_0\le 4\kappa$,
  \item each node is distance-$n^{1/12}$ $k$-faulty for $k\le 2$, and
  \item $\localskew_{\ell}\le B\in o(n^{1/12}\kappa/\log D)$ for all $\ell\in \N$.
\end{itemize}
Then for\footnote{If $D=1$, we assume the upper bound on $s$ to be negative and the claim is vacuously true. Note that we are making an asymptotic statement in $n$ and that $D$ grows with $n$, so this case is actually of no concern here.} $s\in \N_{>0}$, $s\le \log D-\log(B/\kappa)-2\log\log D$, it holds that
\begin{equation*}
\Psi^s(\ell)\le \Psi^s\in (1+o(1))2^{2-s}\kappa D.
\end{equation*}
\end{lemma}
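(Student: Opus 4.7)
The plan is to induct on $s$, tracking $\Psi^s$ as $(1+o(1))2^{2-s}\kappa D$. For the base case $s=1$, I would combine \Cref{cor:psi_1}, which gives the fault-free bound $\Psi^1(\ell)\le 2\kappa D$ (using $\localskew_0\le 4\kappa$), with \Cref{lem:fault_mitigation_base}, which quantifies the additional skew as $i(12B+24\kappa)$ in slice $i$. Since there are at most $n^{5/12}$ slices and $B\in o(n^{1/12}\kappa/\log D)$, the total fault contribution is $O(n^{5/12}(B+\kappa))=o(\kappa D)$, so $\Psi^1\le (1+o(1))2\kappa D=(1+o(1))2^{2-1}\kappa D$.

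For the inductive step, suppose $\Psi^{s-1}(\ell)\le \Psi^{s-1}=(1+o(1))2^{3-s}\kappa D$ for all $\ell$. Fix $\bar\ell$, set $K=\lceil\Psi^{s-1}/\kappa\rceil$ and $\underline\ell=\max(\bar\ell-K,0)$. The plan is a hybrid argument comparing the actual execution to the ``fault-free-after-$\underline\ell$'' execution $E^{(0)}$. Since $\Xi^s_{E^{(0)}}(\underline\ell)=\Xi^s(\underline\ell)\le \Psi^{s-1}(\underline\ell)\le \Psi^{s-1}$ and $E^{(0)}$ has no faults past $\underline\ell$, \Cref{thm:psi_bound} with $\bar\ell-\underline\ell=K$ gives $\Psi^s_{E^{(0)}}(\bar\ell)\le \Psi^{s-1}/2+O(\kappa)$. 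I would then chunk $[\underline\ell,\bar\ell]$ into $k=\lceil K/n^{1/12}\rceil$ pieces of size at most $n^{1/12}$ and consider intermediate executions $E^{(j)}$ agreeing with actual up to layer $\underline\ell+j n^{1/12}$ and fault-free afterwards. Each step $E^{(j)}\to E^{(j+1)}$ adds one slice's faults; by \Cref{lem:fault_accumulation} (applicable since the chunk respects the distance-$n^{1/12}$ hypothesis), $\Psi^s$ grows by at most $12B+24\kappa$ at the end of that chunk, and for subsequent downstream fault-free layers \Cref{lem:drift_fault} shows the induced shifts propagate without amplification.

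Summing the slice-wise contributions over $k$ chunks yields an accumulated error at most $k(12B+24\kappa)=12B\Psi^{s-1}/(\kappa n^{1/12})+24\Psi^{s-1}/n^{1/12}$. Under $B\in o(n^{1/12}\kappa/\log D)$, the first term is $o(\Psi^{s-1}/\log D)$ and the second is $o(\Psi^{s-1})$ as $n^{1/12}\to\infty$; both are $o(\Psi^{s-1})$. Hence $\Psi^s(\bar\ell)\le \Psi^{s-1}/2+o(\Psi^{s-1})=(1+o(1))2^{2-s}\kappa D$. The upper bound $s\le\log D-\log(B/\kappa)-2\log\log D$ is exactly what guarantees $k(12B+24\kappa)$ remains $o(\Psi^{s-1})$ throughout the induction.

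The hard part will be formalizing the hybrid iteration. \Cref{lem:fault_accumulation} directly bounds $\Psi^s$ only for layers at distance at most $n^{1/12}$ from the start of the comparison, while we need the bound at $\bar\ell$, potentially many chunks beyond. The argument must therefore carefully track that the individual pulse-time shifts introduced in each slice (bounded by $B_2=6B+12\kappa$ per node with a faulty ancestor, so $\le 12B+24\kappa$ at the level of $\Psi^s$) persist but do not amplify as they propagate through subsequent fault-free chunks, giving an additive rather than multiplicative accumulation across the telescoping $E^{(0)}\to E^{(1)}\to\cdots\to E^{(k)}$. I would also cover the small-$\bar\ell$ case ($\bar\ell<K$) separately, using $\Xi^s(0)=0$ for $s\ge 2$ (which holds since $\localskew_0\le 4\kappa$ and $(4s-2)\kappa\ge 6\kappa$) together with \Cref{lem:fault_mitigation_base}.
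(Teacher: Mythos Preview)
Your approach is essentially the paper's: same base case via \Cref{lem:fault_mitigation_base}, same inductive step via ``remove faults past $\underline{\ell}$, apply the stabilization bound, reinsert faults slice by slice.'' The paper uses \Cref{lem:psi_s} where you invoke \Cref{thm:psi_bound} directly, but that is the same thing.

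There is, however, one genuine gap in your reasoning. You conclude each inductive step with $\Psi^s \le \Psi^{s-1}/2 + o(\Psi^{s-1})$ and then write $(1+o(1))2^{2-s}\kappa D$ as if this closes the induction. It does not: a relative error of merely $o(1)$ per step, compounded over up to $\log D$ steps, can blow up by a factor of $(1+o(1))^{\log D}$, which is not $1+o(1)$. What you actually need---and what your own estimates already give, though you understate them---is that the per-step error is $o(\Psi^{s-1}/\log D)$. Your first term $12B\Psi^{s-1}/(\kappa n^{1/12})$ is $o(\Psi^{s-1}/\log D)$ by the hypothesis on $B$; your second term $24\Psi^{s-1}/n^{1/12}$ is also $o(\Psi^{s-1}/\log D)$ because $n^{1/12}$ grows polynomially while $\log D \in \Theta(\log n)$; and the ``$+1$'' from the ceiling that you dropped contributes $O(B+\kappa)$, which the upper bound on $s$ ensures is $o(\Psi^{s-1}/\log D)$ as well (this is in fact the main role of that upper bound, not merely keeping the error $o(\Psi^{s-1})$). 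The paper makes this accumulation explicit by fixing $\varepsilon\in o(1)$ once and proving the quantitative hypothesis $\Psi^s \le 2^{2-s}\kappa D\,(1+\varepsilon s/\log D)$, so that after $s\le \log D$ steps the factor is still $1+\varepsilon = 1+o(1)$. You should do the same.
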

\begin{proof}
Note that $D\in \Theta(n^{1/2})$ and hence $\log \log D\in \omega(1)$.
Accordingly, the prerequisites of the lemma ensure that $n^{5/12}(B+\kappa)\in o(\kappa D/\log D)$ and $B+\kappa \in o(\Psi^{s-1}/\log D)$.
Hence, we may fix a suitable $\varepsilon \in o(1)$ such that
\begin{align*}
n^{5/12}(12B+24\kappa)&\le \frac{\varepsilon}{\log D}\cdot 2\kappa D\mbox{ and}\\
\left(\left\lceil\frac{\Psi^{s-1}}{n^{5/12}\kappa}\right\rceil+1\right)(12B+24\kappa)&\le \frac{\varepsilon}{4\log D}\cdot \Psi^{s-1}.
\end{align*}
We claim that if $n$ is sufficiently large such that $\varepsilon\le 1$, we have that
\begin{equation*}
\Psi^s(\ell)\le \Psi^s:=2^{2-s}\kappa D \cdot \left(1+\frac{\varepsilon s}{\log D}\right),
\end{equation*}
which we show by induction on $s\in \N_{>0}$.

For the base case of $s=1$, note that there are no more than $n^{5/12}$ slices, yielding by \Cref{lem:fault_mitigation_base} that
\begin{equation*}
\Psi^1(\ell)\le 2\kappa D+n^{5/12}(12B+24\kappa)\le \left(1+\frac{\varepsilon}{\log D}\right)2\kappa D,
\end{equation*}
i.e., indeed $\Psi^1(\ell)\le \Psi^1$.

Now assume that the claim holds for $s-1\in \N_{>0}$.
Then, by \Cref{lem:fault_mitigation_base} and the induction hypothesis, for layers $\ell$ in slices $i\le \lceil(\Psi^{s-1}/(n^{1/12}\kappa)\rceil$, we have that
\begin{equation*}
\Psi^s(\ell)\le 2^{2-s}\kappa D+\left\lceil\frac{\Psi^{s-1}}{n^{1/12}\kappa}\right\rceil(12B+24\kappa)<\frac{\Psi^{s-1}}{2}+\left(\left\lceil\frac{\Psi^{s-1}}{n^{1/12}\kappa}\right\rceil+1\right)(12B+24\kappa).
\end{equation*}
For a layer $\ell$ in a slice $i>\lceil(\Psi^{s-1}/(n^{1/12}\kappa)\rceil$, assume first that we add only faults in slices $j<i-\lceil(\Psi^{s-1}/(n^{1/12}\kappa)\rceil$.
Hence, we can apply \Cref{lem:psi_s}, shifting layer indices such that ``layer $0$'' is the first layer of slice $i-\lceil(\Psi^{s-1}/(n^{1/12}\kappa)\rceil$.
In this setting, we thus have that $\Psi^s(\ell)\le \frac{\Psi^{s-1}}{2}$.
We now apply \Cref{lem:fault_accumulation} inductively to slices $j\in [i-\lceil(\Psi^{s-1}/(n^{1/12}\kappa)\rceil,i]$, adding in total $(\lceil\Psi^{s-1}/(n^{1/12}\kappa)\rceil+1)(12B+24\kappa)$ to the bound, i.e.,
\begin{align*}
\Psi^s(\ell)&\le \frac{\Psi^{s-1}}{2}+\left(\left\lceil\frac{\Psi^{s-1}}{n^{1/12}\kappa}\right\rceil+1\right)(12B+24\kappa)\\
&\le \left(\frac{1}{2}+\left(\frac{\varepsilon}{4\log D}\right)\right)\Psi^{s-1}\\
&=\left(\frac{1}{2}+\left(\frac{\varepsilon}{4\log D}\right)\right)2^{2-(s-1)}\kappa D \cdot \left(1+\frac{\varepsilon (s-1)}{\log D}\right)\\
&=2^{2-s}\kappa D \cdot \left(1+\frac{\varepsilon (s-1/2)}{\log D}+\frac{\varepsilon^2}{2\log^2 D}\right)\\
&\le 2^{2-s}\kappa D \cdot \left(1+\frac{\varepsilon s}{\log D}\right),
\end{align*}
where the last step assumes that $n$ is large enough so that $\varepsilon \le 1$.
\end{proof}
Our goal is to bound $\Psi^{\lfloor \log D\rfloor}$ by $O(\kappa \log D)$, since by \Cref{obs:skew} this implies a bound of $O(\kappa \log D)$ on the local skew.
Thus, we will use the above lemma with $B\in O(\kappa \log D)$, which gets us within $O(\log \log D)$ levels of our ``target'' level $\lfloor \log D\rfloor$.
To bridge this remaining gap, we exploit that the time required for stabilizing the remaining $O(\log \log D)$ levels after a fault-induced increase of skews takes only $\log^{O(1)}D = \log^{O(1)}n \subset o(n^{1/12})$ layers, since the involved potentials are bounded by $o(\kappa n^{1/12})$.

\begin{lemma}\label{lem:local_faults}
Suppose that
\begin{itemize}
  \item $\localskew_0\le 4\kappa$ and
  \item each node is distance-$n^{1/12}$ $k$-faulty for $k\le 2$.
\end{itemize}
Then $\localskew_{\ell}\in O(\kappa \log D)$.
\end{lemma}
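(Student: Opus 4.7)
The plan is to apply \Cref{lem:fault_mitigation} to bound $\Psi^s(\ell)$ for $s$ close to $\log D$, extend the halving recurrence by a few more levels to reach $s=\lfloor\log D\rfloor$, and finally invoke \Cref{obs:skew} to translate the $\Psi^s$ bound into the claimed $O(\kappa\log D)$ bound on $\localskew_\ell$.

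First, I would bootstrap a preliminary uniform bound $\localskew_\ell\le B$ with $B\in o(n^{1/12}\kappa/\log D)$. Because each node is distance-$n^{1/12}$ $2$-faulty, any given node's pulse time is influenced by at most two faulty ancestors within its distance-$(3n^{1/12})$ cone. Localizing the reasoning of \Cref{thm:fault_worst_case} to this cone (so that the exponent $5^f$ in its conclusion is capped at $5^2$), combined with $\localskew_0\le 4\kappa$, yields $B\in O(\kappa\log D)$. Since $\log^2 D\in o(n^{1/12})$, this $B$ satisfies the hypothesis of \Cref{lem:fault_mitigation}.

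Next, \Cref{lem:fault_mitigation} gives $\Psi^s(\ell)\le (1+o(1))2^{2-s}\kappa D$ for all $s\le s^*:=\log D-\log(B/\kappa)-2\log\log D=\log D-O(\log\log D)$, which at $s=s^*$ yields $\Psi^{s^*}(\ell)\in O(\kappa\log^{O(1)}D)$. I would then extend the induction past $s^*$ for the remaining $O(\log\log D)$ levels: the recurrence $\Psi^s(\ell)\le \Psi^{s-1}/2+O(B+\kappa)$ continues to hold via \Cref{lem:fault_accumulation} together with \Cref{lem:psi_s}-style halving, because the relevant stabilization time $\Psi^{s-1}/\kappa\in O(\log^{O(1)}D)$ fits comfortably within a single slice and therefore only a constant-factor fault contribution accrues per step. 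After $O(\log\log D)$ further iterations, halving drives $\Psi^s$ down to its fault-induced fixed point $O(B)=O(\kappa\log D)$, so $\Psi^{\lfloor\log D\rfloor}(\ell)\in O(\kappa\log D)$.

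Applying \Cref{obs:skew} then yields $\localskew_\ell\le \Psi^{\lfloor\log D\rfloor}(\ell)+4\lfloor\log D\rfloor\kappa\in O(\kappa\log D)$, as claimed. The main obstacle I anticipate is the bootstrap: a naive application of \Cref{lem:fault_mitigation_base} does not close the loop since the slice count $n^{5/12}$ blows up the accumulated fault contribution, so one must genuinely exploit the sparsity of faults within each ancestor cone (rather than their total number in the system) to justify a preliminary bound of order $\kappa\log D$ before the full strength of \Cref{lem:fault_mitigation} becomes available.
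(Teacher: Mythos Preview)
Your overall strategy—use \Cref{lem:fault_mitigation} to control $\Psi^s$ up to $s\approx \log D-O(\log\log D)$, then bridge the remaining $O(\log\log D)$ levels within a window of $\operatorname{polylog} D$ layers, and conclude via \Cref{obs:skew}—is exactly the paper's approach. The problem is your bootstrap.

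You propose to obtain $\localskew_\ell\le B\in O(\kappa\log D)$ by ``localizing'' \Cref{thm:fault_worst_case} to the distance-$3n^{1/12}$ cone above each node and capping $f$ at $2$. This is circular. The cone above $(v,\ell)$ only reaches layer $0$ when $\ell\le 3n^{1/12}$; for deeper layers the cone's base sits at layer $\ell-3n^{1/12}$, and the reasoning of \Cref{thm:fault_worst_case} starts from a skew bound at that base layer, not from $\localskew_0$. Faults below the base (i.e., more than $3n^{1/12}$ layers back) still influence $t_{v,\ell}$, and arguing that their influence has decayed requires self-stabilization, which in turn needs the very bound $\localskew_{\ell'}\le B$ you are trying to establish. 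So as stated, your bootstrap does not close.

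The paper cuts this loop with a minimal-counterexample argument: assume the conclusion fails and let $\bar{\ell}$ be the smallest layer where $\localskew_{\bar{\ell}}>C\kappa\log D$. Then $\localskew_\ell\le C\kappa\log D$ holds for all $\ell<\bar{\ell}$ for free, which is precisely the hypothesis \Cref{lem:fault_mitigation} needs (and since layer-$\bar{\ell}$ pulses depend only on layer $\bar{\ell}-1$, the resulting $\Psi^s$ bound extends to $\bar{\ell}$). From there your steps~2--4 go through essentially as you describe—the paper uses \Cref{thm:local_stab} over the last $\lfloor\log^3 D\rfloor$ layers plus an induction on $k\in\{0,1,2\}$ via \Cref{lem:faulty_layer} and \Cref{lem:drift_fault} to handle the at most two nearby faults—and one derives $\localskew_{\bar{\ell}}\in O(\kappa\log D)$, contradicting the choice of $\bar{\ell}$ once $C$ is large enough. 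Replacing your direct bootstrap with this contradiction device fixes the gap.
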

\begin{proof}
Assume towards a contradiction that the claim is false, and let $\bar{\ell}\in \N_{>0}$ be minimal such that $\localskew_{\ell}$ is too large.
Hence, for layers $\ell<\bar{\ell}$, we may assume that $\localskew_{\ell}\le C\kappa \log D$ for a sufficiently large constant $C$.
 
Consider $s= \lfloor\log D-\log(B/\kappa)-2\log\log D-\log C\rfloor-5$.
By \Cref{lem:fault_mitigation}, for all $\ell\in \N$, $\ell<\bar{\ell}$ it holds that
\begin{equation*}
\Psi^s(\ell)\in \Psi^s := (1+o(1))2^{2-s}\kappa D \subseteq \left(\frac{1}{4}+o(1)\right)\log^3 D,
\end{equation*}
which for sufficiently large $n$ is smaller than $\lfloor\log^3 D\rfloor/2$.
In fact, this bound also applies to layer $\bar{\ell}$, since the pulsing times of nodes on layer $\bar{\ell}$ depend only on the behavior of nodes on layer $\bar{\ell}-1$ and the delays of messages sent to nodes on layer $\bar{\ell}$.

Now assume that $n$ is sufficiently large.
This ensures that $\log^3 D \le n^{1/12}$, implying by the prerequisites of the lemma that each node is distance-$(\log^3 D)$ $k$-faulty for $k\le 2$.
Consider adjacent correct nodes $(v,\ell),(w,\ell)\in V_{\ell}\setminus F$ for any $\ell\in \N$, $\ell\le \bar{\ell}$, and $\{v,w\}\in E$.
We first show that distance-$(\log^3 D)$ $0$-faulty nodes satisfy that
\begin{equation}
t_{v,\ell}-t_{w,\ell}\in (4+o(1))\kappa(2+\log D)\subset O(\kappa \log D).\label{eq:skew_faulty}
\end{equation}
Since faults that are not among the ancestry of a node cannot affect its pulse time, this follows by applying \Cref{thm:local_stab} with $\underline{\ell}=\ell-\lfloor(\log^3 D)\rfloor\le \ell - 2\Psi^s$ and $s':=\lfloor\log D\rfloor$.

To extend this to distance-$(\log^3 D)$ $k$-faulty nodes for $k\in \{1,2\}$, we show by induction on $k\in \{0,1,2\}$ that such nodes have their pulse time shifted by no more than $O(\kappa \log D)$ relative to an execution in which they are distance-$(\log^3 D)$ $0$-faulty.
The base case of $k=0$ is trivial.

To perform the step from $k-1\in \{0,1\}$ to $k$, assume towards a contradiction that there is a node $(v,\ell)$ with a larger shift, on some minimal layer.
Now consider a distance-$(\log^3 D)$ $k$-faulty node $(v,\ell)\in V_{\ell}\setminus F$, $\ell\le \bar{\ell}$, whose predecessors are all correct.
There must be a distance-$(\log^3 D)$ ancestor of $(v,\ell)$ that is faulty, since otherwise $(v,\ell)$ would be distance-$(\log^3 D)$ $0$-faulty.
Let $d$ be the minimal distance in which there is a faulty ancestor of $(v,\ell)$.
Then all ancestors of $(v,\ell)$ in distance $d$ are distance-$(\log^3 D)$ $k'$-faulty for $k'<k$, as otherwise $(v,\ell)$ would be $k'$-faulty for some $k'>k$.

Consider an ancestor of $(v,\ell)$ in distance $d-1$.
If its predecessors are all correct, by the induction hypothesis and \Cref{lem:drift_fault} their pulse time is shifted by $O(\kappa \log D)$ relative to an execution in which they are distance distance-$(\log^3 D)$ $0$-faulty.
If there is a faulty predecessor, we infer this from the induction hypothesis, \Cref{eq:skew_faulty}, and \Cref{lem:faulty_layer}.\footnote{Here the constants in the $O$-notation change, while \Cref{lem:drift_fault} maintains the bound used in its prerequisites. Since we perform only two inductive steps, we do not need to keep track of how much the constants increase.}
If $d>1$, we now inductively apply \Cref{lem:drift_fault} until having extended this bound to all ancestors of $(v,\ell)$ within distance $d-1$ and finally $(v,\ell)$ itself.
This is a contradiction to $(v,\ell)$ violating the claimed bound on the shift.

We conclude that indeed shifts are bounded by $O(\kappa \log D)$.
From this and \Cref{eq:skew_faulty}, it immediately follows that $\localskew_{\bar{\ell}}\in O(\kappa \log D)$.
As $C$ is sufficiently large, for sufficiently large $n$ this is a contradiction.
We conclude that $\localskew_{\ell}\in O(\kappa \log D)$ for all $\ell\in \N$, as claimed.
\end{proof}
Putting these results together, we arrive the desired bound on the local skew.

\thmlocalfaults*
\begin{proof}
By \Cref{cor:layer0}, with probability $1-o(1)$ it holds that $\localskew_0\le \kappa/2$.
By \Cref{obs:k_faulty}, with probability $1-o(1)$ each node is distance-$n^{1/12}$ $k$-faulty for $k\le 2$.
By a union bound, both events occur concurrently with probability $1-o(1)$.
Hence, the claim follows by applying \Cref{lem:local_faults}.
\end{proof}

%###
\subsection{Obtaining the Final Skew Bounds}\label{app:final}
%###
Recall that our model assumes that message delays and clock speeds do not vary. 
If the behavior of faulty nodes is static, i.e., the timing of their output pulse messages is identical in each pulse as well, a stable input frequency of $1/\Lambda$ results in repeating the exact same message pattern with the same timing every $1/\Lambda$ time.
We can exploit this to bound $\localskew_{\ell,\ell+1}$ in terms of $\localskew_{\ell}$.
\thmbound*
\begin{proof}
By \Cref{cor:fault}, for correct $(v,\ell+1)\in V_{\ell+1}$, $\ell\in \N$,
\begin{equation*}
\min_{\substack{((w,\ell),(v,\ell+1))\in E_{\ell}\\ (w,\ell)\notin F}}\{t_{w,\ell}^k\}+\Lambda-2\kappa \le t_{v,\ell+1}^k \le \max_{\substack{((w,\ell),(v,\ell))\in E_{\ell}\\ (w,\ell)\notin F}}\{t_{w,\ell}^k\}+\Lambda+2\kappa.
\end{equation*}
Because the behavior of fault nodes does not change between pulses, a simple induction shows that $t_{x,\ell'}^{k+1}=t_{x,\ell'}^k+\Lambda$ for all correct nodes $(x,\ell')\in V_{\ell'}$, $\ell'\in \N$.
In particular,
\begin{equation*}
\min_{\substack{((w,\ell),(v,\ell+1))\in E_{\ell}\\ (w,\ell-1)\notin F}}\{t_{w,\ell}^{k+1}\}-2\kappa \le t_{v,\ell+1}^k \le \max_{\substack{((w,\ell),(v,\ell+1))\in E_{\ell}\\ (w,\ell)\notin F}}\{t_{w,\ell}^{k+1}\}+2\kappa.
\end{equation*}

By \Cref{thm:local_faults}, $\localskew_{\ell}\in O(\kappa \log D)$.
Note that this bound applies uniformly over all executions.
Thus, even if $(v,\ell)$ is faulty, using that its neighbors are within distance $2$ of each other, it holds that
\begin{equation*}
\min_{\substack{((w,\ell),(v,\ell+1))\in E_{\ell}\\ (w,\ell-1)\notin F}}\{t_{w,\ell}^{k+1}\}-\max_{\substack{((w,\ell),(v,\ell+1))\in E_{\ell}\\ (w,\ell)\notin F}}\{t_{w,\ell}^{k+1}\}\in O(\kappa \log D),
\end{equation*}
by virtue of comparing to an execution in which $(v,\ell)$ is correct.
As $(v,\ell+1)$ was an arbitrary correct node, the claim of the theorem follows.
\end{proof}
It remains to argue that \emph{some} variation can be sustained.
\corbound*
\begin{proof}
The maximum length of a directed path in $H$ is bounded by $2\sqrt{n}$: at most $D\le \sqrt{n}$ hops in layer $0$, followed by at most $\sqrt{n}$ links from layer to layer.
Thus, accumulating all changes in timing due to link delay and clock speed variation along a path results in a deviation of $O((u+(\vartheta-1)(\Lambda-d))\log D=O(\kappa \log D)$.
This is trivial for layer~$0$ and applies to pulse propagation through the layers as well, because our respective analysis relies on \Cref{cor:fault} and \Cref{lem:drift_fault}.
In order to take into account a constant number of faulty nodes with arbitrary behavior, we reason analogously to the proof of \Cref{thm:fault_worst_case}, i.e., rely on \Cref{cor:fault} as well.
\end{proof}

\bibliographystyle{alphaurl}
\bibliography{references.bib}

\appendix

\section{Generating Synchronized Inputs}\label{app:layer0}

In this appendix we describe a method for generating well synchronised pulses at layer $0$, at a rate of roughly one pulse per $\Lambda$ time units.
There are several ways of approaching this task, but even when aiming for a fault-tolerant solution, this is an easy problem.
The reason is that we merely need to maintain a small local skew on a line topology, with no alternative propagation paths to neighboring nodes.

Since our goal is to handle an independent probability of $p\in o(n^{-1/2})$ of node failures, in fact we can simply exploit that at most $\sqrt{n}$ nodes are required on layer~$0$.
We provide a trivial scheme that is suitable for our specific setting of the base graph $G$ being a line (with replicated endpoints).
\begin{algorithm}[h]
  \caption{Pulse forwarding algorithm for nodes $(i,0)$, $i\in \{1,\ldots,D\}$; node $(0,0)$ is the clock source. The parameter $\Lambda$ is as described in \Cref{alg:Discretised_Gradient_TRIX}.}
  \label{alg:Level0_TRIX}
  \begin{algorithmic}
  \State $H:=\infty$
  \Loop
    \Repeat
    \If{received pulse from $(i-1,0)$}
      \State $H := H_{i,0}(t)$
    \EndIf
    \Until{$H_{i,0}(t)=H+\Lambda-d$}
    \State \textbf{broadcast pulse to $(i+1,0)$ and successors on layer $1$.}
  \EndLoop
  \end{algorithmic}
\end{algorithm}
\begin{lemma}\label{lem:layer0}
For $k\in \N$, assume that the clock source at node $(0,0)$ generates its $k$-th pulse at time $(k-1)\Lambda$.
If all nodes on layer~$0$ are correct, the scheme given in the above algorithm generates pulses with local skew $\localskew_0\le \kappa/2$ and $t_{i,0}^k\in [(k+i-1)\Lambda-i\kappa/2,(k+i-1)\Lambda]$. Moreover, it stabilizes after transient faults within time $D\Lambda$.
\end{lemma}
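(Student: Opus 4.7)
The plan is to establish the timing bounds by induction along the chain, then derive the local-skew bound from the tight per-hop bound combined with the fact that consecutive pulses at any fixed node are exactly $\Lambda$ apart, and finally handle stabilization by observing that corrupted state is flushed out at a rate of one node per $\Lambda$.

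For the timing bounds, I would first show the per-hop inclusion $t_{i,0}^k - t_{i-1,0}^k \in [\Lambda - \kappa/2, \Lambda]$ directly from the algorithm: the $k$-th pulse from $(i-1,0)$ is received at $(i,0)$ at some real time in $[t_{i-1,0}^k + d - u,\, t_{i-1,0}^k + d]$, and from reception the algorithm waits until $H_{i,0}$ advances by exactly $\Lambda - d$, which corresponds to real elapsed time in $[(\Lambda - d)/\vartheta,\, \Lambda - d]$ by the hardware-clock rate bounds. Summing the two intervals and applying the definition $\kappa = 2(u + (1 - 1/\vartheta)(\Lambda - d))$ gives the per-hop bound. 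A straightforward induction on $i$, anchored at $t_{0,0}^k = (k-1)\Lambda$, then yields the claimed interval for $t_{i,0}^k$.

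For the local skew, the crucial observation is that message delays and hardware clock rates are static by the model, so inspection of the algorithm gives $t_{i,0}^{k+1} - t_{i,0}^k = t_{i-1,0}^{k+1} - t_{i-1,0}^k$, and induction on $i$ extends $t_{0,0}^{k+1} - t_{0,0}^k = \Lambda$ to all $i$. Combined with the per-hop bound, this yields $|t_{i,0}^k - t_{i+1,0}^{k-1}| \le \kappa/2$, which is $\localskew_0 \le \kappa/2$ under the natural one-per-hop pipeline reindexing (the within-layer analogue of the layer-dependent shift already discussed in the model). For self-stabilization, I would argue that the chain is strictly feed-forward and rooted at the correct clock source: each node's variable $H$ is unconditionally overwritten the next time a pulse arrives from its predecessor, so within $\Lambda$ real time after $(i-1,0)$ has stabilized, $(i,0)$ has too. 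A cascading induction on $i$ gives stabilization of the whole chain within $D\Lambda$ time from the last fault.

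The main subtlety is reconciling $\localskew_0 \le \kappa/2$ with the literal definition of $\localskew_\ell$ in the model section: the pipeline induces a $\Lambda$ phase offset between adjacent layer-$0$ nodes for the same pulse index, so naively $|t_{i,0}^k - t_{i+1,0}^k| \approx \Lambda$. The $\kappa/2$ bound only holds after aligning pulse indices between adjacent nodes with a shift of $1$, and one must flag this convention explicitly and verify consistency with how $\localskew_0$ is subsequently used (e.g.\ in \Cref{cor:psi_1}, where the telescoping argument $(\localskew_0 - 2\kappa)d(v,w)$ implicitly assumes the shifted comparison).
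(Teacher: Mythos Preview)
Your approach matches the paper's: derive the per-hop inclusion $t_{i,0}^k - t_{i-1,0}^k \in [\Lambda-\kappa/2,\Lambda]$ from the delay and drift bounds, chain it by induction on $i$, use the static-delay/rate assumption to get exact period $\Lambda$ at every node (and to rule out the next pulse arriving before the current broadcast), and argue stabilization via the feed-forward overwrite of $H$. Your flag on the $\localskew_0$ index-shift convention is well taken; the paper leaves this implicit.

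There is one small gap in your stabilization argument that the paper does address. You say $H$ is overwritten the next time a pulse arrives from the predecessor, but a \emph{spurious} message still in flight when the predecessor stabilizes could arrive \emph{after} the first correct message and overwrite $H$ once more before the broadcast fires. The paper's fix is to observe that any in-flight message still has end-to-end delay at most $d$, so this late spurious arrival can be re-interpreted as the first correct message arriving with a (still legal) larger delay; one then checks that the gap to the second correct reception shrinks by at most $u$ and hence remains at least $\Lambda-u\ge \Lambda-d$, so the inner loop still terminates before the second correct message arrives. Without this step your cascading induction does not go through as stated.
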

\begin{proof}
Consider first the case that there are no transient faults.
We prove the statement by induction on $i\in \N$, where the base case is covered by the assumptions on node $0$.

For the step from $i-1\in \N$ to $i$, we perform an induction over the pulse number $k\in \N_{>0}$.
The induction hypothesis is that pulses $1,\ldots,k-1$ have been generated in accordance with the claim of the lemma and the first $k-1$ loop iterations at node $i$ have been completed by the time the $k$-th pulse message from node $i-1$ arrives.
Note that we can use $k=0$ as base case for this induction, for which the claim is vacuously true.
For the step from $k-1\in \N$ to $k$, denote by $t'_{i-1,k}\in [t_{i-1,k}+d-u,t_{i-1,k}+d]$ the reception time of the pulse message from node $(0,i-1)$ at node $(0,i)$.
By the bounds on hardware clock rates, \Cref{eq:kappa}, and the induction hypothesis of the induction on $i$, node $(0,i)$ generates its $k$-th pulse at time
\begin{align*}
t_{i,k}&\in \left[t_{i-1,k}+d-u+\frac{\Lambda-d}{\vartheta},t_{i-1,k}+\Lambda\right]\\
&\subseteq \left[t_{i-1,k}+\Lambda-\frac{\kappa}{2},t_{i-1,k}+\Lambda\right]\\
&\subseteq \left[(k+i-1)\Lambda-\frac{i\kappa}{2},(k+i-1)\Lambda\right],
\end{align*}
unless it receives another pulse message from $(i-1,0)$ before doing so.
This, however, is not the case, since we assume that message delays and hardware clock rates do not vary over time, entailing that these reception times lie $\Lambda$ time apart.\footnote{Note that a separation of $\Lambda-d$ time would suffice. The slack of $d$ means that small changes in timing between pulses are unproblematic, which we exploit in \Cref{cor:bound}.}

It remains to show the claimed bound on stabilization time.
To this end, observe that the only state information that nodes maintain is $H$.
On reception of a pulse message, this state is overwritten.
This will remove spurious state from the system.

We would like to argue that the above induction can therefore be performed as-is, meaning that the system has stabilized by the time each node has generated its first pulse.
However, there is a subtlety: it could happen that a spurious message that is still in transit at time $0$ overwrites the state of node $(1,0)$ \emph{after} it received the first message from $(0,0)$.
Node $(1,0)$ then behaves as if the first message of $(0,0)$ arrived later, at the exact same time as the spurious message.
Because also such a spurious message is delivered within at most $d$ time, we can re-interpret this as a longer delay of still at most $d$ of the first message sent by node $(0,0)$.
Note that this modification reduces the difference between the reception times of the first and second pulse from node $(0,0)$ at node $(1,0)$ by up to $u$, but the separation remains at least $\Lambda-u\ge \Lambda-d$, i.e., the second message is not received before $(1,0)$ generates its first pulse.
We can apply the same scheme to nodes $2,\ldots,D$, resulting in the desired bound on the stabilization time.
\end{proof}

\begin{corollary}\label{cor:layer0}
$\localskew_0\le \kappa/2$ with probability $1-o(1)$.
It is self-stabilizing with stabilization time $\Lambda D$.
\end{corollary}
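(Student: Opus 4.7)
The plan is to reduce the corollary directly to Lemma \ref{lem:layer0} via a simple union bound, exploiting that layer~$0$ contains only $O(\sqrt{n})$ nodes.

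First I would observe that layer~$0$ consists of the $D+1$ nodes $(0,0),(1,0),\ldots,(D,0)$ (the source together with the propagation line). Since the overall grid $V_G$ has $n$ nodes arranged in roughly $D\in \Theta(\sqrt{n})$ layers, we have $D+1\in O(\sqrt{n})$. Each individual node fails independently with probability $p\in o(n^{-1/2})$, so by a union bound over layer~$0$,
\begin{equation*}
\Pr[\exists\, i\in \{0,1,\ldots,D\}\colon (i,0)\in F]\le (D+1)\cdot p \in O(\sqrt{n})\cdot o(n^{-1/2})=o(1).
\end{equation*}

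Conditioning on the complementary event, which occurs with probability $1-o(1)$, every node on layer~$0$ is correct, so the hypotheses of Lemma \ref{lem:layer0} are met. Applying that lemma then yields $\localskew_0\le \kappa/2$ and stabilization within time $D\Lambda$, proving both claims of the corollary.

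The only subtle point is making sure nothing in the random fault model violates the assumption that node $(0,0)$, the clock source, ticks at the prescribed times $(k-1)\Lambda$; this is handled by the same union bound, since the source is one of the $O(\sqrt{n})$ nodes considered. There is essentially no technical obstacle here — the work has already been done in Lemma \ref{lem:layer0}, and the corollary is just the observation that the boundary (line-shaped) layer is small enough that a non-fault-tolerant scheme suffices under the sparse random-fault regime assumed throughout the paper.
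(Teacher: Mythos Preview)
Your proposal is correct and matches the paper's intended argument: the paper states the corollary without a formal proof, but the surrounding text (``we can simply exploit that at most $\sqrt{n}$ nodes are required on layer~$0$'' and the remark that ``so long as $|V|$ is small enough such that faults on layer $0$ occur with probability $o(1)$'') makes clear that a union bound over the $O(\sqrt{n})$ layer-$0$ nodes followed by an application of Lemma~\ref{lem:layer0} is exactly what is meant.
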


We remark that for a general base graph $G$, ensuring a small local skew is non-trivial.
However, so long as $|V|$ is small enough such that faults on layer $0$ occur with probability $o(1)$, one is free to fall back on a non-fault-tolerant GCS algorithm.
This achieves $\localskew_0\in O(\kappa \log D)$, which does not increase the asymptotic local skew bound of the pulse forwarding scheme.

\section{Full Pulse Forwarding Algorithm}\label{app:algo}

\begin{algorithm}[H]
  \caption{Discrete GCS at node $(v,\ell)$, $\ell>0$. The parameters $\Lambda$, and $\kappa$ will be determined later, based on the analysis.}
  %\Description{Discrete GCS at node $(v,\ell)$, $\ell>0$. The parameters $\Lambda$, and $\kappa$ will be determined later, based on the analysis.}
  \label{alg:Discretised_Gradient_TRIX}
  \begin{algorithmic}
    \Loop
      \State $H_{\min},H_{\own},H_{\max} := \infty$
      \For{$\{v,w\}\in E$}
        \State $r_w:=0$
      \EndFor
      \Repeat
        \If{received pulse from $v_{\ell-1}$ and $H_{\own}=\infty$}
          \State $H_{\own} := H_{v,\ell}(t)$
        \EndIf
        \If{for some $\{v,w\}\in E$ received pulse from $(w,\ell-1)$ and $r_w=0$}
          \If{$r_{w'}=0$ for all $\{v,w'\}\in E$}
            \State $H_{\min}:=H_{v,\ell}(t)$
          \EndIf
          \State $r_w:=1$
          \If{$r_{w'}=1$ for all $\{v,w'\}\in E$}
            \State $H_{\max}:=H_{v,\ell}(t)$
          \EndIf
        \EndIf
      \Until{$H_{\min}<\infty$ and $H_{v,\ell}(t)\ge \min\{H_{\max}+\kappa/2+\vartheta\kappa,2H_{\own}-H_{\min}+2\kappa)\}$}
      \If{$H_{v,\ell}(t)= H_{\max}+\kappa/2+\vartheta \kappa$}
        \State wait until $H_{v,\ell}(t)=H_{\max}+3\kappa/2+\Lambda-d$
      \Else
        \State $\Cor_{v,\ell}:=\min_{s\in \N}\{\max\{H_{\own}-H_{\max}+4s\kappa,H_{\own}-H_{\min}-4 s\kappa \}\}-\kappa/2$
        \If{$\Cor_{v,\ell} < 0$}
          \State $\Cor_{v,\ell} := \min\set{H_{\own}-H_{\min} + 3\kappa/2, 0}$
        \ElsIf{$\Cor_{v,\ell} > \vartheta\kappa$}
          \State $\Cor_{v,\ell} := \max\set{H_{\own}-H_{\max} - 3\kappa/2,\vartheta\kappa}$
        \EndIf
        \State wait until $H_{v,\ell}(t) = H_{\own} + \Lambda - d - \Cor_{v,\ell}$
      \EndIf
      \State \textbf{broadcast pulse}
    \EndLoop
  \end{algorithmic}
\end{algorithm}

A basic requirement for the algorithm to work correctly is that $(v,\ell)$ receives the $k$-th pulses of all correct predecessors within its $k$-th iteration of the main loop of \Cref{alg:Discretised_Gradient_TRIX}.
\begin{lemma}\label{lem:correct_pulsing}
For all $k\in \N$ and $(v,\ell)\in V_{\ell}$, $\ell>0$, node $(v,\ell)$ receives the $k$-th pulses of all correct predecessors within its $k$-th iteration of the main loop of \Cref{alg:Discretised_Gradient_TRIX}.
\end{lemma}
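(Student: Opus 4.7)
The plan is joint induction on $\ell$ and $k$. The base case $\ell=0$ is handled by \Cref{lem:layer0}, which characterises layer-$0$ pulse times deterministically as a function of the source. For $\ell>0$ I fix the layer, assume the statement for layer $\ell-1$, and leverage the bound $\localskew_{\ell-1}\ll \Lambda-d$ supplied by the skew analysis via \Cref{eq:Lambda}; this makes the timing machinery of \Cref{sec:basic} available for the predecessors' $k$-th pulses.

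Within layer $\ell$ I would induct on $k\in \N_{>0}$, proving the slightly stronger invariant that iteration $k$ of $(v,\ell)$'s main loop begins at a local time $H^k_s$ and terminates at $H^k_e$ such that (a) every correct predecessor's $k$-th pulse arrives during $[H^k_s,H^k_e]$, and (b) no $(k+1)$-th pulse from any correct predecessor has arrived by $H^k_e$. For $k=1$, iteration~$1$ begins before any downstream pulse can reach $(v,\ell)$, since all predecessors must first react to layer $\ell-2$'s first pulse. For $k>1$, the inductive hypothesis on $k-1$ together with the strictly positive inter-pulse gap from \Cref{lem:drift} ensures the $k$-th iteration starts strictly before any $k$-th pulse arrival.

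The core of the argument is to show that the exit condition of the do-until loop triggers inside the window where all correct $k$-th pulses have arrived but no $(k+1)$-th pulse has. The wall-clock reception times of the $k$-th batch differ by at most $\localskew_{\ell-1}+u$, which via the hardware-clock bound becomes at most $\vartheta(\localskew_{\ell-1}+u)$ in local time. The exit condition $H_{v,\ell}(t)\ge \min\{H_{\max}+\kappa/2+\vartheta\kappa,\;2H_{\own}-H_{\min}+2\kappa\}$ is engineered to wait long enough past the earliest reception to absorb the whole batch, while \Cref{lem:drift,lem:corrections} combined with \Cref{eq:Lambda} push the $(k+1)$-th batch at least $d-u+(\Lambda-d)/\vartheta$ wall-clock time away, comfortably more than one iteration.

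The main obstacle will be the case analysis for the fallback branch $2H_{\own}-H_{\min}+2\kappa$, which becomes relevant only when some predecessor's pulse is missing due to a fault. One must consider three sub-cases depending on whether the unique faulty predecessor is $(v,\ell-1)$ itself, a neighbour that would otherwise set $H_{\min}$, or one that would otherwise set $H_{\max}$. In each sub-case, \Cref{lem:estimates} and the bound on $\localskew_{\ell-1}$ pin down the local time at which the fallback fires, and one verifies both that it sits after every correct predecessor's $k$-th pulse has been recorded and that it sits well before any $(k+1)$-th pulse can arrive. The $2\kappa$ slack in the fallback mirrors the $\kappa/2+\vartheta\kappa$ slack in the primary branch, and is precisely what makes this case analysis close.
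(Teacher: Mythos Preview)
Your overall architecture matches the paper: double induction on $\ell$ and $k$, a strengthened invariant saying the $k$-th iteration opens before and closes after the $k$-th batch from correct predecessors, and a case split on the two exit conditions of the inner loop together with the fault sub-cases. Where your plan diverges is in how you control the separation between the $k$-th and the $(k+1)$-th batch, and that step does not go through as written.

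You invoke \Cref{lem:drift} (and \Cref{lem:corrections}) to obtain a ``strictly positive inter-pulse gap'' of $d-u+(\Lambda-d)/\vartheta$. But \Cref{lem:drift} bounds $t_{v,\ell}-t_{v,\ell-1}$, i.e.\ the layer-to-layer offset for a \emph{fixed} pulse index; it says nothing about $t_{w,\ell-1}^{k+1}-t_{w,\ell-1}^{k}$ for a fixed predecessor $w$. Chaining \Cref{lem:drift} back to layer~$0$ only yields an interval for each $t_{w,\ell-1}^{k}$, and the resulting lower bound on the gap degrades with $\ell$ rather than staying near $\Lambda$. The paper closes this differently: it builds into the induction hypothesis the additional clause $t_{v,\ell}^{k}-t_{v,\ell}^{k-1}=\Lambda$ for all correct $(v,\ell)$ and all $k\ge 2$, which follows immediately from the modelling assumption that link delays and hardware clock rates are \emph{static}. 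With that clause, the $(k+1)$-th reception at $(v,\ell)$ sits exactly $\Lambda$ (in both wall-clock and local time) after the $k$-th one, and it then suffices to show $H_{v,\ell}(t_{v,\ell}^{k})\le H+\Lambda$ for the earliest local reception time $H$ of a correct $k$-th message. Your proposal never invokes the staticness assumption, so the ``$(k+1)$-th batch is far enough away'' step is unsupported. A secondary issue is that \Cref{lem:drift}, \Cref{lem:corrections}, and \Cref{lem:estimates} are all proved for \Cref{alg:Simplified_Gradient_TRIX} under the tacit assumption that iterations align with pulses; the paper's proof of \Cref{lem:correct_pulsing} deliberately works directly with \Cref{alg:Discretised_Gradient_TRIX} and only \Cref{eq:Lambda}, \Cref{eq:d}, and the delay/drift bounds, precisely to avoid this circularity.
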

\begin{proof}
  We show by induction on $\ell\in \N_{>0}$ and $k\in \N_{>0}$ that $(v,\ell)$ broadcasts the $k^{th}$ pulse after receiving the $k$-th pulse from all correct $(w,\ell-1)$ satisfying that $((w,\ell-1),(v,\ell))\in E$, but before receiving the $(k+1)$-th pulse from such a node.
Moreover, for all $k\ge 2$, $t_{v,\ell}^k-t_{v,\ell}^{k-1}=\Lambda$.

For the induction on $\ell$, we use $\ell=0$ as base case, requiring only that nodes generate pulses at frequency $1/\Lambda$.
For the step from $\ell-1\in \N$ to $\ell$, we perform the induction on $k$.
Suppose that the claim holds for all $k'<k\in \N_{>0}$ and consider the $k$-th iteration of the outer loop at $(v,\ell)$.
\begin{itemize}
  \item The inner loop terminated because $H_{v,\ell}(t)=H_{\max} +\kappa/2+\vartheta \kappa$. Then a message from each node $(w,\ell-1)$, $\{v,w\}\in E$, has been received in the current loop iteration. By the induction hypotheses for layer $\ell-1$ and pulse $k-1$, respectively, for correct such nodes this is the $k$-th pulse message.
    
  We need to show that the $k$-th message from $(v,\ell-1)$ is received in time; the induction hypothesis guarantees that it is not received too early. As the minimum degree of $G$ is $2$, at least one node $(w,\ell-1)$, $\{v,w\}\in E$, is correct. If $(v,\ell-1)$ is correct, too, it sent its pulse message at the latest at time $t_{w,\ell-1}+\localskew_{\ell-1}$. By the bounds on message delay and clock speed, this message is received at a local time
  \begin{equation*}
    H\le H_{\max}+\vartheta(\localskew_{\ell-1}+u)\le H_{\max}+\Lambda-d<H_{v,\ell}(t_{v,\ell}^k).
  \end{equation*}
  \item The inner loop terminated because $H_{v,\ell}(t)=2H_{\own}-H_{\min}+2\kappa$. As $H_{\min}<\infty$, also $H_{\own}< \infty$. Using that $H_{\min}\le H_{\max}$, we get that
  \begin{align*}
  \Delta&:=\min_{s\in \N}\{\max\{H_{\own}-H_{\max}+4s\kappa,H_{\own}-H_{\min}-4 s\kappa \}\}- \frac{\kappa}{2}\\
  &\le \max\{H_{\own}-H_{\max},H_{\own}-H_{\min}\}- \frac{\kappa}{2}\\
  &\le H_{\own}-H_{\min}-\frac{\kappa}{2}
  \end{align*}
  and hence $\Cor_{v,\ell}\le H_{\own}-H_{\min}+3\kappa/2\le 3\kappa/2$. It follows that
  \begin{equation*}
  H_{v,\ell}(t_{v,\ell}^k)\ge \max\{H_{\min},H_{\own}\}+\Lambda-d-\frac{3\kappa}{2}.
  \end{equation*}
  We distinguish two subcases.
  \begin{itemize}
    \item $(v,\ell-1)$ is correct. Then by the bounds on message delay and clock speed, for each correct $(w,\ell-1)$, $\{v,w\}\in E$, its $k$-th pulse message is received at a local time
    \begin{equation*}
    H\le H_{\own}+\vartheta(\localskew_{\ell-1}+u)\le H_{\own}+\Lambda-d-\frac{3\kappa}{2}<H_{v,\ell}(t_{v,\ell}^k),
    \end{equation*}
    where the last step uses \Cref{eq:Lambda}.
    \item $(v,\ell-1)$ is faulty, implying that all $(w,\ell-1)$, $\{v,w\}\in E$, are correct. Then by the bounds on message delay and clock speed, for each correct $(w,\ell-1)$, $\{v,w\}\in E$, its $k$-th pulse message is received at a local time
    \begin{equation*}
      H\le H_{\min}+\Lambda-d-\frac{3\kappa}{2}<H_{v,\ell}(t_{v,\ell}^k),
    \end{equation*}
    where we use that in order to guarantee that $\Lambda-d \ge \vartheta(2\localskew_{\ell-1}+u)$ (i.e., \Cref{eq:Lambda}), this must also hold in an execution that differs by $(v,\ell-1)$ being correct; in such an execution, we have that
    \begin{align*}
    &\,\max_{\{v,w\}\in E}\{t_{w,\ell-1}\}-\min_{\{v,w\}\in E}\{t_{w,\ell-1}\}\\
    \le&\, \max_{\{v,w\}\in E}\{t_{w,\ell-1}\}-t_{v,\ell-1}+t_{v,\ell-1}-\min_{\{v,w\}\in E}\{t_{w,\ell-1}\}\\
    \le&\; 2\localskew_{\ell-1}.
    \end{align*}
  \end{itemize}
\end{itemize}
Next, we show that $(v,\ell)$ generates its pulse before receiving a $(k+1)$-th pulse message from a correct predecessor.
We distinguish two cases.
\begin{itemize}
  \item $(v,\ell-1)$ is not faulty. Then the earliest local time $H$ at which $(v,\ell)$ has received a $k$-th pulse from a correct predecessor is bounded from below by
  \begin{equation*}
  H\ge H_{\own}-\vartheta(\localskew_{\ell-1}+u).
  \end{equation*}
  As delays and clock speeds do not change, the induction hypothesis implies that the earliest message reception time for a $(k+1)$-th pulse from a correct predecessor is $\Lambda$ time later. Hence, it is sufficient to show that $H_{v,\ell}(t_{v,\ell}^k)\le H+\Lambda$. We distinguish three subcases.
  \begin{itemize}
    \item The inner loop terminated because $H_{v,\ell}(t)=H_{\max} +\kappa/2+\vartheta \kappa$ and at local time $H_{\min}$ a message from a correct predecessor $(w,\ell-1)$, $\{v,w\}\in E$, was received by $(v,\ell)$.
    Thus,
    \begin{equation*}
    H_{\own}+\vartheta(\localskew_{\ell-1}+u)+2\kappa \ge 2H_{\own}-H_{\min}+2\kappa\ge H_{\max}+\frac{\kappa}{2}+\vartheta\kappa.
    \end{equation*}
    and, by \Cref{eq:d},
    \begin{align*}
    H_{v,\ell}(t_{v,\ell}^k) &=H_{\max}+\frac{3\kappa}{2}+\Lambda-d\\
    &\le H_{\own}+\vartheta(\localskew_{\ell-1}+u)+2\kappa+\Lambda-d\\
    &\le H_{\own}-\vartheta(\localskew_{\ell-1}+u)\\
    &\le H+\Lambda.
    \end{align*}
    \item The inner loop terminated because $H_{v,\ell}(t)=H_{\max} +\kappa/2+\vartheta \kappa$ and at local time $H_{\max}$ a message from a correct predecessor $(w,\ell-1)$, $\{v,w\}\in E$, was received by $(v,\ell)$. Therefore,
    \begin{equation*}
    H_{\own}+\vartheta(\localskew_{\ell-1}+u) \ge H_{\max}
    \end{equation*}
    and, by \Cref{eq:d},
    \begin{align*}
    H_{v,\ell}(t_{v,\ell}^k) &=H_{\max}+\frac{3\kappa}{2}+\Lambda-d\\ 
    &\le H_{\own}+\vartheta(\localskew_{\ell-1}+u)+\frac{3\kappa}{2}+\Lambda-d\\
    &\le H_{\own}-\vartheta(\localskew_{\ell-1}+u)\\
    &\le H+\Lambda.
    \end{align*}
    \item The inner loop terminated because $H_{v,\ell}(t)=2H_{\own}-H_{\min}+2\kappa$ and $\Cor_{v,\ell}\ge 0$. By \Cref{eq:d}, then
    \begin{equation*}
    H_{v,\ell}(t_{v,\ell}^k)=H_{\own}+\Lambda-d-\Cor_{v,\ell}\le H_{\own}+\Lambda-d\le H+\Lambda.
    \end{equation*}
    \item The inner loop terminated because $H_{v,\ell}(t)=2H_{\own}-H_{\min}+2\kappa$ and $\Cor_{v,\ell}< 0$. Then
    \begin{equation*}
    \Cor_{v,\ell}= H_{\own}-H_{\min}+\frac{3\kappa}{2}
    \end{equation*}
    and
    \begin{equation*}
    H_{v,\ell}(t_{v,\ell}^k)=H_{\own}+\Lambda-d-\Cor_{v,\ell}=H_{\min}-\frac{3\kappa}{2}+\Lambda-d.
    \end{equation*}
    Since $H_{\min}$ is bounded from above by the earliest local reception time of a message from a correct node $(w,\ell-1)$, $\{v,w\}\in E$, we have that
    \begin{equation*}
    H_{\min}\le H_{\own}+\vartheta(\localskew_{\ell-1}+u).
    \end{equation*}
    By \Cref{eq:d}, we conclude that
    \begin{equation*}
    H_{v,\ell}(t_{v,\ell}^k)\le H_{\own}+\vartheta(\localskew_{\ell-1}+u)-\frac{3\kappa}{2}+\Lambda-d<H+\Lambda.
    \end{equation*}
  \end{itemize}
  \item $(v,\ell-1)$ is faulty. Then $H=H_{\min}$. Checking all cases in a similar fashion, we see that
  \begin{equation*}
  H_{v,\ell}(t_{v,\ell}^k)\le H_{\max}+\frac{3\kappa}{2}+\Lambda-d.
  \end{equation*}
  Using that \Cref{eq:d} must also apply in an execution where $(v,\ell-1)$ is not faulty and hence $\max_{\{v,w\}\in E}\{t_{w,\ell-1}\}-\min_{\{v,w\}\in E}\{t_{w,\ell-1}\}\le 2\localskew_{\ell-1}$, it follows that
  \begin{align*}
  H_{v,\ell}(t_{v,\ell}^k)&\le H_{\max}+\frac{3\kappa}{2}+\Lambda-d\\
  &\le H_{\min}+2\vartheta(\localskew_{\ell-1}+u)+\frac{3\kappa}{2}+\Lambda-d\\
  &\le H_{\min}+\Lambda\\
  &\le H+\Lambda.
  \end{align*}
\end{itemize}
Finally, we need to show that $t_{v,\ell}^{k+1}-t_{v,\ell}^k=\Lambda$.
This is now immediate from the induction hypothesis, the assumption that delays and hardware clock speeds do not change, and that faulty nodes send their messages with the same relative timing.
\end{proof}
We are now ready to show that \Cref{alg:Discretised_Gradient_TRIX} is equivalent to \Cref{alg:Simplified_Gradient_TRIX} in the absence of faults.

\begin{lemma}\label{lem:equivalence}
Suppose that for $(v,\ell)\in V_{\ell}$, $\ell>0$, and the predecessors of $(v,\ell)$ are correct.
Then running \Cref{alg:Simplified_Gradient_TRIX} instead of \Cref{alg:Discretised_Gradient_TRIX} results in the same pulse times of node $(v,\ell)$.
\end{lemma}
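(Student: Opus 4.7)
The plan is to first invoke \Cref{lem:correct_pulsing} to reduce the comparison to the common event that all three predecessor messages have been processed, and then verify that the control flow of \Cref{alg:Discretised_Gradient_TRIX} picks precisely the branch whose body is textually identical to \Cref{alg:Simplified_Gradient_TRIX}.

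First, I would apply \Cref{lem:correct_pulsing} to \Cref{alg:Discretised_Gradient_TRIX}. Since all predecessors of $(v,\ell)$ are correct, their $k$-th pulse messages are all received during the $k$-th iteration of the outer loop, so $H_{\own}$, $H_{\min}$, and $H_{\max}$ all become finite before the inner do-until loop exits. Because message delays and hardware clock speeds are physical quantities independent of which of the two algorithms $(v,\ell)$ is running, the three variables are assigned the same values in both executions; in particular, \Cref{alg:Simplified_Gradient_TRIX} records precisely these values before leaving its inner loop.

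Second, I would show that the do-until loop of the full algorithm exits through the \emph{else}-branch rather than the if-branch. Concretely, the loop terminates when $H_{v,\ell}(t) = \min\{H_{\max}+\kappa/2+\vartheta\kappa,\ 2H_{\own}-H_{\min}+2\kappa\}$, and one needs to rule out the case that $H_{v,\ell}(t) = H_{\max}+\kappa/2+\vartheta\kappa$, i.e., that the Condition~1 threshold is the smaller one. Using \Cref{lem:estimates} twice to bound both $|H_{\own}-H_{\min}|$ and $|H_{\max}-H_{\own}|$ by $\localskew_{\ell-1}+\kappa/2$, together with the parameter calibration in \Cref{eq:kappa,eq:Lambda,eq:d}, I would derive the strict inequality $2H_{\own}-H_{\min}+2\kappa < H_{\max}+\kappa/2+\vartheta\kappa$. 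This forces the loop to exit at the Condition~2 threshold, so the if-check fails and the else-branch is entered.

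Third, once the else-branch is executed, the remaining code of \Cref{alg:Discretised_Gradient_TRIX} coincides line by line with the corresponding block in \Cref{alg:Simplified_Gradient_TRIX}: the same discrete minimization defining $\Cor_{v,\ell}$, the same two-sided clamping to $[0,\vartheta\kappa]$, and the same wait until $H_{v,\ell}(t) = H_{\own}+\Lambda-d-\Cor_{v,\ell}$ followed by the broadcast. Consequently both algorithms broadcast the $k$-th pulse at the same local-clock value, and thus at the same real time.

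The main obstacle is the second step: the separation of the two thresholds must hold uniformly over all configurations of $H_{\own},H_{\min},H_{\max}$ consistent with the local skew guarantees on layer $\ell-1$. This is where the careful choice of constants in \Cref{eq:kappa,eq:Lambda,eq:d} has to be used -- they appear tuned so that the slack $\vartheta\kappa + \kappa/2 - 2\kappa$ dominates the worst-case spread $2(\localskew_{\ell-1}+\kappa/2)$ that the estimates in \Cref{lem:estimates} allow.
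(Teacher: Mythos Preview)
Your second step fails. The inequality you need,
\[
2H_{\own}-H_{\min}+2\kappa < H_{\max}+\tfrac{\kappa}{2}+\vartheta\kappa,
\]
rearranges to $(H_{\own}-H_{\min})+(H_{\own}-H_{\max}) < (\vartheta-\tfrac32)\kappa$. For realistic $\vartheta$ close to $1$ the right-hand side is negative (roughly $-\kappa/2$), while the left-hand side can be as large as $2\localskew_{\ell-1}+\kappa>0$ by \Cref{lem:estimates}. Concretely, whenever $(v,\ell-1)$ pulses later than every neighboring predecessor---a completely legal configuration---$H_{\own}$ exceeds $H_{\max}$, the left-hand side is positive, and the inner loop of \Cref{alg:Discretised_Gradient_TRIX} exits at threshold~1, taking the if-branch. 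So the else-branch is \emph{not} always taken; the ``slack'' $(\vartheta-\tfrac32)\kappa$ you point to is negative and cannot dominate any spread.

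There is also a gap in step~1: \Cref{lem:correct_pulsing} guarantees that all messages arrive before the \emph{outer} loop iteration ends (i.e., before the pulse is broadcast), not before the inner do-until loop terminates. In the scenario above the inner loop may exit with $H_{\own}$ still $\infty$ in the full algorithm.

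The paper does not try to force the else-branch. It treats the two termination cases separately. When the if-branch fires, it shows that in the simplified algorithm the own message necessarily arrives late enough that $H_{\own}-H_{\max}-\kappa/2\ge\vartheta\kappa$; this pushes $\Delta>\vartheta\kappa$, so the clamping rule sets $\Cor_{v,\ell}=H_{\own}-H_{\max}-3\kappa/2$, and the pulse time $H_{\own}+\Lambda-d-\Cor_{v,\ell}=H_{\max}+3\kappa/2+\Lambda-d$ matches the if-branch's wait target exactly. When the else-branch fires, a further split on the sign of $\Delta$ verifies that $H_{\max}$ is finite in time and the computations agree. The equivalence therefore comes from the algebraic alignment of each branch with the simplified clamping rule, not from one branch being unreachable.
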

\begin{proof}
Assume towards a contradiction that the claim is false.
Denote by $t_{v,\ell}^k$ and $(t_{v,\ell}^k)'$ the pulse times of \Cref{alg:Simplified_Gradient_TRIX} and \Cref{alg:Discretised_Gradient_TRIX} in executions with identical delays, clock speeds, and behavior of faulty nodes.
W.l.o.g., let $t_{v,\ell}^k$ be minimal with the property that $t_{v,\ell}^k\neq (t_{v,\ell}^k)'$.

Consider the $k$-th loop iteration of \Cref{alg:Discretised_Gradient_TRIX} at node $(v,\ell)$.
We distinguish cases according to why the inner loop terminated.
\begin{itemize}
  \item The inner loop terminated because $H_{v,\ell}(t)=H_{\max} +\kappa/2+\vartheta \kappa$. Then in \Cref{alg:Simplified_Gradient_TRIX}, we have that
  \begin{equation*}
    H_{\own}\ge H_{\max}+\frac{\kappa}{2}+\vartheta \kappa,
  \end{equation*}
  implying that
  \begin{align*}
  \Delta&:=\min_{s\in \N}\{\max\{H_{\own}-H_{\max}+4s\kappa,H_{\own}-H_{\min}-4 s\kappa \}\}- \frac{\kappa}{2}\\
  &\ge \min_{s\in \N}\{H_{\own}-H_{\max}+4s\kappa\}- \frac{\kappa}{2}\\
  &\ge H_{\own}-H_{\min}-\frac{\kappa}{2}\\
  &\ge \vartheta \kappa.
  \end{align*}
  Hence, \Cref{alg:Simplified_Gradient_TRIX} computes
  \begin{equation*}
  \Cor_{v,\ell}=H_{\own}-H_{\max}-\frac{3\kappa}{2}
  \end{equation*}
  and generates its $k$-th pulse at local time
  \begin{equation*}
  H_{v,\ell}(t_{v,\ell}^k)=H_{\max}+\Lambda-d-\Cor_{v,\ell}=H_{\max}+\frac{3\kappa}{2}+\Lambda-d=H_{v,\ell}((t_{v,\ell}^k)'),
  \end{equation*}
  a contradiction.
  \item The inner loop terminated because $H_{v,\ell}(t)=2H_{\own}-H_{\min}+2\kappa$. As $H_{\min}<\infty$, also $H_{\own}< \infty$ for \Cref{alg:Discretised_Gradient_TRIX}. We distinguish two subcases.
  \begin{itemize}
    \item In \Cref{alg:Simplified_Gradient_TRIX}, we have
    \begin{equation*}
    \Delta:=\min_{s\in \N}\{\max\{H_{\own}-H_{\max}+4s\kappa,H_{\own}-H_{\min}-4 s\kappa \}\}- \frac{\kappa}{2}<0.
    \end{equation*}
    Then the same holds in \Cref{alg:Discretised_Gradient_TRIX}, as there $H_{\max}$ is either identical to that of \Cref{alg:Simplified_Gradient_TRIX} of $\infty$. Hence, both algorithms compute $C_{v,\ell}=\min\{H_{\own}-H_{\min}+3\kappa/2,0\}$ and subsequently $H_{v,\ell}(t_{v,\ell}^k)=H_{\own}+\Lambda-d-\Cor_{v,\ell}=H_{v,\ell}((t_{v,\ell}^k)')$, a contradiction.
    \item In \Cref{alg:Simplified_Gradient_TRIX}, we have
    \begin{equation*}
    \Delta:=\min_{s\in \N}\{\max\{H_{\own}-H_{\max}+4s\kappa,H_{\own}-H_{\min}-4 s\kappa \}\}- \frac{\kappa}{2}\ge 0
    \end{equation*}
    Let $s_{\min}\in \N$ be such that
    \begin{equation*}
    \Delta:=\max\{H_{\own}-H_{\max}+4s_{\min}\kappa,H_{\own}-H_{\min}-4 s_{\min}\kappa \}-\frac{\kappa}{2}.
    \end{equation*}
    If $\Delta=H_{\own}-H_{\min}-4 s_{\min}\kappa-\kappa/2$, the fact that $H_{\own}$ and $H_{\min}$ are identical in both algorithms, while $H_{\max}$ is either also identical or $-\infty$ in \Cref{alg:Discretised_Gradient_TRIX}, again leads to the contradiction $H_{v,\ell}(t_{v,\ell}^k)=H_{v,\ell}((t_{v,\ell}^k)')$. Hence, suppose that $\Delta=H_{\own}-H_{\max}+4s_{\min}\kappa-\kappa/2$ in \Cref{alg:Simplified_Gradient_TRIX}. Therefore,
    \begin{align*}
    0&\le \Delta \\
    &=H_{\own}-H_{\max}+4s_{\min}\kappa-\kappa/2\\
    &\le \max\{H_{\own}-H_{\max}+4(s_{\min}-1)\kappa,H_{\own}-H_{\min}-4(s_{\min}-1)\kappa\}-\frac{\kappa}{2}\\
    &=H_{\own}-H_{\min}-4(s_{\min}-1)\kappa-\frac{\kappa}{2}.
    \end{align*}
    Thus,
    \begin{equation*}
    2H_{\own}-H_{\min}+2\kappa\ge H_{\own}+4s_{\min}\kappa-\frac{3\kappa}{2}\ge H_{\max}-\kappa<H_{\max}-\frac{\kappa}{2}-\vartheta\kappa.
    \end{equation*}
    This is a contradiction, as then the inner loop in \Cref{alg:Discretised_Gradient_TRIX} would have terminated at an earlier time.\qedhere
  \end{itemize}
\end{itemize}
\end{proof}

\section{Self-Stabilization}\label{app:self-stab}

%###
\subsubsection*{Transient Faults}
%###
Thus far we have considered a fault model in which nodes which are faulty remain so permanently and distributed in accordance with the model described in \Cref{sec:model}. On VLSI chips this corresponds to fabrication errors or other failures which permanently affect a clock island. In addition to such permanent faults, it is of high interest to handle transient faults which potentially affect every node in the system for a short time. More specifically, after transient faults cease, such nodes should resume correct operation within a bounded, preferably small stabilization time.

This property is known as \emph{self-stabilization}~\cite{dijkstra74self}. On a SoC, such transient fault behavior could be the result of a Single Event Upset (SEU) caused by radiation or a droop in supply voltage that is too rapid for the control loop stabilizing the supply voltage to respond in time.

We emphasize that we require self-stabilization in the presence of permanent faults conforming to the model given in \Cref{sec:model}. This implies a powerful combination of resilience properties suitable for a wide range of real-world scenarios.

%###
\subsubsection*{How to Make the Pulse Propagation Algorithm Self-stabilizing}
%###
Since no restriction is imposed on the type or quantity of transient faults, they might result in an arbitrary state of the system's constituent components. Thus, proving self-stabilization is equivalent to showing that correct operation (re)commences from any possible initial state and feasible distribution and behavior of permanently faulty nodes. Note that corrupt link states are not an issue, since any spurious messages are delivered and processed within at most $d$ time.

Our task is simplified greatly by the fact that pulse propagation is directional. Recall that, on the top level, our analysis proceeds as follows:
\begin{enumerate}
  \item Show that each correct node receives the $k$-th pulses from its predecessors in its $k$-th loop iteration, which ends with sending its $k$-th pulse.
  \item Bound the skew of the $k$-th pulse within each layer.
  \item Argue that timing of consecutive pulses changes little enough such that for adjacent nodes good bounds are obtained on the time difference of pulse $k+1$ in layer $\ell$ and pulse $k$ on layer $\ell+1$, respectively.
\end{enumerate}
An arbitrary initial state disrupts the first step, in that the proper alignment of received and sent pulses might break down. For instance, a node on layer $3$ might incorrectly store that it recently received a pulse from a correct in-neighbor, implying that a faulty in-neighbor can control when it pulses by deciding when to send a pulse on its own. However, once we re-establish a consistent interpretation of what ``the $k$-th pulse'' is, where each correct node receives the $k$-th pulses from its correct in-neighbors in the corresponding loop iteration, the second and third step work without modification.

In summary, our task is to ensure that for a given input pulse, each correct node produces exactly one pulse in response, where each correct node receives all of these pulses from correct predecessors during the same loop iteration. More precisely, this reception must take place before $\Cor_{v,\ell}$ is computed and the final waiting statement before generating the pulse commences. For the pulse propagation through layer~$0$, cf.~\Cref{app:layer0}, this is trivial, since nodes merely forward pulses they receive. This provides the induction basis for showing this property for layers $\ell>0$, after a small modification of the forwarding algorithm that has no effect after stabilization.

\thmself*
\begin{proof}[Proof sketch.]
We break the reasoning down into a sequence of straightforward observations.
\begin{observation}\label{obs:pulses_one_way}
  Pulses that are sent by layer $\ell$ only affect layers $\ell+1$ and beyond. The behaviour of a node on layer $\ell$ depends only on its state and the reception times of incoming pulse messages from layer $\ell-1$.  
\end{observation}

Thus, if layers $0$ to $\ell$ are behaving correctly, then layers $\ell+1,\ell+2,\ldots$ cannot disrupt this in any way. Thus, all we need to show is that if layer $\ell$ behaves correctly, then after receiving a constant number of pulses from layer $\ell$, layer $\ell+1$ also behaves correctly. Then \Cref{alg:Discretised_Gradient_TRIX} stabilizes within $O(\sqrt{n})$ pulses after layer~$0$ functions correctly, which also takes $O(\sqrt{n})$ after the source operates correctly (again).

Based on~\Cref{obs:pulses_one_way}, we can focus on a single node on layer~$\ell\in \N$ and assume that the preceding layer is already operating correctly. Next, we note that ``getting one pulse right'' is good enough, as then the inductive behavior shown in \Cref{lem:correct_pulsing} kicks in.
\begin{observation}
If for a given pulse from layer $\ell$, a correct node on layer $\ell+1$ receives pulse messages from at least two correct predecessors in the same loop iteration of \Cref{alg:Discretised_Gradient_TRIX} prior to commencing the final wait statement, the node will do so for future pulses, too.
\end{observation}

Hence, making the algorithm self-stabilizing breaks down to ensuring this without disrupting its regular operation. To achieve this, we leverage another observation on the separation between reception times of consecutive pulses from correct nodes.
\begin{observation}
The reception times of correct nodes' pulse messages for pulse $k$ are separated by at most $\vartheta(2\localskew+u)$ local time. Consecutive pulses from a correct node are separated by at least $\Lambda-\localskew-u$ local time.\footnote{\Cref{cor:bound} is based on bounding the timing variations between consecutive pulses by $\localskew$, which also implies this lower bound.}
\end{observation}
%###
\paragraph{Algorithm Modification:}
%###
This hands us the key to a minimal change in the algorithm to achieve the desired self-stabilization property. The idea is that correct nodes are well-synchronized, so after receiving the first $k$-th pulse message from a correct node all others must follow within $\vartheta(2\localskew+u)$ local time. Therefore, if by that time both $H_{\own}$ and $H_{\max}$ remain unspecified, this proves that the first received pulse message was not the first pulse message from a correct node for a given pulse. In this case, it is valid to ``forget'' about this message. This carries the advantage that if the node does not receive enough messages in sufficiently short time to complete the loop iteration, it will successively delete all such messages and be ready to ``fully'' register correct nodes' messages for pulse $k+1$.

\begin{algorithm}[t!]
  \caption{This is a restatement of \Cref{alg:Discretised_Gradient_TRIX} with modifications to make it self-stabilizing. The changes are highlighted in blue text.}
  %\Description{This is a restatement of \Cref{alg:Discretised_Gradient_TRIX} with modifications to make it self-stabilising. The changes are highlighted in blue text.}
  \label{alg:Discretised_Gradient_TRIX_self_stab}
  \begin{algorithmic}
    \Loop
      \State $H_{\min},H_{\own},H_{\max}, H_{w} := \infty$ for all $w\in Neighbour(v)$
      \State $T := thread()$
      \State Initialise $T$ with $Wait()$
      \For{$\{v,w\}\in E$}
        \State $r_w:=0$
      \EndFor
      \Repeat
        \If{received pulse from $v_{\ell-1}$ and $H_{\own}=\infty$}
          \State $H_{\own} := H_{v,\ell}(t)$
        \EndIf
        \If{for some $\{v,w\}\in E$ received pulse from $(w,\ell-1)$ and $r_w=0$}
          \If{$r_{w'}=0$ for all $\{v,w'\}\in E$}
            \State $H_{\min}:=H_{v,\ell}(t)$
          \EndIf
          \State $r_w:=1$
          \State $H_w := H_{v,\ell}(t)$
          \If{$r_{w'}=1$ for all $\{v,w'\}\in E$}
            \State $H_{\max}:=H_{v,\ell}(t)$
          \EndIf
        \EndIf
      \Until{$H_{\min}<\infty$ and $H_{v,\ell}(t)\ge \min\{H_{\max}+\kappa/2+\vartheta\kappa,2H_{\own}-H_{\min}+2\kappa\}$}
      \If{$H_{v,\ell}(t)= H_{\max}+\kappa/2+\vartheta \kappa$}
        \State wait until $H_{v,\ell}(t)=H_{\max}+3\kappa/2+\Lambda-d$ or $H_{v,\ell}(t)<H_{\max}$
      \Else
        \State $\Cor_{v,\ell}:=\min_{s\in \N}\{\max\{H_{\own}-H_{\max}+4s\kappa,H_{\own}-H_{\min}-4 s\kappa \}\}-\kappa/2$
        \If{$\Cor_{v,\ell} < 0$}
          \State $\Cor_{v,\ell} := \min\set{H_{\own}-H_{\min} + 3\kappa/2, 0}$
        \ElsIf{$\Cor_{v,\ell} > \vartheta\kappa$}
          \State $\Cor_{v,\ell} := \max\set{H_{\own}-H_{\max} - 3\kappa/2,\vartheta\kappa}$
        \EndIf
        \State wait until $H_{v,\ell}(t) = H_{\own} + \Lambda - d - \Cor_{v,\ell}$ or $H_{v,\ell}(t)<H_{\own}$ or ($\Cor_{v,\ell}<0$ and $H_{v,\ell}(t)<H_{\min}$)
      \EndIf
      \State \textbf{broadcast pulse}
    \EndLoop
    \Function{Wait}{w}
      \State Wait until $H_{min}\neq \infty$.
      \If {($H_{min}\neq \infty$ and $H_{own} = \infty$) or ($H_{min}\neq \infty$ and $H_{max} = \infty$)}
        \State wait for $\vartheta(2\localskew + u)$
        \If{$H_{0} = \infty$ and $H_{max} = \infty$} 
          \State $H_{min} := \infty$
          \State $\forall (w,v) \in E$, $H_{w} = \infty$
          \State $\forall (w,v) \in E$, $r_{w} = 0$
        \Else
          \State return
        \EndIf
      \EndIf
      \State return 
    \EndFunction
  \end{algorithmic}
\end{algorithm}

\begin{observation}
The operation of \Cref{alg:Discretised_Gradient_TRIX} with the above modification is not affected for a correct node that stabilized, provided that the preceding layer already stabilized as well.
\end{observation}
To complete the proof sketch, we argue that this change of the algorithm, alongside simple checks to avoid getting stuck in waiting statements, is sufficient for a correct node to ``catch'' the $k$-th pulses of its correct predecessors if they stabilized a constant number of pulses ago. To see this, recall that two stored pulses are required to proceed to the waiting statements. Thus, at most $\vartheta(2\localskew+u)$ time after receiving the last $k$-th pulse from a correct predecessor, it becomes impossible to move on to a waiting statement until pulse $k+1$ from a correct predecessor is received.

If the node does not move on to the waiting statement, due to \Cref{eq:Lambda}, the node will have deleted the $k$-th pulses from correct predecessors by the time the first $(k+1)$-th pulse arrives. We then can reason analogously to \Cref{lem:correct_pulsing} to show that correct operation commences.

On the other hand, if the node moves on to a waiting statement, we want to make sure that the next loop iteration begins in time to guarantee that pulse $k+1$ from correct predecessors is not missed. There are two waiting statements that could be executed. The first waits until local time $H_{\max}+3\kappa/2+\Lambda-d$. At this point in the code, it should hold that $H_{\max}\le H_{v,\ell}(t)$; if this is not the case or the local time exceeds the local time until which the node should wait, the algorithm will end the loop iteration immediately (which is safe, because it can never happen after stabilization). \Cref{eq:d} then ensures that the loop iteration ends before pulse $k+1$ is received from a correct predecessor.

The second waiting statement waits until local time $H_{\own}+\Lambda-d-\Cor_{v,\ell}$. We can apply the same approach, but need to take into account that $\Cor_{v,\ell}$ could be negative. In order to bound it, we exploit $\Cor_{v,\ell}<0$ entails that $\Cor_{v,\ell}=H_{\own}-H_{\min}+3\kappa/2$ and hence $H_{\own}+\Lambda-d-\Cor_{v,\ell}< H_{\min}+\Lambda-d$. We conclude that executing the second waiting statement should result in waiting no longer than until local time $\max\{H_{\own},H_{\min}\}+\Lambda-d$, where both $H_{\own}$ and $H_{\min}$ are bounded from above by the current local time when the statement is reached. Hence, the algorithm will end the loop iteration immediately when $H_{v,\ell}(t)+\Lambda-d<H_{\own}+\Lambda-d-\Cor_{v,\ell}$ or $H_{v,\ell}(t)\ge H_{\own}+\Lambda-d-\Cor_{v,\ell}$.

To wrap up the argument, we conclude that in all cases, the waiting statements will complete $\Lambda-d+O(\localskew)$ time after a correct predecessor's pulse is received. Using \Cref{eq:d}, we can infer that on the next loop iteration, the node will receive the pulses of at least two correct predecessors before moving on to a waiting statement. By the above observations, it follows that the system stabilizes within $O(\sqrt{n})$ pulses.
\end{proof}

\section{Basic Statements}\label{app:basic}
We first show three basic lemmas.
The first relates the local reception times of pulses to the actual sending times, bounding the error by $\kappa$.
\begin{lemma}\label{lem:estimates}
For $(v,\ell)\in V_{\ell}$, where $\ell\in \N_{>0}$, set $t_{\min}:=\min_{\{v,w\}\in E}\{t_{w,\ell-1}\}$ and $t_{\max}:=\min_{\{v,w\}\in E}\{t_{w,\ell-1}\}$.
Then
\begin{align*}
t_{v,\ell-1}-t_{\max}-\kappa &\le H_{\own}-H_{\max}-\frac{\kappa}{2} \le t_{v,\ell-1}-t_{\max}\\
t_{v,\ell-1}-t_{\min}-\kappa &\le H_{\own}-H_{\min}-\frac{\kappa}{2} \le t_{v,\ell-1}-t_{\min}.
\end{align*}
\end{lemma}
\begin{proof}
We prove the first inequality; the second is shown analogously. Let $t'_{v,\ell-1}$ and $t'_{\max}$ denote the times when the pulse messages sent at time $t_{v,\ell-1}$ and $t_{\max}$ are received at $v_{\ell}$, respectively. From the bounds on message delays, it follows that
\begin{align*}
  t_{v,\ell-1} + d-u &\leq t'_{v,\ell-1} \leq t_{v,\ell-1} + d\mbox{ and}\\
  t_{\max} + d - u&\leq t'_{\max} \leq t_{\max} + d.
\end{align*}
Thus,
\begin{align*}
  t_{v,\ell-1} - t_{\max} - u\leq t'_{v,\ell-1} - t'_{\max} \leq t_{v,\ell-1} - t_{\max} + u.
\end{align*}
Using the bounds on hardware clock rates, we get that
\begin{equation*}
|t_{v,\ell-1}' - t_{\max}' -(H_{\own} - H_{\max})|\le (\vartheta-1)|t_{v,\ell-1}' - t_{\max}'|\le (\vartheta-1)(|t_{v,\ell-1} - t_{\max}|+u).
\end{equation*}
Applying \Cref{eq:Lambda}, we infer that
\begin{align*}
|t_{v,\ell-1} - t_{\max} -(H_{\own} - H_{\max})| &\le |t_{v,\ell-1}' - t_{\max}' -(H_{\own} - H_{\max})|+u\\
&\le (\vartheta-1)|t_{v,\ell-1} - t_{\max}| + \vartheta u\\
&\le (\vartheta-1)\localskew_{\ell-1} + \vartheta u\\
&\le (\vartheta-1)\left(\frac{\Lambda-d}{\vartheta}-u\right)+\vartheta u\\
&=\left(1-\frac{1}{\vartheta}\right)(\Lambda-d)+ u.
\end{align*}
Finally, using \Cref{eq:kappa}, we conclude that
\begin{align*}
t_{v,\ell-1}-t_{\max}-\kappa &\le t_{v,\ell-1}-t_{\max}-2\left(\left(1-\frac{1}{\vartheta}\right)(\Lambda-d)+ u\right)\\
&\le H_{\own}-H_{\max}-\frac{\kappa}{2}\\
& \le t_{v,\ell-1}-t_{\max}.\qedhere
\end{align*}
\end{proof}
The second lemma shows that corrections are not too large.
\begin{lemma}\label{lem:corrections}
For all $v\in V$ and $\ell\in \N_{>0}$, $\Cor_{v,\ell} \le \Lambda - d$.
\end{lemma}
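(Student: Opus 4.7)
My plan is to establish the bound by tracking what each branch of the assignment to $\Cor_{v,\ell}$ can produce, using \Cref{lem:estimates} to translate the local quantities $H_{\own}, H_{\min}, H_{\max}$ into bounds in terms of the pulse sending times on layer $\ell-1$, which are in turn controlled by $\localskew_{\ell-1}$.

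First I would analyze the initial assignment
\begin{equation*}
  \Cor_{v,\ell} = \min_{s\in\N}\bigl\{\max\{H_{\own}-H_{\max}+4s\kappa,\,H_{\own}-H_{\min}-4s\kappa\}\bigr\} - \tfrac{\kappa}{2}.
\end{equation*}
Evaluating at $s=0$ and using that $H_{\min}\le H_{\max}$ shows the max equals $H_{\own}-H_{\min}$, so the overall minimum is upper bounded by $H_{\own}-H_{\min}-\kappa/2$. By \Cref{lem:estimates} this is at most $t_{v,\ell-1}-t_{\min}$, and since $t_{\min}$ is attained at some neighbor of $v$ in $H$, we have $t_{v,\ell-1}-t_{\min}\le \localskew_{\ell-1}$. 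Thus the value assigned before the if/else-if is at most $\localskew_{\ell-1}$.

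Next I would handle the two conditional branches. In the branch $\Cor_{v,\ell}<0$, the new value is a min with $0$, hence $\Cor_{v,\ell}\le 0\le \Lambda-d$. In the branch $\Cor_{v,\ell}>\vartheta\kappa$, the reset yields $\max\{H_{\own}-H_{\max}-3\kappa/2,\,\vartheta\kappa\}$. Applying \Cref{lem:estimates} to the first argument gives $H_{\own}-H_{\max}-3\kappa/2\le t_{v,\ell-1}-t_{\max}-\kappa\le \localskew_{\ell-1}$, while the second argument is simply $\vartheta\kappa$. The intermediate (no-reset) case was already bounded by $\localskew_{\ell-1}$ in the previous step.

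Finally I would close the argument by invoking the parameter constraints \eqref{eq:kappa} and \eqref{eq:Lambda}. Equation \eqref{eq:Lambda} gives $\Lambda-d\ge C\vartheta(\localskew_{\ell-1}+u)\ge \localskew_{\ell-1}$ directly. For the term $\vartheta\kappa$, \eqref{eq:kappa} yields $\vartheta\kappa=2\vartheta u+2(\vartheta-1)(\Lambda-d)$, and since $C$ in \eqref{eq:Lambda} is taken sufficiently large (and $\vartheta$ is close to $1$), $2\vartheta u$ is absorbed by a small fraction of $\Lambda-d$ while $2(\vartheta-1)(\Lambda-d)\le (\Lambda-d)/2$, giving $\vartheta\kappa\le \Lambda-d$. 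Combining the three cases proves the claim. I do not expect any real obstacle: the work is purely a case distinction on the algorithm, with \Cref{lem:estimates} doing the heavy lifting and the parameter constraints ensuring the final inequalities.
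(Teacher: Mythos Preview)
Your proposal is correct and follows essentially the same route as the paper: a case split on the three branches of the correction assignment, with \Cref{lem:estimates} converting $H$-differences into $t$-differences bounded by $\localskew_{\ell-1}$, closed off via \eqref{eq:Lambda}. The one minor deviation is how you dispose of the $\vartheta\kappa$ term in the third branch: the paper uses that entry to this branch forces $\vartheta\kappa<\Delta$ and then bounds $\Delta\le\localskew_{\ell-1}$, whereas you bound $\vartheta\kappa\le\Lambda-d$ directly from \eqref{eq:kappa} and \eqref{eq:Lambda} under the (implicit, and indeed necessary) smallness assumption on $\vartheta-1$.
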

\begin{proof}
Abbreviate
\begin{equation*}
\Delta=\min_{s\in \N}\left\{\max\{H_{\own}-H_{\max}+4s\kappa,H_{\own}-H_{\min}-4 s\kappa \}\right\}-\frac{\kappa}{2}.
\end{equation*}
We distinguish three cases.
\begin{itemize}
  \item $\Delta<0$. Then \Cref{alg:Simplified_Gradient_TRIX} sets
  \begin{equation*}
  \Cor_{v,\ell}\le \min\left\{H_{\own}-H_{\min}-\frac{\kappa}{2}+2\kappa,0\right\}\le 0.
  \end{equation*}
  As $\Lambda\ge d$ by \Cref{eq:Lambda}, the claim of the lemma holds in this case.
  \item $0\le \Delta\le \vartheta \kappa$. Then, using the notation of \Cref{lem:estimates},
  \begin{align*}
  \Cor_{v,\ell}=\Delta &<\min_{s\in \N}\left\{\max\{H_{\own}-H_{\max}+4s\kappa,H_{\own}-H_{\min}-4 s\kappa \}\right\}-\frac{\kappa}{2}\\
  &\le \min_{s\in \N}\left\{\max\{t_{v,\ell-1}-t_{\max}+4s\kappa,t_{v,\ell-1}-t_{\min}-4 s\kappa \}\right\}\\
  &\le \min_{s\in \N}\left\{\max\{\localskew_{\ell-1}+4s\kappa,\localskew_{\ell-1} -4s\kappa \}\right\}\\
  &=\localskew_{\ell-1},
  \end{align*}
  which is smaller than $\Lambda-d$ by \Cref{eq:Lambda}.
  \item $\Delta>\vartheta \kappa$. Note that then
  \begin{equation*}
  \vartheta\kappa<\Delta \le \max\{H_{\own}-H_{\max},H_{\own}-H_{\min}\}-\frac{\kappa}{2}
  =H_{\own}-H_{\max}-\frac{\kappa}{2},
  \end{equation*}
  as $H_{\max}\ge H_{\min}$. Therefore, applying \Cref{lem:corrections},
  \begin{align*}
  \Cor_{v,\ell}&=\max\left\{H_{\own}-H_{\max}-\frac{\kappa}{2}-\kappa,\vartheta\kappa\right\}\\
  &\le H_{\own}-H_{\max}-\frac{\kappa}{2}\\
  &\le t_{v,\ell-1}-t_{\max}\\
  &\le \localskew_{\ell-1}\\
  &<\Lambda-d.\qedhere
  \end{align*}
\end{itemize}
\end{proof}

The third lemma bounds the time difference between the pulses of $(v,\ell-1)$ and $(v,\ell)$.
\begin{lemma}\label{lem:drift}
For all $v\in V$ and $\ell\in \N_{>0}$ it holds that
\begin{align*}
d-u+\frac{\Lambda-d-\Cor_{v,\ell}}{\vartheta} \le t_{v,\ell}-t_{v,\ell-1} \le \Lambda-\Cor_{v,\ell}.
\end{align*}
\end{lemma}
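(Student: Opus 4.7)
The plan is to unpack the algorithm's pulse-generation rule and combine the message-delay bounds with the hardware-clock-rate bounds. Let $t'_{v,\ell-1}$ denote the time at which $(v,\ell)$ receives the pulse broadcast at time $t_{v,\ell-1}$ by $(v,\ell-1)$. By the assumption on message delays,
\begin{equation*}
t_{v,\ell-1}+d-u \le t'_{v,\ell-1} \le t_{v,\ell-1}+d.
\end{equation*}
According to \Cref{alg:Simplified_Gradient_TRIX}, $(v,\ell)$ broadcasts its pulse at the first time $t_{v,\ell}$ with $H_{v,\ell}(t_{v,\ell}) = H_{v,\ell}(t'_{v,\ell-1}) + (\Lambda - d - \Cor_{v,\ell})$, so the hardware clock advances by exactly $\Lambda - d - \Cor_{v,\ell}$ between receipt and broadcast.

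Next I would argue that this advance is nonnegative, so the ``wait until'' statement is well-defined and indeed occurs in the future of $t'_{v,\ell-1}$. By \Cref{lem:corrections}, $\Cor_{v,\ell} \le \Lambda - d$, hence $\Lambda - d - \Cor_{v,\ell} \ge 0$. The hardware-clock bound states that for $t < t'$, $t'-t \le H_{v,\ell}(t')-H_{v,\ell}(t) \le \vartheta(t'-t)$, which rearranges to
\begin{equation*}
\frac{\Lambda-d-\Cor_{v,\ell}}{\vartheta} \le t_{v,\ell}-t'_{v,\ell-1} \le \Lambda-d-\Cor_{v,\ell}.
\end{equation*}

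Finally I would combine the two inclusions. For the lower bound,
\begin{equation*}
t_{v,\ell}-t_{v,\ell-1} = (t_{v,\ell}-t'_{v,\ell-1}) + (t'_{v,\ell-1}-t_{v,\ell-1}) \ge \frac{\Lambda-d-\Cor_{v,\ell}}{\vartheta}+(d-u),
\end{equation*}
and for the upper bound,
\begin{equation*}
t_{v,\ell}-t_{v,\ell-1} \le (\Lambda-d-\Cor_{v,\ell}) + d = \Lambda-\Cor_{v,\ell}.
\end{equation*}
This yields the claim. There is no real obstacle here; the only subtlety worth flagging explicitly is the appeal to \Cref{lem:corrections} to ensure the required hardware-clock advance is nonnegative, so the ``wait until'' at $(v,\ell)$ triggers after reception of the pulse from $(v,\ell-1)$ rather than being vacuously in the past.
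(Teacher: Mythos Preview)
Your proof is correct and follows essentially the same approach as the paper: decompose $t_{v,\ell}-t_{v,\ell-1}$ into the transit time of the pulse from $(v,\ell-1)$ and the local wait at $(v,\ell)$, invoke \Cref{lem:corrections} to ensure the required hardware-clock advance $\Lambda-d-\Cor_{v,\ell}$ is nonnegative, and then apply the delay and clock-rate bounds to each piece. The paper's argument is identical in structure and in its appeal to \Cref{lem:corrections}.
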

\begin{proof}
Let $t'_{v,\ell-1}$ denote the time at which $(v,\ell)$ receives the pulse sent by $(v,\ell-1)$ at time $t_{v,\ell-1}$.
Inspecting the code of \Cref{alg:Simplified_Gradient_TRIX}, we see that
\begin{equation*}
H_{v,\ell}(t_{v,\ell})=H_{\own}+\Lambda -d - \Cor_{v,\ell}=H_{v,\ell}(t'_{v,\ell-1})+\Lambda -d - \Cor_{v,\ell}.
\end{equation*}
Since $\Cor_{v,\ell}\le \Lambda-d$ by \Cref{lem:corrections}, it follows that $H_{v,\ell}(t_{v,\ell-1})\ge H_{v,\ell}(t'_{v,\ell})$ and hence $t_{v,\ell}\ge t'_{v,\ell-1}$.
Using the bounds on message delays and hardware clock speeds, we get that
\begin{align*}
t_{v,\ell}-t_{v,\ell-1}&=t_{v,\ell}-t'_{v,\ell-1}+t'_{v,\ell-1}-t_{v,\ell-1}\\
&\le H_{v,\ell}(t_{v,\ell})-H_{v,\ell}(t'_{v,\ell-1})+d\\
&=\Lambda - \Cor_{v,\ell}
\end{align*}
and
\begin{align*}
t_{v,\ell}-t_{v,\ell-1}&=t_{v,\ell}-t'_{v,\ell-1}+t'_{v,\ell-1}-t_{v,\ell-1}\\
&\ge \frac{H_{v,\ell}(t_{v,\ell})-H_{v,\ell}(t'_{v,\ell-1})}{\vartheta}+d-u\\
&=\frac{\Lambda-d-\Cor_{v,\ell}}{\vartheta}+d-u,
\end{align*}
showing the claimed bounds.
\end{proof}

Next, we prove that \Cref{alg:Discretised_Gradient_TRIX} implements the slow, fast, and jump conditions.

\begin{lemma}\label{lem:slow_holds}
For all $s\in \N$ and $(v,\ell)\in V_{\ell}$, $\ell\in \N_{>0}$, $\SC(s)$ holds at $(v,\ell)$.
\end{lemma}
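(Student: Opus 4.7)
The plan is to do a case analysis on which branch of Algorithm \ref{alg:Simplified_Gradient_TRIX} determines the final value of $\Cor_{v,\ell}$. Set $\Delta := \min_{s'\in \N}\{\max\{H_{\own}-H_{\max}+4s'\kappa,H_{\own}-H_{\min}-4 s'\kappa\}\}-\kappa/2$, so the algorithm initially sets $\Cor_{v,\ell}:=\Delta$ and then possibly overrides this if $\Delta<0$ or $\Delta>\vartheta\kappa$. I will show in each branch that at least one of the three disjuncts of $\SC(s)$ holds for the fixed, arbitrary $s\in\N$ from the claim. The key subtlety is that the quantifier $s$ in $\SC(s)$ is independent of the inner minimizer $s'$ in the definition of $\Delta$; the argument will exploit that $\Delta$ is a minimum, so that $\Delta\le \max\{H_{\own}-H_{\max}+4s\kappa,H_{\own}-H_{\min}-4s\kappa\}-\kappa/2$ for this particular $s$.

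\textbf{Branch 1 ($\Delta<0$):} The algorithm forces $\Cor_{v,\ell}\le 0$, so $\SCthree$ holds immediately.

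\textbf{Branch 2 ($0\le \Delta\le \vartheta\kappa$):} Here $\Cor_{v,\ell}=\Delta\ge 0$, so $\Cor_{v,\ell}/\vartheta\le \Cor_{v,\ell}$ using $\vartheta>1$. Plugging in $s'=s$ in the definition of $\Delta$, one of the two terms attains the max; if it is $H_{\own}-H_{\max}+4s\kappa$, apply Lemma \ref{lem:estimates} to bound it by $t_{v,\ell-1}-t_{\max}+4s\kappa$, yielding $\SCone(s)$; otherwise the symmetric estimate on $H_{\own}-H_{\min}-4s\kappa$ yields $\SCtwo(s)$.

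\textbf{Branch 3 ($\Delta>\vartheta\kappa$):} The algorithm sets $\Cor_{v,\ell}=\max\{H_{\own}-H_{\max}-3\kappa/2,\vartheta\kappa\}$. If the first term wins, then $\Cor_{v,\ell}/\vartheta\le \Cor_{v,\ell}=H_{\own}-H_{\max}-3\kappa/2\le t_{v,\ell-1}-t_{\max}-\kappa$ by Lemma \ref{lem:estimates}, which is at most $t_{v,\ell-1}-t_{\max}+4s\kappa$ for any $s\in\N$, establishing $\SCone(s)$. If instead $\Cor_{v,\ell}=\vartheta\kappa$, exploit $\Delta>\vartheta\kappa$: for this $s$, $\max\{H_{\own}-H_{\max}+4s\kappa,H_{\own}-H_{\min}-4s\kappa\}\ge \Delta+\kappa/2>\vartheta\kappa+\kappa/2$, so one of the two bracketed expressions exceeds $\vartheta\kappa+\kappa/2$; combine with Lemma \ref{lem:estimates} to turn it into a bound on $t_{v,\ell-1}-t_{\max}+4s\kappa$ or $t_{v,\ell-1}-t_{\min}-4s\kappa$ strictly greater than $\vartheta\kappa\ge \kappa=\Cor_{v,\ell}/\vartheta$, giving $\SCone(s)$ or $\SCtwo(s)$ respectively.

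This is essentially a bookkeeping exercise rather than a deep argument; the only conceptual point to be careful about is separating the outer $s$ (quantified universally in the statement) from the inner $s'$ (minimized over inside the algorithm). No step appears to be a real obstacle, since in every branch the definition of $\Cor_{v,\ell}$ already mirrors one of the three disjuncts almost verbatim, and Lemma \ref{lem:estimates} supplies precisely the $\kappa/2$-accurate translation between hardware-clock observations and true pulse times that is needed to close the gap.
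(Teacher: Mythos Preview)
Your proposal is correct and takes essentially the same approach as the paper: case-split on the algorithm's branch, show $\Cor_{v,\ell}/\vartheta$ is bounded by $\Delta$ (when positive), exploit that $\Delta$ is a minimum over $s'$ to bound it by the max at the given $s$, and apply Lemma~\ref{lem:estimates}. The only organizational difference is that the paper first unifies your Branches~2 and~3 via the single observation $\Cor_{v,\ell}\le\Delta$ and then does a case split based on an explicit minimizer $s_{\min}$, whereas you keep the branches separate but use the minimum property more directly---your Branch~2 argument is in fact slightly cleaner than the paper's, while your Branch~3 carries an extra sub-case that the paper avoids.
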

\begin{proof}
Using \Cref{lem:equivalence}, we prove the claim for \Cref{alg:Simplified_Gradient_TRIX}.
Set $t_{\min}:=\min_{\{v,w\}\in E}\{t_{w,\ell-1}\}$ and $t_{\max}:=\min_{\{v,w\}\in E}\{t_{w,\ell-1}\}$.
If $\Cor_{v,\ell}\le 0$, $\SCthree$ is trivially satisfied.
Hence, assume that $\Cor_{v,\ell}> 0$.
Abbreviate
\begin{align*}
\Delta=\min_{s\in \N}\left\{\max\{H_{\own}-H_{\max}+4s\kappa,H_{\own}-H_{\min}-4s\kappa \}\right\}-\frac{\kappa}{2}\\
=\max\{H_{\own}-H_{\max}+4s_{\min}\kappa,H_{\own}-H_{\min}-4 s_{\min}\kappa \}-\frac{\kappa}{2},
\end{align*}
where $s_{\min}\in \N$ is an index for which the minimum is attained.

If $\Delta\le \vartheta \kappa$, then $\Cor_{v,\ell}=\Delta$.
Otherwise,
\begin{equation*}
\Cor_{v,\ell}= \max\left\{H_{\own}-H_{\max}-\frac{\kappa}{2}-\kappa,\vartheta\kappa\right\}\le \max\{\Delta,\vartheta \kappa\}=\Delta.
\end{equation*}
Either way, we get that $\Cor_{v,\ell}/\vartheta<\Cor_{v,\ell}\le \Delta$.

We distinguish two cases.
\begin{itemize}
  \item $H_{\own}-H_{\max}+4s_{\min}\kappa\ge H_{\own}-H_{\min}-4 s_{\min}\kappa$. Then for $s\in \N$, $s\ge s_{\min}$, by \Cref{lem:estimates} we have that
  \begin{equation*}
  \Delta\le H_{\own}-H_{\max}+4s_{\min}\kappa-\frac{\kappa}{2}\le t_{v,\ell-1}-t_{\max}+4s \kappa,
  \end{equation*}
  i.e., $\SCone$ holds. Now consider $s\in \N$, $s<s_{\min}$. Since $H_{\own}-H_{\max}-\vartheta u+4s\kappa <H_{\own}-H_{\max}-\vartheta u+4s_{\min}\kappa\le \Delta$, but the minimum is attained at index $s_{\min}$, we must have that
  \begin{equation*}
  \Delta\le H_{\own}-H_{\min}-4s\kappa-\frac{\kappa}{2}
  \le t_{v,\ell-1}-t_{\min}-4s\kappa,
  \end{equation*}
  where the second step again applies \Cref{lem:estimates}.
  Thus, $\SCtwo$ holds.
  \item $H_{\own}-H_{\max}+4s_{\min}\kappa < H_{\own}-H_{\min}-4 s_{\min}\kappa$. In this case, we analogously infer that $\SCone$ holds for $s>s_{\min}$ and $\SCtwo$ holds for $s\le s_{\min}$.\qedhere
  \end{itemize}
\end{proof}

\begin{lemma}\label{lem:fast_holds}
For all $s\in \N$ and $(v,\ell)\in V_{\ell}$, $\ell\in \N_{>0}$, $\FC(s)$ holds at $(v,\ell)$.
\end{lemma}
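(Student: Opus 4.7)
The approach is to mirror the proof of the slow condition (\Cref{lem:slow_holds}), again using \Cref{lem:equivalence} to reduce to analyzing \Cref{alg:Simplified_Gradient_TRIX}. Abbreviate $t_{\min}:=\min_{\{v,w\}\in E}\{t_{w,\ell-1}\}$, $t_{\max}:=\max_{\{v,w\}\in E}\{t_{w,\ell-1}\}$, and write $\Delta$ for the unclamped correction value computed in the first assignment to $\Cor_{v,\ell}$; let $s_{\min}\in \N$ denote an index attaining the minimum in the definition of $\Delta$. The whole argument rests on translating the inequalities defining the $\max$ at $s_{\min}$ into bounds on $t_{v,\ell-1}-t_{\max}$ and $t_{v,\ell-1}-t_{\min}$ via \Cref{lem:estimates}, and then matching them up with the right branch of $\FC(s)$ for each $s$.

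The easy step is to dispose of the case $\Cor_{v,\ell}\ge \kappa$, where $\FCthree$ holds trivially. In particular, this covers Case A ($\Delta>\vartheta\kappa$, where the algorithm sets $\Cor_{v,\ell}\ge \vartheta\kappa\ge \kappa$) and the subregime $\Delta\in[\kappa,\vartheta\kappa]$. The remaining situations are $\Delta\in[0,\kappa)$ with $\Cor_{v,\ell}=\Delta$, and $\Delta<0$ with $\Cor_{v,\ell}=\min\{H_{\own}-H_{\min}+3\kappa/2,\,0\}$.

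In the subcase $0\le \Delta<\kappa$, the key observation is that $\Delta+\kappa/2$ is a $\max$, hence simultaneously at least each of the two arguments at $s_{\min}$. Combining this with the one-sided bounds $H_{\own}-H_{\max}\ge t_{v,\ell-1}-t_{\max}-\kappa/2$ and $H_{\own}-H_{\min}\ge t_{v,\ell-1}-t_{\min}-\kappa/2$ from \Cref{lem:estimates} yields
\begin{align*}
\Delta &\ge t_{v,\ell-1}-t_{\max}+(4s_{\min}-1)\kappa,\\
\Delta &\ge t_{v,\ell-1}-t_{\min}-(4s_{\min}+1)\kappa.
\end{align*}
The first bound gives $\FCone(s)$ for $s\le s_{\min}$, and the second gives $\FCtwo(s)$ for $s\ge s_{\min}+1$; together these cover every $s\in \N_{>0}$.

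The main obstacle is Case C ($\Delta<0$), since there $\Cor_{v,\ell}$ has two possible forms and we no longer have $\Cor_{v,\ell}=\Delta$. Here we exploit that $\Delta<0$ forces \emph{both} arguments of the $\max$ at $s_{\min}$ to lie strictly below $\kappa/2$, which via \Cref{lem:estimates} yields the upper bounds $t_{v,\ell-1}-t_{\max}<(1-4s_{\min})\kappa$ and $t_{v,\ell-1}-t_{\min}<(4s_{\min}+1)\kappa$. If the clamping sets $\Cor_{v,\ell}=0$, these bounds imply $\FCone(s)$ for $s\le s_{\min}$ and $\FCtwo(s)$ for $s\ge s_{\min}+1$, mirroring the Case B split. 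If instead $\Cor_{v,\ell}=H_{\own}-H_{\min}+3\kappa/2$, applying \Cref{lem:estimates} a second time gives $\Cor_{v,\ell}\ge t_{v,\ell-1}-t_{\min}+\kappa$, and since $\kappa\ge -(4s-3)\kappa$ for every $s\ge 1$, $\FCtwo(s)$ holds uniformly. Thus $\FC(s)$ is verified in every branch, completing the proof.
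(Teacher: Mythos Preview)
Your proof is correct and follows essentially the same route as the paper's: reduce to \Cref{alg:Simplified_Gradient_TRIX}, dispose of the large-correction case via $\FCthree$, and for the remainder use the two lower bounds on the $\max$ at $s_{\min}$ together with \Cref{lem:estimates} to establish $\FCone(s)$ for $s\le s_{\min}$ and $\FCtwo(s)$ for $s\ge s_{\min}+1$.

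The only noteworthy difference is in the treatment of $\Delta<0$. You split into the two subcases $\Cor_{v,\ell}=0$ and $\Cor_{v,\ell}=H_{\own}-H_{\min}+3\kappa/2$ and verify each directly (in the second subcase even showing $\FCtwo(s)$ uniformly). The paper instead makes the single observation that $\Cor_{v,\ell}=\min\{H_{\own}-H_{\min}+3\kappa/2,\,0\}\ge \Delta$ whenever $\Delta<0$ (both entries of the $\min$ dominate $\Delta$), so the bound $\Cor_{v,\ell}\ge \Delta$ holds throughout and the same chain of inequalities used for $\Delta\ge 0$ applies verbatim. This collapses your Case~C into the handling of Case~B, avoiding the extra split; conversely, your direct argument in the subcase $\Cor_{v,\ell}=H_{\own}-H_{\min}+3\kappa/2$ is self-contained and does not need to route through $\Delta$ at all.
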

\begin{proof}
Using \Cref{lem:equivalence}, we prove the claim for \Cref{alg:Simplified_Gradient_TRIX}.
Set $t_{\min}:=\min_{\{v,w\}\in E}\{t_{w,\ell-1}\}$ and $t_{\max}:=\min_{\{v,w\}\in E}\{t_{w,\ell-1}\}$.
If $\Cor_{v,\ell}\ge \vartheta\kappa$, trivially $\FCthree$ is satisfied.
Hence, assume that $\Cor_{v,\ell}<\vartheta\kappa$.

Abbreviate
\begin{align*}
\Delta=\min_{s\in \N}\left\{\max\{H_{\own}-H_{\max}+4s\kappa,H_{\own}-H_{\min}-4s\kappa \}\right\}-\frac{\kappa}{2}\\
=\max\{H_{\own}-H_{\max}+4s_{\min}\kappa,H_{\own}-H_{\min}-4 s_{\min}\kappa \}-\frac{\kappa}{2},
\end{align*}
where $s_{\min}\in \N$ is an index for which the minimum is attained.

If $\Delta\ge 0$, then $\Cor_{v,\ell}=\Delta$.
Otherwise,
\begin{equation*}
\Cor_{v,\ell}= \min\left\{H_{\own}-H_{\min}-\frac{\kappa}{2}+2\kappa,0\right\}\ge \Delta.
\end{equation*}
Either way, we get that $\Cor_{v,\ell}\ge \Delta$.

For $s\in \N$, $s\le s_{\min}$, by \Cref{lem:estimates} and \Cref{eq:kappa} it holds that
\begin{equation*}
\Delta\ge H_{\own}-H_{\max}+4s\kappa -\frac{\kappa}{2}\ge t_{v,\ell-1}-t_{\max}+(4s-2)\kappa+\kappa,
\end{equation*}
proving that $\FCone$ holds.
For $s\in \N$, $s> s_{\min}$, by \Cref{lem:estimates} and \Cref{eq:kappa} we get that
\begin{equation*}
\Delta\ge H_{\own}-H_{\min}-4(s-1)\kappa -\frac{\kappa}{2}\ge t_{v,\ell-1}-t_{\min}-(4s-2)\kappa+\kappa,
\end{equation*}
showing that $\FCtwo$ holds.
\end{proof}

\begin{lemma}\label{lem:jump_holds}
Suppose that layer $\ell-1\in \N$ and $v_{\ell}\in V_{\ell}$ are correct.
Then $\JC$ holds at $v_{\ell}$.
\end{lemma}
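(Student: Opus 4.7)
The plan is to argue by case analysis on which branch of the $\Cor_{v,\ell}$ computation in \Cref{alg:Simplified_Gradient_TRIX} is taken, using \Cref{lem:equivalence} to reduce to the simplified algorithm (since both $v_\ell$ and its predecessors on layer $\ell-1$ are correct). Let
\[
\Delta := \min_{s\in \N}\left\{\max\{H_{\own}-H_{\max}+4s\kappa,\,H_{\own}-H_{\min}-4s\kappa\}\right\}-\frac{\kappa}{2}
\]
be the value computed before the if/else-if block. There are three cases: (i) $\Delta \in [0,\vartheta\kappa]$, so $\Cor_{v,\ell}=\Delta$; (ii) $\Delta<0$, so $\Cor_{v,\ell}=\min\{H_{\own}-H_{\min}-\kappa/2+2\kappa,\,0\}$; (iii) $\Delta>\vartheta\kappa$, so $\Cor_{v,\ell}=\max\{H_{\own}-H_{\max}-\kappa/2-\kappa,\,\vartheta\kappa\}$.

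In case (i), $0 \le \Cor_{v,\ell}/\vartheta \le \kappa$, so $\JCthree$ holds directly. In cases (ii) and (iii), if the respective $\min$ or $\max$ is attained by $0$ or $\vartheta\kappa$, we again land in $\JCthree$ (noting $\vartheta\kappa/\vartheta=\kappa$). The only remaining situations are that $\Cor_{v,\ell} = H_{\own}-H_{\min}+3\kappa/2 < 0$ (case (ii)) or $\Cor_{v,\ell} = H_{\own}-H_{\max}-3\kappa/2 > \vartheta\kappa$ (case (iii)). These are exactly the situations I expect to require the translation from hardware-clock quantities to true pulse times.

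For case (ii) with $\Cor_{v,\ell} = (H_{\own}-H_{\min}-\kappa/2)+2\kappa < 0$, I apply the lower bound of \Cref{lem:estimates}, namely $H_{\own}-H_{\min}-\kappa/2 \ge t_{v,\ell-1}-t_{\min}-\kappa$, to obtain $\Cor_{v,\ell} \ge t_{v,\ell-1}-\min_{\{v,w\}\in E}\{t_{w,\ell-1}\}+\kappa$. Combined with $\Cor_{v,\ell}<0$, this is precisely $\JCtwo$. For case (iii) with $\Cor_{v,\ell} = (H_{\own}-H_{\max}-\kappa/2)-\kappa > \vartheta\kappa$, I apply the upper bound of \Cref{lem:estimates}, $H_{\own}-H_{\max}-\kappa/2 \le t_{v,\ell-1}-t_{\max}$, to get $\Cor_{v,\ell} \le t_{v,\ell-1}-\max_{\{v,w\}\in E}\{t_{w,\ell-1}\}-\kappa$. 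Since $\Cor_{v,\ell}>0$ and $\vartheta>1$, we also have $\Cor_{v,\ell}/\vartheta \le \Cor_{v,\ell}$, and $\Cor_{v,\ell}/\vartheta>\kappa$ comes for free from $\Cor_{v,\ell}>\vartheta\kappa$. This yields $\JCone$.

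The proof is essentially bookkeeping: each branch of the if/else-if is engineered so that the additive offset ($\pm 3\kappa/2$) exactly compensates for the $\kappa$ of measurement slack in \Cref{lem:estimates} plus the additional $\kappa$ of ``damping'' required by \Cref{def:jump}. I do not expect a significant obstacle; the one subtle point is making sure the $\JCthree$ upper bound $\Cor_{v,\ell}/\vartheta \le \kappa$ (rather than $\le \vartheta\kappa$) is respected, which is why case (i) is stated with $\Delta \le \vartheta\kappa$ and then divided by $\vartheta$, and why the boundary cases of the $\min$/$\max$ in (ii)/(iii) give $\Cor_{v,\ell}\in\{0,\vartheta\kappa\}$ rather than an intermediate value.
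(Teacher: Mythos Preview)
Your proof is correct and follows essentially the same approach as the paper: reduce to \Cref{alg:Simplified_Gradient_TRIX} via \Cref{lem:equivalence}, then case-split on the branch of the $\Cor_{v,\ell}$ computation and invoke \Cref{lem:estimates} to translate the hardware-clock offsets into pulse-time offsets. The paper's version is marginally more direct in that it splits on the final value of $\Cor_{v,\ell}$ (three cases: $\in[0,\vartheta\kappa]$, $<0$, $>\vartheta\kappa$) rather than on $\Delta$ with boundary subcases, but the logic and the key estimates are identical.
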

\begin{proof}
Using \Cref{lem:equivalence}, we prove the claim for \Cref{alg:Simplified_Gradient_TRIX}.
Set $t_{\min}:=\min_{\{v,w\}\in E}\{t_{w,\ell-1}\}$ and $t_{\max}:=\min_{\{v,w\}\in E}\{t_{w,\ell-1}\}$.
We distinguish three cases.
\begin{itemize}
  \item $0\le \Cor_{v,\ell}\le \vartheta\kappa$. Then $\JCthree$ is satisfied trivially.
  \item $\Cor_{v,\ell}<0$. By \Cref{lem:estimates} and \Cref{eq:kappa}, then
  \begin{equation*}
  \Cor_{v,\ell}=H_{\own}-H_{\min} -\frac{\kappa}{2}+2\kappa
  \ge t_{v,\ell-1}-t_{\min}+\kappa,
  \end{equation*}
  i.e., $\JCtwo$ holds.
  \item $\Cor_{v,\ell}>\vartheta \kappa$. By \Cref{lem:estimates}, then
  \begin{equation*}
  \Cor_{v,\ell}=H_{\own}-H_{\max} -\frac{\kappa}{2}-\kappa
  \le t_{v,\ell-1}-t_{\max}-\kappa,
  \end{equation*}
  i.e., $\JCthree$ holds.\qedhere
\end{itemize}
\end{proof}

\end{document}